\declaretheorem[name={Theorem~\ref{thm:extension-n-soundness}, part}]{thmSoundnessPart}
\let\oldxrightarrow\xrightarrow
\newcommand\myxrightarrow[2][]{
    \oldxrightarrow[{\raisebox{1.7ex-\heightof{$\scriptstyle#1$}}[0pt]{$\scriptstyle#1$}}]{{\raisebox{-0.2ex}[0pt][0pt]{$\scriptstyle#2$}}}}
\let\xrightarrow\myxrightarrow
\tikzset {
  edge with arrow/.style = {
    ->,
    >=stealth,
    shorten >=1pt,
  },
  directed/.style = {
    edge with arrow,
    node distance=2cm,
    on grid,
    semithick,
    double distance=1.5pt,
  },
  automaton/.style = {
    directed,
    auto,	initial text={},
    state/.append style = {
      ellipse,
      inner sep = 0pt,
      minimum size = 10pt,
    },
  },
  timer graph/.style = {
    on grid,
    node distance = 15pt and 60pt,
  },
  timer/.style = {
    circle,
    fill = black,
    inner sep = 0pt,
    minimum size = 5pt,
  },
  set of timers/.style = {
    draw,
    ellipse,
    inner xsep = 12pt,
    inner ysep = 5pt,
  },
  mapping/.style = {
    edge with arrow,
    black,
  },
  apartness/.style = {
    densely dashed,
    red,
  },
  basis/.style = {
    fill = lightgray!70!white,
  },
}
\pgfplotsset{
  compat=1.18
}
\newcommand{\treeNodeLabel}[1]{\contour{white}{#1}} 
\renewcommand{\iff}{\Leftrightarrow}
\renewcommand{\implies}{\Rightarrow}
\NewDocumentCommand{\equivClass}{m}{\llbracket {#1} \rrbracket}
\newcommand{\oset}[2]{{\mathop{#2}\limits^{\vbox to \ex@{\kern-\tw@\ex@
   \hbox{\scriptsize #1}\vss}}}}
\NewDocumentCommand{\complexity}{m}{\mathcal{O}\left(#1\right)}
\newcommand{\lsharp}{\ensuremath{L^{\#}}\xspace}
\newcommand{\lsharpMMT}{\ensuremath{L^{\#}_{\text{\MMT}}}\xspace}
\newcommand{\partto}{\rightharpoonup}
\newcommand{\timeout}[1]{\mathit{to}[#1]}
\newcommand{\toevents}[1]{\mathit{TO}[#1]}
\newcommand{\subsets}[1]{{\mathcal{P}}(#1)}  \newcommand{\nat}{{\mathbb N}}
\newcommand{\natplus}{\nat^{>0}}
\newcommand{\nnr}{{\mathbb R}^{\geq 0}}
\newcommand{\rplus}{{\mathbb R}^{>0}}
 \newcommand{\automaton}[1]{{\mathcal{#1}}}
\newcommand{\M}{\automaton{M}}
\newcommand{\N}{\automaton{N}}
\newcommand{\T}{\automaton{T}}
\newcommand{\zoneOf}[1]{\mathit{zone}(#1)}
\NewDocumentCommand{\hypothesis}{}{\automaton{H}}
\NewDocumentCommand{\tree}{}{\T}
\newcommand{\MMT}{MMT\xspace}
\newcommand{\MMTs}{MMTs\xspace}
\newcommand{\gMMT}{gMMT\xspace}
\newcommand{\gMMTs}{gMMTs\xspace}
\NewDocumentCommand{\equivalent}{}{\!\!\mathrel{\oset{time}{\approx}}\!\!}
\NewDocumentCommand{\symEquivalent}{}{\!\!\mathrel{\oset{sym}{\approx}}\!\!}
\NewDocumentCommand{\notSymEquivalent}{}{\!\!\mathrel{\oset{sym}{\not\approx}}\!\!}
\NewDocumentCommand{\ttequivalent}{}{\!\!\mathrel{\oset{tt}{\approx}}\!\!}
\NewDocumentCommand{\notttequivalent}{}{\!\!\mathrel{\oset{tt}{\not\approx}}\!\!}
\newcommand{\robust}{race-avoiding\xspace}
\newcommand{\Robust}{Race-avoiding\xspace}
\newcommand{\complete}{complete\xspace}
\newcommand{\good}{s-learnable\xspace}
\newcommand{\structural}{structural\xspace}
\newcommand{\behavioral}{behavioral\xspace}
\newcommand{\valuation}{\kappa}
\newcommand{\dom}[1]{{\textsf{dom}}(#1)} \newcommand{\ran}[1]{{\textsf{ran}}(#1)}  
\newcommand{\runs}[1]{\mathit{runs}(#1)} \newcommand{\run}[2]{{\mathit{run}}^{#1}(#2)} \newcommand{\length}[1]{\mathit{length}(#1)} \newcommand{\last}[1]{\mathit{last}(#1)} \newcommand{\cause}[2]{\mathit{cs}(#1,#2)} \newcommand{\causetwo}[2]{\mathit{cs}_2(#1,#2)}    \newcommand{\tinpw}[1]{\mathit{tiw}(#1)}   \newcommand{\conf}[1]{\cal{C}^#1} \newcommand{\symbtree}[1]{\mathit{symbtree}(#1)} \NewDocumentCommand{\tiwruns}{m O{}}{\mathit{tiwruns}^{#2}(#1)} \newcommand{\tOutputs}{\mathit{toutputs}} \NewDocumentCommand{\untimeRun}{m}{\mathit{untime}(#1)} 
\NewDocumentCommand{\tiw}{}{tiw\xspace}
\NewDocumentCommand{\tow}{}{tow\xspace}
\NewDocumentCommand{\siw}{}{sw\xspace}
\NewDocumentCommand{\siws}{}{sws\xspace}
\NewDocumentCommand{\symbolic}{m}{\mathtt{{#1}}}
\NewDocumentCommand{\toSymbolic}{m}{\overline{{#1}}}
\newcommand{\sinpw}[1]{\mathit{sw}(#1)} \newcommand{\timedtrace}[1]{\mathit{tt}(#1)} 
\newcommand{\silanguage}[1]{L_\mathit{sym}(#1)} 
\newcommand{\ttlanguage}[1]{L_\mathit{tt}(#1)}
\newcommand{\ofunction}[2]{\mathit{out}^{#1}(#2)} \newcommand{\outw}[1]{\mathit{ow}(#1)} \newcommand{\w}{\symbolic{w}}
\NewDocumentCommand{\actions}{m}{{A({#1})}}
\newcommand{\symbActions}{\mathtt{A}}
\NewDocumentCommand{\updates}{m}{{U({#1})}}
\NewDocumentCommand{\updateFunction}{}{\mathfrak{r}}
\NewDocumentCommand{\renamingMMT}{}{\mu}
\newcommand{\lengthOf}[1]{{\lvert #1 \rvert}}
\newcommand{\emptyword}{\varepsilon}
\NewDocumentCommand{\PER}{}{\equiv}
\NewDocumentCommand{\maxTime}{}{\Delta}
\NewDocumentCommand{\constraints}{}{\mathrm{cnstr}}
\NewDocumentCommand{\fractionalPart}{}{\mathrm{frac}}
\NewDocumentCommand{\copyRun}{
  O{m}
  m
  O{ p \c_math_subscript_token 0 \xrightarrow{w} p \c_math_subscript_token n}}
{\tl_if_empty:nTF{#2}{\mathit{read}^{#1}\c_math_subscript_token{#3}}{\mathit{read}^{#1}\c_math_subscript_token{#3}(#2)}}
\NewDocumentCommand{\replay}{
  O{m}
  m
  O{ p \c_math_subscript_token 0 \xrightarrow{w} p \c_math_subscript_token n}}
{\tl_if_empty:nTF{#2}{\mathit{replay}^{#1}\c_math_subscript_token{#3}}{\mathit{replay}^{#1}\c_math_subscript_token{#3}(#2)}}
\NewDocumentCommand{\activeTimers}{}{{\chi}}
\NewDocumentCommand{\enabled}{m O{}}{{\activeTimers_0^{#2}(#1)}}
\NewDocumentCommand{\funcSim}{}{\langle f, g \rangle}
\NewDocumentCommand{\apart}{}{\mathbin{\#}}
\NewDocumentCommand{\timerApart}{}{\mathbin{\rule{0pt}{5pt}^t\!\!\#}}
\NewDocumentCommand{\witness}{s}{\IfBooleanTF{#1}{\witness^*}{\vdash}}
\NewDocumentCommand{\basis}{O{}}{{\mathcal{B}^{\tree_{#1}}}}
\NewDocumentCommand{\frontier}{O{}}{{\mathcal{F}^{\tree_{#1}}}}
\NewDocumentCommand{\compatible}{O{}}{\mathit{compat}^{\tree_{#1}}}
\NewDocumentCommand{\explored}{}{{\mathcal{E}^\tree}}
\NewDocumentCommand{\matchingRun}{m m m}{{#1}^{{#2}}_{{#3}}}
\NewDocumentCommand{\Val}{m}{\mathsf{Val}({#1})} 
\NewDocumentCommand{\outputQ}{}{\ensuremath{\mathbf{OQ}}\xspace}
\NewDocumentCommand{\equivQ}{}{\ensuremath{\mathbf{EQ}}\xspace}
\NewDocumentCommand{\symOutputQ}{}{\ensuremath{\mathbf{OQ}^{\mathbf{s}}}\xspace}
\NewDocumentCommand{\symWaitQ}{}{\ensuremath{\mathbf{WQ}^{\mathbf{s}}}\xspace}
\NewDocumentCommand{\symEquivQ}{}{\ensuremath{\mathbf{EQ}^{\mathbf{s}}}\xspace}
\NewDocumentCommand{\yes}{}{\ensuremath{\mathbf{yes}}\xspace}
\NewDocumentCommand{\no}{}{\ensuremath{\mathbf{no}}\xspace}
\NewDocumentCommand{\APART}{}{\ensuremath{\mathrm{APART}}\xspace}
\NewDocumentCommand{\ACTIVE}{}{\ensuremath{\mathrm{ACTIVE}}\xspace}
\NewDocumentCommand{\DONE}{}{\ensuremath{\mathrm{DONE}}\xspace}
\newclass{\THREEEXP}{3EXP}
\newclass{\TWOEXP}{2EXP}
\NewDocumentCommand{\seismic}{}{\textbf{Seismic}\xspace}
\NewDocumentCommand{\promotion}{}{\textbf{Promotion}\xspace}
\NewDocumentCommand{\completion}{}{\textbf{Completion}\xspace}
\NewDocumentCommand{\minimizationActive}{}{\textbf{Active timers}\xspace}
\NewDocumentCommand{\minimizationCoTrans}{}{\textbf{WCT}\xspace}
\NewDocumentCommand{\foldFunction}{}{\boldsymbol{h}}
\NewDocumentCommand{\updateFig}{m m}{#1 \coloneq #2}
\def\orcidID#1{\smash{\href{http://orcid.org/#1}{\protect\raisebox{-1.25pt}{\protect\includegraphics{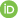}}}}}
\begin{document}

\newgeometry{hmargin = 2.5cm} 

\title{Active Learning of Mealy Machines with Timers\texorpdfstring{\thanks{
    Ga\"etan Staquet was a research fellow of the F.R.S.-FNRS\@.
    The research of B. Garhewal, F. Vaandrager, and
    G.A. P\'erez, was supported by NWO projects 612.001.852 ``GIRLS'' and OCENW.M.23.155 ``EVI'', and FWO project G0AH524N ``SynthEx''.}}{}}

\author{V\'eronique Bruy\`ere\inst{1}\orcidID{0000-0002-9680-9140}\and Bharat Garhewal\inst{2}\orcidID{0000-0003-4908-2863}\and Guillermo A. P\'erez\inst{3}\orcidID{0000-0002-1200-4952}\and Ga\"etan Staquet\inst{4}\orcidID{0000-0001-5795-3265}\and Frits W. Vaandrager\inst{2}\orcidID{0000-0003-3955-1910}}
\authorrunning{V. Bruy\`ere et al.}
\institute{University of Mons, Belgium\\
  \email{veronique.bruyere@umons.ac.be}
  \and Radboud University, The Netherlands\\
  \email{\{b.garhewal,f.vaandrager\}@cs.ru.nl}
  \and University of Antwerp -- Flanders Make, Belgium\\
  \email{guillermo.perez@uantwerpen.be}\and Centre Inria de l'Université de Rennes, France\\
  \email{gaetan.staquet@inria.fr}
}

\maketitle

\begin{abstract}
  We present the first algorithm for query learning 
  Mealy machines
  with timers in a black-box context.
  Our algorithm is an extension of the \lsharp algorithm of Vaandrager et
  al.\ to a timed setting.
  We rely on symbolic queries which empower us to reason on untimed
  executions while learning.
  Similarly to the algorithm for learning timed automata of Waga,
  these symbolic queries can be realized using finitely many concrete queries.
  Experiments with a prototype implementation show that our
  algorithm is able to efficiently learn realistic benchmarks.
  \keywords{Timed systems, model learning, active automata learning}
\end{abstract}

\section{Introduction}
To understand and verify complex systems, we need accurate models that are  understandable for humans or can be analyzed fully automatically. Such models are typically not available for legacy software and for AI systems constructed from training data.
Model learning is a technology that potentially may fill this gap.
In this work, we consider a specific kind of model learning:
\emph{active automata learning}, which
is a black-box technique for constructing state machine models of software and hardware from information obtained through testing (i.e., providing inputs and observing the resulting outputs).
It has been successfully used in many applications, e.g.,
for spotting bugs in implementations of major network
protocols~\cite{RuiterP15,FiterauBrosteanJV16,FiterauBrosteanLPRVV17,FiterauBrosteanH17,FiterauBrosteanJMRSS20,FerreiraBDS21}.
We refer to~\cite{Vaandrager17,HowarS18} for surveys and further references.

Timing plays a crucial role in many applications.
However, extending model learning algorithms to a setting that incorporates
quantitative timing information turns out to be challenging.
Twenty years ago, the first papers on this subject were
published~\cite{GrinchteinJL04,MalerP04}, but we still do not have scalable
algorithms for a general class of timed models.
Consequently, in applications of model learning technology, timing issues still
need to be artificially suppressed.

Several authors have proposed active learning algorithms for the popular framework of \emph{timed automata} (TAs)~\cite{AlurD94}, which extends DFAs with clock variables.  Some of these proposals, for instance~\cite{GrinchteinJP06,GrinchteinJL10,HenryJM20} have never been implemented.
In recent years, however, several algorithms have been proposed and implemented that successfully learned realistic benchmark models.
A first line of work restricts to subclasses of TAs such as deterministic one-clock TAs~\cite{AnCZZZ20,XuAZ22}.
A second line of work explores synergies between active and passive learning algorithms.
Aichernig et al.~\cite{TapplerALL19,AichernigPT20}, for instance, employ a passive learning algorithm based on genetic programming to generate hypothesis models, which are subsequently refined using equivalence queries.
A major result was obtained recently by Waga~\cite{Waga23}, who presents an algorithm for active learning of (general) deterministic TAs and shows the effectiveness of the algorithm on various benchmarks.  Waga's algorithm is inspired by ideas of Maler \& Pnueli~\cite{MalerP04}. Based on the notion of elementary languages of~\cite{MalerP04},
Waga uses \emph{symbolic queries}
which are then performed using finitely many concrete queries. Notably, symbolic queries are also used for learning other families of automata,
such as register automata~\cite{GarhewalVHSLS20}.

A challenge for learning algorithms for TAs is the inference of the guards and resets that label transitions. Given those difficulties, Vaandrager
et al.~\cite{VaandragerBE21} propose to consider learning
algorithms for models using \emph{timers} instead of clocks, e.g., the class of
models defined by Dill~\cite{Dill89}. The value of a timer decreases when time advances, whereas the value of a clock increases. A timer can be set to integer values on transitions and may be stopped or times out (its value becomes 0) in later transitions. Each timeout triggers an observable output, allowing a learner to observe the occurrence of timeouts. In~\cite{VaandragerBE21}, the notion of Mealy machine with a \emph{single timer} (MM1T) is introduced, such that the absence of guards and invariants simplifies learning. A learner still has to determine which transitions (re)start the timer, but this no longer creates a combinatorial blow-up. If a transition sets a timer, then slight changes in the timing of this transition will trigger the corresponding changes in the timing of the resulting timeout, allowing a learner to identify the exact cause of each timeout.

Even if many realistic systems can be modeled as MM1Ts (e.g., the benchmarks in~\cite{VaandragerBE21} and the brick sorter and traffic controller examples in~\cite{DierlHKKLLM23}), the restriction to a single timer is a serious limitation. 
Kogel et al.~\cite{KogelKG23} propose Mealy machines with \emph{local timers} (MMLTs), where multiple timers are subject to carefully chosen constraints to enable efficient learning. Although quite interesting, the constraints of MMLTs are too restrictive for many applications (e.g., the FDDI protocol described in \Cref{appendix:fddi_model}). Also, any MMLT can be converted to an equivalent MM1T.
In~\cite{BruyerePSV23}, we explore a general extension of MM1Ts with \emph{multiple timers} (\MMTs), and show that for \MMTs that are \emph{race-avoiding}, the cause of a timeout event can be efficiently determined by ``wiggling'' the timing of input events.
Finally, \MMTs form a subclass of TAs, and thus, existing model checking algorithms and tools for TAs (such as UPPAAL\footnote{\url{https://uppaal.org/}}) can be used.

Our main contribution
is a learning algorithm for the \MMT model of~\cite{BruyerePSV23}, that is obtained by extending the \lsharp\ learning algorithm for Mealy machines of Vaandrager et al.~\cite{VaandragerGRW22} and by using the concept of symbolic queries as in~\cite{Waga23}. Extending the \lsharp algorithm to multiple timers is not an easy task; it requires several new ideas. As this approach is different from Angluin's $L^*$  framework~\cite{Angluin87}, used in other timed model learning algorithms, we cannot immediately apply existing ideas. Moreover, to the best of our knowledge, there is no notion of minimal \MMT, in contrast to the minimal representations of timed languages proposed in~\cite{BojanczykL12} and the minimal model for timed automata with integer resets of~\cite{DoveriGS24}. Experiments with a prototype implementation, written in Rust, show that our algorithm is able to efficiently learn realistic benchmarks.

\section{Mealy Machines with Timers}\label{sec:def}
A \emph{Mealy machine} is a variant of the classical finite automaton that associates an output with each transition.
It can be seen as a relation between input words and output words.
Mealy machines with
\emph{timers}~\cite{VaandragerBE21} can be used
to enforce timing constraints over the behavior of the relation, e.g., if we
send a message and do not receive the acknowledgment after \(d\) units of
time, we resend the message.

We fix two non-empty finite sets \(I\) of \emph{inputs} and \(O\) of \emph{outputs}.
A \emph{Mealy machine with timers} \(\M\) uses a finite set $X$ of \emph{timers} that can trigger transitions when they time out. Intuitively, changes in the state of the machine will be driven both by reading inputs and by ``timeouts''. To formalize this, it is convenient to define the set
\(\toevents{X} = \{\timeout{x} \mid x \in X\}\) of \emph{timeouts of \(X\)}.
We
write \(\actions{\M}\) for the set \(I \cup \toevents{X}\) of \emph{actions of
\(\M\)}:
reading an input 
or processing a timeout. 
The set \(\updates{\M} = (X \times \natplus) \cup \{\bot\}\) contains the
\emph{updates of \(\M\)}, where $(x, c)$ means that timer $x$ is started with
value $c$, and $\bot$ stands for no timer update.
We impose constraints on the shape of \MMTs, e.g.,
a timer \(x\) can time out only when it is active, to define
the timed semantics in a straightforward manner.

\begin{definition}[Mealy machine with timers]\label{def:MMT}
  A \emph{Mealy machine with timers} (MMT, for short) is a tuple
  \(\M = (X, Q, q_0, \activeTimers, \delta)\) where:
  \begin{itemize}
    \item
    \(X\) is a finite set of \emph{timers},
    \item
    \(Q\) is a finite set of states, with \(q_0 \in Q\) the \emph{initial state},
    \item
    \(\activeTimers : Q \to \subsets{X}\) assigns a
    set of \emph{active} timers to each state,
    and
    \item
    \(\delta : Q \times \actions{\M} \partto Q \times O \times \updates{\M}\)
    is a \emph{partial} transition function.
  \end{itemize}
  We write \(q \myxrightarrow[u]{i/o} q'\) if \(\delta(q, i) = (q', o, u)\).
  We require the following:
  \begin{enumerate}
    \item
    In the initial state, no timer is active, i.e.,
    \(\activeTimers(q_0) = \emptyset\).
    \item
    All active timers of the target state of a transition
    come from the source state, except at
    most one timer that may be started on the transition.
    That is,
    if \(q \myxrightarrow[\bot]{i/o} q'\), then
    \(\activeTimers(q') \subseteq \activeTimers(q)\), and if
    \(q \myxrightarrow[(x, c)]{i/o} q'\), then
    \(\activeTimers(q') \setminus \{x\} \subseteq \activeTimers(q)\).
    \item
    If timer $x$ times out, then $x$ was active in the source state, i.e.,
    if \(q \myxrightarrow[u]{\timeout{x}/o} q'\) then
    $x \in \activeTimers(q)$.
    Moreover, if \(u \neq \bot\), then \(u\) must be \((x, c)\) for some
    \(c \in \natplus\).
  \end{enumerate}
\end{definition}
When needed, we add a superscript to indicate which \MMT is
considered, e.g., \(Q^\M, q_0^\M\), etc.
Missing symbols in \(q \myxrightarrow[u]{i/o} q'\) are quantified existentially.
We say
a transition \(q \myxrightarrow[u]{} q'\) \emph{starts}
(resp.\ \emph{restarts}) a timer $x$ if $u = (x,c)$ and $x$ is inactive
(resp.\ active) in $q$.
We say that a transition\footnote{Technically, it is the arrival at a state where the timer is marked as inactive that stops the timer. However, we find it convenient to reason about the transitions leading to such a state as being responsible for stopping the timer. Also note that we could adapt our definition of MMTs so that transitions determine which, if any, timer must be stopped and this would be expressively equivalent: a \enquote{state-based} MMT can be obtained by constructing the Cartesian product of the state-space of the \enquote{transition-based} MMT and the set of possible \(\activeTimers\) functions.}
\(q \myxrightarrow{i} q'\) with
$i \ne \timeout{x}$ \emph{stops} timer \(x\) if $x$ is inactive in $q'$.
The requirement that a \(\timeout{x}\)-transition can only restart timer \(x\)
is without loss of generality and a choice made to simplify the presentation
(see \Cref{app:learning:hypo}).
An example of an \MMT is given in \Cref{fig:ex:MMT:good}.

A \emph{run} $\pi$ of $\M$ either consists of a single state $p_0$ or of a nonempty sequence of transitions
\(\pi ~=~  p_0 \myxrightarrow[u_1]{i_1/o_1} p_1 \myxrightarrow[u_2]{i_2/o_2} \dotsb
\myxrightarrow[u_n]{i_n/o_n} p_n\).
We denote by \(\runs{\M}\) the set of runs of \(\M\).
We often write \(q \myxrightarrow{i} {} \in \runs{\M}\) to highlight that \(\delta(q, i)\) is defined.
We lift the notation to words \(w = i_1 \dotsb i_n\) as usual:
\(p_0 \myxrightarrow{w} p_n \in \runs{\M}\) if there exists a run
\(p_0 \myxrightarrow{i_1} \dotsb \myxrightarrow{i_n} p_n \in \runs{\M}\).
Note that any run $\pi$ is uniquely determined by state $p_0$ and word $w$,
as \(\M\) is deterministic.
A run \(p_0 \myxrightarrow{w}\) is
\emph{\(x\)-spanning} (with \(x \in X\)) if it begins with a transition
(re)starting \(x\),
ends with a \(\timeout{x}\)-transition, and no
intermediate transition restarts or stops \(x\).

\begin{figure}[t]
  \centering
  \begin{tikzpicture}[
  automaton,
]
  \node [state, initial]          (q0)  {\(q_0\)};
  \node [state, right=45pt of q0] (q1)  {\(q_1\)};
  \node [state, right=45pt of q1] (q2)  {\(q_2\)};
  \node [state, right=70pt of q2] (q3)  {\(q_3\)};
  \node [state, right=80pt of q3] (q4)  {\(q_4\)};
  \node [state, below=45pt of q4] (q5) {\(q_5\)};

  \path
    (q0)  edge                  node [above=-2pt] {\(i/o\)}
                                node [below=-2pt] {\(\updateFig{x}{2}\)} (q1)
    (q1)  edge [loop below, min distance=4mm]
                                node [below=-2pt]
                                    {\(\timeout{x}/o, \updateFig{x}{2}\)} (q1)
          edge                  node [above=-2pt] {\(i/o'\)}
                                node [below=-2pt] {\(\updateFig{y}{3}\)} (q2)
    (q2)  edge                  node [above=-2pt] {\(\timeout{x}/o,
                                              \updateFig{x}{2}\)} (q3)
          edge [bend right=15]  node [below=-2pt] {\(i/o', \updateFig{x}{2}\)} (q3)
    (q3)  edge                  node [above=-2pt] {\(i/o', \updateFig{x}{2}\)} (q4)
    (q4)  edge                  node [near end, right=-2pt] {\(\timeout{x}/o,
                                                \updateFig{x}{2}\)} (q5)
          edge [loop right]     node {\(i/o', \updateFig{x}{2}\)} (q4)
    (q5)  edge [loop right]     node {\(i/o', \updateFig{x}{2}\)} (q5)
  ;

  \draw [rounded corners = 10]
    let
      \p{0} = (q0.-40),
      \p{3} = (q3)
    in
      (q3.south)
        -- (\x{3}, -1.2)
        -- (\x{0}, -1.2)
        node [above=-2pt, pos=0.2] {\(\timeout{y}/o, \bot\)} 
        -- (\p{0})
  ;

  \draw [rounded corners = 10]
    let
      \p{0} = (q0.-140),
      \p{4} = (q4.-45),
      \p{5} = (q5),
    in
      (\p{4})
        -- ($(\p{4}) + (0, -0.5)$)
        -- ($(\p{4}) + (2.5, -0.5)$)
        node [above=-2pt, midway] {\(\timeout{y}/o, \bot\)} 
        -- ($(\x{4}, \y{5}) + (2.5, -0.35)$)
        -- ($(\p{5}) + (0, -0.35)$)
        -- ($(\x{0}, \y{5}) + (0, -0.35)$)
        -- (\p{0})
  ;

  \draw [rounded corners = 10]
    let
      \p{0} = (q0.-90),
      \p{1} = (q1),
      \p{5} = (q5),
    in
      (q5.west)
        -- (\x{1}, \y{5})
        node [pos=0.17, above=-2pt] {\(\timeout{y}/o, \bot\)}
        -- (\x{0}, \y{5})
        -- (\p{0})
  ;

  \draw [rounded corners = 10]
    let
      \p{s} = (q3.-45),
      \p{t} = (q5.135),
      \p{m1} = ($(\p{s}) + (0, -0.35)$),
      \p{m2} = ($(\x{t}, \y{s}) + (0, -0.35)$),
    in
      (\p{s}) -- (\p{m1})
              -- (\p{m2})
              node [midway, above=-2pt] {\(\timeout{x}/o, \updateFig{x}{2}\)}
              -- (\p{t})
  ;
\end{tikzpicture}
  \caption{An \MMT with $\activeTimers(q_0) = \emptyset$,
  $\activeTimers(q_1) = \{x\}$, and $\activeTimers(q) = \{x,y\}$ for all other $q$.}\label{fig:ex:MMT:good}
\end{figure}

\subsection{Timed Semantics}\label{subsec:semantics}

The semantics of an \MMT $\M$ is defined via an infinite-state labeled
transition system describing all possible configurations and transitions
between them.
A \emph{valuation} is a partial function $\valuation \colon X \partto \nnr$ that
assigns nonnegative real numbers to timers.
For $Y \subseteq X$, we write $\Val{Y}$ for the set of all valuations $\valuation$
with $\dom{\valuation} = Y$.
A \emph{configuration} of $\M$ is a pair $(q, \valuation)$ where $q \in Q$ and
$\valuation \in \Val{\chi(q)}$.
The \emph{initial configuration} is the pair $(q_0, \valuation_0)$ where
$\valuation_0 = \emptyset$ since $\chi(q_0) = \emptyset$.
If $\valuation \in \Val{Y}$ is a valuation in which all timers from $Y$ have a
value of at least $d \in \nnr$, then $d$ units of time may \emph{elapse}.
We write $\valuation - d \in\Val{Y}$ for the resulting valuation that satisfies
$(\valuation -d)(x) = \valuation(x) -d$, for all $x \in Y$. 
If the valuation $\valuation$ contains a value $\valuation(x) = 0$ for some
timer $x$, then $x$ may \emph{time out}.
We define the transitions between configurations
\((q, \valuation), (q', \valuation')\) as follows:
\begin{description}
  \item[delay transition]
  \((q, \valuation) \myxrightarrow{d} (q, \valuation - d)\), with
  \(\valuation(x) \geq d\) for every \(x \in \activeTimers(q)\),
  \item[discrete transition]
  \((q, \valuation) \myxrightarrow[u]{i/o} (q', \valuation')\), with
  \(q \myxrightarrow[u]{i/o} q'\) a transition, \(\valuation'(y) = \valuation(y)\)
  for all \(y \in \activeTimers(q')\) except that
  \(\valuation'(x) = c\) 
  if \(u = (x, c)\).
  Moreover, if \(i = \timeout{x}\) then
  \(\valuation(x)=0\)
  and we call it a \emph{timeout transition};
  otherwise, we call it an \emph{input transition}.
\end{description}

A \emph{timed run} of $\M$ is a sequence of configuration transitions such that
delay and discrete transitions alternate, beginning and ending with a delay
transition.
We say a configuration $(q,\kappa)$ is \emph{reachable} if there is a timed run
$\rho$ that starts with the initial configuration and ends with $(q,\kappa)$.
The \emph{untimed projection} of \(\rho\), noted
\(\untimeRun{\rho}\), is the run obtained by omitting the valuations and
delay transitions of \(\rho\).
A run \(\pi\) is said \emph{feasible} if there is a timed
run \(\rho\) such that \(\untimeRun{\rho} = \pi\).

A \emph{timed word} over a set \(\Sigma\) 
is an alternating sequence of delays from $\nnr$
and symbols from \(\Sigma\), 
 such that it starts and ends with a delay.
The length of a timed word \(w\), noted \(\lengthOf{w}\), is the number of
symbols of $\Sigma$ in \(w\).
Note that, when \(\Sigma = \actions{\M}\), a timed run reading a timed word \(w\)
is uniquely determined by its first configuration and \(w\).
We thus write $(p,\kappa) \myxrightarrow{w}$ for a
timed run.
A timed run \(\rho\) is called \emph{\(x\)-spanning} (with \(x \in X\)) if
\(\untimeRun{\rho}\) is \(x\)-spanning.

\begin{example}
  Let \(\M\) be the \MMT of \Cref{fig:ex:MMT:good}.
  The timed run reading the timed word \(0.5 \cdot i \cdot 1 \cdot i
  \cdot 1 \cdot \timeout{x} \cdot 2 \cdot \timeout{y} \cdot 0\) and its
  untimed projection are:
  \begin{align*}
    (q_0, \emptyset)
    & {\myxrightarrow{0.5}} (q_0, \emptyset)
    {\myxrightarrow[(x, 2)]{i/o}} (q_1, x {=} 2)
    {\myxrightarrow{1}} (q_1, x {=} 1)
    {\myxrightarrow[(y, 3)]{i/o'}} (q_2, x {=} 1, y {=} 3)
{\myxrightarrow{1}} (q_2, x {=} 0, y {=} 2)\\
    &\myxrightarrow[(x, 2)]{\timeout{x}/o} (q_3, x {=} y {=} 2)
    \myxrightarrow{2} (q_3, x {=} y {=} 0)
    \myxrightarrow[\bot]{\timeout{y}/o} (q_0, \emptyset)
    \myxrightarrow{0} (q_0, \emptyset),
    \\
    \pi ={} & q_0
    \myxrightarrow[(x, 2)]{i/o} q_1
    \myxrightarrow[(y, 3)]{i/o'} q_2
    \myxrightarrow[(x, 2)]{\timeout{x}/o} q_3
    \myxrightarrow[\bot]{\timeout{y}/o} q_0.
  \end{align*}
  Hence, \(\pi\) is feasible,
  unlike the run \(q_0 \myxrightarrow{i \cdot i \cdot \timeout{y}}\), as the value
  of \(y\) in \(q_2\) is always greater than the value of \(x\), no matter the
  chosen delays.
  Observe that \(q_1 \myxrightarrow{i \cdot \timeout{x} \cdot \timeout{y}}\) is
  \(y\)-spanning,
  while \(q_0 \myxrightarrow{i \cdot i \cdot \timeout{x}}\) is \(x\)-spanning.

  As \((q_2, x = 0, y = 2)\) is reachable, we say that
  \(x\) is \emph{enabled} in \(q_2\), i.e., it is possible to observe the timeout
  of \(x\) in \(q_2\) along some timed run.
  However, \(y\) is not enabled in \(q_2\), as it is impossible to reach a
  configuration \((q_2, x = \cdot, y = 0)\).
\end{example}
Formally, we write \(\enabled{q}\) for the set of all enabled timers of \(q\),
i.e.,
\[
  \enabled{q} = \{x \in \activeTimers(q) \mid \exists (q_0, \emptyset)
  \myxrightarrow{w} (q, \valuation) : \valuation(x) = 0 \}.
\]
If \(q\) has at least one enabled timer then, just by waiting in \(q\) for long
enough, we can force one timer to reach the value zero.
A desirable property is for all such behaviors to have a corresponding timeout
transition, which is the case for \(\M\).
We thus say that an \MMT \(\N\) is \emph{complete} if for all
\(q \in Q\) and all \(i \in I \cup \toevents{\enabled{q}}\) we have \(q \myxrightarrow{i} {} \in \runs{\N}\).\footnote{Notice that we only require the timeout transitions to be defined for enabled timers.}

\subsection{Symbolic Semantics}\label{sec:MMT:equivalence}

Our learning algorithm is based on an untimed, symbolic semantics that abstracts from timing delays and timer names.
The idea is that, rather than the name of a timer in a $\timeout{x}$ event, we record a pointer to the preceding transition that (re)started this timer, together with the value to which the timer was set in this preceding transition.
Consider, for instance, the run
\[
\pi  = q_0 \myxrightarrow[(x, 2)]{i} q_1
\myxrightarrow[(x, 2)]{\timeout{x}} q_1
\myxrightarrow{\timeout{x}} q_1
\]
of the \MMT of \Cref{fig:ex:MMT:good}.  
Then the \emph{symbolic input word} of $\pi$ is $\symbolic{w} = i \cdot \timeout{2, 1} \cdot \timeout{2, 2}$, as 
the first action of \(\pi\) is the input \(i\), 
the second action is a timeout that is caused by setting the timer to $2$ in the first transition, and 
the third action is a timeout that is caused by setting the timer to $2$ in the second transition.
More generally, consider a run
\[
	\pi =  p_0 \myxrightarrow[u_1]{i_1/o_1} p_1 \myxrightarrow[u_2]{i_2/o_2} \dotsb
	\myxrightarrow[u_n]{i_n/o_n} p_n 
\]
with $p_0 = q_0^{\M}$.  For \(k \in \{1, \dotsc, n\}\), let $\pi_k$ be the prefix of run $\pi$ up to state $p_k$.
Suppose that timer $x$ is active in the last state of $\pi_k$, for some $k>0$. Then we define
$\cause{\pi_k}{x}$  as the pair $(c, j)$, where $j$ is the index of the last transition (re)starting \(x\), and $c$ is the value to which $x$ is set in this transition:
\[
	\cause{\pi_k}{x} = \begin{cases}
		(c, k) 					& \mbox{if } u_k=(x,c),\\
		\cause{\pi_{k-1}}{x}    & \mbox{otherwise.}
	\end{cases}
\]
Now we associate with $\pi$ a \emph{symbolic word} (sw, in short)
$\sinpw{\pi} =\symbolic{i_1} \dotsb \symbolic{i_n}$ over alphabet \(\symbActions = I \cup \toevents{\natplus\times\natplus}\) as follows.
For every \(k \in \{1, \dotsc, n\}\),
if \(i_k \in I\) then \(\symbolic{i_k} = i_k\), and 
if \(i_k = \timeout{x}\) then
\(\symbolic{i_k} = \timeout{\cause{\pi_{k-1}}{x}}\).
Similarly, we associate an \emph{output word} $\outw{\pi}$ with $\pi$, consisting of the sequence of output actions $o_1 \cdots o_n$ occurring in run $\pi$.
We write $\silanguage{\M}$ to denote the \emph{symbolic input language} of $\M$, defined as the set of symbolic input words associated with feasible runs of $\M$ starting from $q_0^{\M}$.

Notice that, since MMTs are deterministic, a feasible run \(\pi\) can be
denoted via the word \(w = i_1 \dotsb i_n\), and vice-versa.
Thus, for every word \(w\) labeling a feasible run \(\pi\), we also associate a
sw to \(w\) as \(\toSymbolic{w} = \sinpw{\pi}\).
Moreover, since each transition of an MMT (re)starts at most one timer,
at most one run can be associated with any symbolic input word.
For instance, the symbolic input word
  \(\symbolic{w} = i \cdot i \cdot \timeout{2, 1} \cdot \timeout{3, 2}\) induces
  the run \(q_0 \myxrightarrow[(x, 2)]{i} q_1 \myxrightarrow[(y, 3)]{i} q_2
  \myxrightarrow{\timeout{x}} q_3 \myxrightarrow{\timeout{y}} q_0\) in \(\M\).
If $\symbolic{w} \in\silanguage{\M}$, we write $\ofunction{\M}{\symbolic{w}}$ for the output word $\outw{\pi}$ of the unique run $\pi$ of $\M$ with $\sinpw{\pi} = \symbolic{w}$.

We now define a notion of \emph{symbolic equivalence} between two \MMTs.
\begin{definition}[Symbolic equivalence]\label{def:equivalence:symbolic}
  Two \complete \MMTs \(\M\) and \(\N\) are \emph{symbolically equivalent}, noted \(\M \symEquivalent \N\), 
  if $\silanguage{\M} = \silanguage{\N}$ and, for each $\symbolic{w} \in \silanguage{\M}$, $\ofunction{\M}{\w} = \ofunction{\N}{\w}$.
\end{definition}
As outputs and updates at the start of spanning runs are the same,
symbolic equivalence implies \emph{timed equivalence}, i.e., the MMTs produce the same timed output words on all timed input words
(see \Cref{app:equivalence}).

\subsection{Learning framework}

As usual for active learning algorithms, we rely on
Angluin's framework~\cite{Angluin87}:
A teacher knows an \MMT \(\M\),
and a learner can query the teacher to obtain
knowledge about \(\M\).
The set of \MMTs $\M$ that we consider for our learning algorithm must be \emph{\good} (the \emph{s} stands for symbolically), i.e., 
$\M$ is \complete and every run of \(\M\) is feasible.
The \MMT of \Cref{fig:ex:MMT:good} is \good.
From any \complete \MMT,
one can construct an \good \MMT  that is symbolically equivalent, by using
\emph{zones}
(akin to the homonymous concept for
timed automata, see~\cite{HandbookModelChecking}).
See \Cref{proof:lemma:gootMMT} for details.

\begin{restatable}{lemma}{goodMMT}\label{lemma:goodMMT}
  For any \complete \MMT \(\M\), there is an \good \MMT
  \(\N \symEquivalent \M\).
\end{restatable}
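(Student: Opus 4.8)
The plan is to carry out a \emph{zone} construction, in the spirit of the zone graph of a timed automaton but adapted to the fact that timers count down rather than up. Write $\maxTime$ for the largest constant occurring in an update of $\M$. Since a timer started with value $c$ stays in $[0,c]$ until it is stopped or restarted, and (by requirements 1 and 2 of \Cref{def:MMT}) every active timer of a reachable configuration has been started by an earlier transition, every reachable valuation of $\M$ is bounded by $\maxTime$. A \emph{zone} over a set $Y \subseteq X$ is a non-empty subset of $\Val{Y}$ defined by a conjunction of constraints $x \bowtie m$ and $x - y \bowtie m$ with $x,y \in Y$, $\bowtie \in \{<,\le,\ge,>\}$ and $m \in \integers$ with $\lvert m\rvert \le \maxTime$. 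As in the timed-automata setting, zones are closed under the operations we need — intersection, the reset $x := c$ (with $c \le \maxTime$), intersection with $\{x = 0\}$, projection onto a subset of timers, and the time-elapse operator $\mathrm{elapse}(Z) = \{\kappa - d \mid \kappa \in Z,\, d \ge 0,\, \kappa(y) \ge d \text{ for all } y \in \dom\kappa\}$ — and, since all difference-bound entries stay integral and bounded in absolute value by $\maxTime$, there are only finitely many of them.

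I would then take $\N$ to be the reachable part of the corresponding \emph{zone automaton}: its states are pairs $(q,Z)$ with $q \in Q^{\M}$ and $Z$ a zone over $\activeTimers^{\M}(q)$; the initial state is $(q_0^{\M},\{\emptyset\})$, where $\{\emptyset\}$ is the zone containing only the empty valuation; $\activeTimers^{\N}((q,Z)) = \activeTimers^{\M}(q)$; and $(q,Z) \xrightarrow[u]{i/o} (q',Z')$ is a transition of $\N$ exactly when $q \xrightarrow[u]{i/o} q'$ is a transition of $\M$ and $Z' \neq \emptyset$, where $Z'$ is obtained from $\mathrm{elapse}(Z)$ by first intersecting with $\{x = 0\}$ when $i = \timeout{x}$, then applying the update $u$ (a no-op when $u = \bot$), then restricting to $\activeTimers^{\M}(q')$. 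Keeping only the states reachable from $(q_0^{\M},\{\emptyset\})$ yields a finite machine. That $\N$ is a well-formed \MMT is immediate: it reuses the transition labels, updates and active-timer sets of $\M$, so determinism and requirements 1--3 of \Cref{def:MMT} are inherited verbatim.

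The heart of the argument is the following invariant, which I would prove by induction on the length of a symbolic word $\symbolic{w}$: the run $q_0^{\N} \xrightarrow{\symbolic{w}}$ exists in $\N$ if and only if $q_0^{\M} \xrightarrow{\symbolic{w}}$ is feasible in $\M$, and in that case the zone $Z$ at the last state of $q_0^{\N} \xrightarrow{\symbolic{w}}$ is exactly the set of valuations $\kappa$ such that some timed run of $\M$ from $(q_0^{\M},\emptyset)$ with untimed projection $q_0^{\M} \xrightarrow{\symbolic{w}}$ ends in $(q,\kappa)$. The inductive step just checks that this set of reachable valuations transforms along a discrete transition precisely as the $Z \mapsto Z'$ rule above prescribes, and that non-emptiness of $Z'$ is equivalent to the transition being realizable; the only mildly delicate point is that a delay of $0$ is always allowed, so non-emptiness of the final zone is equivalent to feasibility of the whole run, hence of all of its prefixes. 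From the invariant, the three required properties follow. First, every run of $\N$ is feasible: its states carry non-empty zones, so picking a valuation in the last one and propagating backwards through the $\mathrm{elapse}$ and $Z \mapsto Z'$ steps produces witnessing delays. Second, $\N$ is \complete: for $i \in I$ this is inherited from $\M$; and if $\timeout{x} \in \toevents{\enabled{(q,Z)}}$, then $x$ is enabled at $q$ in $\M$, so $q \xrightarrow{\timeout{x}} {} \in \runs{\M}$, and this transition survives in $\N$ because by definition of $\enabled{(q,Z)}$ some $\kappa \in Z$ can elapse to make $x$ time out, so the resulting zone is non-empty. Hence $\N$ is \good. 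Third, $\M \symEquivalent \N$: a symbolic word is feasible in $\M$ iff it is feasible in $\N$ by the invariant, and since the transitions of $\N$ carry over the outputs $o_j$ and updates $u_j$ of $\M$ unchanged, the conditions on outputs and on updates at the start of spanning sub-runs in \Cref{def:equivalence:symbolic} hold trivially.

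I expect the main obstacle to be the invariant step: setting up the zone post-operation so that the interaction of time-elapse with the ``$\kappa(x) = 0$'' constraint of timeout transitions, and with the reset of the single started timer, is exactly right, and verifying that the class of zones is genuinely closed under these operations while remaining finite (the bounded difference-bound-matrix argument). Once that is in place, well-formedness, feasibility of runs, completeness, and the final symbolic equivalence all follow routinely.
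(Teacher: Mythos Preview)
Your proposal is correct and follows essentially the same approach as the paper: build the zone \MMT of \(\M\), argue finiteness via bounded integral DBMs, and derive completeness, feasibility of all runs, and symbolic equivalence from the invariant that the zone at \((q,Z)\) records exactly the reachable valuations. The only cosmetic difference is where the time-elapse operator sits (you apply it before each discrete step, the paper applies the downward closure after), which yields an equivalent automaton and the same proof obligations.
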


We now define the \emph{queries} the learner uses to gather knowledge about \(\M\),
the \MMT of the teacher.
For classical Mealy
machines~\cite{VaandragerGRW22,ShahbazG09},
there are two queries: \emph{output queries} providing the sequence of
outputs for a given input word, and \emph{equivalence queries} asking whether
a hypothesis \(\hypothesis\) is correct.
If it is not, a counterexample is returned, i.e., a word \(w\) inducing different
outputs in \(\hypothesis\) and in \(\M\).
In this work, we need to adapt those queries to encode the timed behavior
induced by the timers of \(\M\).
As two \MMTs do not use the same timers in general, we rely on \siws,
and adapt our 
queries accordingly.
In order to deal with timed behavior, we also need a new type of query, 
called a \emph{wait query}.
Kogel et al.~introduced a similar type of query, called timer query, in~\cite{KogelKG23}.

\begin{definition}[Symbolic queries]\label{def:queries:symbolic}
  The learner uses three symbolic queries, parametrized by either a symbolic input word $\symbolic{w}$ or a complete MMT $\hypothesis$:
  \begin{description}
    \item[\(\symOutputQ(\symbolic{w})\)] If $\symbolic{w} \in\silanguage{\M}$, then this query returns $\ofunction{\M}{\symbolic{w}}$.
    \item[\(\symWaitQ(\symbolic{w})\)]
   If $\w \in\silanguage{\M}$, then this query returns the set of all pairs \((c, j)\) with $\w ~\timeout{c, j} \in\silanguage{\M}$.
    \item[\(\symEquivQ(\hypothesis)\)]
    This query returns \yes if \(\hypothesis \symEquivalent \M\),
    and otherwise a symbolic input word $\w$ witnessing the non-equivalence: either 
    $\w \in\silanguage{\hypothesis}\setminus\silanguage{\M}$, or
    $\w \in\silanguage{\M}\setminus\silanguage{\hypothesis}$, or
    $\w \in\silanguage{\hypothesis}\cap\silanguage{\M}$ and
    $\ofunction{\hypothesis}{\w} \neq \ofunction{\M}{\w}$.
  \end{description}
\end{definition}

$\symOutputQ$ and $\symEquivQ$ are analogous to regular output and equivalence queries 
for Mealy machines, while $\symWaitQ$ provides, 
for each timer $x$ enabled at the end of the run induced by the symbolic word, 
the transition which last (re)started $x$ and the constant $c$ with which $x$ was (re)started.
(See \Cref{ex:learning:use_queries} below.)

Similarly to the TA learning algorithm of Waga~\cite{Waga23}, these symbolic queries can be performed via
concrete output and equivalence queries. This is possible under the assumption that \(\M\) is 
\emph{race-avoiding}~\cite{BruyerePSV23}, i.e., $\M$
allows runs to be observed deterministically, in the sense that
any feasible run \(\pi\) is the untimed projection of a run \(\rho\) where
all delays are non-zero
and there are no two timers that time out at the same time along \(\rho\).\footnote{We highlight that we do not require this to hold for any timed run
\(\rho\). Importantly, if \(\M\) is race-avoiding, then we can always change the
delays in the runs to avoid having concurrent timeouts.}
Not all \MMTs are race-avoiding and a decidable characterization of those \MMTs that are is given in~\cite{BruyerePSV23}.
\MMTs with one timer~\cite{VaandragerBE21} and Mealy machines with local timers~\cite{KogelKG23} (translated as \MMTs) are all race-avoiding, this is also the case for all benchmarks in \Cref{sec:implementation}.

\begin{restatable}{lemma}{symbolicQueries}\label{lemma:queries}
  For \robust MMTs, the three symbolic queries can be realized
  via a polynomial number of concrete output and equivalence queries.
\end{restatable}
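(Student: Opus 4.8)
The plan is to treat the three symbolic queries separately, in each case exhibiting a procedure that realises them through finitely (in fact polynomially) many concrete output/equivalence queries. The single tool we need throughout is the characterisation of \robust \MMTs from~\cite{BruyerePSV23}: every feasible run is the untimed projection of a timed run in which all delays are strictly positive and no two timers expire simultaneously (a \emph{race-free} realisation), and, moreover, infinitesimal perturbations of individual delays of a race-free realisation propagate \emph{continuously} to the firing times of timeouts, shifting exactly those timeouts whose $x$-spanning sub-run contains the perturbed delay. We may assume the runs the learner actually queries are feasible: for $\symOutputQ$ and $\symWaitQ$ this is built into the stepwise construction of the argument (and, if needed, we first pass to the \good form of \(\M\) given by \Cref{lemma:goodMMT}, in which every run is feasible).

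\textbf{$\symOutputQ(\symbolic{w})$.}
First I would convert \(\symbolic{w}\) into the run \(\pi = q_0^\M \xrightarrow{\symbolic{w}}\) and build, incrementally over the prefixes of \(\symbolic{w}\), a race-free timed word \(w\) whose timed run \(\rho\) satisfies \(\untimeRun{\rho} = \pi\): a common small delay \(\epsilon > 0\) before every input action, and, before a symbolic timeout \(\timeout{j}\), a delay equal to the residual value of the timer started at position \(j\); that residual is recovered by one auxiliary concrete output query that lets a long delay (exceeding \maxTime) elapse after the current prefix and reads off when the relevant timeout fires, disambiguated, if several are pending, by the wiggling test described next. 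A single concrete output query on \(w\) then returns the outputs of \(\pi\). This costs \complexity{\lengthOf{\symbolic{w}} \cdot \lengthOf{X}} concrete output queries.

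\textbf{$\symWaitQ(\symbolic{w})$.}
With \(\rho\) and \(w\) as above, ending in a configuration \((q_n,\valuation)\), I would, for each timer \(x\) enabled in \(q_n\), produce the required pair \((j,c)\). Append a long final delay to \(w\); by race-freeness the timeout of \(x\) is then observed in isolation by a concrete output query. To find \(j\): scanning \(k = n, n-1, \dots\), perturb the \(k\)-th delay of \(w\) by an infinitesimal \(\zeta\) and re-query; the largest \(k\) for which the observed \(x\)-timeout shifts by \(\zeta\) is \(j\) (the wiggling argument of~\cite{BruyerePSV23}). To find \(c\): the sum of the delays of \(w\) strictly after position \(j\), plus the delay \(d\) at which the \(x\)-timeout fires, equals \(c\); since \(c \in \natplus\) and \(c \le \maxTime\), a binary search over \(d \in [0,\maxTime]\) fixes it in \complexity{\log \maxTime} concrete output queries. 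Ranging over positions and timers, this is \complexity{\lengthOf{\symbolic{w}} \cdot \lengthOf{X} \cdot \log \maxTime} concrete output queries, hence polynomial.

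\textbf{$\symEquivQ(\hypothesis)$.}
Here I would issue one concrete equivalence query for \(\hypothesis\) against \(\M\). For \complete \MMTs, symbolic and timed equivalence coincide: symbolic equivalence implies timed equivalence by \Cref{app:equivalence}, and conversely a symbolic word on which outputs or start-constants of spanning sub-runs differ yields a distinguishing timed word. Hence, if the teacher reports timed equivalence, return \yes; otherwise the teacher returns a concrete counterexample timed word \(w\), and I return the \siw \(\toSymbolic{\word{\untimeRun{\rho}}}\) where \(\rho\) reads \(w\) in whichever of \(\hypothesis,\M\) it is feasible; by the coincidence just stated this \siw witnesses \(\hypothesis \notSymEquivalent \M\). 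This uses a single concrete equivalence query.

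\textbf{Main obstacle.}
The crux is the \(\symWaitQ\) procedure: one must prove that the wiggling test cleanly isolates the transition that (re)started each timeout and that the corresponding constant is then readable from a race-free realisation. This is precisely where the \robust hypothesis is indispensable — without it a perturbation may create or destroy a race, so the set of observed timeouts and their firing times need not vary continuously with the delays, and neither the identification of \(j\) nor the binary search for \(c\) is well defined. The only other delicate point, the direction ``timed equivalence \(\Rightarrow\) symbolic equivalence'' used for \(\symEquivQ\), is comparatively routine once the class of \MMTs produced during learning is pinned down.
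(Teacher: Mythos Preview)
Your handling of $\symOutputQ$ and $\symWaitQ$ is broadly in the right spirit, though the paper's mechanism differs: instead of per-position wiggling it assigns each input a unique fractional part in the delay sum, so the fractional part of any observed timeout instant immediately identifies the input that (re)started the corresponding timer, and the constant $c$ is then read off directly from the delays (your binary search is superfluous---once $j$ is known and the firing delay $d$ is observed, $c$ is just their sum).

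The genuine gap is in your $\symEquivQ$ argument. You assert that for \complete \MMTs symbolic and timed equivalence coincide, calling the converse direction ``comparatively routine''. This is false. The paper constructs (see \Cref{app:equivalence}) two \complete \MMTs that are timed equivalent yet not symbolically equivalent: three timers $x,y,z$ are arranged so that the only way to reach the states where the distinguishing outputs $o_1,o_2$ occur is with all three expiring simultaneously; any \tiw reaching that configuration therefore induces the \emph{same set} of \tows in both machines, while the symbolic word that fixes a particular interleaving of the simultaneous timeouts exposes the swapped outputs. Hence a concrete equivalence query may return \yes while $\hypothesis \notSymEquivalent \M$, and your reduction does not meet the stated specification of $\symEquivQ$.

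The paper resolves this in one of two ways: either implement a direct symbolic-equivalence check (a product/zone reachability computation, not via the concrete oracle at all), or accept the weaker contract---return \yes on timed equivalence---and argue separately that this relaxed oracle still lets the learning loop terminate with a \emph{timed}-equivalent hypothesis. Your proposal needs one of these fixes; the claimed coincidence of the two equivalences is simply wrong.
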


For intuition, for all symbolic queries, we can construct a system of linear temporal constraints whose solutions can be mapped to concrete input words. Technically, the system also depends on the tree we use to structure the result of queries we have made. (See \Cref{proof:lemma:queries} for a proof.)

\section{Learning algorithm}\label{sec:learning}

We now present our learning algorithm for \MMTs, called \lsharpMMT.
Given the hidden \good \MMT \(\M\) we want to learn, this algorithm gradually builds a tree-shaped \MMT $\tree$ that stores the observations obtained by membership and wait queries. It then constructs a hypothesis \MMT $\hypothesis$ obtained from $\tree$ whose certain states (the basis) are the states of $\hypothesis$ and certain transitions are redirected to the basis. An example of \MMTs $\tree$ and $\hypothesis$ is given in \Cref{fig:ex:counterexample} for the \MMT $\M$ of \Cref{fig:ex:MMT:good}, such that the equivalence query fails. The main difficulties of the learning process are: \emph{(i)} $\tree$ uses its own timers without knowing those of $\M$ and for which it is necessary to discover when two distinct timers of $\tree$ represent the same timer of $\M$ (see the concept of timer matching and functional simulation below); \emph{(ii)} the basis of $\tree$ used to define the states of $\hypothesis$ must be formed of states that correspond to distinct states of $\M$ despite partial knowledge of its timers (see the concept of apartness and \Cref{thm:extension-n-soundness} below).
Before moving on to the details of the \lsharpMMT algorithm, let us state its complexity 
(see \Cref{app:learning:termination} for a proof).
Observe that the same bounds hold with concrete queries, by \Cref{lemma:queries}, and that the complexity becomes polynomial when \(\lengthOf{X^\M}\) is fixed.

\begin{figure}[t]
  \centering
  \begin{tikzpicture}[
  automaton,
]
  \node [state, initial, basis]                   (t0)  {\treeNodeLabel{\(t_0\)}};
  \node [state, right=40pt of t0, basis]          (t1)  {\treeNodeLabel{\(t_1\)}};
  \node [state, above right=7pt and 78pt of t1]   (t2)  {\(t_2\)};
  \node [state, below right=7pt and 78pt of t1]   (t3)  {\(t_3\)};
  \node [state, right=58pt of t2]                 (t4)  {\(t_4\)};
  \node [state, right=58pt of t3]                 (t5)  {\(t_5\)};

  \path
    (t0)  edge                  node [above] {\(i/o\)}
                                node [below] {\(\updateFig{x_1}{2}\)}         (t1)
    (t1)  edge                  node [below=-2pt] {\(i/o', \bot\)}            (t3)
          edge                  node [sloped] {\(\timeout{x_1}/o,
                                              \updateFig{x_1}{2}\)}           (t2)
    (t2)  edge                  node [above=-2pt] {\(\timeout{x_1}/o, \bot\)} (t4)
    (t3)  edge                  node [above=-2pt] {\(\timeout{x_1}/o, \bot\)} (t5)
  ;
\end{tikzpicture}
\begin{tikzpicture}[
  automaton,
  every loop/.append style = {
    min distance = 1mm,
  },
]
  \node [state, initial]          (t0)  {\(t_0\)};
  \node [state, right=55pt of t0] (t1)  {\(t_1\)};

  \path
    (t0)  edge              node [above=-1pt] {\(i/o\)}
                            node [below]
                              {\(\updateFig{y_1}{2}\)}  (t1)
    (t1)  edge [loop right] node {\(i/o', \bot\)}                         (t1)
          edge [loop above] node [above=-1pt] {\(\timeout{y_1}/o,
                                  \updateFig{y_1}{2}\)} (t1)
  ;
\end{tikzpicture}
  \caption{On the left, a tree $\tree$ from which the hypothesis \MMT on the
  right is constructed.
  Basis states of $\tree$ have a gray background.}\label{fig:ex:counterexample}
\end{figure}

\begin{restatable}{theorem}{termination}\label{thm:learning:termination}
  The \lsharpMMT algorithm terminates
  and returns an \MMT \(\N \symEquivalent \M\) s.t. $\N$ has a size
  polynomial in \(\lengthOf{Q^\M}\) and factorial in \(\lengthOf{X^\M}\). 
    Its running time and number of symbolic queries of \lsharpMMT
  are polynomial in \(\lengthOf{Q^\M}, \lengthOf{I}\) and the length of the
  longest counterexample returned by the teacher, factorial in \(\lengthOf{X^\M}\).
\end{restatable}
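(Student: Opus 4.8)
The plan is to follow the standard $L^\#$ correctness-and-termination template, adapted to the timed/symbolic setting, and to account carefully for the extra blow-up caused by timers. First I would establish the structural invariants maintained by the observation tree $\tree$: every node is reached from the root by a symbolic word $\symbolic{w}$ with $q_0^\M \xrightarrow{\symbolic{w}} {} \in \runs{\M}$, the labels on tree edges (outputs and the $c$-constants at the start of spanning sub-runs) agree with $\M$ by construction of $\symOutputQ$ and $\symWaitQ$, and the apartness relation $\apart$ on tree states is sound, i.e. apart tree states are never identified in $\M$. I would then argue that the hypothesis $\hypothesis$ constructed from a basis $\basis$ and frontier $\frontier$ is well-defined and \complete once the usual $L^\#$ completeness conditions on $\tree$ hold (every frontier state apart from all but one basis state, every basis state $\times$ every enabled action explored). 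This gives the correctness half: when $\symEquivQ(\hypothesis)$ returns \yes we are done and output $\hypothesis \symEquivalent \M$; the returned $\hypothesis$ has $\lengthOf{Q^\hypothesis}$ bounded by the number of basis states.

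Next I would bound the number of basis states. Each basis state corresponds to a $\symEquivalent$-inequivalent "symbolic state" of $\M$; the key lemma (proved earlier in spirit by \Cref{lemma:goodMMT} and the symbolic-equivalence machinery) is that the number of such classes is polynomial in $\lengthOf{Q^\M}$ and factorial in $\lengthOf{X^\M}$ — the factorial factor arises because a symbolic state must also record, up to renaming, the correspondence between the at most $\lengthOf{X^\M}$ active timers and their "ages"/constants, and there are $\lengthOf{X^\M}!$ such renamings. Hence the basis has size polynomial in $\lengthOf{Q^\M}$ and factorial in $\lengthOf{X^\M}$, and so does $\lengthOf{Q^\hypothesis}$; the frontier adds a factor $\lengthOf{I} + \lengthOf{X^\M}$ per basis state (one successor per enabled action). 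This yields the claimed size bound on $\N = \hypothesis$.

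Then I would bound the number of iterations of the main loop. Each iteration does one of: (i) a \promotion/\completion step that strictly enlarges the basis or removes a frontier-to-basis ambiguity — these are bounded by the basis size times $(\lengthOf{I}+\lengthOf{X^\M})$ as in $L^\#$; or (ii) process a counterexample of length $\ell$ returned by $\symEquivQ$, which (via the standard $L^\#$ counterexample-processing argument: locate a point of disagreement and add witnesses along the path) adds at most $\ell$ new tree nodes and creates at least one new apartness pair, hence strictly increases a monotone progress measure bounded by the number of basis/frontier states. Since between two counterexamples only finitely many type-(i) steps occur, and each counterexample strictly advances the progress measure, the loop terminates after a number of iterations polynomial in $\lengthOf{Q^\M}, \lengthOf{I}$, factorial in $\lengthOf{X^\M}$, and linear in the longest counterexample length $\ell$. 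Each iteration issues polynomially many symbolic queries (one $\symOutputQ$/$\symWaitQ$ per new node, one $\symEquivQ$), and by \Cref{lemma:queries} each symbolic query costs polynomially many concrete queries for \robust \MMTs; multiplying gives the stated time/query bound. The main obstacle I anticipate is step two — making the factorial-in-$\lengthOf{X^\M}$ counting of symbolic states precise and showing it is an \emph{upper} bound on the basis size — since one must argue that the learner never introduces spurious distinctions between timer renamings, i.e. that the apartness relation restricted to tree states descends to the quotient by timer renaming; this requires care with how $\symWaitQ$ answers are folded into the tree (the $\foldFunction$ function) and with the "at most one timer started per transition" constraint of \Cref{def:MMT}.
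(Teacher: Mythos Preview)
Your high-level template is right, but there are two concrete gaps that would prevent the argument from going through as written.

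First, you are missing the \emph{seismic} mechanism. In the \MMT setting, a $\symWaitQ$ (or counterexample processing) can reveal a new active timer in a state already in the basis; when this happens, previously maximal matchings become non-maximal and the apartness of two basis states may no longer hold for \emph{all} maximal matchings. The algorithm therefore resets the basis to $\{q_0^\tree\}$ and rebuilds it. Your ``strictly increases a monotone progress measure'' argument breaks here: promotions are \emph{not} globally monotone. The paper handles this by bounding the number of seismic events by $\lengthOf{\basis}\cdot\lengthOf{X^\M}$ (each seismic is triggered by a new active timer in a basis state, and there are at most $\lengthOf{X^\M}$ per state), then bounding promotions, completions, etc.\ \emph{per seismic interval} and multiplying. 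Without this two-level accounting your termination argument has a hole.

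Second, you locate the factorial blow-up in the wrong place. The paper's basis bound is $\lengthOf{\basis}\le\lengthOf{Q^\M}\cdot 2^{\lengthOf{X^\M}}$, obtained by pigeonhole on pairs $(f(p),g(\activeTimers^\tree(p)))$: if two basis states share both, one builds a matching $m_g$ with $g(x)=g(m_g(x))$ for all $x$, and soundness (\Cref{thm:extension-n-soundness}) forces $\lnot(p\apart^{m_g}p')$, contradicting basis membership. The $2^{\lengthOf{X^\M}}$ comes from \emph{subsets} of $X^\M$, not from renamings. The $\lengthOf{X^\M}!$ factor enters separately and later: the hypothesis is first built as a \emph{generalized} \MMT (with per-transition timer renamings) of size $\lengthOf{\basis}$, and only the final conversion from \gMMT to \MMT costs the factorial (each state splits into at most $\lengthOf{X^\M}!$ copies indexed by bijections $\activeTimers^\M(q)\leftrightarrow X^\N$). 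Your proposal conflates these, and the ``apartness descends to the quotient by timer renaming'' worry you flag is handled directly by the soundness theorem rather than being the crux. You also omit the \minimizationActive step (replaying spanning runs to equalize $\lengthOf{\activeTimers^\tree(p)}$ and $\lengthOf{\activeTimers^\tree(r)}$), which is needed before a hypothesis can even be formed and contributes its own query cost.
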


\subsection{Observation tree}\label{sec:learning:tree}

We describe the main data structure of \lsharpMMT:
a modification of the \emph{observation tree} used for
\lsharp~\cite{VaandragerGRW22}.
We impose that this tree \(\tree\) is a tree-shaped \MMT
whose every run is feasible.
Each state \(q\) of \(\tree\) has its own timer \(x_q\) that can only be started
by the incoming transition of \(q\), and may be restarted only by a
\(\timeout{x_q}\)-transition.
Due to the tree-shape nature of $\tree$, we can impose strict constraints on the sets of
active and enabled timers of its states, as described in the next definition.
We only provide the main ideas, see \Cref{app:tree} for details.

\begin{definition}[Observation tree]\label{def:tree}
An MMT $\M$ is \emph{tree shaped} if for every state $q \in Q^{\M}$ there is a unique run, denoted $\run{\M}{q}$, that starts in $q^{\M}_0$ and ends in $q$.
  An \emph{observation tree} is a tree-shaped \MMT
  \(\tree = (X, Q, q_0, \activeTimers, \delta)\) such that
  \(X = \{x_q \mid q \in Q \setminus \{q_0\}\}\),
  every run of \(\tree\) is feasible,
  \begin{itemize}
    \item
    for all \(p \myxrightarrow[(x, c)]{i} q\) with \(i \in I\), we have \(x = x_q\),
    \item
    for all \(q \in Q, x \in X\), we have \(x \in \activeTimers(q)\) if and only if
    there is an \(x\)-spanning run traversing \(q\), and
    \item
    for all \(q \in Q, x \in X\), we have \(x \in \enabled{q}\) if and only if
    \(q \myxrightarrow{\timeout{x}} {} \in \runs{\tree}\).
  \end{itemize}
\end{definition}

\begin{figure}[t]
  \centering
  \begin{tikzpicture}[
  automaton,
]
  \node [state, initial, basis]                         (t0)  {\treeNodeLabel{\(t_0\)}};
  \node [state, right=58pt of t0, basis]                (t1)  {\treeNodeLabel{\(t_1\)}};
  \node [state, above right=8pt and 82pt of t1]         (t2)  {\(t_2\)};
  \node [state, below right=8pt and 82pt of t1, basis]  (t3)  {\treeNodeLabel{\(t_3\)}};
  \node [state, above right=8pt and 90pt of t2]         (t4)  {\(t_4\)};
  \node [state, above right=8pt and 90pt of t3]         (t5)  {\(t_5\)};
  \node [state, below right=8pt and 90pt of t3]         (t6)  {\(t_6\)};
  \node [state, right=70pt of t4]                       (t7)  {\(t_7\)};
  \node [state, right=70pt of t5]                       (t8)  {\(t_8\)};
  \node [state, right=70pt of t6]                       (t9)  {\(t_9\)};
  \node [state, below=15pt of t9]                       (t10) {\(t_{10}\)};

  \path
    (t0)  edge                  node {\(i/o, \updateFig{x_1}{2}\)}              (t1)
    (t1)  edge                  node [sloped, '] {\(i/o', \updateFig{x_3}{3}\)} (t3)
          edge                  node [sloped] {\(\timeout{x_1}/o,
                                                \updateFig{x_1}{2}\)}           (t2)
    (t2)  edge                  node [sloped, above=-2pt]
                                              {\(\timeout{x_1}/o, \bot\)}       (t4)
    (t3)  edge                  node [sloped, above=-2pt] {\(\timeout{x_1}/o,
                                                \updateFig{x_1}{2}\)}           (t5)
          edge                  node [sloped, below=-2pt]
                                                {\(i/o', \updateFig{x_6}{2}\)}  (t6)
    (t5)  edge                  node [above=-2pt] {\(\timeout{x_3}/o, \bot\)}   (t8)
    (t6)  edge                  node [above=-2pt] {\(\timeout{x_6}/o, \bot\)}   (t9)
  ;

  \draw [rounded corners = 10pt]
    let
      \p{s} = (t5.45),
      \p{t} = (t7.-180),
    in
      (\p{s}) -- ($(\x{s}, \y{t}) + (0.1, 0)$)
              -- (\p{t})
              node [above=-2pt, midway] {\(\timeout{x_1}/o, \bot\)}
  ;

  \draw [rounded corners = 10pt]
    let
      \p{s} = (t6.-43),
      \p{t} = (t10.180),
    in
      (\p{s}) -- ($(\x{s}, \y{t}) + (0.2, 0)$)
              -- (\p{t})
              node [above=-2pt, midway] {\(\timeout{x_3}/o, \bot\)}
  ;
\end{tikzpicture}
  \caption{Sample observation tree (we write \(x_i\) instead of \(x_{t_i}\) for all states $t_i$) with
  \(\activeTimers(t_1) = \activeTimers(t_2) = \{x_1\}\),
  \(\activeTimers(t_3) = \activeTimers(t_5) = \{x_1, x_3\}\),
  \(\activeTimers(t_6) = \{x_3, x_6\}\), and
  $\activeTimers(t) = \emptyset$ for the other states $t$.
  }\label{fig:ex:tree}
\end{figure}

We explain
how to use \symOutputQ and \symWaitQ to gradually grow \(\tree\) on an example.

\begin{example}\label{ex:learning:use_queries}
  Let \(\M\) be the \MMT of \Cref{fig:ex:MMT:good} and
  \(\tree\) be the observation tree of \Cref{fig:ex:tree}, except that
  \(t_3 \myxrightarrow{i} {} \notin \runs{\tree}\), i.e.,
  the subtree rooted at \(t_6\) is not present in the tree.
  We construct that subtree, via \symOutputQ and \symWaitQ.

  First, we create the \(t_3 \myxrightarrow{i}\) transition.
  Let \(w = i \cdot i\) be the unique word such that
  \(t_0 \myxrightarrow{w} t_3\) and \(\symbolic{w} = \toSymbolic{w} = i \cdot i\)
  be the corresponding \siw.
  As \(\M\) is \good (and, thus, \complete), it follows that
  \(q_0^\M \myxrightarrow{\symbolic{w} \cdot i} {} \in \runs{\M}\), i.e.,
  \(\symbolic{w} \cdot i \in\silanguage{\M}\). 
  So, we can call \(\symOutputQ(\symbolic{w} \cdot i)\), which returns
  \(o \cdot o' \cdot o'\).
  Hence, we can create the new transition \(t_3 \myxrightarrow[\bot]{i/o'} t_6\) in $\tree$.
  We initially use a \(\bot\) update (seen as a sort of wildcard) as, for now,
  we do not
  have any information about the potential update of the corresponding
  transition (here, \(q_2 \myxrightarrow[(x, 2)]{i} q_3\)) in \(\M\).
  Later on, we may discover that the transition must start a timer, in
  which case \(\bot\) will be replaced by an actual update.

  We then perform a symbolic wait query in \(t_6\), i.e., call
  \(\symWaitQ(\symbolic{w} \cdot i)\), which returns the set
  \(\{(2, 3), (3, 2)\}\) meaning that the second (resp.\ third)  transition of the run
  \(q_0^\tree \myxrightarrow{\symbolic{w} \cdot i}\) must (re)start a timer
  with constant 3 (resp.\ constant 2).
  So, the \(\bot\) of the newly created transition is replaced by \((x_6, 2)\)
  (as the label of the transition is an input).
  It remains to create the \(\timeout{x_3}\)- and \(\timeout{x_6}\)-transitions
  from $t_6$ by performing two symbolic output queries.
  We call   \(\symOutputQ(\symbolic{w} \cdot i \cdot \timeout{2})\) and \(\symOutputQ(\symbolic{w} \cdot i \cdot \timeout{3})\)
  and create the transitions.
  We obtain the tree of \Cref{fig:ex:tree}.
\end{example}

\subsubsection{Functional simulation.}
Each state of an observation tree can be \emph{mapped} to a state of the hidden
\(\M\), by adapting the notion of \emph{functional simulation} of
\lsharp~\cite{VaandragerGRW22} (see details in \Cref{app:tree:simulation}).
We have a function \(f : Q^\tree \to Q^\M\) that preserves the initial state and the behavior of transitions.
In addition, since \(\tree\) and \(\M\) may use \emph{different} timers,
we have a function \(g : X^\tree \partto X^\M\) that maps timers of \(\tree\) to timers of \(\M\).
We lift \(g\) to actions such that \(g(i) = i\) for every \(i \in I\) and
\(g(\timeout{x}) = \timeout{g(x)}\) for every \(x \in \dom{g}\).
We require, for all \(q, q' \in Q^\tree, x, y \in X^\tree, i \in I\) and $o \in O$:
\begin{enumerate}
  \item 
  $f$ preserves the initial state, i.e., \(f(q_0^\tree) = q_0^\M\);
  \item
  any timer that is active in \(q\) has a corresponding active timer in
  \(f(q)\), i.e.,
  if \(x \in \activeTimers^\tree(q)\), then \(g(x) \in \activeTimers^\M(f(q))\);
  \item
  two distinct timers that are active in \(q\) are mapped to two distinct timers in \(\M\), i.e.,
  if \(x \neq y\) and both are active in \(q\), then \(g(x) \neq g(y)\);
  \item
  for each \(q \myxrightarrow[]{i/o} q'\) in \(\tree\),
  there is a corresponding transition 
  \(f(q) \myxrightarrow{g(i)/o} f(q')\) in \(\M\);
  \item 
  using rule (4) we may construct, for each run $\pi$ of $\tree$, a corresponding run $\langle f, g \rangle(\pi)$ of $\M$.  We require that the causality of timeouts is preserved, i.e., for each \(q \myxrightarrow{\timeout{x}/o} q'\) in \(\tree\) with $q$ the last state of $\pi$,
  $\cause{\pi}{x} = \cause{\langle f,g \rangle(\pi)}{g(x)}$.
\end{enumerate}
One can show the following properties from the above constraints.

\begin{lemma}
  Let \(\M\) be an \good \MMT,
  \(\tree\) be an observation tree, and \(f\) and \(g\) be the functions
  described above.
  Then, for any state \(q \in Q^\tree\), we have
  \(\lengthOf{\activeTimers^\tree(q)} \leq \lengthOf{\activeTimers^\M(f(q))}\)
  and for all \(x \in \enabled{q}[\tree]\), it holds that
  \(g(x) \in \enabled{f(q)}[\M]\).
\end{lemma}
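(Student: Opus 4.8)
For the final lemma I would prove the two claims separately, since they rely on different parts of the three requirements on $f$ and $g$. For the inequality $\lengthOf{\activeTimers^\tree(q)} \le \lengthOf{\activeTimers^\M(f(q))}$, I would simply combine requirements~(1) and~(2): by~(1) the map $g$ sends $\activeTimers^\tree(q)$ into $\activeTimers^\M(f(q))$, and by~(2) it is injective on $\activeTimers^\tree(q)$, so, both sets being finite, the existence of such an injection gives the bound. This part is immediate.

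For the second claim, fix $x \in \enabled{q}[\tree]$. By \Cref{def:tree} there is a transition $q \xrightarrow{\timeout{x}} q'$ in $\tree$, and, since $\tree$ is tree-shaped and deterministic, there is a unique word $w$ with $q_0^\tree \xrightarrow{w} q$; write $g(w)$ for the word obtained from $w$ by replacing each timeout symbol $\timeout{y}$ with $\timeout{g(y)}$. Applying requirement~(3) transition by transition along the $\tree$-run $q_0^\tree \xrightarrow{w} q \xrightarrow{\timeout{x}} q'$ — a transition $p \xrightarrow[u]{i/o} r$ of $\tree$ being matched by a transition $f(p) \xrightarrow[u']{g(i)/o} f(r)$ of $\M$ — yields a run $q_0^\M = f(q_0^\tree) \xrightarrow{g(w)} f(q) \xrightarrow{\timeout{g(x)}} f(q')$ of $\M$. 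For this chaining to be well defined, $g$ must be defined on every timer timing out along the run; this holds because such a timer is active in its source state by condition~(3) of \Cref{def:MMT}, and $g$ is defined on all active timers.

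It then remains to exploit that $\M$ is \good: every run of $\M$ is feasible, so the run just built is the untimed projection of a timed run $\rho$ of $\M$ starting in the initial configuration. As the last discrete transition of $\rho$ reads $\timeout{g(x)}$, the timed semantics forces the configuration immediately preceding it to have the form $(f(q), \valuation')$ with $\valuation'(g(x)) = 0$; this configuration is reachable, hence $g(x) \in \enabled{f(q)}[\M]$ by definition (in particular $g(x) \in \activeTimers^\M(f(q))$, consistently with requirement~(1)). The step I would be most careful about is the inductive matching of transitions in the second paragraph, i.e.\ verifying that requirement~(3) really composes and that $g$ is defined wherever needed; once that is settled, the rest is a direct unfolding of the definitions of feasibility, reachability and $\enabled{\cdot}$, and, crucially, no correspondence between the timer valuations of $\tree$ and of $\M$ along the run has to be established, because feasibility of every run of $\M$ supplies the witnessing timed run of $\M$ outright.
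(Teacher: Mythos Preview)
Your argument is correct and is essentially the paper's proof (\Cref{cor:simulation} in the appendix): combine requirements~(1) and~(2) for the cardinality bound, then transfer the $\timeout{x}$-transition along the simulation for the second claim. The paper is terser on the second part---it passes directly from $f(q)\xrightarrow{\timeout{g(x)}}$ to $g(x)\in\enabled{f(q)}[\M]$ citing only that $\M$ is \complete---whereas your explicit construction of the full run from $q_0^\M$ and appeal to the feasibility half of the \good\ hypothesis spells out precisely what the paper leaves implicit.
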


We say that a state \(q\) of \(\tree\) is \emph{explored} once \(\symWaitQ(q)\)
has been called.
By definition of wait queries, it means that every enabled timer
of \(f(q)\) is identified in \(q\), i.e.,
\(\lengthOf{\enabled{q}} = \lengthOf{\enabled{f(q)}}\).
We thus obtain a \emph{one-to-one correspondence} between \(\enabled{q}[\tree]\)
and \(\enabled{f(q)}[\M]\).
Define \(\explored\) as the maximal set of explored states of \(\tree\).
During the learning process, we will ensure that a wait query is performed only on states that are successors of an explored state to keep $\explored$ as a subtree of $\tree$.

\begin{example}
  In the tree of \Cref{fig:ex:tree}, the explored states are
  \(t_0\), \(t_1\), \(t_2\), \(t_3\), \(t_5\), and \(t_6\).
  In \Cref{ex:learning:use_queries}, \(t_0\) to \(t_3\), and \(t_5\)
  were already explored and formed a subtree.
  The wait query over \(t_6\) made it an explored state,
  i.e., \(\explored\) is still a subtree.
\end{example}

\subsubsection{Apartness.}
A key aspect of \lsharp for regular Mealy machines is the notion of \emph{apartness} indicating which states of an observation tree 
have different behaviors~\cite{VaandragerGRW22}: states \(p, p'\) are \emph{apart} if they
have different output responses to the same input word.
Here, we face a more complex situation as we need to handle the fact that different timers in \(\tree\) can represent the same timer in \(\M\). Before defining apartness in the \MMT  setting, we need to introduce two concepts.

For two states $p,p'$ of $\tree$, we define a \emph{matching} $m : \activeTimers^\tree(p) \partto \activeTimers^\tree(p')$ as an injective partial function. This matching is meant to indicate that the timers
\(x\) and \(m(x)\) could represent the same timer in \(\M\). By abuse of notation, it is denoted as $m : p \leftrightarrow p'$. A matching is \emph{maximal} if $m$ is total or surjective, i.e., $m$ embeds $\activeTimers^\tree(p)$ in $\activeTimers^\tree(p')$ or $m^{-1}$ embeds $\activeTimers^\tree(p')$ in $\activeTimers^\tree(p)$. 
Recall that if two distinct timers \(x\) and \(y\) are both active in
the same state of $\tree$, it must hold that \(g(x) \neq g(y)\), i.e., they correspond to different timers in \(\M\) (by functional simulation). Hence, we say that a matching \(m\) is \emph{valid} if for all \(x \in \dom{m}\) and \(q \in Q^\tree\), we \emph{do not} have \(x, m(x) \in \activeTimers^\tree(q)\).

Second, given a run \(\pi = p_0 \myxrightarrow{i_1} p_1 \myxrightarrow{i_2} \dotsb \myxrightarrow{i_n} p_n\) and a matching $m : p_0 \leftrightarrow p'_0$, we want to check whether it is possible to \enquote{read} the same word $i_1 \ldots i_n$ from $p'_0$ while extending $m$ during this reading. If this is possible, this unique run, denoted \(\copyRun{p'_0}[\pi]\), is equal to
\(\pi' = p'_0 \myxrightarrow{i'_1} p'_1 \myxrightarrow{i'_2} \dotsb
\myxrightarrow{i'_n} p'_n\) such that:
\begin{enumerate}[label=\emph{(\roman*)}]
  \item
  If $i_j \in I$, then $i'_j = i_j$. 
  \item
  If \(i_j = \timeout{x}\) for some \(x \in X^\tree\), recall that 
  \(x \in \activeTimers^\tree(p_0)\) or
  \(x = x_{p_k}\) for some \(k\).
  Then, \(i'_j\) is either
  \(\timeout{m(x)}\), or \(\timeout{x_{p'_k}}\) with the same \(k\).
\end{enumerate}
When \(\copyRun{p'_0}[\pi]\) exists,
we therefore have a matching between $\pi$ and $\pi'$, denoted \(\matchingRun{m}{\pi}{\pi'} : \pi \leftrightarrow \pi'\), which extends $m$ such that 
\(\matchingRun{m}{\pi}{\pi'} = m \cup
\{x_{p_k} \mapsto x_{p'_k} \mid k \leq n\}\).

While two states of a Mealy machine are apart when they have different output responses to the same input word, we have five different apartness cases in the \MMT setting. An illustrative example is given after the definition.

\begin{definition}[Apartness]\label{def:apartness} 
  For two states \(p_0, p'_0\) and a matching $m : p_0 \leftrightarrow p'_0$, we say that $p_0,p'_0$ are \emph{\(m\)-apart}, denoted by
  \(p_0 \apart^m p'_0\), if there are
  \(\pi = p_0 \myxrightarrow{i_1} \dotsb \myxrightarrow[u]{i_n/o} p_n\) and
  \(\copyRun{p'_0}[\pi] = \pi' =
  p'_0 \myxrightarrow{i'_1} \dotsb \myxrightarrow[u']{i'_n/o'} p'_n\)
  with $\matchingRun{m}{\pi}{\pi'} : \pi \leftrightarrow \pi'$, and 
  \begin{description}
    \item [Structural apartness] either
    \( \matchingRun{m}{\pi}{\pi'}\) is invalid, or
    \item [Behavioral apartness] one of the following cases holds:
  \begin{gather}
    o \neq o'
    \tag{outputs}\label{eq:apartness:outputs}
    \\
    u = (x, c) \land u' = (x', c') \land c \neq c'
    \tag{constants}\label{eq:apartness:constants}
    \\
    p_n, p'_n \in \explored \land
      \lengthOf{\enabled{p_n}[\tree]} \neq \lengthOf{\enabled{p'_n}[\tree]}
    \tag{sizes}\label{eq:apartness:sizesEnabled}
    \\
    p_n, p'_n \in \explored \land \exists x {\in} \dom{\matchingRun{m}{\pi}{\pi'}}
    : \left(
      x \in \enabled{p_n}[\tree] \iff
      \matchingRun{m}{\pi}{\pi'}(x) \notin \enabled{p'_n}[\tree]
    \right)
    \tag{enabled}\label{eq:apartness:enabled}
  \end{gather}\end{description}
  The word $w = i_1 \dotso i_n$ is called a \emph{witness}
  of \(p_0 \apart^{m} p'_0\), denoted \(w \witness p_0 \apart^m p'_0\).
\end{definition}

\begin{example}\label{ex:tree:apartness}
  Let \(\tree\) be the  tree of \Cref{fig:ex:tree} and
  \(\pi =
  t_0 \myxrightarrow[(x_1, 2)]{i/o} t_1 \myxrightarrow{\timeout{x_1}/o} t_2
  \in \runs{\tree}\).
  Let us show that \(\copyRun[\emptyset]{t_3}[\pi]\) exists (\(\emptyset\) is
  the empty matching).
  As the first action in $\pi$ is \(i\), we take the transition
  \(t_3 \myxrightarrow[(x_6, 2)]{i/o'} t_6\). As the second action in $\pi$ is \(\timeout{x_1}\) with \(x_1\) a fresh timer, we retrieve the corresponding fresh timer
  $x_6$ in the new run.
  Hence,
  \(\pi' = \copyRun[\emptyset]{t_3}[\pi] =
  t_3 \myxrightarrow[(x_6, 2)]{i/o'} t_6 \myxrightarrow{\timeout{x_6}/o} t_9\) and $\matchingRun{m}{\pi}{\pi'} : \pi \leftrightarrow \pi'$ with $\matchingRun{m}{\pi}{\pi'} =  \emptyset \cup \{x_1 \mapsto x_6\}$. Let us now show some cases of apartness.
  The states $t_0,t_3$ are $\emptyset$-apart as the first transition of \(\pi\) outputs \(o\) but the first transition of
  \(\pi'\) outputs \(o' \neq o\), i.e., 
  \(i \witness t_0 \apart^{\emptyset} t_3\) by~\eqref{eq:apartness:outputs}. The states $t_1,t_6$ are also $\emptyset$-apart as
   \(t_1, t_6 \in \explored\)
  and
  \(\lengthOf{\enabled{t_1}} = 1 \neq \lengthOf{\enabled{t_6}} = 2\), i.e.,
  \(\emptyword \witness t_1 \apart^{\emptyset} t_6\)
  by~\eqref{eq:apartness:sizesEnabled}.
  Now, with $m : x_1 \mapsto x_3$, let
  \(\sigma =
  t_1 \myxrightarrow[(x_3, 3)]{i/o'}
  t_3 \myxrightarrow[]{\timeout{x_1}} t_5\)
  and
  \(\sigma' = \copyRun[m]{t_3}[\pi'] =
  t_3 \myxrightarrow[(x_6, 2)]{i/o'}
  t_6 \myxrightarrow[]{\timeout{x_3}} t_{10}\). We have a structural apartness as $\matchingRun{m}{\rho}{\rho'} = m \cup \{x_3 \mapsto x_6\}$ is invalid ($\activeTimers^\tree(t_6) = \{x_3,x_6\}$).
  We also have \(i \witness t_1 \apart^{m} t_3\) by~\eqref{eq:apartness:constants}. 
\end{example}

Observe that any extension \(m'\) of \(m\) is such that
\(w \witness p \apart^{m'} p'\), i.e., taking a larger matching does not break
the apartness,
as \(\copyRun{p'}[p \myxrightarrow{w}] = \copyRun[m']{p'}[p \myxrightarrow{w}]
= p' \myxrightarrow{w'}\).
Moreover, we claim that the definition of apartness is reasonable:
when \(p \apart^m p'\), then \(f(p) \neq f(p')\)
(the two states are really distinct) or
\(g(x) \neq g(m(x))\) for some \(x\)
(\(m\) and \(g\) do not agree).
\Cref{app:tree:extension_soundness} gives a proof.

\begin{restatable}{theorem}{extensionNSoundness}\label{thm:extension-n-soundness}
  Let \(\tree\) be an observation tree for an \good \MMT 
  with
  functional simulation \(\funcSim\), \(p, p' \in Q^\tree\), and
  \(m,m' : p \leftrightarrow p'\)
  matchings.
  Then,
  \begin{itemize}
  \item \(w \witness p \apart^m p' \land m \subseteq m'
  ~\implies~
  w \witness p \apart^{m'} p'\), and
  \item \(p \apart^m p' ~\implies~ f(p) \neq f(p') \lor \exists x \in \dom{m} :
  g(x) \neq g(m(x))\).
  \end{itemize}
\end{restatable}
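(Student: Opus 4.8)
The first bullet — monotonicity of the witness relation under matching extension — was essentially already argued in the paragraph preceding the statement, so I would just formalize that observation. The key fact is that the copy-run construction $\copyRun{p'}[\pi]$ depends only on $\pi$ and on the restriction of the matching to those active timers of $p_0$ that actually appear as timeout symbols along $\pi$; adding more pairs to $m$ cannot change which run $\pi'$ is produced, because every timeout symbol in $\pi$ is either handled by the fresh-timer clause (independent of $m$) or by an $m(x)$ that is already fixed. Hence $\copyRun[m']{p'}[\pi] = \copyRun{p'}[\pi] = \pi'$, so the same run $\pi'$ witnesses $m'$-apartness; and $\matchingRun{m}{\pi}{\pi'} \subseteq \matchingRun{m'}{\pi}{\pi'}$, so each of the four behavioral clauses and the structural clause, which only require the \emph{existence} of a suitable $x \in \dom{\matchingRun{m}{\pi}{\pi'}}$ (or compare $o,o',u,u'$, which do not involve $m$ at all), remains satisfied. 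I would check the structural clause and clause~\eqref{eq:apartness:enabled} explicitly since those quantify over $\dom{\matchingRun{m}{\pi}{\pi'}}$, and note the rest are matching-independent.

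For the second bullet — soundness of apartness — I would argue by contradiction: suppose $p \apart^m p'$ witnessed by $w = i_1 \dotsb i_n$ with runs $\pi, \pi'$ as in \Cref{def:apartness}, but $f(p) = f(p')$ and $g(x) = g(m(x))$ for every $x \in \dom{m}$. The plan is to track the image runs of $\pi$ and $\pi'$ under the functional simulation $\funcSim$. Using property~(3) of the functional simulation, $\pi$ maps to a run $f(p) \xrightarrow{g(i_1)} \dotsb \xrightarrow{g(i_n)} f(p_n)$ in $\M$ producing the outputs $o_1 \dotsb o_n$ of $\pi$, and $\pi'$ maps to a run from $f(p') = f(p)$ producing the outputs of $\pi'$. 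The crucial step is to show these two image runs in $\M$ \emph{coincide}: they start at the same state $f(p)$, and I claim they read the same word. This follows because whenever $i'_j = \timeout{x_{p'_k}}$ (the fresh-timer case) the corresponding $i_j = \timeout{x_{p_k}}$, and the timer $x_{p_k}$ is the one started by the $k$-th transition of $\pi$, whose $f$-image is the $k$-th transition of the image run — and the same transition in $\M$ is taken by $\pi'$'s image since the two image runs agree up to step $k-1$ by induction; whereas when $i'_j = \timeout{m(x)}$ with $x \in \activeTimers^\tree(p)$, we have $g(\timeout{x}) = \timeout{g(x)} = \timeout{g(m(x))} = g(i'_j)$ by hypothesis. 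So by determinism of $\M$ the two image runs are identical; in particular they produce identical output sequences and take identical transitions.

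Now I derive a contradiction with whichever apartness clause held. For~\eqref{eq:apartness:outputs}: the $n$-th transitions of the two image runs coincide, so $o = o'$, contradiction. For~\eqref{eq:apartness:constants}: if $u = (x,c)$ then by property~(3) the $n$-th transition of $\pi$'s image starts $g(x)$ at value $c$; likewise $u' = (x',c')$ forces that same transition to start $g(x')$ at $c'$; as the transitions coincide, $c = c'$, contradiction. For~\eqref{eq:apartness:sizesEnabled} and~\eqref{eq:apartness:enabled}: since $p_n, p'_n \in \explored$, the preceding lemma gives one-to-one correspondences $\enabled{p_n}[\tree] \leftrightarrow \enabled{f(p_n)}[\M]$ via $g$ and likewise for $p'_n$, and $f(p_n) = f(p'_n)$ since the image runs end at the same state; a short diagram chase using $g(x) = g(m(x))$ shows $x \in \enabled{p_n}$ iff $\matchingRun{m}{\pi}{\pi'}(x) \in \enabled{p'_n}$ and $|\enabled{p_n}| = |\enabled{p'_n}|$, contradicting both clauses. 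For the structural case, $x \timerApart \matchingRun{m}{\pi}{\pi'}(x)$ means some state of $\tree$ has both active, so by functional simulation property~(2) their $g$-images are distinct — but $g(x) = g(\matchingRun{m}{\pi}{\pi'}(x))$ would follow from the hypothesis (extending the argument about $g$ agreeing with the matching from $m$ to $\matchingRun{m}{\pi}{\pi'}$, using that fresh timers of $\pi$ map to fresh timers of $\pi'$ whose $f$-images are started by the same $\M$-transition), contradiction.

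The main obstacle is the inductive argument that the two image runs in $\M$ genuinely coincide, step by step — this requires carefully interleaving the induction hypothesis "image runs agree through step $k-1$" with the case analysis on the form of $i'_j$ (fresh timer vs.\ matched timer), and handling the subtlety that a fresh timer $x_{p_k}$ of $\pi$ and the fresh timer $x_{p'_k}$ of $\pi'$ are mapped by $g$ to the \emph{same} $\M$-timer only because the $k$-th transitions of the two image runs are the same transition of $\M$, which is itself part of what the induction is proving. Making this simultaneous induction precise (agreement of states, of transitions taken, and hence of the $g$-images of freshly started timers) is the technical heart; once it is in place, the five clause-by-clause contradictions are routine.
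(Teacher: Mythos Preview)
Your overall plan matches the paper's proof closely: the paper also proves the first bullet by observing that $\copyRun[m']{p'}[\pi]=\copyRun[m]{p'}[\pi]$ and checking the five clauses, and proves the second bullet by first isolating an inductive lemma (their Lemma~\ref{lemma:soudness:runsInM}) establishing that $\funcSim(\pi)=\funcSim(\pi')$ and that $g$ agrees with the extended matching on started fresh timers, then deriving a contradiction clause by clause. Your identification of the simultaneous induction as the technical heart is exactly right.

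There is, however, one place where your sketch is too optimistic: the~\eqref{eq:apartness:enabled} case is not a ``short diagram chase''. The inductive lemma only gives $g(x_{p_k})=g(x_{p'_k})$ when \emph{both} $x_{p_k}$ and $x_{p'_k}$ are started along their respective runs; but in the tree the $k$-th transition of $\pi'$ may carry update $\bot$ even though the $k$-th transition of $\pi$ starts $x_{p_k}$ (recall $\bot$ is a wildcard). In that situation $x_{p'_k}\notin\dom g$, so you cannot invoke $g(x_{p_k})=g(x_{p'_k})$ directly, yet $x_{p_k}\in\enabled{p_n}$ and $x_{p'_k}\notin\enabled{p'_n}$ is precisely the configuration you must rule out. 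The paper closes this gap as follows: from $x_{p_k}\in\enabled{p_n}$ and the bijection $\enabled{p'_n}\leftrightarrow\enabled{f(p'_n)}=\enabled{f(p_n)}$ it extracts some $x'\in\enabled{p'_n}$ with $g(x')=g(x_{p_k})$, and then uses the spanning-preservation axiom~\eqref{eq:simulation:spanning} of the functional simulation, together with a backtracking argument along $y$-spanning sub-runs of the common image run (where $y=g(x_{p_k})$), to force $x'=x_{p'_k}$. This step genuinely needs~\eqref{eq:simulation:spanning}; your plan never invokes it, and without it the~\eqref{eq:apartness:enabled} case does not go through.
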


The second implication of \Cref{thm:extension-n-soundness} is the one we leverage in our algorithm below. The reverse implication
is false as the learned \MMT can be smaller than that of the teacher (see the \MMT in  \Cref{fig:ex:learning:hypothesis} learned from the \MMT of \Cref{fig:ex:MMT:good}).

\subsection{Hypothesis construction}\label{sec:learning:hypo}

In this section, we provide the construction of a hypothesis 
\(\hypothesis\) from \(\tree\).
In short, we extend the observation tree such that some conditions are satisfied and we 
define a subset of \(Q^\tree\), called the \emph{basis}, that forms the set of
states of \(\hypothesis\).
Similar to \lsharp~\cite{VaandragerGRW22}, we then
\enquote{fold} the tree; that is, some
transitions \(q \myxrightarrow{} r\) must be redirected to some state \(p\) of the basis.
For \MMTs, we also need to map \emph{every} timer active in \(r\) to an
active timer of \(p\).
Formally, we define:
\begin{itemize}
  \item
  The \emph{basis} \(\basis\) is a subtree of \(Q^\tree\) such that
  \(q_0^\tree \in \basis\) and \(p \apart^m p'\) for any
  \(p \neq p' \in \basis\) and maximal matching \(m : p \leftrightarrow p'\).
  By \Cref{thm:extension-n-soundness}, we thus know that \(f(p) \neq f(p')\) or
  \(g(x) \neq g(m(x))\) for some \(x \in \dom{m}\).
  As we have this for every maximal \(m\), we \emph{conjecture}
  that \(f(p) \neq f(p')\).
  We may be wrong, i.e., \(f(p) = f(p')\) but we need a matching that is currently unavailable, due to unknown active timers which will be discovered later.

  \item
  The \emph{frontier} \(\frontier \subseteq Q^\tree\) is the set of immediate
  non-basis successors of basis states.
  We say \(p \in \basis\) and \(r \in \frontier\) are \emph{compatible}
  under a maximal matching \(m\) if \(\lnot (p \apart^m r)\).
  We write \(\compatible(r)\) for the set of all such pairs \((p, m)\).
  Hence \(\compatible(r)\) indicates all the possible ways to redirect a transition $q \rightarrow r$ to some state $p \in \basis$ together with an adequate maximal matching $m$.
\end{itemize}
During the learning algorithm, the tree is extended such that a \complete
\MMT $\hypothesis$ can be constructed from \(\tree\).
We will ensure that:
\begin{enumerate}[label=\emph{(\Alph*)}]
  \item\label{item:tree:basis:explored}each basis and frontier state is explored, i.e.
  \(\basis \cup \frontier \subseteq \explored\), in order to discover timers as
  fast as possible,
  \item\label{item:tree:basis:complete}the basis is \emph{complete}, i.e., \(p \myxrightarrow{i}\) is
  defined for every \(i \in I \cup \toevents{\enabled{p}[\tree]}\),
  and
  \item\label{item:tree:basis:timers}for every \(r \in \frontier\), \(\compatible(r) \neq \emptyset\) and
  \(\lengthOf{\activeTimers^\tree(p)} = \lengthOf{\activeTimers^\tree(r)}\)
  for every \((p, m) \in \compatible(r)\).
  The last constraint imposes the same number of active timers for both $p$ and $r$ since folding them using $(p,m)$ will merge them.
\end{enumerate}
These constraints are obtained via \symOutputQ and \symWaitQ, as we now illustrate.

\begin{example}\label{ex:learning:basis}
  In order to simplify the explanations, we assume from now on that a call to
  \(\symOutputQ(\symbolic{w \cdot i})\) with $\symbolic{i} \in I$
  automatically adds the corresponding transition to \(\tree\), and that a call to
  \(\symWaitQ(\symbolic{w})\) automatically calls \(\symOutputQ(\symbolic{w} \cdot \timeout{j})\),
  for every \(\timeout{j}\) deduced from the wait query, modifies
  updates accordingly, and adds the new explored states to \(\explored\).
  Moreover, we let
  \(\symOutputQ(q, i)\) denote \(\symOutputQ(\symbolic{w} \cdot \symbolic{i})\)
  and \(\symWaitQ(q)\) denote \(\symWaitQ(\symbolic{w})\) with \(\symbolic{w}\)
  such that \(q_0^\tree \myxrightarrow{\symbolic{w}} q \in \runs{\tree}\).
  
  Let \(\tree\) be the observation tree of \Cref{fig:ex:tree} for the \MMT \(\M\) of \Cref{fig:ex:MMT:good}.
  One can check that \(t_0, t_1\), and \(t_3\) are all pairwise apart under any
  maximal matching.
  So, \(\basis = \{t_0, t_1, t_3\}\) (highlighted in gray in the figure) and
  \(\frontier = \{t_2, t_5, t_6\}\).
  Moreover, we have \(\compatible(t_2) = \{(t_1, x_1 \mapsto x_1),
  (t_3, x_1 \mapsto x_1)\}\), and \(\compatible(t_5) = \compatible(t_6)
  = \emptyset\).
  We thus \emph{promote} \(t_6\) by moving it from the frontier to the basis,
  i.e., \(\basis = \{t_0, t_1, t_3, t_6\}\).
  To satisfy~\ref{item:tree:basis:complete}, we call
  \(\symOutputQ(t_6, i)\) and add a new transition
  \(t_6 \myxrightarrow[\bot]{i/o'} t_{11}\).
  We call \(\symWaitQ(t_9), \symWaitQ(t_{10})\), and \(\symWaitQ(t_{11})\),
  which yield \(\activeTimers(t_9) = \enabled{t_9} = \{x_3\}\),
  \(\activeTimers(t_{10}) = \emptyset\), and
  \(\activeTimers(t_{11}) = \enabled{t_{11}} = \{x_3, x_{11}\}\).
  Hence, we get~\ref{item:tree:basis:explored} with
  \(\frontier = \{t_2, t_5, t_9, t_{10}, t_{11}\}\). 

Observe that~\ref{item:tree:basis:timers} is not satisfied, due to \((t_3, m) \in \compatible(t_2)\) with $m : x_1 \mapsto x_1$ but \(\lengthOf{\activeTimers(t_3)} = 2\) while \(\lengthOf{\activeTimers(t_2)} = 1\).  To resolve this issue, we \emph{replay} the run \(\pi = t_3 \myxrightarrow{\timeout{x_1} \timeout{x_3}} t_8\) (witnessing that \(x_3\)  is active in \(t_3\) and eventually times out) from the state \(t_2\). That is, we extend the tree $\tree$ in order to create $\copyRun{t_2}[\pi]$ while extending $m$, if this is possible (see apartness paragraph). As \(t_2 \myxrightarrow{\timeout{x_1}} t_4\) already exists, let us replay \(t_5 \myxrightarrow{\timeout{x_3}}\) from \(t_4\). We call \(\symWaitQ(t_4)\) to learn \(\enabled{t_4} = \activeTimers(t_4)  = \{x_1\}\). We can stop the creation of $\copyRun{t_2}[\pi]$. Indeed,  as $\enabled{t_5} = \{x_1,x_3\}$, \(\enabled{t_4} = \{x_1\}\), we conclude that \(\timeout{x_1} \witness t_3 \apart^m t_2\) by~\eqref{eq:apartness:sizesEnabled}. Hence, $(t_3,m)$ is no longer a compatible pair and it is removed from $\compatible(t_2)$.
  
In general, replaying a run when~\ref{item:tree:basis:timers} is not satisfied may have two results due to newly defined transitions: either finding a new active timer in a state, or a new apartness pair (as we illustrated). In any case, the set of compatible pairs of a state gets reduced. See \Cref{app:learning:replay} for more details.
\end{example}

We now explain, via an example, how to construct a hypothesis \(\hypothesis\)
such that \(Q^\hypothesis = \basis\).
The idea is to pick a \((p, m) \in \compatible(r)\) for each
frontier state \(r\).
Then, the unique transition \(q \myxrightarrow{i} r\) in \(\tree\) becomes
\(q \myxrightarrow{i} p\) in \(\hypothesis\), i.e., $p$ and $r$ are merged. 
We also globally rename the timers according to \(m\).
\begin{figure}[t]
  \centering
  \begin{tikzpicture}[
  automaton,
  new/.style = {
  },
]
  \node [state, initial, basis, initial distance=7pt]   (t0)  {\treeNodeLabel{\(t_0\)}};
  \node [state, basis, right=40pt of t0]                (t1)  {\treeNodeLabel{\(t_1\)}};
  \node [state, above right=9pt and 70pt of t1]        (t2)  {\(t_2\)};
  \node [state, basis, below right=9pt and 70pt of t1] (t3)  {\treeNodeLabel{\(t_3\)}};
  \node [state, above right=9pt and 80pt of t2]        (t4)  {\(t_4\)};
  \node [state, above right=9pt and 80pt of t3]        (t5)  {\(t_5\)};
  \node [state, basis, below right=9pt and 80pt of t3] (t6)  {\treeNodeLabel{\(t_6\)}};
  \node [state, right=65pt of t4]                       (t7)  {\(t_7\)};
  \node [state, right=65pt of t5]                       (t8)  {\(t_8\)};
  \node [state, basis, right=65pt of t6]           (t9)  {\treeNodeLabel{\(t_9\)}};
  \node [state, below=17pt of t9]                       (t10) {\(t_{10}\)};
  \node [state, below=17pt of t10]                 (t11) {\(t_{11}\)};
  \node [state, right=66pt of t9]                  (t12) {\(t_{12}\)};
  \node [state, below=17pt of t12]                 (t13) {\(t_{13}\)};
  \node [state, below=17pt of t13]                 (t14) {\(t_{14}\)};
  \node [state, above=17pt of t12]                (t15) {\(t_{15}\)};
  \node [state, above=31pt of t15]                (t16) {\(t_{16}\)};
  \node [state, new, above=17pt of t7]                  (t17) {\(t_{17}\)};

  \path
    (t0)  edge                  node [above] {\(i/o\)}
                                node [below] {\(\updateFig{x_1}{2}\)}       (t1)
    (t1)  edge                  node [sloped, '] {\(i/o', \updateFig{x_3}{3}\)} (t3)
          edge                  node [sloped] {\(\timeout{x_1}/o,
                                              \updateFig{x_1}{2}\)}             (t2)
    (t2)  edge                  node [sloped] {\(\timeout{x_1}/o, \bot\)}       (t4)
    (t3)  edge                  node [sloped] {\(\timeout{x_1}/o,
                                              \updateFig{x_1}{2}\)}             (t5)
          edge                  node [sloped, '] {\(i/o', \updateFig{x_6}{2}\)} (t6)
    (t5)  edge                  node [above=-2pt] {\(\timeout{x_3}/o, \bot\)}   (t8)
    (t6)  edge                  node [above=-2pt] {\(\timeout{x_6}/o, \bot\)}   (t9)
    (t9)  edge                  node [above=-2pt] {\(\timeout{x_3}/o, \bot\)}   (t12)
    (t11) edge                  node [above=-2pt] {\(\timeout{x_3}/o, \bot\)}   (t14)
    (t15) edge                  node              {\(\timeout{x_3}/o, \bot\)} (t16)
  ;

  \draw [rounded corners = 10pt]
    let
      \p{s} = (t5.45),
      \p{t} = (t7.-180),
    in
      (\p{s}) -- ($(\x{s}, \y{t}) + (0.1, 0)$)
              -- (\p{t})
              node [above=-2pt, midway] {\(\timeout{x_1}/o, \bot\)}
  ;

  \draw [rounded corners = 10pt]
    let
      \p{s} = (t6.-43),
      \p{t} = (t10.180),
    in
      (\p{s}) -- ($(\x{s}, \y{t}) + (0.2, 0)$)
              -- (\p{t})
              node [above, midway] {\(\timeout{x_3}/o, \bot\)}
  ;

  \draw [rounded corners = 10pt]
    let
      \p{s} = (t6.-90),
      \p{t} = (t11.180),
      \p{m} = (\x{s}, \y{t}),
    in
      (\p{s}) -- (\p{m})
              -- (\p{t})
              node [midway, above=-1pt] {\(i/o', \updateFig{x_{11}}{2}\)}
  ;

  \draw[rounded corners = 10pt]
    let
      \p{s} = (t11.45),
      \p{t} = (t13.-160),
      \p{m} = ($(\x{s}, \y{t}) + (0.3, 0)$),
    in
      (\p{s}) -- (\p{m})
              -- (\p{t})
              node [midway, above=-2pt] {\(\timeout{x_{11}}/o, \bot\)}
  ;

  \draw[rounded corners = 10pt]
    let
      \p{s} = (t9.45),
      \p{t} = (t15.-160),
      \p{m} = ($(\x{s}, \y{t}) + (0.3, 0)$),
    in
      (\p{s}) -- (\p{m})
              -- (\p{t})
              node [midway, above=-2pt] {\(i/o', \bot\)}
  ;

  \draw[new, rounded corners = 10pt]
    let
      \p{s} = (t4.50),
      \p{t} = (t17.-160),
      \p{m} = ($(\x{s}, \y{t}) + (0.3, 0)$),
    in
      (\p{s}) -- (\p{m})
              -- (\p{t})
              node [midway, above=-2pt] {\(\timeout{x_1}/o, \bot\)}
  ;
\end{tikzpicture}
\begin{tikzpicture}[
  automaton,
  every loop/.append style = {
    min distance = 1mm,
  },
]
  \node [state, initial]      (t0)  {\(t_0\)};
  \node [state, right=50pt of t0]  (t1)  {\(t_1\)};
  \node [state, right=50pt of t1]  (t3)  {\(t_3\)};
  \node [state, right=80pt of t3]  (t6)  {\(t_6\)};
  \node [state, right=70pt of t6]  (t9)  {\(t_9\)};

  \path
    (t0)  edge              node [above] {\(i/o\)}
                            node [below=-2pt] {\(\updateFig{y_1}{2}\)}      (t1)
    (t1)  edge              node [above] {\(i/o'\)}
                            node [below=-2pt] {\(\updateFig{y_2}{3}\)}      (t3)
          edge [loop above] node {\(\timeout{y_1}/o, \updateFig{y_1}{2}\)}  (t1)
    (t3)  edge              node [below=-2pt] {\(i/o', \updateFig{y_1}{2}\)}(t6)
          edge [bend left=10]  node [pos=0.46, above=-2pt] {\(\timeout{y_1}/o,
                                            \updateFig{y_1}{2}\)}           (t6)
    (t6)  edge              node {\(\timeout{y_1}/o, \bot\)}                (t9)
          edge [loop above] node {\(i/o', \updateFig{y_1}{2}\)}             (t6)
    (t9)  edge [loop above] node {\(i/o', \bot\)}                           (t9)
  ;

  \draw [rounded corners=10pt]
    let
      \p{e} = ($(t0.south east) + (0, -0.5)$),
      \p{s} = ($(t6)$)
    in
    (t6)  -- (\x{s}, \y{e})
          node [midway, right] {\(\timeout{y_2}/o, \bot\)}
          -- (\p{e})
          -- (t0.south east)
  ;
  \draw [rounded corners=10pt]
    let
      \p{e} = ($(t0) + (0, -0.8)$),
      \p{s} = ($(t9)$)
    in
    (t9)  -- (\x{s}, \y{e})
          node [midway, right] {\(\timeout{y_2}/o, \bot\)}
          -- (\p{e})
          -- (t0)
  ;
\end{tikzpicture}
  \caption{On top, an observation tree from which the hypothesis \MMT at the
  bottom is constructed, with \(y_1 = \equivClass{x_1}_\PER\) and \(y_2 = \equivClass{x_3}_\PER\).
  Basis states are highlighted with a gray background.}\label{fig:ex:learning:hypothesis}
\end{figure}
\begin{example}\label{ex:learning:hypothesis}
  Let \(\M\) be the \MMT of \Cref{fig:ex:MMT:good} and
  \(\tree\) be the observation tree of \Cref{fig:ex:learning:hypothesis},
  with
  \(\basis = \{t_0, t_1, t_3, t_6, t_9\}\) and
  \(\frontier = \{t_2, t_5, t_{10}, t_{11}, t_{12}, t_{15}\}\).
  Constraints~\ref{item:tree:basis:explored} to~\ref{item:tree:basis:timers} are satisfied and
  \begin{align*}
    \compatible(t_2) &= \{(t_1, x_1 \mapsto x_1)\}
    \\
    \compatible(t_5) &= \{(t_6, x_6 \mapsto x_1, x_3 \mapsto x_3)\}
    \\
    \compatible(t_{10}) = \compatible(t_{12}) &= \{(t_0, \emptyset)\}
    \\
    \compatible(t_{11}) &= \{(t_6, x_6 \mapsto x_{11}, x_3 \mapsto x_3)\}
    \\
    \compatible(t_{15}) &= \{(t_9, x_3 \mapsto x_3)\}
  \end{align*}

  We construct $\hypothesis$ with $Q^\hypothesis = \basis$.
  While defining the transitions \(q \myxrightarrow{} q'\) is easy when
  \(q, q' \in \basis\),
  we have to redirect the transition to some basis state when \(q' \in \frontier\).
  To do so, we first define a map \(\foldFunction : \frontier \to \basis\) (to redirect transitions into the basis),
  and an equivalence relation \(\PER\) over the set of
  active timers of the basis and the frontier (to declare which timers are equal).
  For each \(r \in \frontier\), we pick \((p, m) \in \compatible(r)\),
  define \(\foldFunction(r) = p\), and add \(x \PER m(x)\)
  for every \(x \in \dom{m}\) (and compute the symmetric and transitive closure
  of \(\PER\)).
  Here,
  \begin{gather*}
    \foldFunction(t_2) = t_1,
    \foldFunction(t_5) = \foldFunction(t_{11}) = t_6,
    \foldFunction(t_{10}) = \foldFunction(t_{12}) = t_0,
    \foldFunction(t_{15}) = t_9,
    \\
    \text{\(x_1 \PER x_6 \PER x_{11}\), and
    \(x_3 \PER x_3\).}
  \end{gather*}
  We check that the constructed relation $\PER$ is \emph{valid}, i.e., it does not lead to an undesirable situation $x \PER y$ and \(x, y \in \activeTimers^\tree(q)\) for some $q$ (in which case, we restart again by picking some different pair \((p, m) \in \compatible(r)\)).
  In this example, $\equiv$ is valid, and
  we construct \(\hypothesis\) by copying the transitions starting from a
  basis state (while folding the tree when required), except that a timer \(x\)
  is replaced by its equivalence class \(\equivClass{x}_\PER\).
  \Cref{fig:ex:learning:hypothesis} gives the resulting \(\hypothesis\), which is symbolically equivalent to $\M$.
\end{example}

We highlight that it is \emph{not} always possible to construct a valid \(\PER\). However, in this case, we can still construct a \emph{generalized} \MMT, in which the
transitions can arbitrarily rename the active timers.
The size of that machine is  also \(\lengthOf{\basis}\), and
from it a classical \MMT can be constructed of size
\(n! \cdot \lengthOf{\basis}\), with
\(n = \max_{p \in \basis} \lengthOf{\activeTimers^\tree(p)}\)
(see \Cref{app:learning:hypo}).
We observed that a valid $\PER$ can always be constructed for all our benchmarks.

\subsection{Main loop}\label{sec:learning:algo}

We finally give the main loop of \lsharpMMT.
We initialize \(\tree\) to only contain \(q_0^\tree\), \(\basis = \explored
= \{q_0^\tree\}\), and \(\frontier = \emptyset\).
The main loop is split into two parts:
\begin{description}
  \item[Refinement loop]
  The \emph{refinement loop} extends the tree to obtain the
  conditions~\ref{item:tree:basis:explored} to~\ref{item:tree:basis:timers}
  (see \cref{sec:learning:hypo} page~\pageref{item:tree:basis:explored}),
  by performing
  the following operations, in this order, until no more
  changes are possible:
  \begin{description}
    \item[\seismic]
    If we discover a new active timer in a basis state, then it may
    be that \(\lnot (q \apart^m q')\) for some \(q, q' \in \basis\) and
    maximal \(m\), due to the new timer.
    To avoid this, we reset the basis back to \(\{q_0^\tree\}\), as soon as a new
    timer is found, without removing states from \(\tree\).

    \item[\promotion]
    If \(\compatible(r)\) is empty for some frontier state \(r\), then we
    know that \(q \apart^m r\) for every \(q \in \basis\) and maximal matching
    \(m : q \leftrightarrow r\).
    Hence, we promote \(r\) to the basis.

    \item[\completion]
    If an \(i\)-transition is missing from some basis state \(p\), we complete
    the basis with that transition.

    \item[\minimizationActive]
    We ensure
    \(\lengthOf{\activeTimers^\tree(p)} = \lengthOf{\activeTimers^\tree(r)}\)
    for every \((p, \cdot) \in \compatible(r)\).
  \end{description}

\medskip
  \item[Hypothesis and equivalence]
  We call \(\symEquivQ(\hypothesis)\) with \(\hypothesis\) a hypothesis.
  If the answer is \yes, we return \(\hypothesis\).
  Otherwise, we process the counterexample, as we now explain in an example.
\end{description}

\begin{example}
  Let \(\tree\) be the observation tree of \Cref{fig:ex:counterexample} with
  \(\basis = \{t_0, t_1\}\), \(\frontier = \{t_2, t_3\}\), and
  \(\compatible(t_2) = \compatible(t_3) = \{(t_1, x_1 \mapsto x_1)\}\).
  \Cref{fig:ex:counterexample} also gives the \MMT constructed from \(\tree\),
  which is not symbolically equivalent to the \MMT of \Cref{fig:ex:MMT:good},
  with \(\symbolic{w} = i \cdot i \cdot \timeout{1} \cdot \timeout{2}\) as
  a counterexample.
  We extend \(\tree\) such that \(\symbolic{w}\) can be read in \(\tree\).
  That is, we call \(\symWaitQ(t_5)\) and discover that the transition from
  \(t_1\) to \(t_3\) must start the timer \(x_3\).
  Hence, \(t_3\) has no compatible state anymore (i.e., we found a new apartness
  pair) and gets promoted.
  After completing the tree and performing \symWaitQ on the frontier states,
  we get the tree from \Cref{fig:ex:tree}.
\end{example}

In our example, simply adding the symbolic word provided by the teacher was
enough to discover a new apartness pair (meaning that the hypothesis can no
longer be constructed).
However, there may be cases where we need to \emph{replay}
(see \Cref{ex:learning:basis}) a part of the newly added run
\(t_0 \myxrightarrow{\symbolic{w}}\): split the run into
\(t_0 \myxrightarrow{\symbolic{u}} t \myxrightarrow{\symbolic{v}}\) with
\(t \in \frontier\) and replay \(t \myxrightarrow{\symbolic{v}}\) from the state
compatible with \(t\) that was selected to build the hypothesis.
Repeat this principle until a compatible set is reduced.

\section{Implementation and Experiments}\label{sec:implementation}

We have implemented \lsharpMMT as an open-source tool.\footnote{See:
\url{https://gitlab.science.ru.nl/bharat/mmt_lsharp} and
Zenodo~\cite{zenodo-artifact}.}
Note that our prototype implementation only uses symbolic queries, with no explicit translation into concrete ones.
As we do not yet have a timed conformance testing algorithm for checking equivalence
between a hypothesis and the teacher's \MMT, we utilize a BFS algorithm to check for symbolic equivalence between the two \MMTs.\footnote{That is, seeking a difference in behavior in the 
product of the two \MMTs.} 

We evaluated the performance of our tool on a selection of both real and synthetic benchmarks.
We use the AKM, TCP and Train benchmarks from~\cite{VaandragerBE21}, and the CAS, Light and PC benchmarks from~\cite{AichernigPT20}.
These have also been used for experimental evaluation by~\cite{VaandragerBE21,Waga23,KogelKG23} and can be described as Mealy machines with a single timer (MM1Ts).
We introduce two additional benchmarks with two timers: a model of an FDDI station, and the \MMT of~\Cref{fig:ex:MMT:good}.
We refer to \Cref{appendix:fddi_model} for details on our FDDI benchmark.
We did not include the FDDI 2 process benchmark from~\cite{Waga23},
as our implementation cannot (yet) handle the corresponding generalized MMTs.
Finally, we learned instances of the Oven and WSN MMLTs benchmarks from~\cite{KogelKG23}.
We modified the timing parameters to generate smaller
MM1Ts.
For each experiment, we record the number of $\symOutputQ$, $\symWaitQ$,
$\symEquivQ$, and the time taken to finish the experiment.
Note, in practice, a $\symWaitQ$, in addition to returning the list of timeouts and
their constants, also provides the outputs of the 
timeout transitions.
This is straightforward, as a $\symWaitQ$ must necessarily trigger the timeouts in
order to observe them.
Thus, we do not count the $\symOutputQ$ associated with a $\symWaitQ$.\footnote{An upper bound on the number of \(\symOutputQ\) can be obtained by
adding \(\lengthOf{X} \lengthOf{\symWaitQ}\) to the values for
\(\lengthOf{\symOutputQ}\).}
{\begin{table}[t]
      \footnotesize
      \caption{Experimental Results.}\label{tbl:expr}
      \centering
      \begin{tabular}{lrrrrrrr} \toprule
            Model                    & $|Q|$ & $|I|$ & $|X|$ & $|\symWaitQ|$ & $|\symOutputQ|$ & $|\symEquivQ|$ & Time[ms]  \\
            \specialrule{\cmidrulewidth}{0pt}{0pt}
            AKM                      & 4     & 5     & 1                 & 22            & 35              & 2              & 684        \\
            CAS                      & 8     & 4     & 1                 & 60            & 89              & 3              & 1344       \\
            Light                    & 4     & 2     & 1                 & 10            & 13              & 2              & 302        \\
            PC                       & 8     & 9     & 1                 & 75            & 183             & 4              & 2696       \\
            TCP                      & 11    & 8     & 1                 & 123           & 366             & 8              & 3182     \\
            Train                    & 6     & 3     & 1                 & 32            & 28              & 3              & 1559        \\
            MMT of Fig.~\ref{fig:ex:MMT:good} & 3     & 1     & 2                 & 11            & 5               & 2              & 1039       \\
            FDDI 1-station           & 9     & 2     & 2                 & 32            & 20              & 1              & 1105       \\
            Oven                     & 12    & 5     & 1                 & 907           & 317             & 3              & 9452        \\
            WSN                      & 9     & 4     & 1                 & 175           & 108             & 4              & 3291       \\
            \bottomrule
      \end{tabular}
\end{table}
 }
The results are given in \Cref{tbl:expr}.
Here, we highlight that the presented data is only meant to indicate that our algorithm is able to learn these models within a reasonable amount of time. The table also gives insight into which benchmarks require more or less (symbolic) queries to be learned with our algorithm.

Comparison with other learning algorithms for timed systems is complicated.
First, we need to convert the numbers of symbolic \lsharpMMT queries to
concrete queries.
As our prototype does not implement the concrete queries, we can only use the theoretical polynomial bounds given in \Cref{lemma:queries}. Note that for MM1Ts, each symbolic query can be implemented using a single concrete query (see Lemma 3 in~\cite{VaandragerBE21}).
For the FDDI protocol using two timers, we need at most 1282 concrete queries in total while around 118200 queries are used in~\cite{Waga23}.
Second, several algorithms presented in the literature learn TAs~\cite{Waga23,AichernigPT20,XuAZ22,AnCZZZ20}.
Typically, these models have different numbers of states and transitions than an MMT model:  Mealy machines tend to be more compact than TAs, but the use of timers may lead to more states than a TA encoding.
Therefore we cannot just compare numbers of queries.
As a third complication,  equivalence queries can be implemented in different ways, which may affect the total number of queries required for learning.
Finally, concerning models close to our \MMTs, MMLTs~\cite{KogelKG23} can be converted to equivalent MM1Ts~\cite{VaandragerBE21}, but this may blow up the number of states.
Since \lsharpMMT learns the MM1Ts, it is less efficient than the MMLT learner of~\cite{KogelKG23} which learns the more compact MMLT representations, or than the learner of~\cite{VaandragerBE21} specially designed for MM1Ts.  However, \lsharpMMT can handle a larger class of models, as it can handle multiple timers.

\section{Future work}
A major challenge in \lsharpMMT is to infer the update on transitions.
This would become easier if we know in advance that timers can only be started
by specific inputs, akin to what is done in
\emph{event recording automata} (ERA)~\cite{AlurFH99} for TAs.
Although, as discussed in~\cite{VaandragerBE21}, the restrictions of ERAs
make it hard to capture the timing behavior of standard network protocols. It
would be interesting to study the theoretical complexity of learning a system
that can be modelled by ERAs as well as \MMTs.
A more interesting trail would be to allow transitions to start multiple timers,
instead of a single one, which would permit more complex models (such as those resulting of parallel composition of simpler models) to be learned.
It would also be interesting to explore the theory of generalized MMTs.

\bibliographystyle{splncs04}
\bibliography{abbreviations,references}

\begin{thebibliography}{10}
\providecommand{\url}[1]{\texttt{#1}}
\providecommand{\urlprefix}{URL }
\providecommand{\doi}[1]{https://doi.org/#1}

\bibitem{AichernigPT20}
Aichernig, B.K., Pferscher, A., Tappler, M.: {F}rom {P}assive to {A}ctive: {L}earning {T}imed {A}utomata {E}fficiently. In: Lee, R., Jha, S., Mavridou, A. (eds.) {P}roceedings of the 12th {I}nternational {S}ymposium {NASA} {F}ormal {M}ethods, {NFM} 2020. {L}ecture {N}otes in {C}omputer {S}cience, vol. 12229, pp. 1--19. Springer (2020). \doi{10.1007/978-3-030-55754-6\_1}

\bibitem{Alur99}
Alur, R.: {T}imed {A}utomata. In: Halbwachs, N., Peled, D.A. (eds.) {P}roceedings of the 11th {I}nternational {C}onference {C}omputer {A}ided {V}erification, {CAV} 1999. {L}ecture {N}otes in {C}omputer {S}cience, vol.~1633, pp. 8--22. Springer (1999). \doi{10.1007/3-540-48683-6\_3}

\bibitem{AlurD94}
Alur, R., Dill, D.L.: {A} {T}heory of {T}imed {A}utomata. {T}heoretical {C}omputer {S}cience  \textbf{126}(2),  183--235 (1994). \doi{10.1016/0304-3975(94)90010-8}

\bibitem{AlurFH99}
Alur, R., Fix, L., Henzinger, T.A.: {E}vent-{C}lock {A}utomata: {A} {D}eterminizable {C}lass of {T}imed {A}utomata. {T}heoretical {C}omputer {S}cience  \textbf{211}(1-2),  253--273 (1999). \doi{10.1016/S0304-3975(97)00173-4}

\bibitem{AnCZZZ20}
An, J., Chen, M., Zhan, B., Zhan, N., Zhang, M.: {L}earning {O}ne-{C}lock {T}imed {A}utomata. In: Biere, A., Parker, D. (eds.) {P}roceedings of the 26th {I}nternational {C}onference {T}ools and {A}lgorithms for the {C}onstruction and {A}nalysis of {S}ystems, {TACAS} 2020. {L}ecture {N}otes in {C}omputer {S}cience, vol. 12078, pp. 444--462. Springer (2020). \doi{10.1007/978-3-030-45190-5\_25}

\bibitem{Angluin87}
Angluin, D.: {L}earning {R}egular {S}ets from {Q}ueries and {C}ounterexamples. {I}nformation and {C}omputation  \textbf{75}(2),  87--106 (1987). \doi{10.1016/0890-5401(87)90052-6}

\bibitem{BojanczykL12}
Bojanczyk, M., Lasota, S.: A machine-independent characterization of timed languages. In: Czumaj, A., Mehlhorn, K., Pitts, A.M., Wattenhofer, R. (eds.) Automata, Languages, and Programming - 39th International Colloquium, {ICALP} 2012, Warwick, UK, July 9-13, 2012, Proceedings, Part {II}. Lecture Notes in Computer Science, vol.~7392, pp. 92--103. Springer (2012). \doi{10.1007/978-3-642-31585-5\_12}

\bibitem{BruyerePSV23}
Bruy{\`{e}}re, V., P{\'{e}}rez, G.A., Staquet, G., Vaandrager, F.W.: {A}utomata with {T}imers. In: Petrucci, L., Sproston, J. (eds.) {P}roceedings of the 21st {I}nternational {C}onference {F}ormal {M}odeling and {A}nalysis of {T}imed {S}ystems, {FORMATS} 2023. {L}ecture {N}otes in {C}omputer {S}cience, vol. 14138, pp. 33--49. Springer (2023). \doi{10.1007/978-3-031-42626-1\_3}

\bibitem{HandbookModelChecking}
Clarke, E.M., Henzinger, T.A., Veith, H., Bloem, R. (eds.): {H}andbook of {M}odel {C}hecking. Springer (2018). \doi{10.1007/978-3-319-10575-8}

\bibitem{DawsOTY95}
Daws, C., Olivero, A., Tripakis, S., Yovine, S.: {T}he {T}ool {KRONOS}. In: Alur, R., Henzinger, T.A., Sontag, E.D. (eds.) {H}ybrid {S}ystems {III:} {V}erification and {C}ontrol, {P}roceedings of the {DIMACS/SYCON} {W}orkshop on {V}erification and {C}ontrol of {H}ybrid {S}ystems. {L}ecture {N}otes in {C}omputer {S}cience, vol.~1066, pp. 208--219. Springer (1995). \doi{10.1007/BFB0020947}

\bibitem{DierlHKKLLM23}
Dierl, S., Howar, F.M., Kauffman, S., Kristjansen, M., Larsen, K.G., Lorber, F., Mauritz, M.: {L}earning {S}ymbolic {T}imed {M}odels from {C}oncrete {T}imed {D}ata. In: Rozier, K.Y., Chaudhuri, S. (eds.) {P}roceedings of the 15th {I}nternational {S}ymposium {NASA} {F}ormal {M}ethods, {NFM} 2023. {L}ecture {N}otes in {C}omputer {S}cience, vol. 13903, pp. 104--121. Springer (2023). \doi{10.1007/978-3-031-33170-1\_7}

\bibitem{Dill89}
Dill, D.L.: {T}iming {A}ssumptions and {V}erification of {F}inite-{S}tate {C}oncurrent {S}ystems. In: Sifakis, J. (ed.) {P}roceedings of the {I}nternational {W}orkshop {A}utomatic {V}erification {M}ethods for {F}inite {S}tate {S}ystems. {L}ecture {N}otes in {C}omputer {S}cience, vol.~407, pp. 197--212. Springer (1989). \doi{10.1007/3-540-52148-8\_17}

\bibitem{DoveriGS24}
Doveri, K., Ganty, P., Srivathsan, B.: A {M}yhill-{N}erode style characterization for timed automata with integer resets. In: Barman, S., Lasota, S. (eds.) 44th {IARCS} Annual Conference on Foundations of Software Technology and Theoretical Computer Science, {FSTTCS} 2024, December 16-18, 2024, Gandhinagar, Gujarat, India. LIPIcs, vol.~323, pp. 21:1--21:18. Schloss Dagstuhl - Leibniz-Zentrum f{\"{u}}r Informatik (2024). \doi{10.4230/LIPICS.FSTTCS.2024.21}

\bibitem{FerreiraBDS21}
Ferreira, T., Brewton, H., D'Antoni, L., Silva, A.: {P}rognosis: closed-box analysis of network protocol implementations. In: Kuipers, F.A., Caesar, M.C. (eds.) {P}roceedings of the {ACM} {SIGCOMM} 2021 {C}onference. pp. 762--774. {ACM} (2021). \doi{10.1145/3452296.3472938}

\bibitem{FiterauBrosteanH17}
Fiterau{-}Brostean, P., Howar, F.: {L}earning-{B}ased {T}esting the {S}liding {W}indow {B}ehavior of {TCP} {I}mplementations. In: Petrucci, L., Seceleanu, C., Cavalcanti, A. (eds.) {P}roceedings of the {C}ritical {S}ystems: {F}ormal {M}ethods and {A}utomated {V}erification - {J}oint 22nd {I}nternational {W}orkshop on {F}ormal {M}ethods for {I}ndustrial {C}ritical {S}ystems {FMICS} and 17th {I}nternational {W}orkshop on {A}utomated {V}erification of {C}ritical {S}ystems {AVoCS} 2017. {L}ecture {N}otes in {C}omputer {S}cience, vol. 10471, pp. 185--200. Springer (2017). \doi{10.1007/978-3-319-67113-0\_12}

\bibitem{FiterauBrosteanJV16}
Fiterau{-}Brostean, P., Janssen, R., Vaandrager, F.W.: {C}ombining {M}odel {L}earning and {M}odel {C}hecking to {A}nalyze {TCP} {I}mplementations. In: Chaudhuri, S., Farzan, A. (eds.) {P}roceedings of the 28th {I}nternational {C}onference {C}omputer {A}ided {V}erification, {CAV} 2016. {L}ecture {N}otes in {C}omputer {S}cience, vol.~9780, pp. 454--471. Springer (2016). \doi{10.1007/978-3-319-41540-6\_25}

\bibitem{FiterauBrosteanJMRSS20}
Fiterau{-}Brostean, P., Jonsson, B., Merget, R., de~Ruiter, J., Sagonas, K., Somorovsky, J.: {A}nalysis of {DTLS} {I}mplementations {U}sing {P}rotocol {S}tate {F}uzzing. In: Capkun, S., Roesner, F. (eds.) {P}roceedings of the 29th {USENIX} {S}ecurity {S}ymposium, {USENIX} {S}ecurity 2020. pp. 2523--2540. {USENIX} Association (2020), \url{https://www.usenix.org/conference/usenixsecurity20/presentation/fiterau-brostean}

\bibitem{FiterauBrosteanLPRVV17}
Fiterau{-}Brostean, P., Lenaerts, T., Poll, E., de~Ruiter, J., Vaandrager, F.W., Verleg, P.: Model learning and model checking of {SSH} implementations. In: Erdogmus, H., Havelund, K. (eds.) {P}roceedings of the 24th {ACM} {SIGSOFT} {I}nternational {SPIN} {S}ymposium on {M}odel {C}hecking of {S}oftware. pp. 142--151. {ACM} (2017). \doi{10.1145/3092282.3092289}

\bibitem{zenodo-artifact}
Garhewal, B.: {L\# MMT artifact} (2024). \doi{10.5281/ZENODO.10647627}

\bibitem{GarhewalVHSLS20}
Garhewal, B., Vaandrager, F.W., Howar, F., Schrijvers, T., Lenaerts, T., Smits, R.: {G}rey-{B}ox {L}earning of {R}egister {A}utomata. In: Dongol, B., Troubitsyna, E. (eds.) {I}ntegrated {F}ormal {M}ethods - 16th {I}nternational {C}onference, {IFM} 2020, {L}ugano, {S}witzerland, {N}ovember 16-20, 2020, {P}roceedings. {L}ecture {N}otes in {C}omputer {S}cience, vol. 12546, pp. 22--40. Springer (2020). \doi{10.1007/978-3-030-63461-2\_2}

\bibitem{GomezB07}
G{\'{o}}mez, R., Bowman, H.: Efficient detection of zeno runs in timed automata. In: Raskin, J., Thiagarajan, P.S. (eds.) Formal Modeling and Analysis of Timed Systems, 5th International Conference, {FORMATS} 2007, Salzburg, Austria, October 3-5, 2007, Proceedings. Lecture Notes in Computer Science, vol.~4763, pp. 195--210. Springer (2007). \doi{10.1007/978-3-540-75454-1\_15}

\bibitem{GrinchteinJL04}
Grinchtein, O., Jonsson, B., Leucker, M.: {L}earning of {E}vent-{R}ecording {A}utomata. In: Lakhnech, Y., Yovine, S. (eds.) {P}roceedings of the {F}ormal {T}echniques, {M}odelling and {A}nalysis of {T}imed and {F}ault-{T}olerant {S}ystems - {J}oint {I}nternational {C}onferences on {F}ormal {M}odelling and {A}nalysis of {T}imed {S}ystems, {FORMATS} and {F}ormal {T}echniques in {R}eal-{T}ime and {F}ault-{T}olerant Systems, {FTRTFT} 2004. {L}ecture {N}otes in {C}omputer {S}cience, vol.~3253, pp. 379--396. Springer (2004). \doi{10.1007/978-3-540-30206-3\_26}

\bibitem{GrinchteinJL10}
Grinchtein, O., Jonsson, B., Leucker, M.: {L}earning of event-recording automata. {T}heoretical {C}omputer {S}cience  \textbf{411}(47),  4029--4054 (2010). \doi{10.1016/J.TCS.2010.07.008}

\bibitem{GrinchteinJP06}
Grinchtein, O., Jonsson, B., Pettersson, P.: {I}nference of {E}vent-{R}ecording {A}utomata {U}sing {T}imed {D}ecision {T}rees. In: Baier, C., Hermanns, H. (eds.) {P}roceedings of the 17th {I}nternational {C}onference {C}oncurrency {T}heory, {CONCUR} 2006. {L}ecture {N}otes in {C}omputer {S}cience, vol.~4137, pp. 435--449. Springer (2006). \doi{10.1007/11817949\_29}

\bibitem{HenryJM20}
Henry, L., J{\'{e}}ron, T., Markey, N.: {A}ctive {L}earning of {T}imed {A}utomata with {U}nobservable {R}esets. In: Bertrand, N., Jansen, N. (eds.) {P}roceedings of the 18th {I}nternational {C}onference {F}ormal {M}odeling and {A}nalysis of {T}imed {S}ystems, {FORMATS} 2020. {L}ecture {N}otes in {C}omputer {S}cience, vol. 12288, pp. 144--160. Springer (2020). \doi{10.1007/978-3-030-57628-8\_9}

\bibitem{HowarS18}
Howar, F., Steffen, B.: {A}ctive {A}utomata {L}earning in {P}ractice - {A}n {A}nnotated {B}ibliography of the years {2}011 to {2}016. In: Bennaceur, A., H{\"{a}}hnle, R., Meinke, K. (eds.) {M}achine {L}earning for {D}ynamic {S}oftware {A}nalysis: {P}otentials and {L}imits - {I}nternational {D}agstuhl {S}eminar 16172, {D}agstuhl {C}astle, {G}ermany, {A}pril 24-27, 2016, {R}evised {P}apers. {L}ecture {N}otes in {C}omputer {S}cience, vol. 11026, pp. 123--148. Springer (2018). \doi{10.1007/978-3-319-96562-8\_5}

\bibitem{Jiang0WZ21}
Jiang, S., Song, Z., Weinstein, O., Zhang, H.: {A} faster algorithm for solving general {LP}s. In: Khuller, S., Williams, V.V. (eds.) {STOC} '21: 53rd Annual {ACM} {SIGACT} Symposium on Theory of Computing, Virtual Event, Italy, June 21-25, 2021. pp. 823--832. {ACM} (2021). \doi{10.1145/3406325.3451058}

\bibitem{Johnson87}
Johnson, M.J.: {P}roof that {T}iming {R}equirements of the {FDDI} {T}oken {R}ing {P}rotocol are {S}atisfied. {IEEE} {T}ransactions on {C}omputers  \textbf{35}(6),  620--625 (1987). \doi{10.1109/TCOM.1987.1096832}

\bibitem{KogelKG23}
Kogel, P., Kl{\"{o}}s, V., Glesner, S.: {L}earning {M}ealy {M}achines with {L}ocal {T}imers. In: Li, Y., Tahar, S. (eds.) {P}roceedings of the 24th {I}nternational {C}onference on {F}ormal {E}ngineering {M}ethods {F}ormal {M}ethods and {S}oftware {E}ngineering, {ICFEM} 2023. {L}ecture {N}otes in {C}omputer {S}cience, vol. 14308, pp. 47--64. Springer (2023). \doi{10.1007/978-981-99-7584-6\_4}

\bibitem{MalerP04}
Maler, O., Pnueli, A.: {O}n {R}ecognizable {T}imed {L}anguages. In: Walukiewicz, I. (ed.) {P}roceedings of the 7th {I}nternational {C}onference {F}oundations of {S}oftware {S}cience and {C}omputation {S}tructures, {FOSSACS} 2004. {L}ecture {N}otes in {C}omputer {S}cience, vol.~2987, pp. 348--362. Springer (2004). \doi{10.1007/978-3-540-24727-2\_25}

\bibitem{RuiterP15}
de~Ruiter, J., Poll, E.: {P}rotocol {S}tate {F}uzzing of {TLS} {I}mplementations. In: Jung, J., Holz, T. (eds.) {P}roceedings of the 24th {USENIX} {S}ecurity {S}ymposium, {USENIX} {S}ecurity 15. pp. 193--206. {USENIX} Association (2015), \url{https://www.usenix.org/conference/usenixsecurity15/technical-sessions/presentation/de-ruiter}

\bibitem{ShahbazG09}
Shahbaz, M., Groz, R.: {I}nferring {M}ealy {M}achines. In: Cavalcanti, A., Dams, D. (eds.) {P}roceedings of the 16th {F}ormal {M}ethods, {FM} 2009. {L}ecture {N}otes in {C}omputer {S}cience, vol.~5850, pp. 207--222. Springer (2009). \doi{10.1007/978-3-642-05089-3\_14}

\bibitem{TapplerALL19}
Tappler, M., Aichernig, B.K., Larsen, K.G., Lorber, F.: {T}ime to {L}earn - {L}earning {T}imed {A}utomata from {T}ests. In: Andr{\'{e}}, {\'{E}}., Stoelinga, M. (eds.) {P}roceedings of the 17th {I}nternational {C}onference {F}ormal {M}odeling and {A}nalysis of {T}imed {S}ystems, {FORMATS} 2019. {L}ecture {N}otes in {C}omputer {S}cience, vol. 11750, pp. 216--235. Springer (2019). \doi{10.1007/978-3-030-29662-9\_13}

\bibitem{Vaandrager17}
Vaandrager, F.W.: Model learning. {C}ommunications of the {ACM}  \textbf{60}(2),  86--95 (2017). \doi{10.1145/2967606}

\bibitem{VaandragerBE21}
Vaandrager, F.W., Bloem, R., Ebrahimi, M.: {L}earning {M}ealy {M}achines with {O}ne {T}imer. In: Leporati, A., Mart{\'{\i}}n{-}Vide, C., Shapira, D., Zandron, C. (eds.) {P}roceedings of the 15th {I}nternational {C}onference {L}anguage and {A}utomata {T}heory and {A}pplications, {LATA} 2021. {L}ecture {N}otes in {C}omputer {S}cience, vol. 12638, pp. 157--170. Springer (2021). \doi{10.1007/978-3-030-68195-1\_13}

\bibitem{VaandragerGRW22}
Vaandrager, F.W., Garhewal, B., Rot, J., Wi{\ss}mann, T.: {A} {N}ew {A}pproach for {A}ctive {A}utomata {L}earning {B}ased on {A}partness. In: Fisman, D., Rosu, G. (eds.) {P}roceedings of the 28th {I}nternational {C}onference {T}ools and {A}lgorithms for the {C}onstruction and {A}nalysis of {S}ystems, {TACAS} 2022. {L}ecture {N}otes in {C}omputer {S}cience, vol. 13243, pp. 223--243. Springer (2022). \doi{10.1007/978-3-030-99524-9\_12}

\bibitem{Waga23}
Waga, M.: {A}ctive {L}earning of {D}eterministic {T}imed {A}utomata with {M}yhill-{N}erode {S}tyle {C}haracterization. In: Enea, C., Lal, A. (eds.) {P}roceedings of the 35th {I}nternational {C}onference {C}omputer {A}ided {V}erification, {CAV} 2023. {L}ecture {N}otes in {C}omputer {S}cience, vol. 13964, pp. 3--26. Springer (2023). \doi{10.1007/978-3-031-37706-8\_1}

\bibitem{XuAZ22}
Xu, R., An, J., Zhan, B.: {A}ctive {L}earning of {O}ne-{C}lock {T}imed {A}utomata {U}sing {C}onstraint {S}olving. In: Bouajjani, A., Hol{\'{\i}}k, L., Wu, Z. (eds.) {P}roceedings of the 20th {I}nternational {S}ymposium {A}utomated {T}echnology for {V}erification and {A}nalysis, {ATVA} 2022. {L}ecture {N}otes in {C}omputer {S}cience, vol. 13505, pp. 249--265. Springer (2022). \doi{10.1007/978-3-031-19992-9\_16}

\end{thebibliography}

\clearpage

\appendix

\section{More details on equivalence of MMTs}\label{app:equivalence}

In this appendix, we give more details about equivalences of \MMTs.
We first define the classical notion of \emph{timed bisimulation} equivalence, and prove that the symbolic equivalence introduced in \Cref{sec:MMT:equivalence} refines timed bisimulation equivalence.
Moreover, we give a counterexample for the other direction.
That is, we prove that timed bisimulation equivalence does not imply symbolic equivalence.
Next we introduce \emph{timed trace} equivalence and show that timed bisimulation equivalence implies timed trace equivalence.
Finally we prove that, for a subclass of \emph{race avoiding} MMTs, timed trace equivalence implies symbolic equivalence, which means that for this subclass all three proposed notions of equivalence coincide.

\subsection{Timed bisimulation equivalence}

We write $\conf{\M}$ to denote the set of configurations of an MMT $\M$.  A \emph{timed bisimulation} between MMTs $\M$ and $\N$ is a relation over $\conf{\M}\times\conf{\N}$ that describes how $\M$ and $\N$ may simulate each others behavior.

\begin{definition}[Timed bisimulation]
	A \emph{timed bisimulation} between MMTs $\M$ and $\N$ is a relation $R \subseteq \conf{\M}\times\conf{\N}$ that satisfies $(q_0^{\M}, \emptyset )\; R \; (q_0^{\N}, \emptyset )$ and, for
	all $C, C' \in\conf{\M}$, $D, D'\in \conf{\N}$, $d \in\nnr$, $i \in I$, $o \in O$, $x \in X^{\M}$ and $y \in X^{\N}$, $C \; R \; D$ implies
	\begin{enumerate}
		\item 
		if  $C \myxrightarrow{d} C'$ then there exists a $D'$ such that $D \myxrightarrow{d} D'$ and $C'\; R \; D'$,
		\item 
		if  $C \myxrightarrow{i/o} C'$ then there exists a $D'$ such that $D \myxrightarrow{i/o} D'$ and $C'\; R \; D'$,
		\item 
		if $C \myxrightarrow{\timeout{x}/o} C'$ then there exist $D'$ and $y$ such that $D \myxrightarrow{\timeout{y}/o} D'$ and $C'\; R \; D'$,
		\item 
		if $D \myxrightarrow{d} D'$ then there exists a $C'$ such that $C \myxrightarrow{d} C'$ and $C'\; R \; D'$, 
		\item 
		if $D \myxrightarrow{i/o} D'$ then there exists a $C'$ such that $C \myxrightarrow{i/o} C'$ and $C'\; R \; D'$, and
		\item 
		if $D \myxrightarrow{\timeout{y}/o} D'$ then there exist $C'$ and $x$ such that $C \myxrightarrow{\timeout{x}/o} C'$ and $C'\; R \; D'$.
	\end{enumerate}
	We say that $\M$ and $\N$ are \emph{timed bisimulation equivalent} or \emph{bisimilar}, and write \(\M \equivalent \N\), if there exists a timed bisimulation between $\M$ and $\N$.
\end{definition}

The next lemma follows directly from the definitions.

\begin{lemma}\label{lemma: timed bisimulation is an equivalence}
	$\equivalent$ is an equivalence relation on MMTs.
\end{lemma}

\subsection{Timed equivalence does not refine symbolic equivalence}%
\label{app:counterexample}
Consider the \MMT $\M$ of \Cref{fig:timedNotSym}.
We make \(\M\) complete by adding \(q \xrightarrow[\bot]{i/o} q_6\) for every
missing transition.
Observe that \(O = \{o, o_1, o_2\}\) and that \(q_4 \xrightarrow{\timeout{z}}\)
outputs \(o_1\) while \(q_7 \xrightarrow{\timeout{y}}\) outputs \(o_2\).
\begin{figure}[ht!]
	\centering
	\begin{tikzpicture}[
  automaton,
  state/.append style = {
    minimum size = 15pt,
  },
]
  \node [state, initial]        (q0)  {\(q_0\)};
  \node [state, right=of q0]    (q1)  {\(q_1\)};
  \node [state, right=of q1]    (q2)  {\(q_2\)};
  \node [state, right=of q2]    (q3)  {\(q_3\)};
  \node [state, above right=of q3]    (q4)  {\(q_4\)};
  \node [state, below right=of q4]    (q5)  {\(q_5\)};
  \node [state, right=of q5]    (q6)  {\(q_6\)};
  \node [state, below right=of q3]    (q7)  {\(q_7\)};

  \path
    (q0)  edge    node {\(i/o, \updateFig{x}{1}\)}        (q1)
    (q1)  edge    node {\(i/o, \updateFig{y}{1}\)}        (q2)
    (q2)  edge    node {\(i/o, \updateFig{z}{1}\)}        (q3)
    (q3)  edge    node {\(\timeout{y}/o, \bot\)}          (q4)
          edge    node [left] {\(\timeout{z}/o, \bot\)}   (q7)
    (q4)  edge    node {\(\timeout{z}/o_1, \bot\)}        (q5)
    (q5)  edge    node {\(\timeout{x}/o, \bot\)}          (q6)
    (q7)  edge    node [right] {\(\timeout{y}/o_2, \bot\)}(q5)
  ;
\end{tikzpicture}
  \caption{An \MMT with
		\(\activeTimers(q_0) = \activeTimers(q_6) = \emptyset\),
		\(\activeTimers(q_1) = \activeTimers(q_5) = \{x\}\),
		\(\activeTimers(q_2) = \activeTimers(q_7) = \{x, y\}\),
		\(\activeTimers(q_3) = \{x, y, z\}\),
		\(\activeTimers(q_4) = \{x, z\}\).
		Every missing transition \(q \xrightarrow[u]{i/\omega} p\) to obtain a complete
		\MMT is such that \(p = q_6, \omega = o\), and \(u = \bot\).
	}\label{fig:timedNotSym}
\end{figure}
Moreover, consider a copy $\N$ of \(\M\) in which \(o_1\) and \(o_2\) are swapped.
Let us argue that \(\M \equivalent \N\) but \(\M \notSymEquivalent \N\), starting
with the latter.
We write \(q_j^\M\) and \(q_j^\N\) to distinguish the states of \(\M\)
and \(\N\).

Let \(\symbolic{w} = i \cdot i \cdot i \cdot \timeout{1,2} \cdot \timeout{1,3}\) b
a symbolic word, inducing the following runs:
\begin{align*}
	&q_0^\M \xrightarrow[(x, 1)]{i/o}
	q_1^\M \xrightarrow[(y, 1)]{i/o}
	q_2^\M \xrightarrow[(z, 1)]{i/o}
	q_3^\M \xrightarrow[\bot]{\timeout{y}/o}
	q_4^\M \xrightarrow[\bot]{\timeout{z}/o_1}
	q_5^\M
	\\
	&q_0^\N \xrightarrow[(x, 1)]{i/o}
	q_1^\N \xrightarrow[(y, 1)]{i/o}
	q_2^\N \xrightarrow[(z, 1)]{i/o}
	q_3^\N \xrightarrow[\bot]{\timeout{y}/o}
	q_4^\N \xrightarrow[\bot]{\timeout{z}/o_2}
	q_5^\N.
\end{align*}
Hence, \(\M \notSymEquivalent \N\) as the last pair of transitions has
different outputs.

So, it remains to show that \(\M \equivalent \N\). Observe that
MMTs $\M$ and $\N$ share the same states, and for each state $q_i$ also the set ${\cal C}_i$ of reachable configurations is the same, where
\begin{eqnarray*}
	{\cal C}_0 & = & \{ (q_0, \emptyset) \}\\
	{\cal C}_1 & = & \{ (q_1, \kappa) \mid \kappa \models x \leq 1\}\\
	{\cal C}_2 & = & \{ (q_2, \kappa) \mid \kappa \models x \leq y \leq 1\}\\
	{\cal C}_3 & = & \{ (q_3, \kappa) \mid \kappa \models x \leq y \leq z \leq 1\}\\
	{\cal C}_4 & = & \{ (q_4, \kappa) \mid \kappa \models x =0 \wedge z \leq 1\}\\
	{\cal C}_5 & = & \{ (q_5, \kappa) \mid \kappa \models x =0 \}\\
	{\cal C}_6 & = & \{ (q_6, \emptyset) \}\\
	{\cal C}_7 & = & \{ (q_7, \kappa) \mid \kappa \models x = y = 0 \}
\end{eqnarray*}
We define a bisimulation $R$ between configurations of $\M$ and $\N$ as the identity relation for all configurations except two, which are paired crosswise: 
\begin{eqnarray*}
	R & = & \{ (C, C) \mid C \in {\cal C}_0 \cup {\cal C}_1 \cup {\cal C}_2 \cup {\cal C}_3 \cup {\cal C}_5 \cup {\cal C}_6  \cup \{ (q_4, \kappa) \mid \kappa \models x =0 < z \leq 1\} \} \cup\\
	& & \{ ((q_4, x=z=0), (q_7, x=y=0)) \} \cup  \{ ((q_7, x=y=0), (q_4, x=z=0)) \}
\end{eqnarray*}
It is routine to verify that $R$ is a timed bisimulation: for each pair of identical configurations in $R$, each outgoing transition from one configuration is simulated by exactly the same transition in the other one, except for configuration $(q_3, x=y=z=0)$, where the outgoing transition to $q_4$ is simulated by the outgoing transition to $q_7$, and vice versa.
Moreover, the outgoing $\timeout{z}$-transition from $q_4$ is simulated by the outgoing $\timeout{y}$-transition from $q_7$, and vice versa, thus ensuring that both $o_1$-transitions are matched, as well as both $o_2$-transitions.
We thus conclude that \(\M \equivalent \N\).

\subsection{Symbolic equivalence refines timed bisimulation equivalence}

In order to prove that symbolic equivalence refines timed bisimulation, a key step is the construction of a \emph{symbolic tree unfolding}, a tree-shaped MMT $\symbtree{\M}$ based on the symbolic semantics of a given MMT $\M$.
The states of $\symbtree{\M}$ are just the symbolic input words of $\M$.
The timers of $\symbtree{\M}$ are the positive natural numbers: a timer $j$ can only be started in the $j$-th transition of a run (and is thus never restarted).
From every state $\w$ there are transitions to all its one input extensions. 

Note that formally, $\symbtree{\M}$ is not an MMT because it has infinitely many states and timers.  In addition, $\timeout{x}$-transitions start a timer different from $x$.  We imposed the finiteness restrictions on MMTs to ensure termination of our learning algorithm. The restriction that a $\timeout{x}$-transition may only restart $x$, has been introduced mainly to simplify some definitions and proofs. The definitions of the timed semantics, symbolic semantics and timed bisimulation do not depend on MMTs being finite, or a $\timeout{x}$-transition only restarting $x$.  Clearly, timed bisimulation is also an equivalence for these more general MMTs.  If we show that symbolic equivalence refines timed bisimulation for the more general MMTs, it will certainly also hold for the more restrictive ones.
The situation is analogous to one in real analysis where, as observed by the French mathematician Jacques Hadamard, \enquote{The shortest path between two truths in the real domain passes through the complex domain.}\footnote{See \url{https://homepage.divms.uiowa.edu/~jorgen/hadamardquotesource.html}.}

The next simple lemma is needed to show that the construction of $\symbtree{\M}$ is well-defined.

\begin{lemma}\label{lemma symbolic input language}
	Let $\M$ be an \MMT.
	Suppose $\silanguage{\M}$ contains words
	$\w \; \w' \timeout{c, j}$ and $\w \; \w'' \timeout{c', j}$ with
 $\length{\w} = j$.
	Then $c = c'$.
\end{lemma}
\begin{proof}
	Words $\w \; \w' \timeout{c, j}$ and $\w \; \w'' \timeout{c', j}$ correspond to unique runs $\pi'$ and $\pi''$ of $\M$, which have a shared common prefix $\pi$.
	The timeouts at the end of runs $\pi'$ and $\pi''$ are triggered by a timer that started by the last transition of run $\pi$.
	Both $c$ and $c'$ equal the value to which this timer was set, and are therefore equal.
\qed\end{proof}

\begin{definition}\label{def symb tree}
	Let $\M$ be an \MMT.
	Then the \emph{symbolic tree unfolding} of $\M$, denoted $\symbtree{\M}$, is the tuple \((X, Q, q_0, \activeTimers, \delta)\), where
	\begin{itemize}
		\item 
		$X = \natplus$,
		\item 
		$Q = \silanguage{\M}$,
		\item 
		$q_0 = \epsilon$,
		\item 
		\(\delta : Q \times \actions{\M} \partto Q \times O \times \updates{\M}\) is specified as follows.
		Let $\w \in Q$ and $i \in \actions{\M}$.
		If $i \in I$ then $\delta(\w, i)$ is defined iff $\w\; i \in Q$.
		If $i = \timeout{j}$, for some $j \in X$, then $\delta(\w, i)$ is defined iff there is some $c \in \natplus$ such that $\w\; \timeout{c, j} \in Q$.
		Suppose that $\delta(\w, i)$ is defined and equal to $(\w', o, u)$. Then:
		\begin{itemize}
			\item 
		If $i \in I$ then $\w' = \w \; i$.
		Otherwise, if $i = \timeout{j}$, then $\w' = \w \; \timeout{c, j}$, for some $c$.
		Note that in this case $c$ is uniquely determined by Lemma~\ref{lemma symbolic input language}.
		\item 
		$o = \last{\ofunction{\M}{\w'}}$, where function \emph{last} returns the last element of a nonempty sequence.
		\item 
		Let $\length{\w'} = j$.  If there exists $\w''$ and $c$ such that $\w' \; \w'' \timeout{c, j} \in Q$ then $u = (j, c)$.
		Note that in this case $c$ is uniquely determined by Lemma~\ref{lemma symbolic input language}.
		Otherwise, $u = \perp$.
	\end{itemize}
	\item 
	Suppose $\w \in Q$ with $\length{\w} = j$.
	Then a timer $k$ is active in $\w$, that is, $k \in \activeTimers(\w)$, iff $k \leq j$ and $Q$ contains a state
	$\w \; \w' \timeout{c, k}$, for some $\w'$ and $c$.
	\end{itemize}
\end{definition}

Whereas, for each timer that is started in $\symbtree{\M}$, there exists a feasible run in which this timer expires, \MMT $\M$ may contain \emph{ghost timers}, which never expire because another timer will always timeout first and stop them.
Ghost timers do not affect the observable behavior of an \MMT, but we need to actually prove this in order to show that $\symbtree{\M}$ and $\M$ are bisimilar.

\begin{definition}
	Let $\M$ be an \MMT with feasible run $\pi$ that starts in the initial state and ends in state $q$. A timer $x \in \activeTimers(q)$ is \emph{live} after $\pi$ iff there exists a feasible run $\pi'$ of $\M$ that extends $\pi$ in which $x$ expires before being stopped or restarted.
	A timer that is not live is called a \emph{ghost timer}.
\end{definition}

\begin{lemma}\label{lemma cause two}
	Let $\M$ be an \MMT. 
	For $\pi$ a feasible run of $\M$ that starts in the initial state and ends in state $q$.
    For $x \in \activeTimers(q)$, let $\causetwo{\pi}{x}$ denote the second element of the pair $\cause{\pi}{x}$.
    Then $\causetwo{\pi}{\cdot}$ is a bijection between the live timers of $\activeTimers(q)$ and the set $\activeTimers(\w)$ of timers of state $\w =  \sinpw{\pi}$ of MMT $\symbtree{\M}$.
\end{lemma}

We are now prepared to prove the existence of a timed bisimulation between MMTs $\symbtree{\M}$ and $\M$.

\begin{lemma}\label{lemma:symbolic tree equivalent}
	Let $\M$ be an \MMT. Then $\M\equivalent\symbtree{\M}$.
\end{lemma}
\begin{proof}
We define a relation $R$ between configurations of $\M$ and $\symbtree{\M}$ $=$ $\N$ as follows.
Configurations $(q, \kappa)$ and $(\w, \lambda)$, of $\M$ and $\N$, respectively, are related by $R$ iff $\M$ has a timed run $\rho$ with $\untimeRun{\rho} = \pi$, first configuration $(q_0^{\M}, \emptyset)$, and final configuration $(q, \kappa)$ such that
$\sinpw{\pi} = \w$ and, for each timer $x \in \activeTimers(q)$ that is live after $\pi$, $\kappa(x) = \lambda(\causetwo{\pi}{x})$. It is routine to verify that $R$ satisfies the conditions of a timed bisimulation.
Clearly, 
$(q_0^{\M}, \emptyset )\; R \; (\epsilon, \emptyset )$, as the timed run $\rho = (q_0^{\M}, \emptyset ) \; 5 \; (q_0^{\M}, \emptyset )$ end in
configuration $(q_0^{\M}, \emptyset )$, $\untimeRun{\rho} = q_0^{\M}$,
$\sinpw{q_0^{\M}} = \epsilon$, and the condition on timer values vacuously holds since there are no timers.
Now suppose that $C \; R \; D$, where $C = (q, \kappa)$ and $D = (\w, \lambda)$.
\begin{enumerate}
	\item 
	Suppose  $C \myxrightarrow{d} C'$.
	Then $\kappa(x) \geq d$, for all timers $x \in \activeTimers(q)$,
	and $C'= (q, \kappa - d)$.
	Since for each timer $x \in \activeTimers(q)$ that is live after $\pi$, $\kappa(x) = \lambda(\causetwo{\pi}{x})$, and since by Lemma~\ref{lemma cause two},
	$\causetwo{\pi}{\cdot}$ is a bijection between the timers of $\activeTimers(q)$ that are live after $\pi$ and $\activeTimers(\w)$,
	$\lambda(j) \geq d$, for all timers $j \in \activeTimers(\w)$.
	Therefore $D \myxrightarrow{d} D'$, with $D' = (\w, \lambda - d)$.
	It is routine to check that $C'\; R \; D'$.
	\item 
	Suppose  $D \myxrightarrow{d} D'$.
	Then $\lambda(j) \geq d$, for all timers $j \in \activeTimers(\w)$,
	and $D'= (\w, \lambda - d)$.
	Since for each timer $x \in \activeTimers(q)$ that is live after $\pi$, $\kappa(x) = \lambda(\causetwo{\pi}{x})$, and since by Lemma~\ref{lemma cause two},
	$\causetwo{\pi}{\cdot}$ is a bijection between the live timers of $\activeTimers(q)$ and $\activeTimers(\w)$,
	$\kappa(x) \geq d$, for all live timers $x \in \activeTimers(q)$.
	Also for all ghost timers $x \in \activeTimers(q)$,
	$\kappa(x) \geq d$: otherwise we could pick the ghost timer $y$ with minimal value, say $e < d$, and advance time with $e$ and let $y$ timeout.  But by definition of a ghost timer, this is not possible.
	Therefore $C \myxrightarrow{d} C'$, with $C' = (q, \kappa - d)$.
	It is routine to check that $C'\; R \; D'$.
\end{enumerate}
The other cases are similar.
\qed\end{proof}

As a direct corollary of Lemma~\ref{lemma:symbolic tree equivalent}, we conclude that symbolic equivalence implies (timed bisimulation) equivalence.

\begin{corollary}\label{Cy:symEquivalent:equivalent}
  Let \(\M\) and \(\N\) be two \complete \MMTs.
  Then \(\M \symEquivalent \N\) implies \(\M \equivalent \N\).
\end{corollary}

\begin{proof}
Assume \(\M \symEquivalent \N\).
Since the symbolic trees of $\M$ and $\N$ are directly constructed from their symbolic semantics, which are the same by our assumption,
$\symbtree{\M} = \symbtree{\N}$.
By Lemma~\ref{lemma:symbolic tree equivalent}, $\M \equivalent\symbtree{\M}$ and $\N \equivalent\symbtree{\N}$.
Therefore, by Lemma~\ref{lemma: timed bisimulation is an equivalence},
$\M\equivalent\N$.
\qed\end{proof}

\subsection{Timed trace equivalence}
Timed bisimulations preserve inputs, outputs and delays, but abstract all $\timeout{x}$ actions, for $x \in X$, into a single $\mathit{to}$ action.
Analogously, we may associate a timed word over $(I \cup \{ \mathit{to}\}) \times O$ to each timed run $\rho$ by deleting all configurations and abstracting all $\timeout{x}$ actions into $\mathit{to}$.
We call this the \emph{timed trace} of $\rho$, and denote it as $\timedtrace{\rho}$.
We write $\ttlanguage{\M}$ for the set of all timed traces of timed runs of an MMT $\M$ starting from $(q_0^{\M}, \emptyset)$.

\begin{definition}[Timed trace equivalence]\label{def:equivalence:tt}
	Two \MMTs \(\M\) and \(\N\) are \emph{timed trace equivalent}, noted \(\M \ttequivalent \N\), 
	if $\ttlanguage{\M} = \ttlanguage{\N}$.
\end{definition}

Trivially, timed bisimulation implies timed trace equivalence: if $\M$ and $\N$ are bisimilar, then we may construct, for each timed run of $\M$, a corresponding timed run of $\N$ with the same timed trace (and vice versa).

\begin{lemma}\label{lemma:symEquivalent:equivalent}
	Let \(\M\) and \(\N\) be two \MMTs.
	Then \(\M \equivalent \N\) implies \(\M \ttequivalent \N\).
\end{lemma}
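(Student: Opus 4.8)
The plan is to show that symbolic equivalence is a refinement of timed equivalence by transferring, run-by-run, every timed run of $\M$ over a \tiw $w$ into a timed run of $\N$ over the same $w$ producing the same \tow, and symmetrically. Fix a \tiw $w = d_1 i_1 \dotsb i_n d_{n+1}$ over $I$ and a timed run $\rho \in \tiwruns{w}$ of $\M$. Reading off the discrete transitions of $\rho$ (inputs interleaved with the automatically-triggered timeout transitions) gives a word $v$ over $\actions{\M}$ and thus a run $\pi = q_0^\M \xrightarrow{v}$ that is feasible, witnessed by $\rho$. From $v$ I form the symbolic word $\symbolic{v} = \toSymbolic{v}$. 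By \Cref{def:equivalence:symbolic}, since $\M \symEquivalent \N$ and $\pi$ is feasible in $\M$, the run $q_0^\N \xrightarrow{\symbolic{v}}$ is feasible in $\N$, with the same outputs at every position and the same start constants on every spanning sub-run.

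Next I would argue that the feasible run $q_0^\N \xrightarrow{\symbolic{v}}$ can be realized by a timed run $\rho'$ of $\N$ that reads exactly the same \tiw $w$, i.e.\ uses the same delays $d_1,\dots,d_{n+1}$ between the input actions (and zero delay around the timeout actions, since timeouts are processed instantaneously). The key point is that $\symbolic{v}$ and $v$ encode the same pattern of which transition (re)starts the timer that later times out, and symbolic equivalence guarantees the constants at those (re)starts agree; hence the relative deadlines of the timeouts in $\N$ coincide with those in $\M$, so the very same delay sequence $d_1,\dots,d_{n+1}$ that made $\rho$ a valid timed run of $\M$ makes $\rho'$ a valid timed run of $\N$, triggering the timeout actions in the same order at the same absolute times. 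Since outputs match position-by-position and the delays are identical, $\rho$ and $\rho'$ yield the same \tow. Therefore $\tOutputs^\M(w) \subseteq \tOutputs^\N(w)$, and by symmetry of $\symEquivalent$ the reverse inclusion holds, so $\tOutputs^\M(w) = \tOutputs^\N(w)$ for every \tiw $w$, i.e.\ $\M \equivalent \N$.

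\textbf{The main obstacle} I expect is the bookkeeping in the middle step: making precise that a feasible symbolic run together with matching start-constants pins down a timed run over a \emph{prescribed} delay sequence. One has to track, for each timer active along the run of $\N$, that its valuation at each configuration equals the corresponding valuation in $\M$'s run under the timer renaming induced by the spanning-run structure — the constants agree at the (re)start by symbolic equivalence, and the same elapsed delays are subtracted thereafter, so the values stay synchronized; in particular a timeout fires in $\N$ exactly when the matching timeout fired in $\M$. A subtlety worth stating explicitly is that updates $u_j$ outside the start of a spanning run are unconstrained by $\symEquivalent$, but those timers are precisely the ones that never time out along $\pi$ (otherwise the sub-run would be spanning), so their values are irrelevant to feasibility of $\rho'$ and to the produced \tow. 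Once this invariant is set up, everything else — equality of outputs, equality of delays, hence equality of \tows — is immediate, and the symmetry argument needs no extra work since \Cref{def:equivalence:symbolic} is stated symmetrically in $\M$ and $\N$.
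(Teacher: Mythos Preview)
Your approach is essentially the paper's: transfer each timed run of $\M$ to $\N$ via its symbolic word, using that spanning sub-runs carry matching start constants so that timeouts fire at the same absolute times. The paper phrases it as a contradiction (take the longest matching prefix in $\N$ and show it can always be extended by one step), but the core idea is the same.

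The gap is in your handling of the ``subtlety''. You correctly note that updates $u_j$ not starting a spanning sub-run of $\pi$ are unconstrained by $\symEquivalent$, and that the timers they start in $\M$ never time out along $\pi$. But the conclusion that these values are ``irrelevant to feasibility of $\rho'$'' does not follow: $\rho'$ is a run of $\N$, and the relevant timers there are started by the corresponding updates $u'_j$, whose constants may differ. Nothing you have said prevents such a timer in $\N$ from reaching $0$ mid-run and blocking a delay step of $\rho'$. The fix is to use $\symEquivalent$ in the $\N\to\M$ direction \emph{inside} the transfer, not only for the outer reverse inclusion: if a timer of $\N$ started at position $j$ were to reach $0$ at some moment during $\rho'$, appending its timeout gives a feasible $\N$-run; by $\symEquivalent$ the same symbolic word is feasible in $\M$ with the same constant at position $j$, so the (unique) timer started by $u_j$ in $\M$ also reaches $0$ at that moment in $\rho$, forcing $u_j$ to be the start of a spanning sub-run of $\pi$ after all. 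The paper encapsulates exactly this as a bidirectional claim --- some active timer hits $0$ in the current $\M$-configuration iff some active timer hits $0$ in the current $\N$-configuration --- and proves each direction from the corresponding direction of $\symEquivalent$. Your outer-level symmetry step does not substitute for this inner bidirectional argument.
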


Due to the nondeterminism introduced by the $\mathit{to}$-transitions,
timed trace equivalence does not imply timed bisimulation equivalence.  A counterexample can be constructed by slightly adjusting the example of \Cref{app:counterexample}.

\subsection{\Robust MMTs}\label{proof:lemma:queries:raceAvoiding}

In this subsection, we will establish that timed trace equivalence implies symbolic equivalence for the subclass of \emph{\robust} MMTs. 
Whenever there is zero delay between two transitions in a timed run, we say that we have a \emph{race}. 
An \MMT is \robust~\cite{BruyerePSV23} if
every feasible run is the untimed projection of a timed run in which all delays are non-zero and at most one timer may time out in any configuration.

\begin{definition}
An MMT $\M$ is \emph{\robust} if any feasible run
\(\pi = p_0 \xrightarrow{i_1} p_1 \xrightarrow{i_2} \dotsb \xrightarrow{i_n} p_n\)
with \(p_0 = q_0\) is the untimed projection of a \emph{race-free} timed run, i.e., a timed run
\(\rho = (p_0, \emptyset) \xrightarrow{d_1}
(p_0, \emptyset) \xrightarrow{i_1}
(p_1, \valuation_1) \xrightarrow{d_2}
\dotsb \xrightarrow{i_n}
(p_n, \valuation_n) \xrightarrow{d_{n+1}}
(p_n, \valuation_n - d_{n+1})\)
such that:
\begin{itemize}
	\item
	all delays are non-zero: \(d_j > 0\) for any \(j \in \{1, \dotsc, n + 1\}\),
	\item
	at most one timeout is possible at any time:
	in any \((\valuation_j - d_{j+1})\) and \(x \in \activeTimers(p_j)\) with
	\(j \in \{1, \dotsc, n - 1\}\), we have
	\((\valuation_j - d_{j+1})(x) = 0\) if and only if \(i_{j+1} = \timeout{x}\),
	and
	\item
	no timer times out in $\valuation_n - d_{n+1}$:
	$(\valuation_n - d_{n+1})(x) > 0$ for all $x \in \activeTimers(p_n)$.
\end{itemize}
\end{definition}
The notion of \robust machine is introduced in~\cite{BruyerePSV23} with a \THREEEXP{} algorithm to decide whether a machine is \robust.

We are particularly interested in a specific type of timed runs that we call \emph{transparent}, since for these timed runs the symbolic trace is already fully determined by their timed trace.

\begin{definition}
	Let $\tau$ be a timed word over $(I \cup \{ \mathit{to} \}) \times O$.
	The \emph{starting time} of an occurrence of a symbol from $I \cup \{ \mathit{to} \}$ in $\tau$ is the sum of all the delays from $\nnr$ that precede it.
	We say that $\tau$ is \emph{transparent} if, for all occurrences of symbols from $I$, the fractional part of their starting time is different.
	A timed run $\rho$ of an MMT is \emph{transparent} if its timed trace $\timedtrace{\rho}$ is transparent.
\end{definition}

\begin{example}\label{ex:transparent}
Consider the following timed word:
\begin{eqnarray*}
	\tau & = & 1.1 ~ i_1/o_1 ~ 1.1 ~ i_2/o_2 ~ 2.4 ~ i_3/o_3 ~ 0.6 ~ \mathit{to}/o_4 ~ 3.4 ~ \mathit{to}/o_5 ~ 1.6 ~ \mathit{to}/o_6
\end{eqnarray*}
This timed word is transparent since the first input $i_1$ has starting time $1.1$, the second input $i_2$ has starting time $2.2$,
the third input $i_3$ has starting time $4.6$, and the fractional parts of these starting times ($0.1$, $0.2$ and $0.6$, resp.) are all different.
\end{example}

In a timed run, an input may trigger a subsequent timeout, which in turn may trigger another timeout, etc.
Following~\cite{BruyerePSV23}, we refer to the set of an input transition and all the subsequent timeouts that are triggered by it as a \emph{block}.
Since timers may only be set to an integer value, the starting times of all transitions in a block have the same fractional part.
This means that in the transparent timed trace of such a run we may, for each timeout symbol, easily identify the preceding event that triggered it: we just take the most recent preceding event with a starting time that has the same fractional part. We can also compute the value to which the timer was set as the sum of all the time delays between the timeout and the preceding event that triggered it.
Hence we may associate to each transparent timed trace $\tau$ a symbolic input word $\sinpw{\tau}$ by (1) deleting all outputs and time delays, and (2) appending to each timeout the index of the preceding event that started it, and the amount of time (i.e., the sum of all the delays) that has passed since this event.

\begin{example}
	Consider again the timed word $\tau$ from \Cref{ex:transparent}.
	The first timeout in $\tau$ has starting time $5.2$, and is therefore caused by a timer that was set to 2 on the second input.
	The second timeout in $\tau$ has starting time $8.6$, and is thus caused by a timer that was set to 4 on the third input.
	The last timeout in $\tau$ has starting time $10.2$, and is triggered by a timer that was set to 5 on the first timeout. Hence the corresponding symbolic input word $\sinpw{\tau}$ is 
	\begin{eqnarray*}
		\w & = &  i_1  ~ i_2  ~ i_3  ~ \mathit{to}(3,2)  ~ \mathit{to}(4,3)  ~ \mathit{to}(5,4).
	\end{eqnarray*}
\end{example}

The next lemma asserts that the symbolic word extracted from the timed trace of a timed run by the above procedure equals the symbolic word of the corresponding untimed run, as defined in \Cref{sec:MMT:equivalence}.

\begin{lemma}\label{lemma:commuting2}
	Suppose $\rho$ is a transparent timed run of an MMT $\M$.
	Then $\sinpw{\timedtrace{\rho}} = \sinpw{\untimeRun{\rho}}$.
\end{lemma}

In a race-avoiding MMT there exists, for each feasible run, a corresponding transparent timed run.

\begin{lemma}\label{lemma: transparent}
	Suppose $\pi$ is a feasible run of a race-avoiding MMT $\M$.
	Then $\M$ has a transparent, race-free timed run $\rho$ with
	$\untimeRun{\rho} = \pi$.
  Moreover, one can construct \(\rho\) from \(\pi\) in polynomial time.
\end{lemma}
\begin{proof}
  The first part of the lemma, i.e., the existence of a transparent, race-free
  timed run follows easily from the fact that \(\pi\) is feasible and \(\M\)
  is \robust.
  Indeed, as \(\pi\) is feasible, there necessarily exists a timed run \(\rho'\)
  such that \(\untimeRun{\rho'} = \pi\), by definition.
  Furthermore, since \(\M\) is \robust, we can assume that \(\rho'\) is race-free.
  It remains to discuss the transparent aspect.
	Using the properties of a race-free run, and in particular that each input transition is preceded and followed by a nonzero delay, we may slightly adjust the timing of the input transitions in $\rho'$ to ensure that the fractional parts of their starting times are different.
  That is, we can turn the race-free timed run \(\rho'\) into a
  transparent, race-free timed run \(\rho\).

  We now prove the second part of the lemma in a constructive way by arguing that
  \(\pi\) yields a system of linear constraints over the delays of \(\rho\).
  As \(\pi\) is feasible and \(\M\) \robust, a solution will naturally always
  exist and suffices to construct a transparent and race-free timed run.

  Let
  \[
    \pi = p_0 \xrightarrow[u_1]{i_1/o_1}
    p_1 \xrightarrow[u_2]{i_2/o_2}
    \dotsb \xrightarrow[u_n]{i_n/o_n}
    p_n \in \runs{\M}
  \]
  with \(p_0 = q_0\) (i.e., we start from the initial state), and
  \[
    \rho =
    (p_0, \emptyset) \xrightarrow{d_1}
    (p_0, \emptyset) \xrightarrow[u_1]{i_1/o_1}
    (p_1, \valuation_1) \xrightarrow{d_2}
    \dotsb \xrightarrow[u_n]{i_n/o_n}
    (p_n, \valuation_n) \xrightarrow{d_{n+1}}
    (p_n, \valuation_n - d_{n+1})
  \]
  be a transparent, race-free timed run such that \(\untimeRun{\rho} = \pi\).
  We can make the following observations on the delays:
  \begin{itemize}
    \item
    Each delay \(d_j\) is in \(\rplus\), as \(\rho\) is race-free.
    \item
    If the sub-run \(p_{j-1} \xrightarrow{i_{j} \dotsb i_k} p_k\)
    is \(x\)-spanning, i.e., the first transition starts \(x\) at value \(c\),
    \(i_k\) is the timeout of \(x\), and none of the intermediate transitions
    restarts or stops \(x\), then the sum of the delays \(d_{j+1}\) to \(d_k\)
    must be exactly \(c\).
    Indeed, we must let enough time elapse for \(x\) to reach zero.
    Mathematically, \(\sum_{\ell = j+1}^k d_\ell = c\).
    \item
    As \(\rho\) is race-free,
    for all \(j \in \{1, \dotsc, n - 1\}\) and \(x \in \activeTimers(p_j)\),
    \(x\) times out if and only if the next action is \(\timeout{x}\).
    Formally,
    \((\valuation_j - d_{j+1})(x) = 0\) if and only if \(i_{j+1} = \timeout{x}\).
    Likewise, for all \(x \in \activeTimers(p_n)\),
    $(\valuation_n - d_{n+1})(x) \neq 0$.
    \item
    For any \(j\) such that \(u_j = (x, c)\) and there is no \(k > j\) such that
    \(i_k = \timeout{x}\), then either \(x\) is restarted or stopped by some
    transition, or the last action \(i_n\) is read before \(c\) units of time
    elapse, since \(\rho\) is race-free.
    \begin{itemize}
      \item
      In the first case, let \(k > j\) such that \(i_k \neq \timeout{x}\) and
      \(p_{k - 1} \xrightarrow{i_k}\) restarts or stops \(x\).
      Then, the sum of the delays \(d_{j+1}\) to \(d_k\) must be strictly less than
      \(c\), i.e., \(\sum_{\ell = j+1}^k d_\ell < c\).

      \item
      In the second case (so, $x \in \activeTimers(p_n)$ and $x$ does not time out after waiting $d_{n+1}$),
      the sum of the delays \(d_{j+1}\) to \(d_{n+1}\) must be strictly less than \(c\),
      i.e., \(\sum_{\ell = j+1}^{n+1} d_\ell < c\).
    \end{itemize}
    \item
    Since \(\rho\) is transparent, for every input \(i_j \in I\), it must be that
    the fractional part of the sum of the delays up to \(d_j\) is unique.
    Observe that, if the transition reading \(i_j\) starts a timer \(x\), by the
    other constraints, we will immediately obtain that the subsequent timeouts
    of \(x\) will all share the same fractional part, since we start a timer at
    a natural value. It is thus sufficient to focus on the inputs.

    Formally, let us write \(\fractionalPart(c)\) to denote the fractional part
    $c - \lfloor c \rfloor$ of \(c \in \nnr\). Then, for any
    \(j \neq k \in \{1, \dotsc, n\}\) such that \(i_j, i_k \in I\),
    \(\fractionalPart(\sum_{\ell = 0}^j d_\ell) \neq
    \fractionalPart(\sum_{\ell = 0}^k d_\ell)\).
  \end{itemize}
  Observe that these constraints are all linear, except the last.
  Moreover, if we consider the delays \(d_j\) as \emph{variables}, one can still
  gather the constraints and use them to find a value for each \(d_j\).
  We denote by \(\constraints(\pi)\) the set of constraints for \(\pi\) over the
  variables representing the delays.
  By the arguments given above, a solution always exists, whenever \(\pi\) is
  feasible and \(\M\) is \robust.
  That is, given \(\pi\), one can compute appropriate delays and construct a
  transparent, race-free timed run \(\rho\) such that \(\untimeRun{\rho} = \pi\).

  Finally, it remains to explain how to compute a solution in polynomial time.
  One can build a system of linear constraints from the first four items above.
  Using known results from linear programming~\cite{Jiang0WZ21}, a solution can
  be found in polynomial time. Since this yields a timed run that is race-free,
  we can slightly adjust the delays before each input to satisfy the last item.
  This can also be performed in polynomial time in the length of \(\pi\).
\qed\end{proof}

To each timed word $\tau$ we may associate a sequence of outputs $\outw{\tau}$ by just discarding all time delays and inputs from $I \cup \{\mathit{to}\}$. We then obtain the following (trivial) variant of \Cref{lemma:commuting2}:

\begin{lemma}\label{lemma:commuting1}
	Suppose $\rho$ is a timed run of an MMT $\M$.
	Then $\outw{\timedtrace{\rho}} = \outw{\untimeRun{\rho}}$.
\end{lemma}

We are now prepared to prove the main result of this subsection.

\begin{lemma}\label{lemma: timed equivalence implies symbolic equivalence}
Let \(\M\) and \(\N\) be two \robust \MMTs.
Then \(\M \ttequivalent \N\) implies \(\M \symEquivalent \N\).	
\end{lemma}
\begin{proof}
Suppose \(\M \ttequivalent \N\).
Suppose further that $\w \in \silanguage{\M}$.
Then $\M$ has a feasible run $\pi$ that starts in $q_0^{\M}$ with $\sinpw{\pi} = \symbolic{w}$.
By \Cref{lemma: transparent}, $\M$ has a transparent, race-free timed run $\rho$ with $\untimeRun{\rho} = \pi$.
Let $\tau = \timedtrace{\rho}$.
Then $\tau$ is transparent and $\tau \in \ttlanguage{\M}$.
Since \(\M \ttequivalent \N\), also $\tau \in \ttlanguage{\N}$.
Let $\rho'$ be a timed run of $\N$ that starts in $q_0^{\N}$ with $\tau = \timedtrace{\rho'}$.
Then, since $\tau$ is transparent, $\rho'$ is transparent as well.
Let $\pi'= \untimeRun{\rho'}$. Then $\pi'$ is a feasible run of $\N$ that starts in $q_0^{\N}$. Using \Cref{lemma:commuting2} twice, we derive
\begin{eqnarray*}
	\sinpw{\pi'} & = & \sinpw{\untimeRun{\rho'}} = \sinpw{\timedtrace{\rho'}} = \sinpw{\tau} = \sinpw{\timedtrace{\rho}} = \sinpw{\untimeRun{\rho}} = \sinpw{\pi} = \w.
\end{eqnarray*}
Hence $\w \in \silanguage{\N}$. Using \Cref{lemma:commuting1} twice, we derive
\begin{multline*}
	 \ofunction{\N}{\w} = \outw{\pi'} = \outw{\untimeRun{\rho'}} = \outw{\timedtrace{\rho'}} = \outw{\tau} = \outw{\timedtrace{\rho}} = \outw{\untimeRun{\rho}} = \outw{\pi}
	 \\ =\ofunction{\M}{\w}.
\end{multline*}
Since $\w$ was chosen arbitrarily, we conclude
$\silanguage{\M} \subseteq \silanguage{\N}$.
By a symmetric argument, we can prove
$\silanguage{\N} \subseteq \silanguage{\M}$.
Hence $\silanguage{\M} = \silanguage{\N}$, as required.
Moreover, for each $\w \in \silanguage{\M}$, $\ofunction{\M}{\w} = \ofunction{\N}{\w}$.
Hence \(\M \symEquivalent \N\).
\end{proof}

The final lemma of this section is important for the definition of concrete equivalence queries: if the hypothesis and the SUT are not timed equivalent, then a helpful teacher may always provide a transparent timed trace as witness of the inequivalence.

\begin{lemma}\label{lemma: timed inequivalence implies transparent witness}
	Let \(\M\) and \(\N\) be two \robust \MMTs.
	If \(\M \notttequivalent \N\) then there exists a transparent timed trace $\tau$ in the symmetric difference of 	$\ttlanguage{\M}$ and $\ttlanguage{\N}$.
\end{lemma}
\begin{proof}
	The proof is by contradiction.
	Assume that \(\M \notttequivalent \N\) and all transparent timed traces of $\M$ and $\N$ are contained in $\ttlanguage{\M} \cap \ttlanguage{\N}$.
	Since \(\M \notttequivalent \N\) there exists a (nontransparent) timed trace $\tau$ in the symmetric difference of 	$\ttlanguage{\M}$ and $\ttlanguage{\N}$.
	Without loss of generality (due to symmetry), we may assume $\tau \in \ttlanguage{\M} \setminus \ttlanguage{\N}$.
	Let $\rho$ be a timed run with $\tau = \timedtrace{\rho}$, and let $\pi = \untimeRun{\rho}$ be the feasible run of $\M$ associated to $\rho$.
	Since $\M$ is race-avoiding, there exists (by \Cref{lemma: transparent}) a transparent timed run $\rho'$ of $\M$ with $\untimeRun{\rho'} = \pi$. Let $\tau' = \timedtrace{\rho'}$.
	Then, by our assumption, $\tau' \in \ttlanguage{\N}$.
	Let $\rho''$ be a timed run with $\tau' = \timedtrace{\rho''}$, and let $\pi' = \untimeRun{\rho''}$ be the feasible run of $\N$ associated to $\rho''$.
	Observe that runs $\pi$ and $\pi'$ are rather similar since by \Cref{lemma:commuting2} $\sinpw{\pi} = \sinpw{\tau'} = \sinpw{\pi'}$, and by \Cref{lemma:commuting1} $\outw{\pi} = \outw{\tau'} = \outw{\pi'}$. So $\pi$ and $\pi'$ have exactly the same inputs, the same outputs, and the same timeouts with the same causes.
	
	Now suppose that $\pi$ has a prefix $\pi_0$ that ends with a  transition that starts a timer that does not expire in $\pi$ but expires in another feasible run $\pi_1$  of $\M$ that extends $\pi_0$.
	Then, using again that all transparent timed traces of $\M$ are contained in $\ttlanguage{\M} \cap \ttlanguage{\N}$, we may infer that the corresponding transition of $\pi'$ sets a timer (which does not expire in $\pi'$) to exactly the same value as $\pi$.
	In fact, if the timer started at the end of $\pi_0$ remains active until some point $\pi_2$ with $\pi_0 \leq \pi_2 \leq \pi$ (in the sense that that timer expires in a feasible run $\pi_3$ of $\M$ that extends $\pi_2$) then we may show that the corresponding timer in $\pi'$ remains active just as long.
	Symmetrically, we may show that each timer that is started in $\N$ by the final transition of some prefix $\pi'_0$ of $\pi'$ and expires in some feasible extension $\pi'_1$ of $\pi'_0$, is set to the same value by the corresponding transition in $\M$ of $\pi$, and remains alive equally long.
	
	Thus, $\pi$ and $\pi'$ impose the same constraints on the
	variables representing the delays, and even though
 \(\constraints(\pi)\) and \(\constraints(\pi')\) may be different, they allow for timed runs with exactly the same timed traces.
 In particular, $\tau$ is a timed trace of $\N$. 
 This contradiction completes the proof of the lemma.
	
\end{proof}
 
\section{Proof of \texorpdfstring{\Cref{lemma:goodMMT}}{Lemma 1}}\label{proof:lemma:gootMMT}

\NewDocumentCommand{\downwardZ}{m}{{#1}\!\downarrow}
\NewDocumentCommand{\restrictionZ}{m m}{{#1} \lceil {#2}}
\NewDocumentCommand{\assignmentZ}{m m}{{#1}[#2]}
\NewDocumentCommand{\timeoutZ}{m m}{\timeout{{#1}, {#2}}}
\NewDocumentCommand{\zoneMMT}{}{\automaton{Z}}

\goodMMT*

In order to prove this lemma, we first properly adapt the notion of \emph{zones}
from timed automata~\cite{Alur99,HandbookModelChecking} to \MMTs.\footnote{The concept of \emph{region} from timed automata was adapted to automata with timers in~\cite{BruyerePSV23}.}
Given a \complete \MMT \(\M\), we show that
the \MMT constructed from the reachable zones of \(\M\) is \good.

\subsection{Zones}

Let \(X\) be a set of timers.
A \emph{zone \(Z\) over \(X\)} is a set of valuations over \(X\), i.e.,
\(Z \subseteq \Val{X}\), described by the following grammar:
\[
  \phi =
    x < c \mid
    x \leq c \mid
    c < x \mid
    c \leq x \mid
    x - y < c \mid
    x - y \leq c \mid
    \phi_1 \land \phi_2
\]
with \(x, y \in X\) and \(c \in \nat\).
It may be that a zone is empty.
For the particular case \(X = \emptyset\), we have that \(\Val{X} = \{\emptyset\}\), meaning that a zone $Z$ is either the zone \(\{\emptyset\}\) or the empty zone.

Given a zone \(Z\) over \(X\), a set \(Y \subseteq X\), a timer \(x\) (that does
not necessarily belong to \(X\)), and a constant \(c \in \natplus\), we define
the following operations:
\begin{itemize}
  \item
  The \emph{downward closure} of \(Z\) where we let some time elapsed in all
  valuations of \(Z\).
  That is, we obtain all valuations that can be reached from \(Z\) by 
  waiting (delays cannot exceed the smallest value to avoid
  going below zero):
  \[
    \downwardZ{Z} = \{\valuation - d \mid
      \valuation \in Z, d \leq \min_{y \in \dom{\valuation}} \valuation(y)
    \}.
  \]
  \item
  The \emph{restriction} of \(Z\) to \(Y \neq \emptyset\) where,
  for every valuation of \(Z\),
  we discard values associated with timers in \(X \setminus Y\), i.e., we only
  keep the timers that are in \(Y\):
  \[
    \restrictionZ{Z}{Y} = \{
      \valuation' \in \Val{Y}
      \mid
      \exists \valuation \in Z, \forall y \in Y : \valuation'(y) = \valuation(y)
    \}.
  \]
  If \(Y\) is empty, then we define the restriction as:
  \[
    \restrictionZ{Z}{\emptyset} = \begin{cases}
      \emptyset & \text{if \(Z = \emptyset\)}\\
      \{\emptyset\} & \text{otherwise.}
    \end{cases}
  \]
  \item
  The \emph{assignment} of \(x\) to \(c\) in the zone $Z$ over $X$.
  Either \(x\) is already in \(X\) in which case we simply overwrite the value
  of \(x\) by \(c\), or \(x\) is not in \(X\) in which case we \enquote{extend}
  the valuations of \(Z\) by adding \(x\):
  \[
    \assignmentZ{Z}{x = c} = \{
      \valuation' \in \Val{X \cup \{x\}}
      \mid
      \valuation'(x) = c \land
      \exists \valuation \in Z, \forall y \in X \setminus \{x\} :
      \valuation'(y) = \valuation(y)
    \}.
  \]
  \item
  The \emph{timeout of \(x\)} in \(Z\) where we keep valuations of \(Z\) s.t.\ \(x \in X\) times out:
  \[
    \timeoutZ{Z}{x} = \{
      \valuation \in Z \mid \valuation(x) = 0
    \}.
  \]
\end{itemize}
If $Z$ is a zone, then \(\downwardZ{Z}\), \(\restrictionZ{Z}{Y}\), \(\assignmentZ{Z}{x = c}\), and \(\timeoutZ{Z}{x}\) are again zones~\cite{HandbookModelChecking}.
Observe that \(\downwardZ{Z}\) and \(\timeoutZ{Z}{x}\) are zones over \(X\)
(when \(x \in X\)), \(\restrictionZ{Z}{Y}\) is a zone over \(Y\), and
\(\assignmentZ{Z}{x = c}\) is a zone over \(X \cup \{x\}\).

\subsection{Zone \MMT}

Given a \complete \MMT \(\M\), we
explain how to construct its \emph{zone \MMT} that we denote \(\zoneOf{\M}\).
The states of $\zoneOf{\M}$ are pairs $(q,Z)$ where $q$ is a state of $\M$ and
$Z$ is a zone included in $\Val{\chi(q)}$.
The idea to construct $\zoneOf{\M}$ is to start from the pair
\((q_0^\M, \{\emptyset\})\) and explore every outgoing transition of \(q_0^\M\).
In general, we want to define the outgoing transitions of the current pair \((q, Z)\).
To do so, we consider the outgoing transitions of \(q\) in the \complete machine $\M$.
For every \(q \xrightarrow[u]{i} q'\) with \(i \in I\), we reproduce the same
transition in \(\zoneOf{\M}\) (as it is always possible to trigger an
input transition).
That is, we define \((q, Z) \xrightarrow[u]{i} (q', Z')\) with a zone \(Z'\) that
depends on the update: if \(u = \bot\), then \(Z'\) is obtained by the restriction
of \(Z\) to the active timers of \(q'\); if \(u = (x, c)\), then we first assign
\(x\) to \(c\).
In both cases, we also let time elapse, i.e., we always compute the downward
closure.
Finally, we perform the same idea with every \(q \xrightarrow{\timeout{x}}\) such that $x \in \activeTimers(q)$,
except that we first only consider the valuations of \(Z\) where \(x\) is zero.
If \(\timeoutZ{Z}{x}\) is empty, we do not define the transition.
Hence, in this way, we construct the states $(q,Z)$ of $\zoneOf{\M}$ such that
$Z \neq \emptyset$ and that are reachable from its initial state \((q_0^\M, \{\emptyset\})\).

More formally, let \(\M = (X^\M, Q^\M, q_0^\M, \activeTimers^\M, \delta^\M)\) be
a \complete \MMT.
We define the MMT
\(\zoneMMT = (X^\zoneMMT, Q^\zoneMMT, q_0^\zoneMMT, \activeTimers^\zoneMMT,
\delta^\zoneMMT)\) with:
\begin{itemize}
  \item
  \(X^\zoneMMT = X^\M\),
  \item
  \(Q^\zoneMMT = \{(q, Z) \mid q \in Q^\M, Z \subseteq \Val{\chi^\M(q)} \land
    Z \neq \emptyset\}\),
  \item
  \(q_0^\zoneMMT = (q_0^\M, \{\emptyset\})\),
  \item
  For any \((q, Z) \in Q^\zoneMMT\), we define
  \(\activeTimers^\zoneMMT((q, Z)) = \activeTimers^\M(q)\), i.e., we simply copy
  the active timers of \(q\),
  \item
  Let \((q, Z) \in Q^\zoneMMT\) and \(q \xrightarrow[u]{i/o} q'\) be a transition
  of \(\M\).
  We define
  \[
    Z' = \begin{cases}
      Z & \text{if \(u = \bot\) and \(i \in I\)}
      \\
      \assignmentZ{Z}{x = c} & \text{if \(u = (x, c)\) and \(i \in I\)}
      \\
      \timeoutZ{Z}{x} & \text{if \(u = \bot\) and \(i = \timeout{x}\)}
      \\
      \assignmentZ{\left(\timeoutZ{Z}{x}\right)}{x = c} &
        \text{if \(u = (x, c)\) and \(i = \timeout{x}\).}
    \end{cases}
  \]
  Then, if \(Z' \neq \emptyset\), we restrict \(Z'\) to \(\activeTimers^\M(q')\)
  and let time elapse.
  That is, we define
  \[
    \delta^\zoneMMT((q, Z), i) =
    \left(
      \left(
        q',
        \downwardZ{\left(\restrictionZ{Z'}{\activeTimers^\M(q')}\right)}
      \right), o, u
    \right).
  \]
\end{itemize}
The \MMT $\zoneOf{\M}$ is then the MMT $\zoneMMT$ restricted to its reachable
states.
Observe that the set of actions of \(\zoneOf{\M}\) is the set of
actions of \(\M\), i.e., \(\actions{\zoneOf{\M}} = \actions{\M}\).

Let us argue that $\zoneOf{\M}$ has finitely many states and is
well-formed.
We will prove later that it is also \complete.

\begin{lemma}\label{lem:soundZone}
Let $\M$ be a complete \MMT.
Then, $\zoneOf{\M}$ has finitely many states and is well-formed.
\end{lemma}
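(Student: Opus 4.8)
The plan is to prove the two assertions separately, both by reducing to properties of $\M$ and to the standard theory of zones.

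\emph{Finiteness.} Let $\maxTime$ be the largest constant occurring in an update of $\M$ (or $\maxTime = 1$ if $\M$ has no updates). I would show, by induction on the length of a path from $q_0^\zoneMMT$ in $\zoneMMT$, that every reachable state $(q, Z)$ satisfies \emph{(i)} $Z \subseteq \Val{\activeTimers^\M(q)}$, and \emph{(ii)} $Z$ can be described by a constraint of the zone grammar all of whose constants lie in $\{0, 1, \dotsc, \maxTime\}$ --- equivalently, the canonical difference-bound-matrix (DBM) representation of $Z$ has all finite entries bounded in absolute value by $\maxTime$. The base case is $q_0^\zoneMMT = (q_0^\M, \{\emptyset\})$, where $\activeTimers^\M(q_0^\M) = \emptyset$. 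For the inductive step I would check that each of the four operations used to build a successor zone preserves the bound: $\timeoutZ{\cdot}{x}$ merely conjoins the constraint $x = 0$; $\restrictionZ{\cdot}{Y}$ projects timers away; $\downwardZ{\cdot}$ translates all timers by the same amount $d \le \min_{y \in \dom{\valuation}} \valuation(y)$, so difference constraints are untouched and absolute bounds only move toward $0$; and $\assignmentZ{\cdot}{x = c}$ introduces the single constant $c$, which the construction draws from an update of $\M$ and is hence at most $\maxTime$. Since $\activeTimers^\M(q) \subseteq X^\M$ with $X^\M$ finite, there are only finitely many zones over subsets of $X^\M$ with constants bounded by $\maxTime$ (a classical fact, via DBM normal forms; see~\cite{HandbookModelChecking}), and $Q^\M$ is finite, so the reachable part of $\zoneMMT$ --- which is exactly $\zoneOf{\M}$ --- has finitely many states.

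\emph{Well-formedness.} Here ``well-formed'' means satisfying the requirements of \Cref{def:MMT}; invariant \emph{(i)} above already guarantees that each $(q, Z)$ legitimately pairs a state with zones over its active timers, so $\zoneOf{\M}$ is a genuine MMT tuple. The key observations are that $\activeTimers^\zoneMMT((q, Z)) = \activeTimers^\M(q)$ by definition, that a transition $(q, Z) \xrightarrow[u]{i} (q', Z')$ exists in $\zoneMMT$ only when $q \xrightarrow[u]{i} q'$ is a transition of $\M$ (the construction only ever reproduces transitions of $\M$), and that $\delta^\zoneMMT$ is a well-defined partial function because $\M$ is deterministic and all zone operations are deterministic, with every produced successor zone non-empty whenever the transition is defined, so that targets stay in $Q^\zoneMMT$. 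Consequently the three conditions of \Cref{def:MMT} transfer verbatim from $\M$: $\activeTimers^\zoneMMT(q_0^\zoneMMT) = \activeTimers^\M(q_0^\M) = \emptyset$; the inclusions $\activeTimers^\M(q') \subseteq \activeTimers^\M(q)$ resp.\ $\activeTimers^\M(q') \setminus \{x\} \subseteq \activeTimers^\M(q)$ hold because they hold in $\M$; and a $\timeout{x}$-transition of $\zoneMMT$ arises from a $\timeout{x}$-transition of $\M$, so $x \in \activeTimers^\M(q) = \activeTimers^\zoneMMT((q,Z))$ and any non-$\bot$ update is of the form $(x, c)$ with $c \in \natplus$.

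\emph{Main obstacle.} The only genuinely delicate step is the constant-bound invariant underpinning finiteness: one must make sure that no operation secretly produces a larger constant once the zone is normalised --- in particular that $\downwardZ{\cdot}$ leaves difference constraints alone, and that assigning a timer not previously present in the zone introduces only constants drawn from $\M$'s updates (so that, e.g., differences of two such constants stay within $[-\maxTime, \maxTime]$). Everything else is a direct unwinding of the definitions of $\zoneMMT$ and of \Cref{def:MMT}.
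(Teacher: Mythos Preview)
Your proposal is correct and follows essentially the same approach as the paper: well-formedness is inherited from $\M$ because transitions and active-timer sets are copied verbatim, and finiteness comes from the observation that every reachable zone uses only constants bounded by the maximal update constant, so only finitely many zones can arise. The paper argues the constant bound more tersely and semantically (each timer $x$ stays in $[0, c_x]$ since values only decrease), whereas you spell out the invariant operation by operation via DBMs; your version is more careful, but the core idea is identical.
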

\begin{proof}
The zone \MMT is clearly well-formed because its transitions mimics the
transitions of $\M$ and for any \((q, Z) \in Q^\zoneMMT\), we have
\(\activeTimers^\zoneMMT((q, Z)) = \activeTimers^\M(q)\).

By construction, the states $(q,Z)$ of $\zoneOf{\M}$ are such that $Z$ is a zone
over $\chi^\M(q)$, described as a finite conjunction of constraints of the
shape \(x \bowtie c\) or \(x - y \bowtie c\), with \(\bowtie \in
\{<, \leq, \geq, >\}\) and \(c \in \nat\).
For each timer \(x\), let \(c_x\) be the maximal constant appearing on an
update (re)starting \(x\).
Since the value of a timer can only decrease, it is clear that we will never
reach a zone where \(x > c_x\).
Moreover, as the value of a timer must remain at least zero at any time, we also
have a lower bound.
In other words, we know that each timer \(x\) will always be confined between
zero and \(c_x\).
From the shape of the constraints and these bounds, it follows immediately that
there are finitely many zones.
Hence, $\zoneOf{\M}$ has finitely many states.
\qed\end{proof}

We now prove the required properties to show \Cref{lemma:goodMMT} where the announced \MMT $\N$ is the zone \MMT of $\M$:
\begin{itemize}
  \item
  Both \MMTs $\M$ and $\zoneOf{\M}$ can read the same timed words.
  That is, for any state \(q \in Q^\M\), it holds that \((q_0^\M, \emptyset)
  \xrightarrow{w} (q, \valuation)\) if and only if
  \(((q_0^{\M}, \{\emptyset\}), \emptyset) \xrightarrow{w} ((q, Z), \valuation)\) for
  some zone \(Z\).
  See \Cref{prop:zoneMMT:timedruns}.

  \item
  \(\zoneOf{\M}\) is \complete, by \Cref{prop:zoneMMT:complete}.

  \item
  A run reading \(w\) in \(\M\) is feasible if and only if the run reading \(w\)
  in \(\zoneOf{\M}\) is also feasible, by
  \Cref{cor:zoneMMT:feasible:iff}.
  
  \item
  \(\M\) and \(\zoneOf{\M}\) are symbolically equivalent, by
  \Cref{prop:zoneMMT:equivalent}.

  \item
  Any run of \(\zoneOf{\M}\) is feasible, by \Cref{prop:zones:feasible}.
\end{itemize}

\begin{proposition}\label{prop:zoneMMT:timedruns}
  Let $\zoneOf{\M}$ be the zone MMT of some \complete \MMT \(\M\).
  Then, for every state \(q \in Q^\M\), valuation
  \(\valuation \in \Val{\activeTimers^\M(q)}\), and timed word \(w\),
  \[
    (q_0^\M, \emptyset) \xrightarrow{w} (q, \valuation) \text{ in } \M
    \quad \iff \quad
    ((q_0^\M, \{\emptyset\}), \emptyset) \xrightarrow{w} ((q, Z), \valuation) \text{ in } \zoneOf{\M}
  \]
  for some zone \(Z\) over \(\activeTimers^\M(q)\) such that \(\valuation \in Z\).
\end{proposition}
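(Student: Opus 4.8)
The plan is to argue by induction on the length $n = \lengthOf{w}$ of the timed word $w$. The observation driving the whole proof is that $\zoneOf{\M}$, viewed as an \MMT, is constructed so that every transition $(q,Z) \xrightarrow[u]{i/o} (q',Z'')$ copies a transition $q \xrightarrow[u]{i/o} q'$ of $\M$ \emph{together with its update} $u$; hence the timed semantics of $\zoneOf{\M}$ evolves the valuation component exactly as the timed semantics of $\M$ does, and the delay and timeout side-conditions are literally the same in both machines. The only extra content of the proposition is therefore that the zone recorded in the current state always contains the current valuation. For the base case $n = 0$ we have $w = d$ for a single delay $d$; since $\activeTimers^\M(q_0^\M) = \emptyset$ the delay is possible both from $(q_0^\M, \emptyset)$ in $\M$ and from $((q_0^\M, \{\emptyset\}), \emptyset)$ in $\zoneOf{\M}$, both configurations staying put, and $\emptyset \in \{\emptyset\}$; this settles both directions.

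For the inductive step, write $w = w'' \cdot i \cdot d$ with $\lengthOf{w''} = n$, so that any timed run reading $w$ decomposes uniquely as a timed run reading $w''$ ending in some configuration $(p, \mu)$, followed by a discrete transition on $i$ to $(p', \mu')$, followed by a delay $d$ to $(p', \mu' - d)$. Applying the induction hypothesis to $w''$: the $w''$-prefix ends at $(p,\mu)$ in $\M$ iff it ends at $((p, Z_p), \mu)$ in $\zoneOf{\M}$ for some zone $Z_p$ over $\activeTimers^\M(p)$ with $\mu \in Z_p$. It then remains to treat a single discrete-plus-delay step. Since $\zoneOf{\M}$ has a transition $(p, Z_p) \xrightarrow[u]{i} (p', Z'')$ exactly when $\M$ has $p \xrightarrow[u]{i} p'$ and the intermediate zone $Z'$ is non-empty, and since the side-condition ``$\mu(x) = 0$ when $i = \timeout{x}$'' and the delay condition ``$\mu'(y) \geq d$ for every $y \in \activeTimers^\M(p')$'' are identical in both machines (with $\mu \in Z_p$ satisfying $\mu(x) = 0$ precisely when $Z'$ retains that valuation), the step exists in one machine iff it exists in the other, with the same valuations $\mu'$ and $\mu' - d$, and with the target state of $\zoneOf{\M}$ being $p'$ paired with the zone prescribed by the construction.

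The point needing care is zone membership: by a case split on $u \in \{\bot, (x,c)\}$ and on whether $i$ is an input or a timeout, using $\mu \in Z_p$ together with the definitions of the assignment, timeout and restriction operations and the \MMT shape constraints relating $\activeTimers^\M(p')$ to $\activeTimers^\M(p)$, one checks that $\mu' \in \restrictionZ{Z'}{\activeTimers^\M(p')}$; and since $d \leq \min_{y} \mu'(y)$, the definition of downward closure yields $\mu' - d \in \downwardZ{(\restrictionZ{Z'}{\activeTimers^\M(p')})}$, which is exactly the zone stored in the target state of $\zoneOf{\M}$. This closes the induction in both directions.

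The proof is essentially bookkeeping; the only genuinely delicate part is this case analysis on the update in the inductive step, where one must keep the domains of the valuations and zones aligned with the active-timer sets — in particular handling correctly transitions that drop timers (so $\activeTimers^\M(p') \subsetneq \activeTimers^\M(p)$) and the occurrences of the empty active-timer set. I expect no real difficulty beyond that.
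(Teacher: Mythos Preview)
Your proposal is correct and follows essentially the same approach as the paper: induction on the length of the timed word, with the same base case and the same decomposition $w = w'' \cdot i \cdot d$ in the inductive step, arguing that the discrete transition and delay can be mirrored between $\M$ and $\zoneOf{\M}$ while maintaining the invariant that the current valuation lies in the current zone. The paper writes out only the $\Rightarrow$ direction and defers the converse to ``similar arguments,'' whereas you treat both directions together, but the underlying argument is the same.
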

\begin{proof}
  We focus on the \(\implies\) direction.
  The other direction can be obtained with similar arguments.
  Let \(q \in Q^\M, \valuation \in \Val{\activeTimers^\M(q)}\), and \(w\) be a timed word
  such that \((q_0^\M, \emptyset) \xrightarrow{w} (q, \valuation)\).
  We show that there exists a zone \(Z\) over \(\activeTimers^\M(q)\) such that
  \(\valuation \in Z\) and \(((q_0^\M, \{\emptyset\}), \emptyset) \xrightarrow{w}
  ((q, Z), \valuation)\).
  We proceed by induction over the length of \(w\).

  \textbf{Base case:} \(\lengthOf{w} = 0\), i.e., \(w = d\) with \(d \in \nnr\).
  Since no timer is active, it is clear that we have the runs
  \begin{align*}
    (q_0^\M, \emptyset) \xrightarrow{d} (q_0^\M, \emptyset)
    &&\text{and}&&
    ((q_0^\M, \{\emptyset\}), \emptyset) \xrightarrow{d}
    ((q_0^\M, \{\emptyset\}), \emptyset),
  \end{align*}
  and \(\emptyset \in \{\emptyset\}\).

  \textbf{Induction step:} let \(k \in \nat\) and assume the implication is
  true for every timed word of length \(k\).
  Let \(w = w' \cdot i \cdot d\) of length \(k + 1\), i.e., \(\lengthOf{w'} = k\),
  \(i \in \actions{\M}\), and \(d \in \nnr\).
  Then, we have
  \[
    (q_0^\M, \emptyset) \xrightarrow{w'}
    (p, \lambda) \xrightarrow[u]{i}
    (q, \valuation) \xrightarrow{d}
    (q, \valuation - d).
  \]
  This implies that
  \(d \leq \min_{y \in \activeTimers^\M(q)} \valuation(y)\).\footnote{We recall
  that $\min_{y \in \activeTimers^\M(q)} \valuation(y) = +\infty$ when
  $\activeTimers^\M(q) = \emptyset$.}
  By induction hypothesis, we have that
  \[
    ((q_0^\M, \{\emptyset\}), \emptyset) \xrightarrow{w'} ((p, Z_p), \lambda)
  \]
  such that \(\lambda \in Z_p\).
  It is then sufficient to show that we have
  \[
    ((p, Z_p), \lambda) \xrightarrow[u]{i}
    ((q, Z), \valuation) \xrightarrow{d}
    ((q, Z), \valuation - d)
  \]
  with $\valuation - d \in Z$.

  By construction of the zone MMT and as \(p \xrightarrow{i} q\) is defined in
  \(\M\), the \(i\)-transition from \((p, Z_p)\) to
  \((q, Z)\) is defined if and only if \(Z\) is not empty and
  \[
    Z = \begin{cases}
      \downwardZ{\left(
        \restrictionZ{Z_p}{\activeTimers^\M(q)}
      \right)}
      & \text{if \(i \in I\) and \(u = \bot\)}
      \\
      \downwardZ{\left(
        \restrictionZ{\left(
          \assignmentZ{Z_p}{x = c}
        \right)}{\activeTimers^\M(q)}
      \right)}
      & \text{if \(i \in I\) and \(u = (x, c)\)}
      \\
      \downwardZ{\left(
        \restrictionZ{\left(
          \timeoutZ{Z_p}{x}
        \right)}{\activeTimers^\M(q)}
      \right)}
      & \text{if \(i = \timeout{x}\) and \(u = \bot\)}
      \\
      \downwardZ{\left(
        \restrictionZ{\left(
          \assignmentZ{\left(
            \timeoutZ{Z_p}{x}
          \right)}{x = c}
        \right)}{\activeTimers^\M(q)}
      \right)}
      & \text{if \(i = \timeout{x}\) and \(u = (x, c)\).}
    \end{cases}
  \]
  Since \(Z_p\) is not empty (as \(\lambda \in Z_p\)) and
  \((p, \lambda) \xrightarrow{i}\) can be triggered
  (meaning that \(\lambda(x) = 0\) if \(i = \timeout{x}\)), we have that
  \(Z\) is also not empty.
  Hence, \(((p, Z_p), \lambda) \xrightarrow{i} ((q, Z), \valuation)\) is
  well-defined and can be triggered.

  Let us show that \(\valuation \in Z\).
  By definition of a timed run, \(\valuation \in \Val{\activeTimers^\M(q)}\).
  Moreover, \(Z\) is a zone over \(\activeTimers^\M(q)\).
  We know that \(\lambda \in Z_p\) and \(\valuation\) is constructed from
  \(\lambda\) by discarding the values for timers that are stopped by the
  discrete transition and, maybe, (re)starting a timer.
  Since \(Z\) is constructed using the same operations, it follows that
  \(\valuation \in Z\).

  Finally, we process the delay \(d\).
  We already know that \(d \leq \min_{y \in \activeTimers^\M(q)} \valuation(y)\).
  Hence, it is feasible to wait \(d\) units of time from \(((q, Z), \valuation)\).
  Moreover, as \(Z\) is already its downward closure, we still have that
  \(\valuation - d \in Z\).
  
  We get the implication.
  One can show the other direction using similar arguments, by definition
  of $\zoneOf{\M}$.
\qed\end{proof}

From the (proof of the) previous proposition, we can easily obtain
that \(\zoneOf{\M}\) is \complete, if \(\M\) is \complete,
Indeed, the construction of \(\zoneOf{\M}\) immediately copies input-transitions,
while timeout-transitions are reproduced when the timer is enabled in the state,
meaning that \(\zoneOf{\M}\) is \complete.

\begin{corollary}\label{prop:zoneMMT:complete}
  Let \(\M\) be a \complete \MMT and \(\zoneOf{\M}\) be its zone \MMT.
  Then, \(\zoneOf{\M}\) is \complete.
\end{corollary}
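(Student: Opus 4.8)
The plan is to unfold the definition of \complete from \Cref{sec:def}: for every state $(q,Z)$ of $\zoneOf{\M}$ and every action $i \in I \cup \toevents{\enabled{(q,Z)}[\zoneOf{\M}]}$ we must produce a transition $(q,Z) \xrightarrow{i} {}$ in $\zoneOf{\M}$. I would split into inputs and timeouts, reducing each case to completeness of $\M$ together with \Cref{prop:zoneMMT:timedruns}. The observation used throughout is that every $(q,Z) \in Q^\zoneMMT$ carries a \emph{nonempty} zone $Z$, and that the three operations $\downwardZ{(\cdot)}$, $\restrictionZ{\cdot}{\cdot}$ and $\assignmentZ{\cdot}{x=c}$ turn nonempty zones into nonempty zones (a projection of a nonempty set is nonempty, and $\downwardZ{Z}$ contains $Z$).

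For an input $i \in I$: since $\M$ is \complete, the transition $q \xrightarrow[u]{i} q'$ is defined in $\M$, so the construction of $\zoneMMT$ examines it when forming the successors of $(q,Z)$. The intermediate zone is $Z' = Z$ when $u = \bot$ and $Z' = \assignmentZ{Z}{x=c}$ when $u = (x,c)$; in both cases $Z' \neq \emptyset$, and hence so is the target zone $\downwardZ{(\restrictionZ{Z'}{\activeTimers^\M(q')})}$. Thus the target is a genuine state of $\zoneMMT$, and, being reachable (extend a run witnessing reachability of $(q,Z)$ by this very transition), both it and the transition survive the restriction to reachable states, i.e.\ they belong to $\zoneOf{\M}$.

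For a timeout $\timeout{x}$ with $x \in \enabled{(q,Z)}[\zoneOf{\M}]$: by definition of $\enabled{\cdot}$ there is a timed run $((q_0^\M,\{\emptyset\}),\emptyset) \xrightarrow{w} ((q,Z),\valuation)$ of $\zoneOf{\M}$ with $\valuation(x) = 0$. By \Cref{prop:zoneMMT:timedruns} this gives a timed run $(q_0^\M,\emptyset) \xrightarrow{w} (q,\valuation)$ in $\M$, so $x \in \enabled{q}[\M]$; since $\M$ is \complete, $q \xrightarrow{\timeout{x}} {}$ is defined in $\M$ and is examined by the construction for $(q,Z)$. It is kept precisely when $\timeoutZ{Z}{x} \neq \emptyset$; but \Cref{prop:zoneMMT:timedruns} also gives $\valuation \in Z$, and $\valuation(x) = 0$, so $\valuation \in \timeoutZ{Z}{x}$. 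The remaining operations (a possible assignment, the restriction, the downward closure) keep the target zone nonempty, and as before the target state and the $\timeout{x}$-transition lie in $\zoneOf{\M}$.

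The only real content is the bridge between ``$x$ enabled in $(q,Z)$ inside $\zoneOf{\M}$'' and ``$x$ enabled in $q$ inside $\M$'', which is precisely \Cref{prop:zoneMMT:timedruns}; once that is available the corollary is almost immediate, and I expect the sole thing to verify carefully is the bookkeeping that none of the zone operations collapse a nonempty zone to $\emptyset$ — exactly the point where a subtle construction error would bite.
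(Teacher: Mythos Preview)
Your proposal is correct and follows essentially the same approach as the paper, which also derives the corollary directly from \Cref{prop:zoneMMT:timedruns} by noting that input-transitions are copied and timeout-transitions are reproduced precisely when the timer is enabled. Your write-up is simply more explicit about the bookkeeping (nonemptiness of the intermediate zones and the witness $\valuation \in \timeoutZ{Z}{x}$), which the paper leaves implicit.
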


From the previous proposition, we also conclude that any feasible run of \(\M\)
can be reproduced in \(\zoneOf{\M}\) and vice-versa.
Recall that \(\actions{\M} = \actions{\zoneOf{\M}}\).

\begin{corollary}\label{cor:zoneMMT:feasible:iff}
  Let \(\M\) be a \complete \MMT and \(\zoneOf{\M}\) be its zone \MMT.
  Then, for all words \(w \in \actions{\M}^*\), the following items are equivalent
  \begin{itemize}
    \item
    \(q_0^\M \xrightarrow{w} q\) is in \(\runs{\M}\) and is feasible
    \item
    \((q_0^\M, \{\emptyset\}) \xrightarrow{w} (q, Z)\) is in
    \(\runs{\zoneOf{\M}}\) and is feasible for some zone \(Z\).
  \end{itemize}
\end{corollary}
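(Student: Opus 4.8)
The plan is to derive this corollary directly from \Cref{prop:zoneMMT:timedruns} by unfolding the definition of feasibility. Recall that a run $q_0^\M \xrightarrow{w} q$ of an \MMT is feasible exactly when there is a timed run $\rho$ with $\untimeRun{\rho} = (q_0^\M \xrightarrow{w} q)$, i.e.\ when there is a timed word $\tilde{w}$ over $\actions{\M}$ whose sequence of discrete symbols is $w$, together with a valuation $\valuation$, such that $(q_0^\M, \emptyset) \xrightarrow{\tilde{w}} (q, \valuation)$ in $\M$. So I would first rephrase both sides of the claimed equivalence in these terms, reducing the statement to: there exist such a $\tilde{w}$ and $\valuation$ in $\M$ if and only if there exist such a $\tilde{w}$ and $\valuation$ (and some zone $Z$) in $\zoneOf{\M}$.

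For the $\Rightarrow$ direction I would take a timed word $\tilde{w}$ and valuation $\valuation$ witnessing feasibility of $q_0^\M \xrightarrow{w} q$ in $\M$, apply \Cref{prop:zoneMMT:timedruns} to obtain a zone $Z$ over $\activeTimers^\M(q)$ with $\valuation \in Z$ and a timed run $((q_0^\M, \{\emptyset\}), \emptyset) \xrightarrow{\tilde{w}} ((q, Z), \valuation)$ in $\zoneOf{\M}$, and then observe that its untimed projection is exactly $(q_0^\M, \{\emptyset\}) \xrightarrow{w} (q, Z)$, since by construction every discrete transition of $\zoneOf{\M}$ mimics a transition of $\M$ with the same action and output. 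Hence that run lies in $\runs{\zoneOf{\M}}$ and is feasible. The $\Leftarrow$ direction is entirely symmetric, using the $\Leftarrow$ half of \Cref{prop:zoneMMT:timedruns}: from a timed run witnessing feasibility of $(q_0^\M, \{\emptyset\}) \xrightarrow{w} (q, Z)$ in $\zoneOf{\M}$ one recovers a timed run $(q_0^\M, \emptyset) \xrightarrow{\tilde{w}} (q, \valuation)$ in $\M$ whose untimed projection is $q_0^\M \xrightarrow{w} q$.

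The only bookkeeping obstacle, and it is minor, is to make precise the passage between a run (a word over $\actions{\M}$) and the timed words over $\actions{\M}$ that realise it: one must check that the feasibility witness $\tilde{w}$ genuinely has $w$ as its sequence of discrete symbols, so that \Cref{prop:zoneMMT:timedruns}, which is phrased for timed words, applies, and so that the resulting timed run projects back onto $w$. This is immediate from the definitions of timed run, untimed projection and feasibility, and needs no new argument. I would also note that $\actions{\zoneOf{\M}} = \actions{\M}$, so $w$ is a legal label for both machines, and that $\zoneOf{\M}$ is well-formed and complete by \Cref{lem:soundZone} and \Cref{prop:zoneMMT:complete}, which is what makes the statement meaningful in the first place.
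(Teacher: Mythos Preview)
Your proposal is correct and follows essentially the same approach as the paper: unfold feasibility to obtain a timed run, apply \Cref{prop:zoneMMT:timedruns} to transfer it between $\M$ and $\zoneOf{\M}$, and observe that the untimed projection yields the desired run labelled by $w$; the paper's proof is slightly terser but identical in substance, and likewise handles the reverse direction by symmetry.
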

\begin{proof}
  Let \(q_0^\M \xrightarrow{w} q\) be a feasible run of \(\M\).
  Then, there exists a timed run \((q_0^\M, \emptyset) \xrightarrow{v}
  (q, \valuation)\) of \(\M\) such that \(w\) and \(v\) use the same actions.
  By \Cref{prop:zoneMMT:timedruns}, it follows that
  \(((q_0^\M, \{\emptyset\}), \emptyset) \xrightarrow{v} ((q, Z), \valuation)\)
  is a timed run of \(\zoneOf{\M}\)
  for some zone \(Z\) such that \(\valuation \in Z\).
  As \(v\) and \(w\) use the same actions, the run
  \((q_0^\M, \{\emptyset\}) \xrightarrow{w} (q, Z)\) is a feasible run of
  \(\zoneOf{\M}\).
  The other direction holds with similar arguments.
\qed\end{proof}

Let us now move towards proving that \(\M \symEquivalent \zoneOf{\M}\).

\begin{proposition}\label{prop:zoneMMT:equivalent}
  Let \(\M\) be a \complete \MMT.
  Then, \(\M \symEquivalent \zoneOf{\M}\).
\end{proposition}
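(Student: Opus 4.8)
The plan is to reduce everything to the facts already established about $\zoneOf{\M}$, exploiting that the zone \MMT is structurally just $\M$ with zones glued onto its states. Concretely, I would first record the features of the construction that I will use throughout: $\zoneOf{\M}$ has the same timer set ($X^{\zoneOf{\M}} = X^\M$), it satisfies $\activeTimers^{\zoneOf{\M}}((q,Z)) = \activeTimers^\M(q)$, and every transition of $\zoneOf{\M}$ \emph{projects} onto a transition of $\M$ carrying the same input, output and update: if $(q,Z) \xrightarrow[u]{i/o} (q',Z')$ in $\zoneOf{\M}$, then $q \xrightarrow[u]{i/o} q'$ in $\M$. Conversely, each $\M$-transition from $q$ is reflected from $(q,Z)$ exactly when the zone computed in the construction is non-empty. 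The consequence I would state and use repeatedly is: for a fixed concrete word $w$, a run $(q_0^\M,\{\emptyset\}) \xrightarrow{w} (q,Z)$ exists in $\runs{\zoneOf{\M}}$ (for some $Z$) iff $q_0^\M \xrightarrow{w} q$ exists in $\runs{\M}$ and never visits an empty zone, and in that case the two runs carry the same sequence of outputs and updates. Since the symbolic word $\toSymbolic{w}$ is determined solely by which transitions (re)start which timers, i.e.\ by the update labels, the map $\toSymbolic{\cdot}$ returns the same value along both runs; hence reading a given \siw $\symbolic{w}$ in $\M$ and in $\zoneOf{\M}$ (whenever it is defined) produces runs over one and the same concrete word $w$.

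With this in hand, the first clause of \Cref{def:equivalence:symbolic} follows from \Cref{cor:zoneMMT:feasible:iff}. If $\symbolic{w}$ induces no run in $\M$, the projection property makes it induce none in $\zoneOf{\M}$, so both $q_0^\M \xrightarrow{\symbolic{w}}$ and $q_0^{\zoneOf{\M}} \xrightarrow{\symbolic{w}}$ are (vacuously) infeasible. Otherwise, let $w$ be the unique concrete word with $\toSymbolic{w} = \symbolic{w}$ and $q_0^\M \xrightarrow{w} \in \runs{\M}$; then feasibility of $q_0^\M \xrightarrow{\symbolic{w}}$ in $\M$ means exactly that $q_0^\M \xrightarrow{w}$ is a feasible run of $\M$, which by \Cref{cor:zoneMMT:feasible:iff} holds iff $(q_0^\M,\{\emptyset\}) \xrightarrow{w}$ is a feasible run of $\zoneOf{\M}$, which by the first paragraph is precisely the run induced by $\symbolic{w}$ in $\zoneOf{\M}$.

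For the second clause, I would take a \siw $\symbolic{w}$ feasible in both machines and compare the two induced runs $q_0 \xrightarrow[u_1]{\symbolic{i_1}/o_1} \dotsb \xrightarrow[u_n]{\symbolic{i_n}/o_n} q_n$ in $\M$ and $q'_0 \xrightarrow[u'_1]{\symbolic{i_1}/o'_1} \dotsb \xrightarrow[u'_n]{\symbolic{i_n}/o'_n} q'_n$ in $\zoneOf{\M}$. By the first paragraph they read the same concrete word $w$, and the $j$-th transition of the $\zoneOf{\M}$-run projects onto the $j$-th transition of the $\M$-run, so $o_j = o'_j$ and $u_j = u'_j$ for every $j$. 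In particular, on any sub-run $q_{j-1} \xrightarrow{\symbolic{i_j}\dotsb\symbolic{i_k}} q_k$ that is spanning the first transition (re)starts a timer, so $u_j = (x,c)$, and then $u'_j = u_j = (x,c)$, giving $c = c'$ as required. Combining the two clauses yields $\M \symEquivalent \zoneOf{\M}$.

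I expect the only real subtlety to be the bookkeeping of the first paragraph: pinning down that $\toSymbolic{w}$ is computed identically in $\M$ and $\zoneOf{\M}$, so that ``the run induced by $\symbolic{w}$'' names the same concrete run in both. Once one observes that $\zoneOf{\M}$ copies $\M$'s update labels verbatim and that $\toSymbolic{\cdot}$ reads off only those labels, this is immediate, and the remainder is a routine appeal to \Cref{cor:zoneMMT:feasible:iff} together with the fact that $\zoneOf{\M}$ copies outputs and updates.
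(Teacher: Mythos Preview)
Your proposal is correct and follows essentially the same approach as the paper: both arguments reduce to \Cref{cor:zoneMMT:feasible:iff} for the feasibility clause and to the observation that the construction of $\zoneOf{\M}$ copies outputs and updates verbatim from $\M$, which immediately yields $o_j = o'_j$ and $u_j = u'_j$ (and hence the spanning condition). Your treatment is slightly more explicit about the bookkeeping---in particular the point that $\toSymbolic{\cdot}$ is determined by the update labels and therefore agrees in both machines---but this is exactly what the paper compresses into ``by construction of $\zoneOf{\M}$''.
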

\begin{proof}
  We have to show that for every symbolic word
  \(\symbolic{i_1} \dotsb \symbolic{i_n}\) over \(I \cup \toevents{\natplus}\):
  \begin{itemize}
    \item  $q_0^\M \xrightarrow[u_1]{\symbolic{i_1}/o_1} q_1
    \dotsb \xrightarrow[u_n]{\symbolic{i_n}/o_n} q_n$ is a feasible run in $\M$
    iff $q_0^{\zoneOf{\M}} \xrightarrow[u'_1]{\symbolic{i_1}/o'_1}
    q'_1 \dotsb \xrightarrow[u'_n]{\symbolic{i_n}/o'_n} q'_n$ is a feasible run
    in $\zoneOf{\M}$.
    \item Moreover,
    \begin{itemize}
        \item $o_j = o'_j$ for all $j \in \{1, \dotsc, n\}$, and
        \item $q_j \xrightarrow{\symbolic{i_j}} \dotsb \xrightarrow{\symbolic{i_k}}
        q_k$ is spanning $\implies u_j = (x,c) \land u'_j = (x',c') \land c = c'$.
    \end{itemize}
  \end{itemize}

Let \(\symbolic{w} = \symbolic{i_1} \dotsb \symbolic{i_n}\) be a symbolic word such that
  $q_0^\M \xrightarrow{\symbolic{w}} q_n$ is a feasible run of $\M$.
  Hence, there exists $w = i_1 \dotsb i_n$ such that $\toSymbolic{w} = \symbolic{w}$ and $q_0^\M \xrightarrow[u_1]{i_1/o_1} q_1 \dotsb \xrightarrow[u_n]{i_n/o_n} q_n$ is a feasible run of $\M$. By \Cref{cor:zoneMMT:feasible:iff}, it follows that
  $(q_0^{\M}, \{\emptyset\}) \xrightarrow[u'_1]{i_1/o'_1}
  (q_1, Z_1) \dotsb \xrightarrow[u'_n]{i_n/o'_n} (q_n, Z_n)$ is a
  feasible run of $\zoneOf{\M}$.
  By construction of \(\zoneOf{\M}\), we have that \(o_j = o'_j\)
  and \(u_j = u'_j\) for every \(j\). Therefore, $\toSymbolic{i'_1 \dotsb i'_n} = \symbolic{w}$ and $(q_0^{\M}, \{\emptyset\}) \xrightarrow{\symbolic{w}} (q_n, Z_n)$ is a feasible run of $\M$.
  Hence, the direction from \(\M\) to \(\zoneOf{\M}\) holds.
  The other direction follows with the same arguments.
  We thus conclude that \(\M \symEquivalent \zoneOf{\M}\).
\qed\end{proof}

Finally, we show that any run of $\zoneOf{\M}$ is feasible.

\begin{proposition}\label{prop:zones:feasible}
  Let \(\M\) be a complete \MMT.
  Then, any run of \(\zoneOf{\M}\) is feasible.
\end{proposition}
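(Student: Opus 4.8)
The plan is to establish, by induction on the length $n$ of the run, the following strengthening of the statement: for every run $q_0^{\zoneOf{\M}} \xrightarrow{i_1 \dotsb i_n} (q_n, Z_n)$ of $\zoneOf{\M}$ issued from its initial state and for \emph{every} valuation $\valuation \in Z_n$, there is a timed run $\rho$ of $\zoneOf{\M}$ from $((q_0^\M, \{\emptyset\}), \emptyset)$ to $((q_n, Z_n), \valuation)$ whose untimed projection is the run $q_0^{\zoneOf{\M}} \xrightarrow{i_1 \dotsb i_n} (q_n, Z_n)$. In words: the zone attached to each reachable state of $\zoneOf{\M}$ is \emph{tight}, i.e., it contains no valuation that cannot actually be realized. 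Since, by construction, every state $(q, Z)$ of $\zoneOf{\M}$ satisfies $Z \neq \emptyset$, this immediately yields feasibility of every run starting in $q_0^{\zoneOf{\M}}$, and (see below) of every run of $\zoneOf{\M}$.

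The base case $n = 0$ is immediate ($Z_0 = \{\emptyset\}$, $\valuation = \emptyset$; take the timed run consisting of a single delay transition in $((q_0^\M, \{\emptyset\}), \emptyset)$). For the induction step I would peel off the last transition $(q_{n-1}, Z_{n-1}) \xrightarrow[u_n]{i_n} (q_n, Z_n)$ and unfold the definition of $Z_n$ from the construction of $\zoneOf{\M}$, namely $Z_n = \downwardZ{(\restrictionZ{Z'}{\activeTimers^\M(q_n)})}$, where $Z'$ is one of the four zones built from $Z_{n-1}$ according to whether $u_n = \bot$ or $u_n = (x, c)$ and whether $i_n \in I$ or $i_n = \timeout{x}$. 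Given $\valuation \in Z_n$, I would pull it back step by step: first obtain $\nu \in \restrictionZ{Z'}{\activeTimers^\M(q_n)}$ and a delay $d$ with $\valuation = \nu - d$ and $d \leq \min_{y \in \dom{\nu}} \nu(y)$; then lift $\nu$ to some $\hat\nu \in Z'$ that agrees with $\nu$ on $\activeTimers^\M(q_n)$; then, using the explicit shape of $Z'$ in each of the four cases, extract a valuation $\lambda \in Z_{n-1}$ from which the discrete step $\xrightarrow[u_n]{i_n}$ of $\zoneOf{\M}$ leads exactly to the configuration $((q_n, Z_n), \nu)$ --- and note that in the two timeout cases this extraction forces $\lambda(x) = 0$, so the timeout transition is genuinely enabled at $\lambda$. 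Applying the induction hypothesis to $\lambda \in Z_{n-1}$ yields a timed run reaching $((q_{n-1}, Z_{n-1}), \lambda)$ with the required untimed projection; appending the discrete step $\xrightarrow[u_n]{i_n}$ followed by the delay $d$ extends it to a timed run reaching $((q_n, Z_n), \valuation)$, which closes the induction. (The degenerate subcase $\activeTimers^\M(q_n) = \emptyset$, where $\restrictionZ{Z'}{\emptyset} = \{\emptyset\}$ because $Z' \neq \emptyset$, is handled in the same way with $\nu = \emptyset$.)

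To conclude the proposition for an \emph{arbitrary} run $\pi = p_0 \xrightarrow{v} p_k$ of $\zoneOf{\M}$ --- not necessarily starting in the initial state --- I would invoke that every state of $\zoneOf{\M}$ is reachable: fix a word $\mathbf{w}$ with $q_0^{\zoneOf{\M}} \xrightarrow{\mathbf{w}} p_0$ in $\runs{\zoneOf{\M}}$, observe that $q_0^{\zoneOf{\M}} \xrightarrow{\mathbf{w} \cdot v} p_k$ is a run of $\zoneOf{\M}$ (the transitions of $\pi$ all belong to $\zoneOf{\M}$, and by determinism it is the unique run reading $\mathbf{w} \cdot v$), pick any $\valuation$ in the non-empty zone of $p_k$, apply the claim above to obtain a timed run from $((q_0^\M, \{\emptyset\}), \emptyset)$ realizing $q_0^{\zoneOf{\M}} \xrightarrow{\mathbf{w} \cdot v} p_k$, and take the suffix of this timed run starting at the configuration reached just after the $\lengthOf{\mathbf{w}}$-th discrete transition: this suffix is a legitimate timed run (it begins with a delay transition) and its untimed projection is precisely $\pi$, so $\pi$ is feasible.

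I expect the bulk of the work to lie in the four-case analysis of the induction step --- one must check, for each shape of $Z'$, that the extracted $\lambda$ belongs to $Z_{n-1}$ and makes the discrete transition of $\zoneOf{\M}$ fire into the intended configuration, and in particular that $\lambda(x) = 0$ whenever $i_n = \timeout{x}$ so that the timeout is taken legally. The key design choice that makes the induction go through is the strengthening of the statement so as to quantify over \emph{all} $\valuation \in Z_n$, mirroring what is done for \Cref{prop:zoneMMT:timedruns}; everything else is routine bookkeeping with the zone operations $\downwardZ{\cdot}$, $\restrictionZ{\cdot}{\cdot}$, $\assignmentZ{\cdot}{\cdot}$, and $\timeoutZ{\cdot}{\cdot}$.
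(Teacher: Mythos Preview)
Your proposal is correct and follows essentially the same approach as the paper: the same induction on run length with the same strengthened invariant (quantifying over all $\valuation$ in the zone), the same peel-off-the-last-transition structure, and the same reduction of arbitrary runs to initial runs via reachability. The paper's write-up is terser --- it condenses the four-case pull-back into a single sentence (``by construction of $\zoneOf{\M}$ \dots we deduce that there exists $d$ and $\lambda$ \dots'') where you plan to spell out each case --- but the underlying argument is the same.
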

\begin{proof}
  As all states of $\zoneOf{\M}$ are reachable, we can restrict the proof to runs starting at the initial state of $\zoneOf{\M}$. Let us prove that for any run \(\pi = (q_0^\M, \{\emptyset\}) \xrightarrow{w} (q, Z)\) of $\zoneOf{\M}$, for any $\valuation \in Z$, there exists a timed run
  \[
    \rho = ((q_0^\M, \{\emptyset\}), \emptyset) \xrightarrow{v}
    ((q, Z), \valuation)
  \]
  such that \(\untimeRun{\rho} = \pi\). We prove this property by induction over \(n = |w|\).

  \textbf{Base case:} \(n = 0\), i.e., \(w = \emptyword\).
  Let \(\pi = (q_0^\M, \{\emptyset\}) \xrightarrow{\emptyword}
  (q_0^\M, \{\emptyset\})\).
  It is clear that there exists \(\rho = ((q_0^\M, \{\emptyset\}), \emptyset) \xrightarrow{d}
  ((q_0^\M, \{\emptyset\}), \emptyset)\) and \(\emptyset \in \{\emptyset\}\) for
  any \(d \in \nnr\). And we have $\untimeRun{\rho} = \pi$.

  \textbf{Induction step:}
  Let \(k \in \nat\) and assume the proposition holds for every word of
  length \(k\).
  Let \(w\) of length \(k + 1\), i.e., we can decompose \(w = w' \cdot i\)
  with \(i \in \actions{\M}\) and \(\lengthOf{w'} = k\).
  We show that, if \(\pi = (q_0^\M, \{\emptyset\}) \xrightarrow{w} (q, Z)\) is a
  run and $\valuation$ is a valuation in $Z$, there exists a timed run \(\rho =
  ((q_0^\M, \{\emptyset\}), \emptyset) \xrightarrow{v} ((q, Z), \valuation)\)
  such that \(\untimeRun{\rho} = \pi\).

  Assume that \(\pi\) is a run of $\zoneOf{\M}$ and
  let \(\pi' = (q_0^\M, \{\emptyset\}) \xrightarrow{w'} (p, Z_p)\).
  Observe that \(\pi'\) is a sub-run of \(\pi\).
  By the induction hypothesis, we know that for any \(\lambda \in Z_p\), there exists a timed run
  \(\rho' = ((q_0^\M, \{\emptyset\}), \emptyset) \xrightarrow{v'} ((p, Z_p), \lambda)\)
  such that \(\untimeRun{\rho'} = \pi'\).
  Hence, let us focus on the last transition $(p,Z_p) \xrightarrow{i} (q,Z)$ of $\pi$. By construction of $\zoneOf{\M}$,
  given $(q,Z)$ and $\valuation \in Z$, we deduce that there exists $d \in \nnr$ and $\lambda \in Z_p$ such that
   \[
    ((p, Z_p), \lambda) \xrightarrow{i}
    ((q, Z), \valuation + d) \xrightarrow{d}
    ((q, Z), \valuation).
  \]
  Thus, by the induction hypothesis with this $\lambda$, it follows that we have the timed run
  \[
    \rho = ((q_0^\M, \{\emptyset\}), \emptyset) \xrightarrow{v'}
    ((p, Z_p), \lambda) \xrightarrow{i}
    ((q, Z), \valuation+ d) \xrightarrow{d}
    ((q, Z), \valuation).
  \]
  such that \(\untimeRun{\rho} = \pi\).
\qed\end{proof}

\subsection{Proof of \texorpdfstring{\Cref{lemma:goodMMT}}{Lemma 1}}

We are now ready to prove \Cref{lemma:goodMMT} which we repeat one more time.

\goodMMT*

\begin{proof}
  Let \(\M\) be a \complete \MMT and
  \(\zoneOf{\M}\) be its zone \MMT.
  By \Cref{prop:zoneMMT:complete} and
  \Cref{prop:zones:feasible,prop:zoneMMT:equivalent},
  \(\zoneOf{\M}\) is \complete, any run of \(\zoneOf{\M}\) is
  feasible, and \(\M \symEquivalent \zoneOf{\M}\).
  Hence, \(\zoneOf{\M}\) is \good and satisfies the lemma.
\qed\end{proof}
 
\section{Proof of \texorpdfstring{\Cref{lemma:queries}}{Lemma 2}}\label{proof:lemma:queries}

\symbolicQueries*

We first define the concrete output and
equivalence queries in \Cref{app:concrete queries}.
Next, in \Cref{proof:lemma:queries:queries}, we prove the lemma. 

\subsection{Concrete queries}\label{app:concrete queries}
The \emph{concrete} queries are defined in terms of the timed semantics of the \MMT model, and are an adaptation of the queries used for Mealy machines~\cite{VaandragerGRW22,ShahbazG09}:
one type of queries requests the output in response to a given input, and another asks whether a hypothesis is correct.

In the case of a timed system $\M$, it is natural to describe the inputs provided to the system as a \emph{timed input word} $w$, i.e., a timed word over the alphabet $I$.  A learner then observes the timed trace of a timed run of $\M$ triggered by $w$.
To each timed run $\rho$ of $\M$, we may associate a unique timed input words $\tinpw{\rho}$ by just keeping the inputs and the (accumulated) delays between them.
For instance, the following run of the \MMT of \Cref{fig:ex:MMT:good}:
\begin{eqnarray*}
\rho & = & (q_0, \emptyset){\myxrightarrow{0.5}} (q_0, \emptyset)
	{\myxrightarrow[(x, 2)]{i/o}} (q_1, x {=} 2)
	{\myxrightarrow{1}} (q_1, x {=} 1)
	{\myxrightarrow[(y, 3)]{i/o'}} (q_2, x {=} 1, y {=} 3)
{\myxrightarrow{1}} (q_2, x {=} 0, y {=} 2)\\
&	&\myxrightarrow[(x, 2)]{\timeout{x}/o} (q_3, x {=} y {=} 2)
	\myxrightarrow{2} (q_3, x {=} y {=} 0)
	\myxrightarrow[\bot]{\timeout{y}/o} (q_0, \emptyset)
	\myxrightarrow{0} (q_0, \emptyset)
\end{eqnarray*}
has an associated timed input word $\tinpw{\rho} = 0.5 ~ i ~ 1 ~ i ~ 3$.
In fact, we may derive this timed input word of $\rho$ from its timed trace
\begin{eqnarray*}
	\timedtrace{\rho} & = & 0.5 ~ i/o ~ 1 ~ i/o' ~ 1 ~ \mathit{to}/o ~ 2 ~ \mathit{to}/o ~ 0
\end{eqnarray*}
by omitting outputs, $\mathit{to}$'s, and adding consecutive delays.
Note that $\tinpw{\rho} = \tinpw{\timedtrace{\rho}}$.

In general, for a given timed input word $w$, there may be multiple timed runs $\rho$ with $\tinpw{\rho} = w$. According to the next lemma, there will always at least one such timed run. So unlike timed automata~\cite{GomezB07}, MMTs have no Zeno runs or time deadlocks.

\begin{lemma}\label{lemma: at least one timed run}
	Let $w$ be a timed input word, let $\M$ be a complete MMT, and let $(q, \kappa)$ be a configuration of $\M$.
	Then there exists a timed run $\rho$ starting from $(q, \kappa)$ such that $\tinpw{\rho} = w$.
\end{lemma}
\begin{proof}
	We start by observing that, for any configuration $(q, \kappa)$ of $\M$ and for any nonnegative real number $d \leq 1$, there exists a run $\rho$ starting from $(q, \kappa)$, in which all discrete transitions are timeouts and the accumulated delay equals $d$.
	The proof is by induction on the number $n$ of timers $x$ of $q$ with $\kappa(x) < d$:
	\begin{itemize}
		\item 
		Basis.  If $n=0$ then $\rho$ just consists of a single transition \((q, \valuation) \myxrightarrow{d} (q, \valuation - d)\).
		\item 
		Induction step. If $n>0$ then let $x$ be the timer of $q$ with $\kappa(x) = e < d$ minimal.  By completeness of $\M$, we have a run
		\(\rho' = (q, \valuation) \myxrightarrow{e} (q, \valuation - d) \myxrightarrow{\timeout{x}} (q', \kappa')\).
		If the $\timeout{x}$-transition in $\rho'$ resets timer $x$, $\kappa'(x)$ will be at least 1.
		This implies that the number $n$ of timers $x$ of $q'$ with $\kappa'(x) < d$ will be less than $n$.
		Therefore, by induction hypothesis, there exists a run $\rho''$ that starts in $(q', \kappa')$, in which all discrete transitions are timeouts and the accumulated delay equals $d-e$.  Now the concatenation of runs $\rho'$ and $\rho''$ gives the desired run $\rho$.
	\end{itemize}
	By repeated application of the above observation we may construct, starting from any configuration, a run in which all discrete transitions are timeouts and the accumulated delay equals any given $d \in\nnr$.
	In combination with the fact that, due to completeness of $\M$, any configuration enables any input from $I$, this allows us to construct a timed run $\rho$ that starts from $(q, \kappa)$ such that $\tinpw{\rho} = w$.
\end{proof}

Based on \Cref{lemma: at least one timed run} and \Cref{lemma: timed inequivalence implies transparent witness}, we may now define the two types of concrete queries as follows:

\begin{definition}[Concrete Queries]
	Let \(\M\) be the complete \MMT of the teacher.
	The \emph{concrete queries} the learner can use are:
	\begin{itemize}
		\item \(\outputQ(w)\) with \(w\) a timed input word: the teacher outputs a timed trace $\tau \in \ttlanguage{\M}$ with $\tinpw{\tau} = w$.
		\item \(\equivQ(\hypothesis)\) with \(\hypothesis\) a complete \MMT: the teacher replies \yes if \(\M \ttequivalent \hypothesis\); otherwise it answers \no
		and provides a transparent timed trace contained in the symmetric difference of $\ttlanguage{\M}$ and $\ttlanguage{\hypothesis}$.
	\end{itemize}
\end{definition}

We say that a timed input word $w$ is \emph{transparent} if, for all occurrences of symbols from $I$ in $w$, the fractional part of their starting time is different.
Note that whenever a run is transparent, its associated timed input word is also transparent.
This implies that if the learner poses a concrete output query \(\outputQ(w)\) for a transparent timed input word $w$, the timed trace $\tau$ that is provided by the teacher in return will also be transparent.
In practice, a concrete equivalence query will typically be approximated by some timed testing algorithm, i.e., through timed output queries.  If only transparent timed input words are used during testing, then any counterexample found during testing will also be transparent. It is therefore not unreasonable to assume that timed equivalence queries return transparent timed traces.  Since symbolic words can be extracted from transparent timed traces, such counterexamples are quite useful for the learner.

\subsection{Proof of \texorpdfstring{\Cref{lemma:queries}}{Lemma 2}}\label{proof:lemma:queries:queries}

To ease the explanation, let us assume that the returned counterexample \(w\) is
such that \(\lengthOf{\tOutputs^\M(w)} = 1\), as \(\M\) is \robust.
Moreover, let us assume we have some \MMT \(\tree\) that contains all the
(untimed) runs that we already learned.
That is, \(\tree\) holds partial knowledge about \(\M\).
We now describe how to obtain each symbolic query from these concrete queries,
using \(\tree\), i.e., we prove \Cref{lemma:queries}.

\subsubsection{Symbolic output query.}
Let us start with symbolic output queries.
We recall the definition.
For a \siw (symbolic word) \(\symbolic{w}\) such that
\(\pi = q_0^\M \xrightarrow{\symbolic{w}} {} \in \runs{\M}\),
\(\symOutputQ(\symbolic{w})\) returns the sequence of outputs seen along the run
\(\pi\).

So, let \(\symbolic{w}\) be a \siw for which we want to ask
\(\symOutputQ(\symbolic{w})\).
During the learning process, such a query is always used to define a new
transition reading \(i \in \actions{\tree}\) from a state \(q\) that is already
present in \(\tree\), i.e., we extend the set of learned runs by extending a
run one symbol at a time.
Hence, we focus on this case.
We can assume that
\(q_0^\M \xrightarrow{\symbolic{w}} {} \in \runs{\M}\) and that, for each
proper prefix \(\symbolic{w}'\) of \(\symbolic{w}\), we already performed
\(\symWaitQ(\symbolic{w}')\).\footnote{Our learning algorithm, introduced in
\Cref{sec:learning:tree}, will ensure these two assumptions are satisfied.
Moreover, in all generality, the second assumption can easily be obtained by
simply calling \(\symWaitQ\) on every proper prefix.}
Let
\(
  \pi^\tree = p_0 \xrightarrow{i_1}
  p_1 \xrightarrow{i_2}
  \dotsb \xrightarrow{i_n}
  p_n \in \runs{\tree}
\)
such that \(p_0 = q_0^\tree\), \(p_n = q\), and
\(\toSymbolic{i_1 \dotsb i_n \cdot i} = \symbolic{w}\).
That is, we retrieve the unique run going from \(q_0^\tree\) to \(q\), convert
the actions into a symbolic word, alongside the action \(i\) (whose transition
is not necessarily already in the tree).
Since \(q_0^\M \xrightarrow{\symbolic{w}} {} \in \runs{\M}\),
\(\toSymbolic{i_1 \dotsb i_n \cdot i}\) is well-defined (the last symbol is either
an input, or \(\timeout{j}\) for some appropriate \(j \in \{1, \dotsc, n\}\)).
We now construct a \tiw that corresponds to the \siw \(\symbolic{w}\).

We leverage the system of linear constraints, noted \(\constraints(\pi)\),
introduced in the proof of \Cref{lemma: transparent}.
Recall that a solution for \(\constraints(\pi)\) gives us the delays that we can
use in a transparent, race-free timed run \(\rho\) such that \(\pi\) is the
untimed projection of \(\rho\).
Thus, let
\[
  \rho^\tree = (q_0^\tree, \emptyset) \xrightarrow{d_1}
  (q_0^\tree, \emptyset) \xrightarrow{i_1}
  (p_1, \valuation_1) \xrightarrow{d_2}
  \dotsb \xrightarrow{i_n}
  (p_n, \valuation_n)
  \xrightarrow{d_{n+1}}
  (p_n, \valuation_n - d_{n+1})
\]
be the timed run constructed from a solution of \(\constraints(\pi)\).
Deriving a timed input word \(w\) from \(\rho\) is easy: let
\(w' = d_1 i_1 \dotsb i_n d_{n+1}\) be the sequence of delays and actions of
\(\rho\).
Then, for any \(i_j = \timeout{x}\) in \(w'\), we remove \(i_j\) and replace
the delay \(d_j\) by \(d_j + d_{j+1}\). We repeat this until all timeouts are
removed from \(w'\). Observe that \(w\) contains at most as many symbols as
\(w'\) and the sums of the delays of both words are equal.
That is, one can construct a \tiw that can be passed to \outputQ from a run of
\(\tree\).

Since \(\rho^\tree\) is transparent and race-avoiding, there necessarily exists
some \(d > 0\) that is sufficiently small to guarantee that no timeout can occur
by waiting \(d\) units of time in the last configuration of \(\rho^\tree\).
That is, for every timer \(x \in \activeTimers^\tree(p_n)\),
\((\valuation_n - (d_{n+1} - d))(x) \neq 0\).

We thus call \(\outputQ(v \cdot d \cdot i \cdot 0)\).
Let us argue that the timed run \(\rho^\M\) of \(\M\) reading this word does not
contain any timeout that is \emph{unexpected} with regard to \(\rho^\tree\).
(An input-transition can never be unexpected).
In other words, any time a transition is triggered in \(\M\), we have a
corresponding transition in \(\tree\), except for the last action \(i\).
As we performed symbolic wait queries on every proper prefix of \(\symbolic{w}\),
it follows that we know the enabled timers of each traversed configuration in
\(\rho^\M\). Thus, in \(\constraints(\pi)\), we have constraints over the delays
that ensure that no unexpected timeout can occur in \(\rho^\M\), by the
definition of symbolic wait queries. Moreover, the delay \(d\) is sufficiently
small to guarantee the absence of timeouts before the last action \(i\). Hence,
the run \(\rho^M\) cannot contain any unexpected action. Let
\begin{multline*}
  \rho^\M = (q_0^\M, \emptyset)
  \xrightarrow{d_1} (q_0^\M, \emptyset)
  \xrightarrow{i'_1} (p'_1, \kappa'_1)
  \xrightarrow{d_2} \dotsb
  \xrightarrow{i'_n} (p'_n, \kappa'_n)
  \xrightarrow{d_{n+1} + d} (p'_n, \kappa'_n - (d_{n+1} + d))
  \\
  \xrightarrow{i'/o} (p'_{n+1}, \kappa'_{n+1})
  \xrightarrow{0} (p'_{n+1}, \kappa'_{n+1}).
\end{multline*}
(The actions \(i'_1, \dotsc, i'_n, i'\) are not necessarily those appearing on
\(\rho^\tree\), as \(\tree\) and \(\M\) do not necessarily use the same timers.)
Since we cannot have any unexpected transition, we can simply return the output
symbol \(o\) produced by the last discrete transition of \(\rho^\M\).

To conclude, the complexity of a symbolic output query depends solely on the
complexity of asking a symbolic wait query on every proper prefix of
\(\symbolic{w}\) in the worst case. As we will show in the following, each
symbolic wait query has a quadratic complexity, in the length of the word.
Note that, if we already asked symbolic wait queries on every proper prefix,
then we only need a single concrete output query to implement a symbolic output
query.
\begin{proposition}
  We need at most \(n^3\) concrete output queries to perform one symbolic
  output query on a symbolic word of length $n$.
\end{proposition}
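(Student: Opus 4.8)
The plan is to take the procedure described just above the statement and bound the number of concrete output queries it issues. Recall the setup: to answer $\symOutputQ(\symbolic{w})$ the learner reconstructs the run $\pi^\tree = p_0 \xrightarrow{i_1} \dotsb \xrightarrow{} p_n = q$ of $\tree$ determined by $\symbolic{w}$ together with the final action $i$, builds a \tiw $v = \constraints^\tree(p_n)$ whose prefix sums up to inputs have pairwise distinct fractional parts, and poses $\outputQ(v \cdot d \cdot i \cdot 0)$ with $d > 0$ so small that no timer can time out between the end of $v$ and $i$. Since $\M$ is \robust and these fractional parts are distinct, there is a unique timed run $\rho^\M$ of $\M$ reading $v \cdot d \cdot i \cdot 0$, and the teacher returns its \tow.

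First I would dispatch the \emph{clean case}: if no unexpected timeout occurs along $\rho^\M$, then every discrete step of $\rho^\M$ matches the corresponding step of $\pi^\tree$ — inputs are copied verbatim, and by the fractional-part discipline every timeout step of $\rho^\M$ coincides with one of $\pi^\tree$ — so $\untimeRun{\rho^\M}$ is exactly the run $p_0 \xrightarrow{i_1} \dotsb \xrightarrow{} p_n \xrightarrow{i}$, whose output sequence is precisely what $\symOutputQ(\symbolic{w})$ must return. Here a single concrete output query suffices.

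Next I would analyse an \emph{unexpected timeout}. Let $i'_k = \timeout{x'}$ be the first timeout of $\rho^\M$ with no matching timeout step in $\pi^\tree$. Being the first, all delays up to step $k$ agree with those prescribed by $\tree$, so there is an index $j < k$ with $p'_{j-1} \xrightarrow[(x',c)]{i'_j} \dotsb \xrightarrow{i'_k = \timeout{x'}}$ an $x'$-spanning sub-run of $\rho^\M$, and — crucially — the $i_j$-transition of $\tree$ must carry update $\bot$: otherwise, since the two runs use identical delays up to that point and $\tree$ faithfully records every update of $\M$ discovered so far, the tree transition would carry the matching update $(x,c)$ and $i'_k$ would not have been unexpected. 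The procedure then creates the transition $p_{k-1} \xrightarrow{\timeout{y}}$ in $\tree$ (with $y = x_{p_j}$ if $i_j \in I$, and $y = x$ if $i_j = \timeout{x}$) and replaces the $\bot$ on the $i_j$-transition by $(y,c)$, reading $c$ off the delays of $\rho^\M$. This strictly reduces the number of $\bot$-updates sitting on transitions of $\pi^\tree$, and the newly added $\timeout{y}$-transition branches off $\pi^\tree$ so it does not reintroduce any.

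Finally I would conclude by counting: $\pi^\tree$ has fewer than $n$ transitions (the query is used to define the $n$-th transition of $q_0^\M \xrightarrow{\symbolic{w}}$), hence carries fewer than $n$ $\bot$-updates; each iteration of the procedure either terminates it (clean case) or removes one such $\bot$-update, so after at most $n-1$ iterations with an unexpected timeout a final clean iteration returns the correct output sequence, for at most $n$ concrete output queries in total. The main obstacle, and the only place requiring care, is the unexpected-timeout step: one must argue that the first unexpected timeout necessarily pinpoints a transition of $\tree$ currently carrying $\bot$ (using that the delays agree up to that point and that the observation tree never contradicts $\M$'s known updates), that the recovered constant $c$ and the new timeout transition respect the observation-tree invariants of \Cref{def:tree}, and that the fractional-part construction of $v$ indeed makes $\rho^\M$ unique; once these are in place, the termination and counting arguments above are routine.
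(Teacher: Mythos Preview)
Your proposal is correct and follows essentially the same approach as the paper. The paper's own argument is terse---it simply asserts ``there can only be at most $n$ unexpected timeouts'' and states the proposition---whereas you explicitly identify the progress measure (the number of $\bot$-updates along $\pi^\tree$), argue that each unexpected timeout strictly decrements it, and then count; this is a more careful rendering of the same idea rather than a different route.
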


\subsubsection{Symbolic wait query.}
Let us proceed with symbolic wait queries.
We recall the definition.
For a \siw (symbolic word) \(\symbolic{w}\) inducing a concrete run
\(\pi = q_0^\M \xrightarrow{i_1} \dotsb \xrightarrow{i_n} q_n \in \runs{\M}\)
such that \(\toSymbolic{i_1 \dotsb i_n} = \symbolic{w}\),
\(\symWaitQ(\symbolic{w})\) returns  the set of all pairs \((j, c)\)
such that \(q_{j-1} \xrightarrow[(x, c)]{i_j} \dotsb
\xrightarrow{i_n} q_n \xrightarrow{\timeout{x}}\) is \(x\)-spanning.

Let \(\symbolic{w}\) be a \siw for which we want to call \(\symWaitQ(\symbolic{w})\).
As for symbolic output queries, such a query is performed to know the set of
enabled timers of a state \(q\) that is already present in \(\tree\).
As in the previous part, we assume that we already asked a symbolic wait query
on every proper prefix of \(\symbolic{w}\). That is, there is no unexpected
timeout when performing a concrete output query.

Let \((q_0^\tree, \emptyset) \xrightarrow{v} (q, \valuation)\) be the transparent,
race-free timed run constructed from a solution of \(\constraints(\pi)\), with
\(v\) a \tiw as above. Moreover, let
\((q_0^\M, \emptyset) \xrightarrow{v} (q', \valuation')\) be the run reading the
same \tiw in \(\M\). Recall that each input in \(v\) is such that the
fractional part of the delays up to the input is unique, as the timed run is
transparent.
Hence, it is sufficient to wait \enquote{long enough} in \((q', \valuation')\)
to identify one potential enabled timer.
For now, assume the learner knows a constant \(\maxTime\) that is at least as large
as the largest constant appearing on any update of \(\M\).
That is, if we wait \(\maxTime\) units of time in a configuration and no timeout
occurs, then we are sure that \(\enabled{q'}[\M] = \emptyset\) (i.e., we
add \(\maxTime\) to the last delay of \(v\)).
We discuss below how to deduce \(\maxTime\) during the learning process.
Moreover, by the uniqueness of the fractional parts, it is easy to identify which
transition (re)started the timer that times out.

So, we have to explain how to ensure that we eventually observe every enabled
timer of \(q'\).
Recall that a timer must be started on a transition before \(q\) to be potentially
active in \(q\).
Thus, we define a set
\(\mathit{Potential}(q) = \{x_p \mid \text{\(p\) is an ancestor of \(q\)}\}\)
that contains every timer that \emph{may} be enabled in \(q\).
Our idea is to check each timer one by one to determine whether it is enabled.

We select any timer $x$ from \(\mathit{Potential}(q)\) and refine the constraints 
of \(\constraints^\tree(q)\) to enforce that the last delay is equal to \(\maxTime\),
and the input that initially starts \(x\) is triggered as soon as possible while
still satisfying the other constraints of \(\constraints^\tree(q)\).
It may be that the resulting constraints for that \(x\) are not satisfiable,
in which case it is not hard to deduce that a run ending with \(\timeout{x}\)
is not feasible (see the proof of \Cref{lemma: transparent}) and
\(x\) can not be enabled in \(q\).
If there exists a solution, i.e., a \tiw \(v\), we can ask \(\outputQ(w)\)
to obtain a \tow \(\omega\).
By the uniqueness of the fractional parts, it is thus easy to check whether \(x\)
times out by waiting in \(q\).
So, we can easily deduce which transition restarts \(x\).
Moreover, from the delays in \(\omega\), the constant of the update restarting
\(x\) can be computed.

We repeat this procedure for every timer in \(\mathit{Potential}(q)\).
Once this is done, we know the enabled timers of \(q\).
Let \(n\) be the number of states in the path from \(q_0\) to \(q\).
The size of \(\mathit{Potential}(q)\) is at most \(n\), and, for every timer \(x\)
in this set, we need at most \(n\) concrete output queries.

\begin{proposition}
  We need at most $n^2$ concrete output queries to perform one symbolic wait query (correct up to our guess of $\Delta$) on a symbolic word of length $n$.
\end{proposition}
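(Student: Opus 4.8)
The plan is to carefully count the concrete output queries spent by the procedure sketched above for implementing $\symWaitQ(\symbolic{w})$ on a symbolic word $\symbolic{w}$ of length $n$. Fix the run $\rho = q_0^\tree \xrightarrow{\symbolic{w}} q$ of $\tree$: it has at most $n$ transitions, hence at most $n+1$ states, and so $\mathit{Potential}(q) = \{x_p \mid p \text{ an ancestor of } q\}$ contains at most $n$ timers, since a timer can be placed in $\mathit{Potential}(q)$ only by an ancestor transition of $q$. The procedure treats these candidate timers one at a time, so it suffices to show that each candidate costs at most $n$ concrete output queries; multiplying gives the claimed $n^2$ bound, and collecting, for the candidates that do time out, the transition that last (re)started them together with the constant (read off from the delays of the returned \tow) yields exactly the data that $\symWaitQ(\symbolic{w})$ must return, after which $q$ is added to $\explored$.

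First I would analyse a single candidate timer $x \in \mathit{Potential}(q)$. We form the linear constraint system $\constraints^\tree(q)$ and add to it the demands that the final delay equal $\maxTime$ and that the input first starting $x$ fire as early as the remaining constraints allow. Solving this system is free of teacher queries. If it is infeasible then, by \Cref{lemma:mmt:feasible:constraints}, no feasible run has the $x$-spanning shape required for $x$ to time out in $q$, so $x \notin \enabled{f(q)}[\M]$ and this candidate is discharged with no concrete query. Otherwise the system yields a \tiw $v$, and we issue $\outputQ(v)$, which, by the block-wiggling construction of~\cite{BruyerePSV23} giving distinct fractional parts for the delay prefix-sums before inputs, induces a unique timed run of $\M$, so that reading the returned \tow determines unambiguously whether $x$ times out and, if so, which transition of $\rho$ (re)started it and with what constant.

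Next I would bound the number of output queries per candidate. As in the analysis of $\symOutputQ$ above, answering $\outputQ(v)$ may expose an \emph{unexpected timeout}: $\M$ fires a $\timeout{x'}$-transition for which $\tree$ currently records a $\bot$-update on the corresponding spanning sub-run. Handling it adds a concrete timeout transition to $\tree$ and upgrades a $\bot$ to a genuine update $(y,c)$ on some transition lying on the path $\rho$. Since $\rho$ has at most $n$ transitions and each can be upgraded from $\bot$ at most once, at most $n$ such refinements can occur; after each one we recompute $v := \constraints^\tree(q)$ and re-issue $\outputQ(v)$. Once no unexpected timeout remains, the last $\outputQ(v)$ settles the status of $x$ as above. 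Hence each candidate costs at most $n$ concrete output queries and, over the at most $n$ candidates, at most $n^2$ in total.

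The step I expect to be the main obstacle is the termination-with-bound argument for the unexpected-timeout loop: one must verify that each iteration \emph{strictly} enriches $\tree$ — a hitherto-$\bot$ transition on $\rho$ acquires a true update — and that the recomputed $\constraints^\tree(q)$ stays satisfiable, so the loop both makes progress and never gets stuck, halting after at most $n$ steps. A secondary, and deliberately acknowledged, caveat is that the returned set of enabled timers is correct only relative to $\maxTime$ being a genuine upper bound on the update constants of $\M$; an underestimate is detected and corrected later in the algorithm, but this affects neither the number of concrete output queries per symbolic wait query nor the $n^2$ bound.
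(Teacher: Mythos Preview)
Your proposal is correct and follows essentially the same approach as the paper: bound $\lengthOf{\mathit{Potential}(q)}$ by $n$, then invoke the unexpected-timeout analysis from the $\symOutputQ$ case to cap each candidate at $n$ concrete queries, yielding $n^2$ overall. Your explicit justification that the unexpected-timeout loop terminates because each of the at most $n$ transitions on $\rho$ can be upgraded from $\bot$ at most once is exactly the mechanism the paper relies on (and states more tersely), and your handling of the $\maxTime$ caveat matches the paper's discussion.
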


\paragraph{Guessing \(\maxTime\).}
Let us quickly explain how the learner can infer \(\maxTime\) during the learning
process.
At first, \(\maxTime\) can be assumed to be any integer (preferably small when
interacting with real-world systems).
At some point, an update \((x, c)\) may be learned by performing a wait query
(or processing the counterexample of an equivalence query) with \(c > \maxTime\).
That is, we now know that \(\maxTime\) is not the largest constant appearing in
\(\M\).
We thus set \(\maxTime\) to be \(c\).
This implies that a new wait query must be performed in every explored state, in
order to discover potentially missing enabled timers.
That is, throughout the learning algorithm, the set of enabled timers in \(\tree\)
may be an under-approximation of the set of enabled timers of the corresponding
state in \(\M\) (cf.\ seismic events introduced in \Cref{sec:learning:algo}
which require rebuilding the MMT holding the learned runs from the root).
However, we will eventually learn the correct value of \(\maxTime\)
(cf.\ \Cref{app:learning:termination} for a bound relative to the
unknown $\M$ on how many times seismic events and, more generally, the discovery
of new timers, can occur).

\subsubsection{Symbolic equivalence query.}
Finally, let us explain how to implement a symbolic equivalence query \(\symEquivQ(\hypothesis)\).
Assume $\hypothesis$ is \complete and race-avoiding.
In order to implement the symbolic query, we go through the following sequence of steps:
\begin{enumerate}
\item 
First pose a concrete equivalence query \(\equivQ(\hypothesis)\).
If the teacher replies \yes then \(\M \ttequivalent \hypothesis\).
Therefore, and because both $\M$ and $\hypothesis$ are race-avoiding, \(\M \symEquivalent \hypothesis\) by \Cref{lemma: timed equivalence implies symbolic equivalence}.  Thus the response to the symbolic equivalence query is also \yes, and we are done.
\item
Otherwise, suppose the response to the concrete equivalence query is \no, together with a transparent timed trace $\tau$ contained in the symmetric difference of $\ttlanguage{\M}$ and $\ttlanguage{\hypothesis}$.
Then, by \Cref{Cy:symEquivalent:equivalent} and \Cref{lemma:symEquivalent:equivalent}, \(\M \not\symEquivalent \N\).
Thus the response to the symbolic equivalence query will also be \no.
However, we still need to determine a symbolic counterexample.
\item
Let $\w = \sinpw{\tau}$.
If $\w \not\in \silanguage{\hypothesis}$ then $\tau \not\in\ttlanguage{\hypothesis}$ by \Cref{lemma:commuting2}.
Therefore, since $\tau$ is a concrete counterexample,
$\tau\in\ttlanguage{\M}$ and $\w \in \silanguage{\M}$.
Thus $\w$ is a symbolic counterexample, and we are done.
Otherwise, we conclude $\w \in \silanguage{\hypothesis}$. 
\item 
Let $\omega = \outw{\tau}$.
If $\ofunction{\hypothesis}{\w} \neq \omega$ then $\tau \not\in\ttlanguage{\hypothesis}$ by \Cref{lemma:commuting1}.
Therefore, since $\tau$ is a concrete counterexample,
$\tau\in\ttlanguage{\M}$, $\w \in \silanguage{\M}$ and $\ofunction{\M}{\w} = \omega$ by \Cref{lemma:commuting1}.
Then $\w$ is a symbolic counterexample, and we are done.
Otherwise, we conclude $\ofunction{\hypothesis}{\w} = \omega$.
\item 
If $\tau \not\in\ttlanguage{\hypothesis}$ then, since $\tau$ is a counterexample, $\tau \in \ttlanguage{\M}$.
Then $\w \in \silanguage{\M}$ and $\ofunction{\M}{\w} = \omega$, so
$\w$ is not a counterexample anymore with this assumption.
Let $\pi$ be the unique run of $\hypothesis$ with $\sinpw{\pi} = \w$, and let $w = \tinpw{\tau}$.
If we apply transparent timed input word $w$ to $\hypothesis$, we obtain a unique timed trace of $\hypothesis$ that initially follows $\pi$ but then at some point deviates (because otherwise $\tau$ would be a timed trace of $\hypothesis$).  Since $\pi$ has exactly the same inputs, timeouts, and outputs as $\tau$, the only possibility is that at some point a timeout occurs in $\hypothesis$ that does not appear in $\tau$.
Let $\pi'$ be the untimed run of $\hypothesis$ up until this timeout,
and let $\w'$ be the symbolic input word of $\pi'$.
Then $\w'$ is not a symbolic input word of $\M$, otherwise the same timeout should have occurred in $\M$ as well. 
So also for this case, we have found a symbolic counterexample.
Otherwise, we conclude $\tau\in\ttlanguage{\hypothesis}$ and $\tau\not\in\ttlanguage{\M}$.
\item 
We perform a concrete output query on $\M$ for the timed input word $w$.  Let $\tau' \in\ttlanguage{\M}$ be the resulting transparent timed trace returned by the teacher.  Then $\tau \neq \tau'$.
\item
Now we compare the timing of the events in $\tau$ and $\tau'$ step by step, starting from the beginning.  Since $\tau$ and $\tau'$ have the same transparent timed input word, all the input events from $I$ occur at exactly the same time. However, we may have situation where at some (global) time $t$ there is a timeout event in one sequence, but not in the other.  In such a case, let $\tau''$ be the prefix of the timed input sequence up to and including that timeout, followed by a $0$ delay. Then $\tau''$ is a transparent timed trace, and
$\sinpw{\tau''}$ is a symbolic word in the symmetric difference of
$\silanguage{\hypothesis}$ and $\silanguage{\M}$, and therefore a symbolic counterexample.
Otherwise, if $\tau$ and $\tau'$ agree on the timing of all events, we conclude that $\w$ and $\w' = \sinpw{\tau'}$ are equal.
\item 
Let $\omega'= \outw{\tau'}$.
If two transparent timed words share the same timed input words, the same symbolic input words, and the same outputs, then they are equal.
Since $\tau$ and $\tau'$ are different,
this implies that $\omega$ and $\omega'$ are different.
By \Cref{lemma:commuting1}, $\ofunction{\hypothesis}{\w} = \omega$ and $\ofunction{\M}{\w'} = \omega'$.
This implies that $\w$ constitutes a counterexample.
\end{enumerate}
The next proposition is immediate.

\begin{proposition}
	We need one concrete equivalence query and at most one concrete output query to perform one symbolic equivalence query.
\end{proposition}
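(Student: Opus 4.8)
The plan is to give an implementation of the symbolic equivalence query (\Cref{def:queries:symbolic}) that consults a concrete equivalence oracle at most once; in fact I will also note a variant that consults it zero times, which already settles the proposition. The two tools I will use are that symbolic equivalence of two \complete \MMTs (\Cref{def:equivalence:symbolic}) is itself decidable, and that symbolic equivalence refines timed equivalence (\Cref{lemma:symEquivalent:equivalent}), whose proof I will reuse contrapositively.

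For the zero-query version, recall that the teacher knows \(\M\) and is handed the \complete hypothesis \(\hypothesis\); it therefore needs no interaction with a black box. It can decide \(\hypothesis \symEquivalent \M\) directly by a zone-based reachability procedure in the spirit of the algorithm of~\cite{BruyerePSV23} for automata with timers: form (the product of) the zone \MMTs of \(\hypothesis\) and \(\M\) given by \Cref{lemma:goodMMT}, and search for a reachable configuration at which the two sides disagree on an output, or on the constant used by a transition that opens a spanning sub-run, or on whether a move on a given symbolic action is defined at all. If none is found, answer \yes; otherwise return the symbolic word reaching the first such discrepancy. Since no concrete query is used, the proposition holds a fortiori.

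For the one-query version, useful when one would rather route the check through an already-available concrete oracle, call \(\equivQ(\hypothesis)\) exactly once. If the oracle answers \yes then \(\M \equivalent \hypothesis\) and we answer \yes as well (note this only certifies timed equivalence of \(\hypothesis\) and \(\M\); as discussed in the main text, this weaker guarantee still makes \lsharpMMT terminate with an \MMT timed-equivalent to \(\M\)). If instead the oracle returns a \tiw \(w\) with \(\tOutputs^\M(w) \neq \tOutputs^\hypothesis(w)\), then I would run \(w\) through both machines, find the first timed transition at which the induced runs diverge — in the produced output, in the constant with which a spanning timer is (re)started, or in the very existence of a matching transition after the same delays — take the untimed projection of the run up to that point, and return its symbolic word. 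That this extraction is valid is exactly the contrapositive of \Cref{lemma:symEquivalent:equivalent}: were there no such symbolic discrepancy, the mimicking construction in the proof of that lemma would extend the \(\hypothesis\)-side run to reproduce the timed output word observed in \(\M\), contradicting \(\tOutputs^\M(w) \neq \tOutputs^\hypothesis(w)\).

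The routine part here is the bookkeeping. The one genuinely delicate step is the counterexample extraction in the one-query version: I need to argue that the first point of timed divergence along \(w\) really does yield a legitimate witness in the sense of \Cref{def:equivalence:symbolic}. I would obtain this by literally replaying, contrapositively, the inductive argument of \Cref{lemma:symEquivalent:equivalent} — that argument pinpoints the stage at which the \(\hypothesis\)-side run can no longer be continued consistently with the \(\M\)-side run, and the symbolic word read off at that stage is the sought counterexample.
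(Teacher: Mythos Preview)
Your proposal is correct and follows essentially the same approach as the paper: it too presents a zero-query direct decision procedure (via a zone/reachability check) and a one-query alternative that uses the contrapositive of \Cref{lemma:symEquivalent:equivalent} to extract a symbolic counterexample from the timed one, while noting that the \yes case only certifies timed equivalence and that this suffices for the learning algorithm.
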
 
\section{More details on observation trees}\label{app:tree}

In this section, we give more details on observation trees and their properties.
First, in \Cref{app:tree:simulation}, we formalize how one can map each state (resp.\ timer) of an observation tree \(\tree\)
to a state (resp.\ timer) of the \good \MMT \(\M\) of the teacher.
Then, in \Cref{app:tree:extension_soundness}, we prove
\Cref{thm:extension-n-soundness}.
In \Cref{app:tree:cotransitivity}, we show that if \(w\) is a witness of
\(p_0 \apart^m p'_0\) and \(w\) can be read from a state \(r\)
(i.e., \(\copyRun[m]{r}[p_0 \xrightarrow{w}]\) is defined), then
\(r\) must be apart from \(p_0\) or \(p'_0\) under some matchings derived from
\(m\).
This property is used in the learning algorithm to reduce the number of possible
hypotheses that can be constructed.
See \Cref{app:learning:termination}.

\subsection{Functional simulation}\label{app:tree:simulation}

In order to link (the finitely many) runs of \(\tree\) to runs of \(\M\), we
formalize the notion of functional simulation \(\funcSim\) introduced in
\Cref{sec:learning:tree}.
Recall that, since \(\tree\) is tree shaped, there is a unique run from
\(q_0^\tree\) to a state \(q\), noted \(\run{\tree}{q}\).
Furthermore, any run of \(\tree\) is feasible.
Hence, for every transition \(q \xrightarrow{\timeout{x}} q'\), there
exists an earlier transition that (re)starts \(x\).
That is, \(\cause{\run{\tree}{q}}{x}\) is well-defined.
Finally, if we assume that \(f(q) \xrightarrow{\timeout{g(x)}} f(q')\) exists
(see~\eqref{eq:simulation:transition} below), then \(g(x)\) is active in
\(f(q)\) and the timeout-transition has a cause.
That is, \(\cause{\funcSim(\run{\tree}{q})}{g(x)}\) is also well-defined, under
that assumption.

\begin{definition}[Functional simulation]
  Let \(\tree\) be an observation tree and \(\M\) be a \good \MMT.
  A \emph{functional simulation} \(\funcSim : \tree \to \M\) is a pair of two maps
  \(f : Q^\tree \to Q^\M\) and
  \(g : \bigcup_{q \in Q^\tree} \activeTimers^\tree(q) \to X^\M\).
  Let \(g\) be lifted to actions such that \(g(i) = i\) for every \(i \in I\), and
  \(g(\timeout{x}) = \timeout{g(x)}\) for every \(x \in \dom{g}\).
  We require that $\funcSim$ preserves initial states, active timers, and
  transitions:
  \begin{gather}
    f(q_0^\tree) = q_0^\M
    \tag{FS0}\label{eq:simulation:initial}
    \\
    \forall q \in Q^\tree, \forall x \in \activeTimers^\tree(q) :
    g(x) \in \activeTimers^\M(f(q))
    \tag{FS1}\label{eq:simulation:active:implies}
    \\
    \forall q \in Q^\tree, \forall x, y \in \activeTimers^\tree(q) :
    x \neq y \implies g(x) \neq g(y)
    \tag{FS2}\label{eq:simulation:different_timers}
    \\
    \forall q \xrightarrow{i/o} q' : f(q) \xrightarrow{g(i) / o} f(q')
    \tag{FS3}\label{eq:simulation:transition}
  \end{gather}
  Thanks to~\eqref{eq:simulation:initial} and~\eqref{eq:simulation:transition},
  we lift $\funcSim$ to runs in a straightforward manner.
  We require that timeout-transitions have the same causes in \(\tree\) and
  \(\M\):
  \begin{equation}
    \forall q \xrightarrow{\timeout{x}} {} :
    \cause{\run{\tree}{q}}{x} = \cause{\funcSim(\run{\tree}{q})}{g(x)}
    \tag{FS4}\label{eq:simulation:cause}
  \end{equation}
  We say that \(\tree\) is an \emph{observation tree for \(\M\)} if there exists
  \(\funcSim : \tree \to \M\).
\end{definition}

\begin{example}\label{ex:tree:forM}
  Let \(\M\) be the \MMT of \Cref{fig:ex:MMT:good}.
  Then, the observation tree \(\tree\) of \Cref{fig:ex:tree} is an observation
  tree for \(\M\) with the functional simulation \(\funcSim\) such that
  \begin{align*}
    f(t_0) = f(t_8) = f(t_{10}) &= q_0
    &
    f(t_1) = f(t_2) = f(t_4) &= q_1
    &
    f(t_3) &= q_2
    &
    f(t_5) = f(t_6) &= q_3
    \\
    f(t_7) = f(t_9) &= q_5
    &
    g(x_1) = g(x_6) &= x
    &
    g(x_3) &= y.
  \end{align*}

  Let \(\pi = \run{\tree}{t_5} = t_0 \xrightarrow[(x_1, 2)]{i} t_1
  \xrightarrow[(x_3, 3)]{i} t_3 \xrightarrow[(x_1, 2)]{\timeout{x_1}} t_5\).
  Let us check that~\eqref{eq:simulation:cause} holds.
  First, the corresponding run in \(\M\) is \(\funcSim(\pi) =
  q_0 \xrightarrow[(x, 2)]{i} q_1 \xrightarrow[(y, 3)]{i} q_2
  \xrightarrow[(x, 2)]{\timeout{x}} q_3\).
  From \(t_5\), there are two timeout-transitions: one for \(x_1\), and one for
  \(x_3\).
  \begin{itemize}
    \item
    The last transition that updated \(x_1\) in \(\tree\) is
    the third transition of \(\pi\), i.e.,
    \(t_3 \xrightarrow[(x_1, 2)]{\timeout{x_1}} t_5\).
    Moreover, the last transition that updated \(g(x_1) = x\) in \(\M\) is
    the third transition of \(\funcSim(\pi)\), i.e.,
    \(q_2 \xrightarrow[(x, 2)]{\timeout{x}} q_3\).
    So, \(\cause{\pi}{x_1} = (2, 3) = \cause{\funcSim(\pi)}{g(x_1)}\).

    \item
    The last transition that updated \(x_3\) in \(\tree\) is
    \(t_1 \xrightarrow[(x_3, 3)]{i} t_3\).
    Moreover, the last transition that updated \(g(x_3) = y\) in \(\M\) is
    \(q_1 \xrightarrow[(y, 3)]{i}\).
    So, \(\cause{\pi}{x_3} = (3, 2) = \cause{\funcSim(\pi)}{g(x_1)}\).
  \end{itemize}
  Thus,~\eqref{eq:simulation:cause} holds.
\end{example}

Observe that for fixed \(\tree\) and \(\M\) there exists at most one functional
simulation.
Further properties can be deduced from the definition of \(\funcSim\):
\begin{itemize}
  \item
  For any transition whose update is \((x, c)\) in \(\tree\), the update of the
  corresponding transition in \(\M\) is \((g(x), c)\), i.e., both work on the
  same timer (up to a renaming) and set it at the same value.

  However, if the update of the transition in \(\tree\) is \(\bot\), we cannot
  conclude anything about the update in \(\M\) (it can be any update in
  \(\updates{\M}\)).
  Indeed, we may have not yet discovered that an update must take place.

  \item
  Any \(x\)-spanning run of \(\tree\) has a corresponding \(g(x)\)-spanning run
  in \(\M\), and vice-versa.

  \item
  The number of active timers in \(q\) is bounded by the number of active timers
  in \(f(q)\).

  \item
  For any enabled timer \(x\) of \(q\), \(g(x)\) is enabled in \(f(q)\).
\end{itemize}

\begin{lemma}
  Let \(\tree\) be an observation tree, \(\M\) be an \good \MMT, and
  \(\funcSim : \tree \to \M\) be a functional simulation.
  Then,
  \begin{gather}
    \forall q \xrightarrow[(x, c)]{i / o} q' :
    f(q) \xrightarrow[(g(x), c)]{g(i) / o} f(q')
    \tag{FS5}\label{eq:simulation:update:known}
    \\
    \forall q \xrightarrow[\bot]{i / o} q' :
    f(q) \xrightarrow{g(i) / o} f(q')
    \tag{FS6}\label{eq:simulation:update:bot}
    \\
    \forall \pi \in \runs{\tree} : \text{\(\pi\) is \(x\)-spanning}
    \implies
    \text{\(\funcSim(\pi)\) is \(g(x)\)-spanning}
    \tag{FS7}\label{eq:simulation:spanning:fromTree}
    \\
    \forall \pi \in \runs{\tree} : \text{\(\funcSim(\pi)\) is \(y\)-spanning}
    \implies
    \text{\(\pi\) is \(x\)-spanning} \land g(x) = y
    \tag{FS8}\label{eq:simulation:spanning}
    \\
    \lengthOf{\activeTimers^\tree(q)} \leq \lengthOf{\activeTimers^\M(f(q))}
    \tag{FS9}\label{eq:simulation:active:size}
    \\
    \forall x \in \enabled{q}[\tree] : g(x) \in \enabled{f(q)}[\M].
    \tag{FS10}\label{eq:simulation:enabled}
  \end{gather}
\end{lemma}
\begin{proof}
  We show each property one by one.

  \begin{description}
    \item[\eqref{eq:simulation:update:known}]
    Let \(q \xrightarrow[(x, c)]{i/o} q'\) be a transition of \(\tree\).
    Recall that, by definition of an observation tree, \(x\) is active in \(q'\)
    if and only if there exists a \(x\)-spanning run traversing \(q'\).
    Therefore, there exists some state \(p\) such that
    \[
      \pi = \run{\tree}{p} =
      q_0^\tree \xrightarrow[u_1]{i_1}
      q_1 \xrightarrow[u_2]{i_2}
      \dotsb \xrightarrow[u_n]{i_n}
      q_n,
    \]
    \(i_n = \timeout{x}\), and there exists some \(j\) such that \(q_{j-1} = q\),
    \(q_j = q'\), and \(u_j = (x, c)\).
    Let us assume \(\pi\) is the shortest such path (i.e.,
    \(q_{n-1} \xrightarrow{\timeout{x}}\) is the first \(\timeout{x}\)-transition
    after \(q'\)).
    Thus, \(\cause{\pi}{x} = (c, j)\).
    By~\eqref{eq:simulation:cause}, it holds that
    \(\cause{\funcSim(\pi)}{g(x)} = (c, j)\).
    Hence, we have
    \(\funcSim(q_{j-1})
      \xrightarrow[(g(x), c)]{g(i)}
      \funcSim(q_{j+1})\).
    Since \(q_{j-1} = q\) and \(q_j = q'\), we conclude
    that~\eqref{eq:simulation:update:known} holds.

    \item[\eqref{eq:simulation:update:bot}]
    We immediately get~\eqref{eq:simulation:update:bot}
    from~\eqref{eq:simulation:transition}.

    \item[\eqref{eq:simulation:spanning:fromTree}]
    Let \(x\) be a timer of \(\tree\), and
    \(
      \pi = p_0 \xrightarrow[(x, c)]{i_1}
      p_1 \xrightarrow[u_2]{i_2}
      \dotsb \xrightarrow[u_n]{\timeout{x}}
      p_n
    \)
    be a \(x\)-spanning run of \(\tree\).
    By~\eqref{eq:simulation:update:known} and~\eqref{eq:simulation:transition},
    we obtain
    \[
      \funcSim(\pi) = f(p_0) \xrightarrow[(g(x), c)]{g(i_1)}
      f(p_1) \xrightarrow{g(i_2)}
      \dotsb \xrightarrow{\timeout{g(x)}}
      f(p_n).
    \]
    By~\eqref{eq:simulation:cause}, we know that
    \(\cause{\run{\tree}{p_0} \cdot \pi}{x} =
    \cause{\funcSim(\run{\tree}{p_0} \cdot \pi)}{g(x)}\),
    where \(\run{\tree}{p_0} \cdot \pi\) denotes the run obtained by
    concatenating \(\run{\tree}{p_0}\) (which ends in \(p_0\)) and
    \(\pi\) (which starts in \(p_0\)).
    Therefore, none of the intermediate transitions restarts \(g(x)\), and
    \(\funcSim(\pi)\) is \(g(x)\)-spanning.

    \item[\eqref{eq:simulation:spanning}]
    Let \(\pi = p_0 \xrightarrow[u_1]{i_1} \dotsb \xrightarrow[u_n]{i_n} p_n\)
    be a run of \(\tree\) such that
    \(\funcSim(\pi) = f(p_0) \xrightarrow[u'_1]{g(i_1)} \dotsb
    \xrightarrow[u'_n]{g(i_n)} f(p_n)\) is \(y\)-spanning, with \(y \in X^\M\).
    That is, \(u'_1 = (y, c)\) for some \(c \in \natplus\),
    \(g(i_n) = \timeout{y}\), and none of the intermediate transitions restarts
    \(y\).
    Hence, there exists some \(x \in X^\tree\) such that \(g(x) = y\)
    and \(i_n = \timeout{x}\), as \(\pi\) is a run of \(\tree\).
    By~\eqref{eq:simulation:cause}, we have
    \(\cause{\run{\tree}{p_0} \cdot \pi}{x} =
    \cause{\funcSim(\run{\tree}{p_0} \cdot \pi)}{y}\) and, so,
    \(u_1 = (x, c)\) and for all \(j \in \{2, \dotsc, n-1\}\), \(u_j \neq (x, d)\)
    with \(d \in \natplus\).
    Thus, \(\pi\) is \(x\)-spanning, with \(g(x) = y\), as wanted.

    \item[\eqref{eq:simulation:active:size}]
    By~\eqref{eq:simulation:active:implies}, we have that any timer \(x\) that is
    active in \(q\) is such that \(g(x)\) is active in \(f(x)\).
    Moreover, by~\eqref{eq:simulation:different_timers}, \(g(x) \neq g(y)\) for
    any \(x \neq y \in \activeTimers^\tree(q)\).
    So, it is not possible for \(q\) to have more active timers than \(f(q)\).

    \item[\eqref{eq:simulation:enabled}]
    Let \(x \in \enabled{q}[\tree]\).
    By definition of \(\tree\), it follows that \(q \xrightarrow{\timeout{x}}\)
    is defined.
    So, by~\eqref{eq:simulation:update:known} and~\eqref{eq:simulation:update:bot}, we have
    \(f(q) \xrightarrow{\timeout{g(x)}}\), meaning that
    \(g(x) \in \enabled{f(q)}[\M]\), as \(\M\) is \complete.\qed
  \end{description}
\end{proof}

\begin{figure}[t]
  \centering
  \begin{tikzpicture}[
  automaton,
  node distance = 90pt,
  every loop/.append style = {
    min distance = 4mm
  },
]
  \node [state, initial]      (q0)  {\(q_0\)};
  \node [state, right=of q0]  (q1)  {\(q_1\)};
  \node [state, right=of q1]  (q2)  {\(q_2\)};

  \path
    (q0)  edge              node [above] {\(i/o, \updateFig{x}{2}\)}
                            (q1)
    (q1)  edge [loop above] node {\(\timeout{x}/o, \updateFig{x}{2}\)} (q1)
          edge              node [above] {\(i/o', \updateFig{y}{3}\)}
                            (q2)
    (q2)  edge [loop above] node {\(i/o', \updateFig{x}{2}\)} (q2)
          edge [loop right] node {\(\timeout{x}/o, \updateFig{x}{2}\)} (q2)
  ;

  \draw [rounded corners = 10pt]
    let
      \p{s} = (q2.-135),
      \p{e} = (q0.-45),
    in
    (\p{s}) -- ($(\x{s}, \y{e}) + (-0.4, -0.25)$)
            -- ($(\p{e}) + (0.4, -0.25)$)
            node [very near start, above=-2pt] {\(\timeout{y}/o\)}
            -- (\p{e})
  ;
\end{tikzpicture}
  \caption{An \MMT with \(\activeTimers(q_0) = \emptyset\),
  \(\activeTimers(q_1) = \{x\}\), and \(\activeTimers(q_2) = \{x, y\}\).}\label{fig:app:ex:MMT}
\end{figure}
Finally, let us highlight that the notion of explored states only makes
sense when \(\M\) is \good.
Let \(\tree\) be the observation tree of \Cref{fig:ex:tree} and \(\N\) be the
not-\good \MMT of \Cref{fig:app:ex:MMT}.
We can still define the maps \(f : Q^\tree \to Q^\N\) and \(g : X^\tree \to X^\N\):
\begin{align*}
  f(t_0) = f(t_8) = f(t_{10}) &= q_0
  &
  f(t_1) = f(t_2) = f(t_4) &= q_1
  &
  f(t_3) = f(t_5) = f(t_6) = f(t_7) = f(t_9) &= q_2
  \\
  g(x_1) = g(x_6) &= x
  &
  g(x_3) &= y.
\end{align*}
We have \(\lengthOf{\enabled{t_3}[\tree]} = 1\)
but \(\lengthOf{\enabled{f(t_3)}[\N]} = \lengthOf{\enabled{q_2}[\N]} = 2\).
However, as every run of \(\tree\) must be feasible and \(x_3\) cannot time out
in \(t_3\), it is impossible to get the equality.
Therefore, in order to define explored states, we must require that \(\M\) is
\good.

\subsection{Proof of \texorpdfstring{\Cref{thm:extension-n-soundness}}{Theorem 2}}\label{app:tree:extension_soundness}

We now show both parts of \Cref{thm:extension-n-soundness} separately.

\subsubsection*{Apartness of timers.}
First, we introduce a notion of \emph{apartness of timers}, which will be useful
to simplify the proofs.
Recall that if two distinct timers \(x\) and \(y\) are both active in
the same state of $\tree$, it must hold that \(g(x) \neq g(y)\), i.e., they correspond to
different timers in \(\M\).
Hence, we say that \(x, y\) are \emph{apart}, noted \(x \timerApart y\) if
there exists \(q \in Q\) such that \(x, y \in \activeTimers(q)\).
In the context of an observation tree \(\tree\), we can rephrase this as follows:
two timers \(x_q, x_p\) are apart if and only if \(x_q\) and \(x_p\) are both
active in \(q\) or in \(p\). That is, we do not need to iterate over the whole
observation tree to decide whether \(x_q \timerApart x_p\).

The definition of apartness for states is parametrized by the notion of
matching, which can be seen as a hypothesis on the equivalence of timers.
Recall that a matching is meant to indicate that the timers
\(x\) and \(m(x)\) could represent the same timer in \(\M\) and that
we say  \(m\) is \emph{valid} if for all \(x \in \dom{m}\),
\(\lnot (x \timerApart m(x))\).
For notational convenience, we lift \(m\) to actions:
\(m(i) = i\) for all \(i \in I\), and \(m(\timeout{x}) = \timeout{m(x)}\) for
every \(x \in \dom{m}\).

\begin{thmSoundnessPart}
  \(w \witness p \apart^m p' \land m \subseteq m'
  \implies
  w \witness p \apart^{m'} p'\).
\end{thmSoundnessPart}
\begin{proof}
  Let \(w \witness p \apart^m p'\) and \(m \subseteq m'\).
  Moreover, let $p_0 = p$, \(p'_0 = p'\),
  \begin{gather*}
    \pi = p_0 \xrightarrow{i_1} p_1 \xrightarrow{i_2} \dotsb
    \xrightarrow[u]{i_n/o} p_n, \text{ and }
    \pi' = \copyRun{p'}[p \xrightarrow{w}] = p'_0 \xrightarrow{i'_1} p'_1
    \xrightarrow{i'_2} \dotsb \xrightarrow[u']{i'_n/o'} p'_n
  \end{gather*}
  with $\matchingRun{m}{\pi}{\pi'} : \pi \leftrightarrow \pi'$.
  By definition, each \(i_j\) is either an input, or \(\timeout{x}\) with
  \(x \in \dom{\matchingRun{m}{\pi}{\pi'}}\).
  Thus, since \(m \subseteq m'\), it follows that
  \(\copyRun[m']{p'}[p \xrightarrow{w}]\) uses exactly the same actions and takes
  the same transitions as \(\copyRun{p'}[p \xrightarrow{w}]\).
  That is, \(\copyRun[m']{p'}[p \xrightarrow{w}] = \copyRun[m]{p'}[p \xrightarrow{w}]\).
  There are five cases:
  \begin{itemize}
    \item
    There exists \(x \in \dom{\matchingRun{m}{\pi}{\pi'}}\) such that \(x \timerApart \matchingRun{m}{\pi}{\pi'}(x)\). If $x \timerApart m(x)$, then \(x \timerApart m'(x)\) since \(m \subseteq m'\).
    If there exists \(k \in \{1, \dotsc, n\}\) such that
    \(x_{p_k} \timerApart x_{p'_k}\), this does not change when extending \(m\). Hence, \(x \in \dom{\matchingRun{m'}{\pi}{\pi'}}\) and \(x \timerApart \matchingRun{m'}{\pi}{\pi'}(x)\), i.e.,
    we have \(w \witness p \apart^{m'} p'\).
    \item \(o \neq o'\), which, clearly, does not depend on \(m\).
    So, \(w \witness p \apart^{m'} p'\).
    \item Likewise if \(u = (x, c)\) and \(u' = (x', c')\) with \(c \neq c'\).
    \item \(p_n, p'_n \in \explored\) and \(\lengthOf{\enabled{p_n}} \neq
    \lengthOf{\enabled{p'_n}}\), which, again, does not depend on \(m\).
    So, \(w \witness p \apart^{m'} p'\).
    \item 
    \(p_n, p'_n \in \explored\) and there is \(x \in \dom{\matchingRun{m}{\pi}{\pi'}}\) such that
    \(x \in \enabled{p_n} \iff \matchingRun{m}{\pi}{\pi'}(x) \notin \enabled{p'_n}\).
    If $x \in \dom{m}$ and as \(m \subseteq m'\), we still have
    \(x \in \enabled{p_n} \iff m'(x) \notin \enabled{p'_n}\).
    Likewise if
    there is a \(k \in \{1, \dotsc, n\}\) such that
    \(x_{p_k} \in \enabled{p_n} \iff x_{p'_k} \notin \enabled{p'_n}\).
    Therefore, we have again \(w \witness p \apart^{m'} p'\).
  \end{itemize}
  In every case, we obtain that \(w \witness p \apart^{m'} p'\).
\qed\end{proof}

Hence, the first part of \Cref{thm:extension-n-soundness} holds.
We now focus on the second part, which requires an intermediate result.
Recall that, given two runs \(\pi = p_0 \xrightarrow{i_1} \dotsb
\xrightarrow{i_n} p_n\) and \(\pi' = p'_0 \xrightarrow{i'_1} \dotsb
\xrightarrow{i'_n} p'_n\), \(\matchingRun{m}{\pi}{\pi'} :
\pi \leftrightarrow \pi'\) denotes the matching such that
\(\matchingRun{m}{\pi}{\pi'} = m \cup \{(x_{p_j}, x_{p'_j}) \mid
j \in \{1, \dotsc, n\}\}\) and \(i'_j = \matchingRun{m}{\pi}{\pi'}(i_j)\) for
all \(j \in \{1, \dotsc, n\}\).
Given such a matching \(\matchingRun{m}{\pi}{\pi'} :
\pi \leftrightarrow \pi'\), the next lemma states that if $f(p_0) = f(p'_0)$ and $m$ agrees with $g$, then \(\funcSim(\pi) = \funcSim(\pi')\) and $\matchingRun{m}{\pi}{\pi'}$ (restricted to the started timers, ensuring that \(g\) is defined over
those timers) also agrees with $g$.

\begin{lemma}\label{lemma:soudness:runsInM}
    Let \(p_0, p'_0 \in Q^\tree\) and a matching \(m : p_0 \leftrightarrow p'_0\)
    such that \(f(p_0) = f(p'_0)\) and \(g(x) = g(m(x))\) for all \(x \in \dom{m}\).
    Moreover, let \(w = i_1 \dotsb i_n\) be a word such that
    \begin{align*}
      \pi &=
        p_0 \xrightarrow{i_1}
        p_1 \xrightarrow{i_2}
        \dotsb \xrightarrow{i_n}
        p_n \in \runs{\tree}, \text{ and }
      \pi' = \copyRun{p'_0}[\pi] =
        p'_0 \xrightarrow{i'_1}
        p'_1 \xrightarrow{i'_2}
        \dotsb \xrightarrow{i'_n}
        p'_n \in \runs{\tree}.
    \end{align*}
    Then, \(\funcSim(\pi) = \funcSim(\pi')\) and
    \(g(x) = g(\matchingRun{m}{\pi}{\pi'}(x))\) for all
    \(x \in \dom{\matchingRun{m}{\pi}{\pi'}}\)
    with \(x \in \dom{m}\) or \(x = x_{p_k}\), $k \in \{1,\dotsb,n\}$,
    such that $x_{p_k}$ is started along \(\pi\) and
    \(x'_{p_k}\) is started along \(\pi'\).
\end{lemma}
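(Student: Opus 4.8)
The plan is to prove the statement by induction on $n = \lengthOf{w}$, using determinism of $\M$ as the main tool to propagate equalities one transition at a time. In the base case $n = 0$, the runs $\pi$ and $\pi'$ are the single states $p_0$ and $p'_0$, so $\funcSim(\pi) = f(p_0) = f(p'_0) = \funcSim(\pi')$ by hypothesis, and $\matchingRun{m}{\pi}{\pi'} = m$; the timer-agreement claim then reduces exactly to the hypothesis $g(x) = g(m(x))$ for all $x \in \dom{m}$.

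For the inductive step, write $w = w' \cdot i$ with $\lengthOf{w'} = n - 1$, so that $\pi = (p_0 \xrightarrow{w'} p_{n-1}) \cdot (p_{n-1} \xrightarrow[u]{i/o} p_n)$ and, correspondingly, $\pi' = (p'_0 \xrightarrow{w'} p'_{n-1}) \cdot (p'_{n-1} \xrightarrow[u']{i'/o'} p'_n)$ with $i' = \matchingRun{m}{\pi}{\pi'}(i)$. The length-$(n-1)$ prefix of $\pi' = \copyRun{p'_0}[\pi]$ is again $\copyRun{p'_0}$ of the corresponding prefix of $\pi$, so the induction hypothesis applies to these prefixes: it gives $f(p_{n-1}) = f(p'_{n-1})$, $g(i_\ell) = g(i'_\ell)$ for all $\ell < n$ (i.e.\ $\funcSim$ of the prefixes agree), and $g$ in agreement with the prefix matching on the timers it has started so far.

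The crux is then to establish $g(i) = g(i')$ by case analysis on $i$. If $i \in I$ then $i' = i$. If $i = \timeout{x}$ with $x \in \dom{m}$, then $i' = \timeout{m(x)}$ and $g(x) = g(m(x))$ holds by hypothesis. The interesting case is $i = \timeout{x_{p_k}}$ with $k < n$, where $i' = \timeout{x_{p'_k}}$, and one must show $g(x_{p_k}) = g(x_{p'_k})$; here I would exploit the observation-tree invariants of \Cref{def:tree}: since $\timeout{x_{p_k}}$ is enabled at $p_{n-1}$, the timer $x_{p_k}$ is active in $\tree$, and as $x_{p_k}$ can only be \emph{started} by the incoming transition of $p_k$ (any other transition touching $x_{p_k}$ merely restarts it, which presupposes it already active), that incoming transition — the $k$-th transition of $\pi$ — must carry an update $(x_{p_k}, c)$; symmetrically the $k$-th transition of $\pi'$ carries $(x_{p'_k}, c')$. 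Applying \eqref{eq:simulation:update:known} to both transitions yields transitions of $\M$ out of $f(p_{k-1})$ and $f(p'_{k-1})$, equal by the induction hypothesis, reading $g(i_k) = g(i'_k)$, also equal by the induction hypothesis, with updates $(g(x_{p_k}), c)$ and $(g(x_{p'_k}), c')$; determinism of $\M$ forces these two $\M$-transitions to coincide, whence $g(x_{p_k}) = g(x_{p'_k})$ (and $c = c'$). With $g(i) = g(i')$ and $f(p_{n-1}) = f(p'_{n-1})$ in hand, determinism of $\M$ provides a unique transition out of $f(p_{n-1}) = f(p'_{n-1})$ reading $g(i) = g(i')$, and by \eqref{eq:simulation:update:known}/\eqref{eq:simulation:update:bot} both $\funcSim(p_{n-1} \xrightarrow{i} p_n)$ and $\funcSim(p'_{n-1} \xrightarrow{i'} p'_n)$ equal it; hence $f(p_n) = f(p'_n)$ and $\funcSim(\pi) = \funcSim(\pi')$. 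Finally, $\matchingRun{m}{\pi}{\pi'}$ extends the prefix matching only by the pair $(x_{p_n}, x_{p'_n})$: for the pairs already present the claim is the induction hypothesis, and for $(x_{p_n}, x_{p'_n})$, if $x_{p_n}$ is started along $\pi$ and $x_{p'_n}$ along $\pi'$ then the last transitions carry updates $(x_{p_n}, c)$ and $(x_{p'_n}, c')$, and \eqref{eq:simulation:update:known} together with $f(p_{n-1}) = f(p'_{n-1})$, $g(i) = g(i')$ and determinism of $\M$ gives $g(x_{p_n}) = g(x_{p'_n})$.

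I expect the main obstacle to be the $i = \timeout{x_{p_k}}$ case above: arguing rigorously that a timer appearing in a later $\timeout$-action was genuinely (re)started by its owner transition, which requires unwinding the observation-tree invariants (the "own timer" constraint of \Cref{def:tree} together with the characterization of active timers via spanning runs) so that \eqref{eq:simulation:update:known} — rather than only the non-informative \eqref{eq:simulation:update:bot} — can be applied. A secondary point of hygiene is to check that $g$ is indeed defined on every timer to which we apply \eqref{eq:simulation:update:known}, which again follows from those invariants.
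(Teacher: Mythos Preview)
Your proposal is correct and follows essentially the same approach as the paper: induction on $n$, case analysis on the last action to establish $g(i) = g(i')$, and determinism of $\M$ to propagate equalities. The only cosmetic difference is in the $\timeout{x_{p_k}}$ case: the paper simply invokes the second conclusion of the induction hypothesis (which already records $g(x_{p_k}) = g(x_{p'_k})$ once both timers are started), whereas you re-derive that equality inline via \eqref{eq:simulation:update:known} at step $k$; both are valid and amount to the same argument.
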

\begin{proof}
  We prove the lemma by induction over \(n\), the length of \(w\).

  \textbf{Base case:} \(\lengthOf{w} = 0\), i.e., \(w = \emptyword\).
  We thus have
  \begin{align*}
    \pi = p_0 &\xrightarrow{\emptyword} p_0
    &\text{and}&&
    \pi' = p'_0 &\xrightarrow{\emptyword} p'_0
  \shortintertext{which means that we have the following runs in \(\M\)}
    f(p_0) &\xrightarrow{\emptyword} f(p_0)
    &\text{and}&& f(p'_0) &\xrightarrow{\emptyword} f(p'_0).
  \end{align*}
  As \(f(p_0) = f(p'_0)\), these runs of $\M$ are equal.
  Moreover, as \(\matchingRun{m}{\pi}{\pi'} = m\), the second part of the lemma
  holds.

  \textbf{Induction step:}
  Let \(\ell \in \nat\) and assume the lemma holds for length \(\ell\).
  Let \(v = i_1 \dotsb i_{\ell+1} = w \cdot i_{\ell+1}\) be a word of length
  \(\ell + 1\) such that
  \(p_0 \xrightarrow{i_1} \dotsb \xrightarrow{i_\ell} p_\ell
    \xrightarrow{i_{\ell+1}} p_{\ell+1} \in \runs{\tree}\) and
  \(\copyRun{p'_0}[p_0 \xrightarrow{w \cdot i_{\ell+1}} p_{\ell+1}] =
    p'_0 \xrightarrow{i'_1} \dotsb
    \xrightarrow{i'_\ell} p'_n \xrightarrow{i'_{\ell+1}} p'_{\ell+1} \in \runs{\tree}\).
  Let $\pi = p_0 \xrightarrow{w} p_\ell$ and $\pi' = \copyRun{p'_0}[\pi] = p'_0 \xrightarrow{w'} p'_\ell$.
  By the induction hypothesis with $w$, it holds that
  \begin{itemize}
    \item
    the runs $\funcSim(\pi)$ and $\funcSim(\pi')$ are equal, and 
    \item
    \(g(x) = g(\matchingRun{m}{\pi}{\pi'}(x))\) for all started timers
    \(x \in \dom{\matchingRun{m}{\pi}{\pi'}}\).
  \end{itemize}
  It is thus sufficient to show that
  \begin{itemize}
    \item 
    $\funcSim(p_\ell \xrightarrow{i} p_{\ell+1}) = \funcSim(p'_\ell \xrightarrow{i'} p'_{\ell+1})$, and
    \item \(g(x_{p_{\ell+1}}) = g(x_{p'_{\ell+1}})\) if both $x_{p_{\ell+1}}$
    and $x_{p'_{\ell+1}}$ are started, i.e., if
    $x_{p_{\ell+1}} \in \activeTimers(p_{\ell+1})$
    and $x_{p'_{\ell+1}} \in \activeTimers(p'_{\ell+1})$.
  \end{itemize}

  By definition of
  \(\copyRun{p'_0}[p_0 \xrightarrow{w \cdot i_{\ell+1}} p_{\ell+1}]\), we have
  \[
    i'_{\ell+1} = \begin{cases}
      i_{\ell+1} & \text{if \(i_{\ell+1} \in I\)}\\
      \timeout{m(x)} & \text{if \(i_{\ell+1} = \timeout{x}\) with \(x \in \dom{m}\)}\\
      \timeout{x_{p'_k}} & \text{if \(i_{\ell+1} = \timeout{x_{p_k}}\)
        with \(k \in \{1, \dotsc, \ell\}\)}
    \end{cases}
  \]
  We can be more precise for the last case, i.e., when
  \(i_{\ell+1} = \timeout{x_{p_k}}\) with \(k \in \{1, \dotsc, \ell\}\).
  As \(p_{\ell} \xrightarrow{\timeout{x_{p_k}}} {} \in \runs{\tree}\), it must be
  that \(x_{p_k} \in \activeTimers(p_\ell)\).
  Hence, by definition of an observation tree,
  \(x_{p_k} \in \activeTimers(p_k)\).
  Likewise, as \(p'_\ell \xrightarrow{\timeout{x_{p'_k}}} {} \in \runs{\tree}\),
  it follows that \(x_{p'_k} \in \activeTimers(x_{p'_k})\).
  Hence, \(g(x_{p_k})\) and \(g(x_{p'_k})\) are both defined when the third case
  holds.

  By definition of \(g\), it holds that
  \begin{align*}
    g(i_{\ell+1}) &= \begin{cases}
      i_{\ell+1} & \text{if \(i_{\ell+1} \in I\)}\\
      \timeout{g(x)} & \text{if \(i_{\ell+1} = \timeout{x}\) with \(x \in \dom{m}\)}\\
      \timeout{g(x_{p_k})} & \text{if \(i_{\ell+1} = \timeout{x_{p_k}}\) with
        \(k \in \{1, \dotsc, \ell\}\)}
    \end{cases}
    \shortintertext{and}
    g(i'_{\ell+1}) &= \begin{cases}
      i'_{\ell+1} & \text{if \(i'_{\ell+1} \in I\)}\\
      \timeout{g(m(x))} &
        \text{if \(i'_{\ell+1} = \timeout{m(x)}\) with \(x \in \dom{m}\)}\\
      \timeout{g(x_{p'_k})} &
        \text{if \(i'_{\ell+1} = \timeout{x_{p'_k}}\) with \(k \in \{1, \dotsc, \ell\}\).}
    \end{cases}
  \end{align*}
  We have
  \begin{itemize}
      \item
      \(i'_{\ell+1} = i_{\ell+1}\) if \(i_{\ell+1} \in I\),
      \item
      \(g(m(x)) = g(x)\) for all \(x \in \dom{m}\) by the lemma statement, and
      \item
      by the induction hypothesis, \(g(x_{p'_k}) = g(x_{p_k})\) for all
      \(k \in \{1, \dotsc, \ell\}\)
      such that \(x_{p_k} \in \activeTimers(p_k)\) and \(x_{p'_k} \in \activeTimers(p'_k)\).
  \end{itemize}
  It follows that \(g(i'_{\ell+1}) = g(i_{\ell+1})\).

  As \(f(p_\ell) = f(p'_\ell)\) by induction hypothesis and \(g(i_{\ell+1}) = g(i'_{\ell+1})\),
  it holds by determinism of \(\M\) that
  $\funcSim(p_\ell \xrightarrow{i} p_{\ell+1}) = \funcSim(p'_\ell \xrightarrow{i'} p'_{\ell+1})$.

  To complete the proof of the lemma, it remains to prove that if
  $x_{p_{\ell+1}} \in \activeTimers(p_{\ell+1})$ and
  $x_{p'_{\ell+1}} \in \activeTimers(p'_{\ell+1})$, then
  \(g(x_{p_{\ell+1}}) = g(x_{p'_{\ell+1}})\).
  We have $x_{p_{\ell+1}} \in \activeTimers(p_{\ell+1})$ (resp.
  $x_{p'_{\ell+1}} \in \activeTimers(p'_{\ell+1})$) if the update of the
  transition $p_\ell \xrightarrow{i} p_{\ell+1}$
  (resp. $p'_\ell \xrightarrow{i'} p'_{\ell+1}$) is equal to $(x_{p_{\ell+1}},c)$
  for some $c$ (resp. $(x_{p'_{\ell+1}},c')$ for some $c'$).
  As $\funcSim(p_\ell \xrightarrow{i} p_{\ell+1}) =
  \funcSim(p'_\ell \xrightarrow{i'} p'_{\ell+1})$ and $\funcSim$ is a functional
  simulation, by~\eqref{eq:simulation:update:known}, we get that
  \(g(x_{p_{\ell+1}}) = g(x_{p'_{\ell+1}})\) and $c = c'$.
\qed\end{proof}

We are now ready the second part of \Cref{thm:extension-n-soundness}.

\begin{thmSoundnessPart}
  Let \(\tree\) be an observation tree for a \good \MMT \(\M\) with the
  functional simulation \(\funcSim\), \(p, p' \in Q^\tree\), and
  \(m : p \leftrightarrow p'\) be a matching.
  Then,
  \(p \apart^m p' \implies f(p) \neq f(p') \lor \exists x \in \dom{m} :
  g(x) \neq g(m(x))\).
\end{thmSoundnessPart}
\begin{proof}
  Towards a contradiction, assume
  \begin{itemize}
    \item
    \(w = i_1 \dotsb i_n \witness p \apart^m p'\),
    \item
    \(f(p) = f(p')\), and
    \item
    \(\forall x \in \dom{m} : g(x) = g(m(x))\).
  \end{itemize}
  Let us consider the two runs of \(\tree\) that showcase
  \(w \witness p \apart^m p'\):
  \begin{align*}
    \pi = p_0 \xrightarrow[u_1]{i_1/o_1} \dotsb \xrightarrow[u_n]{i_n/o_n} p_n
    &&\text{and}&&
    \pi' = \copyRun{p'_0}[\pi] =
      p'_0 \xrightarrow[u'_1]{i'_1/o'_1} \dotsb \xrightarrow[u'_n]{i'_n/o'_n} p'_n
  \end{align*}
  with \(p_0 = p\) and \(p'_0 = p'\).
  By \Cref{lemma:soudness:runsInM}, we thus have
  \(\funcSim(\pi) = \funcSim(\pi')\) and
  \(g(x) = g(\matchingRun{m}{\pi}{\pi'}(x))\) for all started timers
  \(x \in \dom{\matchingRun{m}{\pi}{\pi'}}\).
  In particular, the equality of the runs holds for the last transition:
  \begin{equation}\label{eq:soundness:equalRuns}
  f(p_{n-1}) \xrightarrow[u]{g(i_n)/o} f(p_n)
  =
  f(p'_{n-1}) \xrightarrow[u]{g(i'_n)/o} f(p'_n).
  \end{equation}
  Notice the same output and update, by determinism of \(\M\).

  First, if \(w \witness p_0 \apart^m p'_0\) is structural, then there must
  exist a timer \(x \in \dom{\matchingRun{m}{\pi}{\pi'}}\)
  such that \(x \timerApart \matchingRun{m}{\pi}{\pi'}(x)\).
  By definition of the timer apartness,
  \(x \neq \matchingRun{m}{\pi}{\pi'}(x)\) and there must
  exist a state \(q\) such that
  \(x, \matchingRun{m}{\pi}{\pi'}(x) \in \activeTimers(q)\).
  By~\eqref{eq:simulation:different_timers},
  \(g(x) \neq g(\matchingRun{m}{\pi}{\pi'}(x))\), which is a contradiction.

  Hence, assume \(w \witness p_0 \apart^m p'_0\) is behavioral.
  Let us show that each possibility leads to a contradiction.

  \paragraph{\eqref{eq:apartness:outputs}.}
  Assume~\eqref{eq:apartness:outputs} holds, i.e., \(o_n \neq o'_n\).
  By~\eqref{eq:simulation:transition} and~\eqref{eq:soundness:equalRuns},
  we get \(o_n = o = o'_n\), which is a contradiction.

  \paragraph{\eqref{eq:apartness:constants}.}
  Assume~\eqref{eq:apartness:constants} holds, i.e., \(u_n = (x, c)\)
  and \(u'_n = (x', c')\) with \(c \neq c'\).
  By~\eqref{eq:simulation:update:known},
  we get
  \begin{align*}
    f(p_n{n-1}) \xrightarrow[(g(x), c)]{g(i_n)} f(p_n)
    &&\text{and}&&
    f(p'_n) \xrightarrow[(g(x'), c')]{g(i_n)} f(p'_n).
  \end{align*}
  Thus, by~\eqref{eq:soundness:equalRuns}, \((g(x), c) = (g(x'), c')\), i.e.,
  \(c = c'\), which is a contradiction.

  \paragraph{\eqref{eq:apartness:sizesEnabled}.}
  Assume~\eqref{eq:apartness:sizesEnabled} holds, i.e.,
  \(p_n, p'_n \in \explored\) and
  \(\lengthOf{\enabled{p_n}[\tree]} \neq \lengthOf{\enabled{p'_n}[\tree]}\).
  By definition of \(\explored\), it follows that
  \begin{align*}
    \lengthOf{\enabled{p_n}[\tree]} = \lengthOf{\enabled{f(p_n)}[\M]}
    &&\text{and}&&
    \lengthOf{\enabled{p'_n}[\tree]} = \lengthOf{\enabled{f(p'_n)}[\M]}.
  \end{align*}
  That is,
  \(\lengthOf{\enabled{f(p_n)}[\M]} \neq \lengthOf{\enabled{f(p'_n)}[\M]}\),
  which is in contradiction with \(f(p_n) = f(p'_n)\).

  \paragraph{\eqref{eq:apartness:enabled}.}
  Assume~\eqref{eq:apartness:enabled} holds, i.e.,
  \(p_n, p'_n \in \explored\) and there exists
  \(x \in \dom{\matchingRun{m}{\pi}{\pi'}}\) such that
  \(x \in \enabled{p_n}[\tree]\) if and only if
  \(\matchingRun{m}{\pi}{\pi'}(x) \notin \enabled{p'_n}[\tree]\).

  It is not obvious that \(\matchingRun{m}{\pi}{\pi'}(x)\) belongs to the domain
  of \(g\), as \(\matchingRun{m}{\pi}{\pi'}(x)\) may never be started. We will
  argue that this timer is necessarily started. For now, assume that it belongs
  to the domain of \(g\).
  This leads to a contradiction since by the following derivation:
  \begin{eqnarray*}
  	x \in \enabled{p_n}[\tree] & \Leftrightarrow &
  	  (\mbox{by~\eqref{eq:simulation:enabled} and } p_n \in \explored)\\
  	g(x) \in \enabled{f(p_n)}[\M]  & \Leftrightarrow &
  	(\mbox{by \Cref{lemma:soudness:runsInM}})\\
  	(g(\matchingRun{m}{\pi}{\pi'}(x)) \in \enabled{f(p_n)}[\M])  & \Leftrightarrow & (\mbox{by~\eqref{eq:soundness:equalRuns}})\\
  	(g(\matchingRun{m}{\pi}{\pi'}(x)) \in \enabled{f(p'_n)}[\M])  & \Leftrightarrow & (\mbox{by~\eqref{eq:simulation:enabled} and } p'_n \in \explored)\\
  	\matchingRun{m}{\pi}{\pi'}(x) \in \enabled{p'_n}[\tree]. & &
  \end{eqnarray*}

  The last item is a contradiction with
  \(\matchingRun{m}{\pi}{\pi'}(x) \notin \enabled{p'_n}[\tree]\).
  That is, it suffices to show that \(\matchingRun{m}{\pi}{\pi'}(x) \in \dom{g}\),
  i.e., that \(\matchingRun{m}{\pi}{\pi'}(x)\) is active in some state of \(\tree\).
  We have two cases to consider.

  \begin{itemize}
    \item
    If \(x \in \dom{m}\), then, by definition, \(x \in \activeTimers^\tree(p_0)\)
    and \(\matchingRun{m}{\pi}{\pi'}(x) = m(x) \in \activeTimers^\tree(p'_0)\).
    Thus, \(\matchingRun{m}{\pi}{\pi'}(x) \in \dom{g}\).

    \item
    If \(x \notin \dom{m}\), then, there must exist some
    \(k \in \{0, \dotsc, n - 1\}\) such that \(x = x_{p_k} \in \enabled{p_n}[\tree]\).
    That is, \(u_k = (x_{p_k}, c)\) (with \(c \in \natplus\)).
    This necessarily means that \(i_k \in I\) (as we start a fresh timer in
    \(\tree\)).
    Hence, \(i'_k = i_k \in I\).
    Moreover, by definition of \(\matchingRun{m}{\pi}{\pi'}\), it must be that
    \(\matchingRun{m}{\pi}{\pi'}(x_{p_k}) = x_{p'_k}\)
    (recall that \(x = x_{p_k} \in \dom{\matchingRun{m}{\pi}{\pi'}}\) and
    \(x \notin \dom{m}\)).
    It remains to prove that \(x_{p'_k}\) is started.
    To do so, we will rely on the fact that a sub-run of \(\pi\) is
    \(x_{p_k}\)-spanning and deduce from there that the corresponding sub-run
    of \(\pi'\) must be \(x_{p'_k}\)-spanning, which can only hold when
    \(x_{p'_k}\) is effectively started.

    First, let us extend the run \(\pi\)
    by adding the transition reading \(\timeout{x_{p_k}}\) from \(p_n\):
    \(
      p_0
      \xrightarrow{i_1} \dotsb
      \xrightarrow[(x_{p_k}, c)]{i_k} p_k
      \xrightarrow{i_{k+1}} \dotsb
      \xrightarrow{i_n} p_n
      \xrightarrow{i_{n+1}} p_{n+1}
    \),
    with \(i_{n+1} = \timeout{x_{p_k}}\).
    This run necessarily exists, as \(x_{p_k} \in \enabled{p_n}[\tree]\).

    Second, we consider the sub-run of \(\pi\) from \(p_{k-1}\) that is
    \(x_{p_k}\)-spanning:
    observe that there exists some index \(\ell \in \{k + 1, \dotsc, n+1\}\)
    such that \(i_{\ell} = \timeout{x_{p_k}}\) and the run
    \[
      \sigma = p_{k-1}
      \xrightarrow[(x_{p_k}, c)]{i_k} p_k
      \xrightarrow{i_{k+1}} \dotsb
      \xrightarrow{i_\ell} p_{\ell}
    \]
    is \(x_{p_k}\)-spanning.
    (Notice that \(\ell\) is the index of the first timeout of \(x_{p_k}\) in
    \(\sigma\), by definition of a spanning run.
    Since \(i_\ell = \timeout{x_{p_k}}\), we may have \(\ell = n + 1\).)
    By~\eqref{eq:simulation:spanning:fromTree}, it follows that
    \[
      \funcSim(\sigma) = f(p_{k-1})
      \xrightarrow[(g(x_{p_k}), c)]{g(i_k)} f(p_k)
      \xrightarrow{g(i_{k+1})} \dotsb
      \xrightarrow{g(i_\ell)} f(p_\ell)
    \]
    is \(g(x_{p_k})\)-spanning.

    Third, by~\eqref{eq:soundness:equalRuns}, there must exist some timer
    \(y \in \enabled{f(p'_{\ell - 1})}[\M]\) such that
    \[
      \lambda = f(p'_{k-1})
      \xrightarrow[(y, c)]{g(i'_k)} f(p'_k)
      \xrightarrow{g(i'_{k+1})} \dotsb
      \xrightarrow{g(i'_\ell)} f(p'_\ell)
    \]
    is \(y\)-spanning.
    Note that \(y = g(x_{p_k})\).

    Fourth, notice that \(p_{\ell - 1}, p'_{\ell - 1} \in \explored\), as
    \(p_n, p'_n \in \explored\), \(\ell - 1 \leq n\), and \(\explored\) forms
    a subtree of \(\tree\).
    Therefore, there exists \(z \in \enabled{p'_{\ell - 1}}[\tree]\) such that
    \(g(z) = y = g(x_{p_k})\).
    Moreover, as \(\pi' = \copyRun{p'_0}[\pi]\) is already known to be an
    existing run of \(\tree\), it follows that
    \[
      \sigma' =
      p'_{k-1}
      \xrightarrow[(z, c)]{i'_k} p'_k
      \xrightarrow{i'_{k+1}} \dotsb
      \xrightarrow{i'_\ell} p'_\ell
    \]
    is a run of \(\tree\), and \(\funcSim(\sigma') = \lambda\).

    Finally, by~\eqref{eq:simulation:spanning},
    since \(\funcSim(\sigma')\) is \(g(z)\)-spanning,
    \(\sigma'\) is \(z\)-spanning, i.e., \(i'_\ell = \timeout{z}\).
    As \(i'_k = i_k \in I\), we necessarily have \(z = x_{p'_k}\), i.e.,
    \(x_{p'_k}\) is started and, thus, belong to \(\dom{m}\), as we wanted.
  \end{itemize}
  Every case leads to a contradiction.
  So, \(f(p) \neq f(p')\) or \(g(x) \neq g(m(x))\) for some \(x \in \dom{m}\).
\qed\end{proof}

\subsection{Weak co-transitivity}\label{app:tree:cotransitivity}

In \lsharp~\cite{VaandragerGRW22}, the learner can reduce the
number of hypotheses that can be constructed from a tree by exploiting the
\emph{weak co-transitivity} lemma, stating that, if we can read a witness
\(w\) of the apartness (defined for classical Mealy machines) \(p \apart p'\)
from some state \(r\), then \(p \apart r\) or \(p' \apart r\) (or both).
Hence, before folding the tree to obtain a hypothesis, it is possible to ensure
that each frontier state can be mapped to a single basis state.

For \MMTs, this is trickier, as we have to take into account the timers
and the mappings.
That is, our version of
the weak co-transitivity lemma states that if we can read a witness \(w\) of the
\emph{\behavioral} apartness \(p_0 \apart^{m} p'_0\) from a third state \(r_0\)
via some matching \(\mu : p_0 \leftrightarrow r_0\), then we can conclude that
\(p_0\) and \(r_0\) are
\(\mu\)-apart or that \(p'_0\) and \(r_0\) are ($\mu \circ m^{-1}$)-apart.
However, when \(p_0 \apart^m p'_0\) is due
to~\eqref{eq:apartness:constants}, we need to extend the witness.
In this case, since \(x\) is active in \(p_n\) (as $u = (x,c)$),
there must exist an \(x\)-spanning run
\(p_{n-1} \xrightarrow[u]{i_n/o} p_n \xrightarrow{w^x}\)
(by definition of an observation tree, see \Cref{def:tree}).
Hence, we actually \enquote{read} \(w \cdot w^x\) from \(r_0\),
in order to ensure that an update \((x', c')\) is present on the last
transition of \(\copyRun[\mu]{r_0}[p_0 \xrightarrow{w}]\).

Notice that the lemma requires that \(\dom{m} \subseteq \dom{\mu}\) for the
matching \(\mu \circ m^{-1}\).
See \Cref{fig:matching:composition} for
illustrations of a well- and an ill-defined \(\mu \circ m^{-1}\).

\begin{figure}[t]
  \centering
  \begin{subfigure}{.45\textwidth}
    \centering
    \begin{tikzpicture}[
  timer graph,
  set of timers/.append style = {
    inner xsep = 4pt,
    inner ysep = 0pt,
  },
  node distance = 10pt,
]
  \node [timer]                 (r x1) {};
  \node [timer, below=of r x1]  (r x2) {};
  \node [timer, below=of r x2]  (r x3) {};

  \path
    let
      \p{dist} = (1.3, 0),
      \p1 = ($(r x1)!.5!(r x2) + (\p{dist})$),
      \p2 = ($(r x1) + 2*(\p{dist})$)
    in
      node [timer]                        (p x1)  at (\p1) {}
      node [timer]                        (pp x1) at (\p2) {}
  ;
  \node [timer, below=of p x1]  (p x2)  {};

  \node [timer, below=of pp x1] (pp x2) {};
  \node [timer, below=of pp x2] (pp x3) {};

  \node [
    set of timers,
    label=left:$\activeTimers(r_0)$,
    fit=(r x1) (r x2) (r x3)
  ] {};
  \node [
    set of timers,
    label=above:$\activeTimers(p_0)$,
    fit=(p x1) (p x2)
  ] {};
  \node [
    set of timers,
    label=right:$\activeTimers(p'_0)$,
    fit=(pp x1) (pp x2) (pp x3)
  ] {};

  \path [mapping]
    (p x1)  edge (pp x2)
  ;
  \path [mapping, dashed]
    (p x1)  edge (r x1)
    (p x2)  edge (r x3)
  ;
  \path [mapping, densely dotted]
    (pp x2) edge [out=-150, in=-40, looseness=0.3] (r x1)
  ;
\end{tikzpicture}
  \caption{Well-defined.}
  \end{subfigure}
  \hfill
  \begin{subfigure}{.5\textwidth}
    \centering
    \begin{tikzpicture}[
  timer graph,
  set of timers/.append style = {
    inner xsep = 4pt,
    inner ysep = 0pt,
  },
  node distance = 10pt,
]
  \node [timer]                 (r x1) {};
  \node [timer, below=of r x1]  (r x2) {};
  \node [timer, below=of r x2]  (r x3) {};

  \path
    let
      \p{dist} = (1.3, 0),
      \p1 = ($(r x1)!.5!(r x2) + (\p{dist})$),
      \p2 = ($(r x1) + 2*(\p{dist})$),
      \p3 = ($(r x3) + .5*(\p{dist})$)
    in
      node [timer]                        (p x1)  at (\p1) {}
      node [timer]                        (pp x1) at (\p2) {}
      node [inner sep=0pt, outer sep=0pt] (no)    at (\p3) {?}
  ;
  \node [timer, below=of p x1]  (p x2)  {};

  \node [timer, below=of pp x1] (pp x2) {};
  \node [timer, below=of pp x2] (pp x3) [label=right:$x$] {};

  \node [
    set of timers,
    label=left:$\activeTimers(r_0)$,
    fit=(r x1) (r x2) (r x3)
  ] {};
  \node [
    set of timers,
    label=above:$\activeTimers(p_0)$,
    fit=(p x1) (p x2)
  ] {};
  \node [
    set of timers,
    inner xsep = 12pt,
    label=right:$\activeTimers(p'_0)$,
    fit=(pp x1.east) (pp x2.east) (pp x3.east)
  ] {};

  \path [mapping]
    (p x1)  edge (pp x2)
    (p x2)  edge (pp x3)
  ;
  \path [mapping, dashed]
    (p x1)  edge (r x1)
;
  \path [mapping, densely dotted]
    (pp x2) edge [out=-150, in=-40, looseness=0.3] (r x1)
    (pp x3) edge [-, bend left=20] (no)
  ;
\end{tikzpicture}
  \caption{Ill-defined: \((\mu \circ m^{-1})(x)\) has no value.}
  \end{subfigure}
  \caption{Visualizations of compositions \(\mu \circ m^{-1}\) where \(m\) is
  drawn with solid lines, \(\mu\) with dashed lines, and \(\mu \circ m^{-1}\)
  with dotted lines.}\label{fig:matching:composition}
\end{figure}

\begin{lemma}[Weak co-transitivity]\label{lemma:coTransitivity}
    Let \(p_0, p'_0, r_0 \in Q^\tree\), \(m : p_0 \leftrightarrow p'_0\) and
  \(\mu : p_0 \leftrightarrow r_0\) be two matchings such that
  \(\dom{m} \subseteq \dom{\mu}\).
  Let \(w = i_1 \ldots i_n\) be a witness of the \behavioral apartness
  \(p_0 \apart^m p'_0\) and
  $\copyRun[m]{p'_0}[p_0 \xrightarrow{w} p_n] = p'_0 \xrightarrow{w'} p'_n$.
  Let \(w^x\) be defined as follows:
  \begin{itemize}
    \item if \(p_0 \apart^m p'_0\) due to~\eqref{eq:apartness:constants}, $w^x$
    is a word such that \(p_{n-1} \xrightarrow{i_n} p_n \xrightarrow{w^x}\)
    is $x$-spanning,
    \item otherwise, \(w^x = \emptyword\). 
  \end{itemize}
  If $\copyRun[\mu]{r_0}[p_0 \xrightarrow{w \cdot w^x}] \in \runs{\tree}$
  with $r_n \in \explored$, then
  \(p_0 \apart^{\mu} r_0\) or \(p'_0 \apart^{\mu \circ m^{-1}} r_0\).
\end{lemma}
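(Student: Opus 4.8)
The argument mirrors weak co-transitivity for $L^\#$ but has to carry the timer matchings along. Write $p_0 = p$, $p'_0 = p'$, $r_0 = r$, and let $\pi = p_0 \xrightarrow{i_1} \dotsb \xrightarrow[u]{i_n/o} p_n$, $\pi' = \copyRun[m]{p'_0}[\pi] = p'_0 \xrightarrow{i'_1} \dotsb \xrightarrow[u']{i'_n/o'} p'_n$, and note that $\rho \coloneq \copyRun[\mu]{r_0}[\pi] = r_0 \xrightarrow{i''_1} \dotsb \xrightarrow[u'']{i''_n/o''} r_n$ is a run of $\tree$, since $\copyRun[\mu]{r_0}[p_0 \xrightarrow{w \cdot w^x}]$ is. The first step is to observe that $\mu \circ m^{-1}$ is a well-defined matching $p'_0 \leftrightarrow r_0$ — this is exactly where the hypothesis $\dom{m} \subseteq \dom{\mu}$ is used, so that $(\mu \circ m^{-1})(m(x)) = \mu(x)$ for every $x \in \dom{m}$ — and, crucially, that $\copyRun[\mu \circ m^{-1}]{r_0}[\pi'] = \rho$ with $\matchingRun{\mu \circ m^{-1}}{\pi'}{\rho}\bigl(\matchingRun{m}{\pi}{\pi'}(z)\bigr) = \matchingRun{\mu}{\pi}{\rho}(z)$ for all $z \in \dom{\matchingRun{m}{\pi}{\pi'}}$. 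Both facts follow by a straightforward induction on $\lengthOf{w}$: a timeout $\timeout{z}$ of $\pi$ on a timer $z \in \dom{m}$ becomes $\timeout{m(z)}$ in $\pi'$ and $\timeout{\mu(z)}$ in $\rho$, while a timeout $\timeout{x_{p_k}}$ on a freshly started timer becomes $\timeout{x_{p'_k}}$ in $\pi'$ and $\timeout{x_{r_k}}$ in $\rho$. Intuitively, $\rho$ is simultaneously the $\mu$-copy of $\pi$ and the $(\mu \circ m^{-1})$-copy of $\pi'$ starting from $r_0$.

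It then suffices to compare the observation made at $r_n$ along $\rho$ with those at $p_n$ along $\pi$ and at $p'_n$ along $\pi'$, splitting on which clause of \Cref{def:apartness} makes $w$ a witness of the \behavioral apartness $p_0 \apart^m p'_0$. For~\eqref{eq:apartness:outputs}, the output $o''$ of the last transition of $\rho$ differs from $o$ or from $o'$; in the first case $w \witness p_0 \apart^{\mu} r_0$ by~\eqref{eq:apartness:outputs} on $(\pi, \rho)$, in the second $w' \witness p'_0 \apart^{\mu \circ m^{-1}} r_0$ by~\eqref{eq:apartness:outputs} on $(\pi', \rho)$ (using the previous paragraph). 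For~\eqref{eq:apartness:sizesEnabled}, $r_n \in \explored$ makes $\lengthOf{\enabled{r_n}}$ well-defined, and it differs from $\lengthOf{\enabled{p_n}}$ or from $\lengthOf{\enabled{p'_n}}$; conclude as before. For~\eqref{eq:apartness:enabled}, take the distinguishing timer $z$, say $z \in \enabled{p_n}$ and $\hat{z} \coloneq \matchingRun{m}{\pi}{\pi'}(z) \notin \enabled{p'_n}$, set $\tilde{z} \coloneq \matchingRun{\mu}{\pi}{\rho}(z)$, which equals $\matchingRun{\mu \circ m^{-1}}{\pi'}{\rho}(\hat{z})$ by the compatibility identity, and split on whether $\tilde{z} \in \enabled{r_n}$: this yields $w \witness p_0 \apart^{\mu} r_0$ when $\tilde{z} \notin \enabled{r_n}$, and $w' \witness p'_0 \apart^{\mu \circ m^{-1}} r_0$ when $\tilde{z} \in \enabled{r_n}$, in both cases via~\eqref{eq:apartness:enabled} (here $p_n, p'_n, r_n$ are all explored). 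Finally, for~\eqref{eq:apartness:constants} with $u = (x,c)$, $u' = (x', c')$, $c \neq c'$, one shows $u'' = (x'', c'')$ for some constant $c''$; since $c''$ cannot equal both $c$ and $c'$, we conclude by~\eqref{eq:apartness:constants} on $(\pi, \rho)$ with witness $w$ or on $(\pi', \rho)$ with witness $w'$.

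The main obstacle is establishing, in the last case, that $u'' \neq \bot$ — i.e.\ that the $n$-th transition of $\rho$ carries a genuine update. This is exactly the role of $w^x$: by hypothesis, $\rho$ extended by the matched copy of $w^x$ lies in $\tree$, and its sub-run from the $n$-th transition is spanning (it copies the $x$-spanning sub-run $p_{n-1} \xrightarrow{i_n} p_n \xrightarrow{w^x}$ of $\pi \cdot w^x$, the copied timeout at its end being $\timeout{x''}$ with $x'' = \matchingRun{\mu}{\pi}{\rho}(x)$). Using the structural constraints of \Cref{def:tree} — a timer is active exactly when some spanning run traverses its state, and $x_q$ can be (re)started only by the incoming transition of $q$ or by a $\timeout{x_q}$-transition — one traces backwards from that forced $\timeout{x''}$ to conclude that $x''$ is (re)started precisely at the $n$-th transition, hence $u'' = (x'', c'')$. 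A secondary, purely bookkeeping difficulty is the split in~\eqref{eq:apartness:enabled} according to whether $z \in \dom{m}$ (so $\hat{z} = m(z)$, $\tilde{z} = \mu(z)$) or $z = x_{p_k}$ is fresh (so $\hat{z} = x_{p'_k}$, $\tilde{z} = x_{r_k}$); in both situations the identity $\matchingRun{\mu \circ m^{-1}}{\pi'}{\rho}(\hat{z}) = \matchingRun{\mu}{\pi}{\rho}(z)$ from the first paragraph makes the Boolean equivalences line up. Note finally that if $\mu$ or $\mu \circ m^{-1}$ happens to be invalid, the corresponding \emph{structural} apartness holds outright, so this never obstructs the construction.
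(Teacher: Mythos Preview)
Your plan is correct and follows essentially the same route as the paper's proof: identify that $\rho=\copyRun[\mu]{r_0}[\pi]$ coincides with $\copyRun[\mu\circ m^{-1}]{r_0}[\pi']$, then split on the four behavioral clauses, using $w^x$ in the \eqref{eq:apartness:constants} case to force $u''\neq\bot$. The only cosmetic differences are that the paper treats the possibility of structural apartness of $(p_0,r_0)$ or $(p'_0,r_0)$ up front (rather than as a closing remark), and that for \eqref{eq:apartness:constants} it makes the ``backward tracing'' explicit via a three-way case split on whether $i_n\in I$, $i_n=\timeout{z}$ with $z\in\dom{m}$, or $i_n=\timeout{x_{p_k}}$---exactly the case analysis your tree-constraint argument would unfold into.
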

\begin{proof}
  Let \(p_0, p'_0, r_0 \in Q^\tree\), and \(m : p_0 \leftrightarrow p'_0\) and
  \(\mu : p_0 \leftrightarrow r_0\) be two matchings such that
  \(\dom{m} \subseteq \dom{\mu}\).
  Let \(w \cdot w^x\), \(w'\), and the runs as described in the statement,
  \(n = \lengthOf{w}\), and \(\ell = \lengthOf{w \cdot w^x}\).
  Moreover, let \(v\) be the word labeling the run from \(p_0\), i.e., such that
  \(\copyRun[\mu]{r_0}[p_0 \xrightarrow{w \cdot w^x}] =
  r_0 \xrightarrow{v} r_\ell\).
  We write $w_j$ (resp.\ $w'_j$, $v_j$) for a symbol of $w \cdot w^x$
  (resp.\ $w'$, $v$).
  (So, \(n = \ell\) whenever \(w \witness p_0 \apart^m p'_0\) due to a condition
  that is not~\eqref{eq:apartness:constants}, and \(n < \ell\) otherwise.)
  We then have
  \begin{align*}
    &p_0 \xrightarrow{w_1} p_1 \xrightarrow{w_2} \dotsb
    \xrightarrow{w_n} p_n
    \xrightarrow{w_{n+1}} \dotsb \xrightarrow{w_\ell} p_\ell,
    \\
    \copyRun{p'_0} ={}
    &p'_0 \xrightarrow{w'_1} p'_1 \xrightarrow{w'_2} \dotsb
    \xrightarrow{w'_n} p'_n,
    \\
    \copyRun[\mu]{r_0}[p_0 \xrightarrow{w \cdot w^x} p_\ell] ={}
    &r_0 \xrightarrow{v_1} r_1 \xrightarrow{v_2} \dotsb
    \xrightarrow{v_n} r_n
    \xrightarrow{v_{n+1}} \dotsb \xrightarrow{v_\ell} r_\ell,
    \\
    \copyRun[\mu \circ m^{-1}]{r_0}[p'_0 \xrightarrow{w'} p'_n] =
    \copyRun[\mu]{r_0} ={}
    & r_0 \xrightarrow{v_1} r_1 \xrightarrow{v_2} \dotsb
    \xrightarrow{v_n} r_n
  \end{align*}
  with \(r_n \in \explored\), by hypothesis.
  The run from \(p'_0\) does not read \(w^x\) after $p'_n$.
  
  A first possibility is that $p_0 \apart^{\mu} r_0$ or $p'_0 \apart^{\mu \circ m^{-1}} r_0$ due to \structural apartness (with $w \cdot w^x$ or $w'$ as witness). If this does not happen, from $w$ being a witness of the \behavioral apartness \(p_0 \apart^m p'_0\), we have to show that \(p_0 \apart^\mu r_0\) or \(p'_0 \apart^{\mu \circ m^{-1}} r_0\) for 
  one case among~\eqref{eq:apartness:outputs},~\eqref{eq:apartness:constants},~\eqref{eq:apartness:sizesEnabled}, or~\eqref{eq:apartness:enabled}.
  We do it by a case analysis.
  Let \(o, o', \omega \in O\) such that
  \(p_{n-1} \xrightarrow{w_n/o} p_n, p'_{n-1} \xrightarrow{w'_n/o'} p'_n\),
  and \(r_{n-1} \xrightarrow{v_n/\omega} r_n\).
  \begin{itemize}
    \item
    If \(o \neq o'\), then, necessarily, \(\omega \neq o\) or \(\omega \neq o'\)
    and we can apply~\eqref{eq:apartness:outputs} to obtain \(p_0 \apart^\mu r_0\)
    or \(p'_0 \apart^{\mu \circ m^{-1}} r_0\).
    \item
    If \(\lengthOf{\enabled{p_n}} \neq \lengthOf{\enabled{p'_n}}\), then,
    necessarily, \(\lengthOf{\enabled{r_n}} \neq \lengthOf{\enabled{p_n}}\)
    or \(\lengthOf{\enabled{r_n}} \neq \lengthOf{\enabled{p'_n}}\).
    As \(p_n, p'_n, r_n \in \explored\), we can
    apply~\eqref{eq:apartness:sizesEnabled} and get \(p_0 \apart^\mu r_0\)
    or \(p'_0 \apart^{\mu \circ m^{-1}} r_0\).
    \item
    Suppose now that \(p_0 \apart^m p'_0\) is due to~\eqref{eq:apartness:enabled}. 
    \begin{itemize}
    \item If \(x \in \dom{m}\) and
    \(x \in \enabled{p_n} \iff m(x) \notin \enabled{p'_n}\), then
     \(x \in \dom{\mu}\) and, necessarily, depending on whether
     $\mu(x) \in \enabled{r_n}$ or $\mu(x) \notin \enabled{r_n}$, we have either
    \(x \in \enabled{p_n} \iff \mu(x) \notin \enabled{r_n}\)
    or
    \(m(x) \in \enabled{p'_n} \iff \mu(m^{-1}(m(x))) = \mu(x) \notin \enabled{r_n}\).
    Hence,~\eqref{eq:apartness:enabled} applies
    (as \(p_n, p'_n, r_n \in \explored\)).
    \item
    If \(x_{p_k} \in \enabled{p_n} \iff x_{p'_k} \notin \enabled{p'_n}\) for some
    \(k \in \{1, \dotsc, n\}\), we conclude with arguments similar to the
    previous case that~\eqref{eq:apartness:enabled} is also satisfied.
    \end{itemize}
    \item
    Finally, if none of the above holds, \(p_0 \apart^m p'_0\) is
    due to~\eqref{eq:apartness:constants}.
    We thus have
    \(p_{n-1} \xrightarrow[(x, c)]{w_n} p_n
      \xrightarrow{w_{n+1}} \dotsb \xrightarrow{w_\ell} p_\ell\),
    \(p'_{n-1} \xrightarrow[(x', c')]{w'_n} p'_n\), and
    \(r_{n-1} \xrightarrow[u]{v_n} r_n
      \xrightarrow{v_{n+1}} \dotsb \xrightarrow{v_\ell} r_\ell\)
    with \(c \neq c'\) and \(w_\ell = \timeout{x}\).
    Finally, let
    \begin{align*}
      y &= \begin{cases}
        \mu(x) & \text{if \(x \in \dom{m}\)}\\
        x_{r_k} & \text{if \(x = x_{p_k}\) for some \(k \in \{1,\ldots,n\}\)}
      \end{cases}
    \shortintertext{which means that}
      v_\ell = \timeout{y} &= \begin{cases}
        \timeout{\mu(x)} & \text{if \(x \in \dom{m}\)}\\
        \timeout{x_{r_k}} & \text{if \(x = x_{p_k}\) for some
          \(k \in \{1, \dotsc, n\}\).}
      \end{cases}
    \end{align*}

    We argue that \(u = (y, d)\) for some constant \(d\) that is distinct from
    either \(c\) or \(c'\).
    Once we have this,~\eqref{eq:apartness:constants} applies and we obtain the
    desired result.
    We have three cases:
    \begin{itemize}
      \item
      If \(w_n \in I\), it must be that \(x = x_{p_n}\) as an input transition
      can only start a fresh timer in \(\tree\).
      Then, \(v_n = w_n\) as \(w_n \in I\), \(w_\ell = \timeout{x_{p_n}}\), and
      \(y = x_{r_n}\).
      So, \(v_\ell = \timeout{x_{r_n}}\).
      Moreover, as the only transition that can start \(x_{r_n}\) for the first
      time is \(r_{n-1} \xrightarrow[u]{v_{n}} r_n\), we conclude that
      \(u = (y, d) = (x_{r_n}, d)\).
      \item
      If \(w_n = \timeout{x}\) with \(x \in \dom{m}\), then \(x \in \dom{\mu}\),
      \(w_n = w_\ell = \timeout{x}\), and
      \(v_n = v_\ell = \timeout{y} = \timeout{\mu(x)}\).
      Assume \(u = \bot\), i.e., we do not restart \(y\) from \(r_{n-1}\) to
      \(r_n\).
      In other words, \(y\) is not active in \(r_n\).
      Recall that, in an observation tree, it is impossible to start again a timer
      that was previously active (as, for every timer \(z\), there is a unique
      transition that can start \(z\) for the first time).
      So, \(y\) can not be active in \(r_{\ell-1}\).
      But, then, \(v_\ell\) can not be \(\timeout{y}\), which is a contradiction.
      Hence, \(u = (y, d) = (\mu(x), d)\).
      \item
      If \(w_n = \timeout{x_{p_k}}\) with \(k \in \{1,\ldots,n-1\}\), then
      \(w_n = w_\ell = \timeout{x_{p_k}}\) and \(v_n = v_\ell = \timeout{x_{r_k}}\).
      With arguments similar to the previous case, we conclude that \(u = (y, d)
      = (x_{r_k}, d)\).
    \end{itemize}
  \end{itemize}
\qed\end{proof}
 
\section{More details on the learning algorithm}\label{app:learning}

In this section, we give further details on the ideas introduced in
\Cref{sec:learning:hypo,sec:learning:algo}.
We first clarify the notion of \emph{replaying} a run from \Cref{ex:learning:basis}
and state a useful property.
We then introduce generalized \MMTs and explain how to construct one from \(\tree\)
in \Cref{app:learning:hypo}.
Finally, we prove \Cref{thm:learning:termination} in \Cref{app:learning:termination}.

\subsection{Replaying a run}\label{app:learning:replay}

Recall that~\ref{item:tree:basis:timers} requires that, for every
\(r \in \frontier\) and \((p, m) \in \compatible(r)\),
\(\lengthOf{\activeTimers^\tree(p)} = \lengthOf{\activeTimers^\tree(r)}\) holds.
In \Cref{ex:learning:basis}, we introduced the idea of replaying a run to
ensure that~\ref{item:tree:basis:timers} is satisfied: if \(p\) and \(r\)
do not have the same number of active timers, we extend the tree by mimicking
the run showing that some timer is active in \(p\) (i.e., a run starting from \(p\)
and ending in the timeout of that timer) from \(r\), using and extending
some matching.

Let us first formalize this algorithm.
Let \(p_0, p'_0 \in Q^\tree\), \(m : p_0 \leftrightarrow p'_0\) be a
matching, and \(w = i_1 \dotsb i_n\) be a word such that
\(p_0 \xrightarrow{i_1} p_1 \xrightarrow{i_2} \dotsb \xrightarrow{i_n} p_n
\in \runs{\tree}\).
We provide a function \(\replay{p'_0}\) that extends the tree by replaying the
run $p_0 \xrightarrow{w} p_n$ from $p'_0$ as much as possible, 
or we discover a new apartness pair
\(p_0 \apart^m p'_0\), or we discover a new active timer.
Intuitively, we replay the run transition by transition while performing
symbolic wait queries in every reached state in order to determine the enabled
timers (which extends $\explored$).
This may modify the number of active timers of \(p'_0\), meaning that \(m\) may
become non-maximal.
As we are only interested in maximal matchings, we stop early.
This may also induce a new apartness pair \(p_0 \apart^m p'_0\), and we also 
stop early (notice that this may already hold
without adding any state in \(\tree\)).
If the number of active timers of \(p'_0\) remains unchanged and no new apartness pair is discovered,
we consider the next symbol \(i\) of \(w\) and try to replay it.
Determining the next symbol \(i'\) to use in the run from \(p'_0\) follows the
same idea as for \(\copyRun{p'_0}\).
If \(i \in I\), then \(i' = i\) (recall that it is always possible to replay \(i\)
as \(\M\) is complete, since it is \good).
If \(i = \timeout{x}\), we have three cases:
\begin{itemize}
  \item \(x \in \dom{m}\), in which case \(i' = \timeout{m(x)}\);
  \item \(x = x_{p_k}\) is a fresh timer, i.e., \(p_k\) appears on the run from
  \(p_0\), in which case we consider the timer started on the corresponding
  transition from \(p'_0\): \(i' = \timeout{x_{p'_k}}\);
  \item none of the previous case holds:
  \(x \in \activeTimers^\tree(p_0) \setminus \dom{m}\) and we cannot replay \(i\).
\end{itemize}
To avoid this last case, we consider the longest prefix \(v\) of \(w\)
where each action \(i\) of \(v\) is an input or is such that \(m(i)\) is defined
or \(i = \timeout{x_{p_k}}\) for some state \(p_k\).

Formally, assume that we already replayed
\(p_0 \xrightarrow{i_1} p_1 \xrightarrow{i_2}
\dotsb \xrightarrow{i_{j-1}} p_{j-1}\) and obtained the run
\(p'_0 \xrightarrow{i'_1} p'_1 \xrightarrow{i'_2} \dotsb \xrightarrow{i'_{j-1}}
p'_{j-1}\), and we try to replay $i_j$ from $p'_{j-1}$.
We extend the tree with a symbolic output query when \(i_j \in I\)
and a symbolic wait query in every case.
If the wait query leads to a discovery of new active timers of \(p'_0\), we stop
and return \(\ACTIVE\).
If we can already deduce \(p_0 \apart^m p'_0\) from the replayed part, we also
stop and return \(\APART\).
Since \(\lnot (p_0 \apart^m p'_0)\) and by the output and wait queries, there
must exist \(p_{j-1} \xrightarrow{i'_j}\) such that
\begin{itemize}
  \item
  \(i'_j = i_j\) if \(i_j \in I\),
  \item
  \(i'_j = \timeout{m(x)}\) if \(i_j = \timeout{x}\) (\(m(x)\) is well-defined
  by the considered prefix \(v\) of \(w\)), or
  \item \(i'_j = \timeout{x_{p'_k}}\).
\end{itemize}
Indeed, if the timeout-transition is not defined, then
\(p_0 \apart^m p'_0\) by~\eqref{eq:apartness:enabled}.
Hence, we continue the procedure with the next symbol of \(w\).
If we completely replayed \(w\) and did not discover any new timer or apartness
pair, we return \DONE.
Otherwise, we perform one last wait query and check whether we obtain apartness
(by the following lemma, we return \ACTIVE otherwise).

We now give a lemma stating some properties of the replay function.
Namely, when \(\replay[m]{p_0}[\pi]\) successfully replays the complete run, it
follows that \(\copyRun[m]{p_0}[\pi]\) is well-defined and yields a run in
\(\tree\).
From that, we can deduce that, when \(\pi\) ends with a transition reading
the timeout of \(x\) and cannot be completely reproduced from \(p'_0\) (that has
less active timers than \(p_0\)), we either
have a new apartness pair or a new timer in \(p'_0\).
These properties are enough to conclude that one can
obtain~\ref{item:tree:basis:timers} (from page~\pageref{item:tree:basis:timers})
by successively replaying some runs, as the
set of compatible states of any frontier state \(r\) will eventually only contain
basis states with the same number of active timers as \(r\).

\begin{lemma}\label{lemma:replay:done}
  Let \(\pi = p_0 \xrightarrow{w} {} \in \runs{\tree}\), \(p'_0 \in Q^\tree\),
  and \(m : p_0 \leftrightarrow p'_0\) be a maximal matching:
  \begin{itemize}
    \item
    \(\replay{p'_0}[\pi] = \DONE\) implies that \(\copyRun{p'_0}[\pi]\)
    is now a run of \(\tree\).
    \item
    \(\replay{p'_0}[\pi]\) is \APART or \ACTIVE when
    \(\lengthOf{\activeTimers^\tree(p_0)} > \lengthOf{\activeTimers^\tree(p'_0)}\)
    and \(w\) ends with \(\timeout{x}\) for some
    \(x \in \activeTimers^\tree(p_0) \setminus \dom{m}\).
  \end{itemize}
\end{lemma}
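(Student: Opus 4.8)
The plan is to read both items directly off the step-by-step description of the replay procedure, using a short case analysis rather than a full induction.

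\emph{First item.} Suppose \(\replay{p'_0}[\pi] = \DONE\). By the description of the procedure, this value is returned only after the whole word \(w = i_1 \dotsb i_n\) has been replayed, without any new active timer of \(p'_0\) and without any new apartness pair being found; this produces a run \(p'_0 \xrightarrow{i'_1} p'_1 \dotsb \xrightarrow{i'_n} p'_n\) in \(\tree\) in which each symbol \(i'_j\) is chosen exactly as in the definition of \(\copyRun{p'_0}[\pi]\) (inputs are copied, a timeout \(\timeout{x}\) with \(x \in \dom{m}\) becomes \(\timeout{m(x)}\), and a timeout of a fresh timer \(\timeout{x_{p_k}}\) becomes \(\timeout{x_{p'_k}}\)). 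Since these are precisely the symbols prescribed by \(\matchingRun{m}{\pi}{\pi'}\) and a matching run is unique given \(\pi\) and \(m\), this run is \(\copyRun{p'_0}[\pi]\). It remains to check that each of its transitions lies in \(\tree\) at the end of the procedure: the input transitions are added by the \(\symOutputQ\) calls (legitimate, since \(\M\) is \good, hence complete, so the matching input transition exists in \(\M\)), while a missing timeout transition \(p'_{j-1}\xrightarrow{i'_j}\) would — as recalled in the description of the procedure — witness \(p_0 \apart^m p'_0\) through~\eqref{eq:apartness:enabled} (the timeout is enabled on the \(\pi\)-side because \(i_j\) occurs on \(\pi\), whereas its image is not enabled on the \(p'_0\)-side), forcing the procedure to have returned \(\APART\), contrary to \(\DONE\). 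Hence \(\copyRun{p'_0}[\pi]\) is now a run of \(\tree\).

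\emph{Second item.} The key observation is structural: if \(x \in \activeTimers^\tree(p_0)\) then \(x = x_q\) for \(q\) equal to \(p_0\) or a strict ancestor of \(p_0\), because in an observation tree \(x_q\) is first started only by the incoming transition of \(q\) and cannot be active before that transition. In particular \(x \neq x_{p_k}\) for every state \(p_k\) occurring after \(p_0\) on \(\pi\). Combined with \(x \notin \dom{m}\), the final symbol \(\timeout{x}\) of \(w\) then fits none of the three cases in which the replay procedure may continue, so the longest replayable prefix of \(w\) is a proper prefix. Therefore \(\replay{p'_0}[\pi]\) can never replay \(w\) entirely, so it cannot return \(\DONE\); since its only possible return values are \(\DONE\), \(\APART\), and \(\ACTIVE\), it returns \(\APART\) or \(\ACTIVE\).

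The delicate part — and the real content of the second item, since it is what justifies returning \(\ACTIVE\) in the branch where no apartness is found — is to argue that such a branch really does discover a new active timer of \(p'_0\). Here I would note that, while the procedure walks along the replayable prefix, it performs a \(\symWaitQ\) at every state it reaches on the \(p'_0\)-side, so these states become explored; as long as no apartness is detected, every prescribed transition is present, so the prefix is reproduced in \(\tree\) and the procedure reaches the boundary symbol \(\timeout{x}\) and performs its final \(\symWaitQ\). If that query still does not yield apartness, then, since \(\lnot(p_0 \apart^m p'_0)\) persists, the reached endpoints on the \(\pi\)- and \(p'_0\)-sides are explored with the same number of enabled timers and a compatible enabled-structure (otherwise~\eqref{eq:apartness:sizesEnabled} or~\eqref{eq:apartness:enabled} would fire); as \(x\) is enabled on the \(\pi\)-side while having no image under the maximal matching \(m\), this compatibility can only have been restored by the \(\symWaitQ\) having enlarged \(\activeTimers^\tree(p'_0)\), i.e.\ by the discovery of a new active timer, so \(\ACTIVE\) is the correct return value. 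A minor point to flag when invoking the lemma is that~\eqref{eq:apartness:enabled} also requires the compared states on the \(\pi\)-side to be explored; this holds in the algorithm because the run \(\pi\) to be replayed is itself built via wait queries.
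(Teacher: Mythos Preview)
Your argument is correct and reaches the same conclusions as the paper, but decomposes the work differently. The paper proves the first item by contradiction: it takes the largest index $\ell$ for which the copy-run up to $\ell$ is a run of $\tree$, case-splits on $i_{\ell+1}$, and in the ``unhandleable timeout'' case ($i_{\ell+1}=\timeout{x}$ with $x\in\activeTimers^\tree(p_0)\setminus\dom{m}$) uses maximality of $m$ together with the negations of \eqref{eq:apartness:sizesEnabled} and \eqref{eq:apartness:enabled} to force a contradiction; it then derives the second item in one line (copy-run undefined, hence not a run, hence by the first item not \DONE). You instead read the first item directly off the procedure's return semantics (\DONE{} is only returned after a full replay, and the replayed run is exactly the copy-run), and argue the second item directly (last symbol unhandleable, so proper prefix, so not \DONE). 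What you call the ``delicate part'' --- justifying that \ACTIVE{} really reflects the discovery of a new active timer --- is precisely the content the paper packages into its unhandleable-symbol case inside the first-item contradiction, even though under \DONE{} that case is vacuous since \DONE{} already forces full replay. Your flag about the $\pi$-side states needing to be explored for \eqref{eq:apartness:enabled} and \eqref{eq:apartness:sizesEnabled} to apply is a genuine implicit hypothesis that both arguments rely on; the paper leaves it tacit.
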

\begin{proof}
  Observe that the second item follows immediately from the first, given the
  fact that we process a proper prefix of \(w\) in that case.
  That is, it is sufficient to show the first item.

  Let \(w = i_1 \dotsb i_n\) and
  \(
    \pi = p_0 \xrightarrow{i_1}
    p_1 \xrightarrow{i_2} \dotsb
    \xrightarrow{i_n} p_n
  \).
  Towards a contradiction, assume that \(\replay{p'_0}[\pi] = \DONE\) but
  \(\copyRun{p'_0}[\pi]\) is not a run of \(\tree\).
  Then, let \(\ell \in \{1, \dotsc, n - 1\}\) be the largest index such that
  \(\copyRun{p'_0}[p_0 \xrightarrow{i_1 \dotsb i_\ell}] =
    p'_0 \xrightarrow{i'_1}
    p'_1 \xrightarrow{i'_2}
    \dotsb \xrightarrow{i'_\ell}
    p'_\ell
    \in \runs{\tree}\).
  Hence,
  \(p_0 \xrightarrow{i_1} p_1 \xrightarrow{i_2} \dotsb \xrightarrow{i_\ell}
  p_\ell \xrightarrow{i_{\ell + 1}} {} \in \runs{\tree}\) and
  \(\copyRun{p'_0}[p_0 \xrightarrow{i_1 \dotsb i_\ell \cdot i_{\ell + 1}}] =
  p'_0 \xrightarrow{i'_1} p'_1 \xrightarrow{i'_2} \dotsb \xrightarrow{i'_\ell}
  p'_\ell \xrightarrow{i'_{\ell + 1}} {} \notin \runs{\tree}\).
  First, if \(i_{\ell + 1} \in I\), then we must have performed a symbolic
  output query in \(p'_\ell\), i.e.,
  \(p'_\ell \xrightarrow{i'_{\ell + 1}} {} \in \runs{\tree}\).
  Second, if \(i_{\ell + 1} = \timeout{x_{p_k}}\) for some
  \(k \in \{1, \dotsc, \ell\}\), then we have that \(p_0 \apart^m p'_0\)
  by~\eqref{eq:apartness:enabled}.
  Likewise when \(i_{\ell + 1} = \timeout{x}\) with \(x \in \dom{m}\).

  Therefore, assume \(i_{\ell + 1}\) is the timeout of some timer in
  \(\activeTimers^\tree(p_0) \notin \dom{m}\).
  Since \(\replay{p'_0}[\pi] = \DONE\), we have that
  \(\lnot (p_0 \apart^m p'_0)\)
  and we did not discover a new active timer in \(p'_0\).
  Hence,
  \begin{gather}
    \lengthOf{\enabled{p_{\ell}}[\tree]} = \lengthOf{\enabled{p'_{\ell}}[\tree]}
    \label{eq:proof:lemma:replay:done:enabled}
    \\
    \forall y \in \dom{m} :
    y \in \enabled{p_{\ell}}[\tree] \iff
    m(y) \in \enabled{p'_{\ell}}[\tree],
    \label{eq:proof:lemma:replay:done:timer}
    \\
    \forall k \in \{1, \dotsc, \ell\} :
    x_{p_k} \in \enabled{p_{\ell}}[\tree] \iff
    x_{p'_k} \in \enabled{p'_{\ell}}[\tree],
    \label{eq:proof:lemma:replay:done:fresh}
  \end{gather}
  As $m$ is maximal, we deduce from~\eqref{eq:proof:lemma:replay:done:timer}
  and~\eqref{eq:proof:lemma:replay:done:fresh} that all enabled timers in
  $p'_\ell$ have their corresponding enabled timer in $p_\ell$.
  However, $x$ is an enabled timer in $p_\ell$ that does not appear among those
  corresponding timers as $x \not\in \dom{m}$.
  This is in contradiction with~\eqref{eq:proof:lemma:replay:done:enabled}.
  We thus conclude that \(\replay{p'_0}[\pi] \neq \DONE\).
\qed\end{proof}

\subsection{Generalized MMTs and hypothesis construction}\label{app:learning:hypo}

In this section, we introduce generalized \MMTs (that allow timer renamings alongside
the transitions), and show that a symbolically equivalent \MMT always exists.
This \MMT suffers a factorial blowup, in general.
We then give the construction of a generalized \MMT from \(\tree\), and give
an example where the construction of an \MMT as explained in
\Cref{ex:learning:hypothesis} fails, as the equivalence relation groups
together timers that are known to be apart.

\subsubsection{Generalized MMTs.}

In short, a generalized \MMT is similar to an \MMT, except that the update of
a transition \(q \xrightarrow{i} q'\) is now a function instead of a value in
\((X \times \natplus) \cup \{\bot\}\).
For the \gMMT to be well-formed, we request that the
domain of such a function is exactly the set of active timers of \(q'\).
Moreover, its range must be the set of active timers of \(q\) or a natural
constant.
That is, each timer \(x'\) of \(q'\) must either come from an active timer \(x\)
of \(q\) (we rename \(x\) into \(x'\)), or be (re)started with a constant.
We also require that at most one timer is started per transition, as in
\MMTs.
Finally, if \(i = \timeout{x}\), we forbid to rename \(x\) into \(x'\), i.e.,
\(x'\) cannot be obtained from \(x\): it must be the renaming of some other timer
or be explicitly started by the transition.
An example is given below.

\begin{definition}[gMMT]\label{def:gMMT}
  A \emph{generalized Mealy machine with timers} (gMMT) is a 5-tuple
  \(\M = (X, Q, q_0, \activeTimers, \delta)\) where:
  \begin{itemize}
    \item
    \(X\) is a finite set of timers (we assume $X \cap \natplus = \emptyset$),
    \item
    \(Q\) is a finite set of states, with \(q_0 \in Q\) the \emph{initial state},
    \item
    \(\activeTimers : Q \to \subsets{X}\) is a total function that assigns a
    finite set of active timers to each state,
    and
    \item
    \(\delta : Q \times \actions{\M} \partto Q \times O \times (X \to (X \cup \natplus)) \)
    is a partial transition function that assigns a state-output-update triple to a
    state-action pair.
  \end{itemize}
  We write \(q \xrightarrow[\updateFunction]{i/o} q'\) if
  \(\delta(q, i) =(q', o, \updateFunction)\).
  We require the following:
  \begin{itemize}
    \item
    In the initial state, no timer is active, i.e.,
    \(\activeTimers(q_0) = \emptyset\).
    \item
    For any transition \(q \xrightarrow[\updateFunction]{} q'\),
    \(\updateFunction\) must be an injective function whose domain is exactly
    the set of active timers of \(q'\) (i.e.,
    \(\dom{\updateFunction} = \activeTimers\)), and whose range is composed
    of timers that were active in \(q\) or constants from \(\natplus\)
    (i.e., \(\ran{\updateFunction} \subset \activeTimers(q) \cup \natplus\)).
    Finally, there is at most one (re)started timer, i.e., there is at most
    one \(x \in \dom{\updateFunction}\) such that
    \(\updateFunction(x) \in \natplus\).
    \item
    For any transition \(q \xrightarrow[\updateFunction]{\timeout{x}} q'\),
    it must be that \(x\) was active in \(q\) and \(x\) cannot be used as a value
    of \(\updateFunction\), i.e., \(x \in \activeTimers(q)\) and
    \(x \notin \ran{\updateFunction}\).
  \end{itemize}
\end{definition}
Observe that an \MMT is in fact a \gMMT where all renaming maps on transitions
coincide with the identity function
(except for those mapping to an integer, which are regular updates).

We now adapt the timed semantics of the model via the following rules.
Again, they are similar to the rules for \MMTs, except that we use
\(\updateFunction\)
to rename and start timers.
Let \((q, \valuation), (q', \valuation')\) be two configurations of a \gMMT:
\begin{gather*}
\infer{(q, \valuation) \xrightarrow{d} (q, \valuation - d)}
{\forall x \colon \valuation(x) \geq d}
\\
\infer{(q, \valuation) \xrightarrow[\updateFunction]{i/o} (q', \valuation')}
{q \xrightarrow[\updateFunction]{i/o} q', & i = \timeout{x} \Rightarrow \valuation(x)
  = 0, & \forall x \in \chi(q') \colon \valuation'(x) = \begin{cases}
  \updateFunction(x) & \mbox{if } \updateFunction(x) \in\natplus\\
  \valuation(\updateFunction(x)) & \mbox{otherwise}
  \end{cases}
}
\end{gather*}
We immediately obtain the definitions of enabled timers and \complete \gMMT.
Moreover, it is clear that the notion of timed equivalence from
\Cref{sec:MMT:equivalence} can be applied to two \complete \gMMTs,
or a \gMMT and an \MMT, both \complete.

\begin{figure}[t]
  \centering
  \begin{tikzpicture}[
  automaton,
  node distance = 35pt and 80pt,
]
  \node [state, initial]      (q0)  {\(q_0\)};
  \node [state, right=of q0]  (q1)  {\(q_1\)};
  \node [state, below=of q1]  (q2)  {\(q_2\)};
  \node [state, right=of q1]  (q3)  {\(q_3\)};
  \node [state, right=of q3]  (q4)  {\(q_4\)};

  \path
    (q0)  edge  node [above] {\(i/o\)}
                node [below] {\(\updateFig{x}{2}\)} (q1)
    (q1)  edge  node [right, pos=0.6] {\(\timeout{x}/o, \updateFig{y}{2}\)} (q2)
          edge  node [above] {\(i/o\)}
                node [below] {\(\updateFig{y}{1}\)} (q3)
    (q2)  edge [loop left] node {\(i/o, \updateFig{y}{2}\)} ()
          edge [loop right] node {\(\timeout{y}/o, \updateFig{y}{2}\)} ()
    (q3)  edge [loop above] node {\(i/o, \updateFig{y}{x}, \updateFig{x}{2}\)} ()
          edge [bend left=10] node {\(\timeout{x}/o, \updateFig{y}{1}\)} (q4)
          edge [bend right=10] node ['] {\(\timeout{y}/o, \updateFig{y}{1}\)} (q4)
    (q4)  edge [loop above] node {\(i/o, \updateFig{y}{1}\)} ()
          edge [loop below] node {\(\timeout{y}/o, \updateFig{y}{1}\)} ()
  ;
\end{tikzpicture}
  \caption{A generalized \MMT with \(\activeTimers(q_0) = \emptyset\),
  \(\activeTimers(q_1) = \{x\}, \activeTimers(q_2) = \activeTimers(q_4) = \{y\}\),
  and \(\activeTimers(q_3) = \{x, y\}\).}\label{fig:gMMT}
\end{figure}
\begin{example}
  Let \(\M\) be the \gMMT of \Cref{fig:gMMT} with timers \(X = \{x, y\}\).
  Update functions are shown along each transition.
  For instance, \(x\) is started to \(2\) by the transition from \(q_0\) to
  \(q_1\), while the self-loop over \(q_3\) renames \(x\) into \(y\) (i.e.,
  \(y\) copies the current value of \(x\)) and then restarts \(x\) to \(2\).
  Let us illustrate this with the following timed run:
  \begin{align*}
    (q_0, \emptyset) &\xrightarrow{1}
    (q_0, \emptyset) \xrightarrow{i}
    (q_1, x = 2) \xrightarrow{0.5}
    (q_1, x = 1.5) \xrightarrow{i}
    (q_3, x = 1.5, y = 1)
    \\
    &\xrightarrow{1}
    (q_3, x = 0.5, y = 0) \xrightarrow{i}
    (q_3, x = 2, y = 0.5) \xrightarrow{0.5}
    (q_4, y = 1) \xrightarrow{0}
    (q_4, y = 1).
  \end{align*}
  Observe that \(y\) takes the value of \(x\) when taking the \(i\)-loop of
  \(q_3\).

  We also highlight that some of the timeout transitions restart a timer that is
  not the one timing out, as illustrated by the timed run
  \((q_0, \emptyset) \xrightarrow{1}
    (q_0, \emptyset) \xrightarrow{i}
    (q_1, x = 2) \xrightarrow{2}
    (q_1, x = 0) \xrightarrow{\timeout{x}}
    (q_2, y = 2) \xrightarrow{0}
    (q_2, y = 2)\).

  It is not hard to see that we have the following enabled timers per state:
  \(\enabled{q_0} = \emptyset\), \(\enabled{q_1} = \{x\}\),
  \(\enabled{q_2} = \enabled{q_4} = \{y\}\), and \(\enabled{q_3} = \{x, y\}\).
  From there, we conclude that \(\M\) is complete.
\end{example}

We now provide a definition of symbolic equivalence
(\Cref{def:equivalence:symbolic}) between a \gMMT and an \MMT.
First, we adapt the notion of \(x\)-spanning runs.
Recall that a run \(\pi\) of an \MMT is said \(x\)-spanning if
\(\pi = p_0 \xrightarrow[u_1]{i_1} p_1 \xrightarrow[u_2]{i_2}
\dotsb \xrightarrow{i_n} p_n\) with
\begin{itemize}
  \item
  \(u_1 = (x, c)\) for some \(c \in \natplus\),

  \item
  \(u_j \neq (x, c')\) for every \(j \in \{2, \dotsc, n - 1\}\) and
  \(c' \in \natplus\),

  \item
  \(x \in \activeTimers(p_j)\) for all \(j \in \{2, \dotsc, n - 1\}\),
  and

  \item
  \(i_n = \timeout{x}\).
\end{itemize}
The adaptation to \gMMTs is direct: the first transition must start
a timer \(x\), each update function renames the timer (in a way, \(x\) remains
active but under a different name), and the final transition reads the
corresponding timeout.
A run \(\pi\) of a \gMMT is said \emph{spanning} if
\(\pi = p_0 \xrightarrow[\updateFunction_1]{i_1}
  p_1 \xrightarrow[\updateFunction_2]{i_2}
  \dotsb \xrightarrow{i_n}
  p_n\)
and there exist timers \(x_1, \dotsc, x_n\) such that
\begin{itemize}
  \item
  \(\updateFunction_1(x_1) = c\) for some \(c \in \natplus\),

  \item
  \(\updateFunction_j(x_j) = x_{j-1}\) for every \(j \in \{2, \dotsc, n - 1\}\)
  (this implies that \(x_j \in \activeTimers(p_j)\)),
  and

  \item
  \(i_n = \timeout{x_{n-1}}\).
\end{itemize}
Observe that the notion of symbolic words (\Cref{sec:MMT:equivalence}) still
holds using this definition of spanning runs.

We can thus obtain a definition of symbolic equivalence between a
\complete \gMMT and a \complete \MMT.
In short, we impose the same constraints as in \Cref{def:equivalence:symbolic}:
\begin{itemize}
  \item 
  a run reading a symbolic word exists in the \gMMT if and only if one exists in
  the \MMT,

  \item
  if they both exist, we must see the same outputs and for transitions starting
  a timer that eventually times out during the run (i.e., the sub-run is spanning),
  we must have the same constants.
\end{itemize}
\begin{definition}[Symbolic equivalence between \gMMT and \MMT]
  Let \(\M\) be a \complete \gMMT and \(\N\) be a \complete \MMT.
  We say that \(\M\) and \(\N\) are \emph{symbolically equivalent}, also noted
  \(\M \symEquivalent \N\), if for every symbolic word
  \(\symbolic{w} = \symbolic{i_1} \dotsb \symbolic{i_n}\) over $I \cup \toevents{\natplus}$:
  \begin{itemize}
    \item  $q_0^\M \xrightarrow[\updateFunction_1]{\symbolic{i_1}/o_1} q_1
    \dotsb \xrightarrow[\updateFunction_n]{\symbolic{i_n}/o_n} q_n$ is a feasible
    run in $\M$ if and only if
    $q_0^{\N} \xrightarrow[u'_1]{\symbolic{i_1}/o'_1} q'_1
    \dotsb \xrightarrow[u'_n]{\symbolic{i_n}/o'_n} q'_n$ is a feasible run in $\N$.
    \item Moreover,
    \begin{itemize}
        \item
        $o_j = o'_j$ for all $j \in \{1, \dotsc, n\}$, and
        \item
        $q_{j-1} \xrightarrow{\symbolic{i_j} \dotsb \symbolic{i_k}} q_k$
        is spanning implies that there is timer \(x\) such that
        \(\updateFunction_j(x) = c\), \(u'_j = (x',c')\), and \(c = c'\).
    \end{itemize}
  \end{itemize}
\end{definition}
We then obtain that \(\M \symEquivalent \N\) implies that \(\M \equivalent \N\),
with arguments similar to those presented in \Cref{app:equivalence}.

\subsubsection{Existence of a symbolically equivalent \MMT.}

Let \(\M\) be a \complete \gMMT.
We give a construction of a \complete MMT \(\N\) such that
\(\M \symEquivalent \N\).
Intuitively, we rename the timers of \(\M\), on the fly, into the timers of \(\N\)
and keep track in the states of \(\N\) of the current renaming.
As each transition of \(\M\) can freely rename timers, it is
possible that \(x\) is mapped to \(x_j\) in a state of \(\N\), but mapped to
\(x_k\) (with \(k \neq j\)) in some other state.
That is, we sometimes need to split states of \(\M\) into multiple states in
\(\N\), accordingly to the update functions.
An example is given below.

Formally, we define \(\N = (X^\N, Q^\N, q_0^\N, \activeTimers^\N, \delta^\N)\) with
\begin{itemize}
  \item
  \(X^\N = \{x_1, \dotsc, x_n\}\) with
  \(n = \max_{q \in Q^\M}\lengthOf{\activeTimers^\M(q)}\).

  \item
  \(Q^\N = \{
    (q, \renamingMMT) \in Q^\M \times (\activeTimers^\M(q) \leftrightarrow X^\N)
    \mid
    \lengthOf{\ran{\renamingMMT}} = \lengthOf{\activeTimers^\M(q)}
  \}\).
  The idea is that \(\renamingMMT\) dictates how to rename a timer from
  \(\M\) into a timer of \(\N\), for this specific state.
  As said above, the renaming may change transition by transition.

  \item
  \(q_0^\N = (q_0^\M, \emptyset)\).

  \item
  \(\activeTimers^\M((q, \renamingMMT)) = \ran{\mu}\) for all
  \((q, \renamingMMT) \in Q^\N\).
  Then, \(\lengthOf{\activeTimers^\N((q, \renamingMMT))} =
  \lengthOf{\activeTimers^\M(q)}\).

  \item
  The function \(\delta^\N : Q^\N \times \actions{\N} \to
  Q^\N \times O \times \updates{\N}\) is defined as follows.
  Let \(q \xrightarrow[\updateFunction]{i/o} q'\) be a run of \(\M\) and
  \((q, \renamingMMT) \in Q^\N\).
  \begin{itemize}
    \item
    If \(i \in I\), we have two different cases depending on whether
    \(\updateFunction\) (re)starts a fresh timer or does not (re)start anything.
    That is, let \(\delta^\N((q, \renamingMMT), i) =
    ((q', \renamingMMT'), o, u)\) with \(u\) and \(\renamingMMT'\) defined
    as follows.
    \begin{itemize}
      \item
      If \(\updateFunction(x) = c \in \natplus\) for a timer \(x\),
      i.e., the transition of \(\M\) (re)starts a fresh timer,
      then, in \(\N\), we want to start a timer that is not already tied to
      some timer.
      Let \(\nu = \renamingMMT \circ (\updateFunction \setminus \{(x, c)\})\),
      i.e., the matching telling us how to rename every timer, except \(x\),
      after taking the transition.
      As \(\lengthOf{X^\N} = \max_{p \in Q^\M} \lengthOf{\activeTimers^\M(q)} =
      \max_{(p, \nu) \in Q^\N} \lengthOf{\activeTimers^\N((q, \nu))}\), it
      follows that \(\lengthOf{X^\N} > \lengthOf{\ran{\nu}}\).
      Hence, there exists a timer \(x_j \in X^\N\) such that
      \(x_j \notin \ran{\nu}\).
      We then say that \(x\) is mapped to \(x_j\) and follow \(\nu\) for the
      other timers.
      That is,
      \(u = (x_j, c)\) and
      \(\renamingMMT' = \nu \cup \{(x, x_j)\}\).

      \item
      If \(\updateFunction(x) \notin \natplus\) for any timer \(x\), i.e., the
      transition does not (re)start anything,
      then, in \(\N\), we also do not restart anything.
      Hence,
      \(u = \bot\) and
      \(\renamingMMT' = \renamingMMT \circ \updateFunction\).
    \end{itemize}

    \item
    If \(i = \timeout{x}\), we have two cases depending on whether the transition
    start a timer, or not.
    That is, we define \(\delta^\N((q, \renamingMMT), \timeout{\renamingMMT(x)})
    = ((q', \renamingMMT'), o, u)\) with \(u\) and \(\renamingMMT'\) defined
    as follows.
    \begin{itemize}
      \item
      If \(\updateFunction(y) = c \in \natplus\) for some timer \(y\),
      then, in \(\N\), we want to restart \(x\).
      That is, we restart the timer that times out.
      Again, the remaining timers simply follow \(\updateFunction\).
      Hence,
      \(u = (\renamingMMT(x), c)\) and
      \(\renamingMMT' = \left(
          \renamingMMT \circ \left(
            \updateFunction \setminus \{(y, c)\}
          \right)
        \right)
        \cup \{(y, \renamingMMT(x))\}\).

      \item
      If \(\updateFunction(y) \notin \natplus\) for any timer \(y\),
      then, in \(\N\), we do not restart anything.
      Hence,
      \(u = \bot\) and
      \(\renamingMMT' = \renamingMMT \circ \updateFunction\).
    \end{itemize}
  \end{itemize}
\end{itemize}
In order to obtain a deterministic procedure, let us assume that the fresh timer
\(x_j\) is picked with the smallest possible \(j\).
\Cref{fig:gMMT:MMT} gives the \MMT constructed from the \gMMT of \Cref{fig:gMMT}.

\begin{figure}[t]
  \centering
  \begin{tikzpicture}[
  automaton,
  node distance = 70pt and 150pt,
  state/.append style = {
    rectangle,
    minimum size = 15pt,
  }
]
  \node [state, initial]      (q0)  {\((q_0, \emptyset)\)};
  \node [state, right=of q0]  (q1)  {\((q_1, \{(x, x_1)\})\)};
  \node [state, right=of q1]  (q2)  {\((q_2, \{(y, x_1)\})\)};
  \node [state, below=of q1]  (q3 1)  {\((q_3, \{(x, x_1), (y, x_2)\})\)};
  \node [state, below=40pt of q3 1]  (q3 2)  {\((q_3, \{(x, x_2), (y, x_1)\})\)};
  \node [state, left=of q3 1]  (q4 x1)  {\((q_4, \{(y, x_1)\})\)};
  \node [state, right=of q3 1]  (q4 x2)  {\((q_4, \{(y, x_2)\})\)};

  \path
    (q0)  edge                node [above] {\(i/o\)}
                              node [below] {\(\updateFig{x_1}{2}\)} (q1)
    (q1)  edge                node [above] {\(\timeout{x_1}/o\)}
                              node [below] {\(\updateFig{x_1}{2}\)} (q2)
          edge                node {\(i/o, \updateFig{x_2}{1}\)} (q3 1)
    (q2)  edge [loop above]   node {\(i/o, \updateFig{x_1}{2}\)} ()
          edge [loop below]   node {\(\timeout{x_1}/o, \updateFig{x_1}{2}\)} ()
    (q3 1)edge [bend right=15]node ['] {\(i/o, \updateFig{x_2}{2}\)} (q3 2)
          edge                node [above] {\(\timeout{x_1}/o\)}
                              node [below] {\(\updateFig{x_1}{1}\)} (q4 x1)
          edge                node [above] {\(\timeout{x_2}/o\)}
                              node [below] {\(\updateFig{x_2}{1}\)} (q4 x2)
    (q3 2)edge [bend right=15]node ['] {\(i/o, \updateFig{x_1}{1}\)} (q3 1)
          edge [in=-20, out=180]    node [near start] {\(\timeout{x_1}/o, \updateFig{x_1}{1}\)} (q4 x1)
          edge [in=-160, out=0]   node [near start, '] {\(\timeout{x_2}/o, \updateFig{x_2}{1}\)} (q4 x2)
    (q4 x1)edge [loop above]  node {\(i/o, \updateFig{x_1}{1}\)} (q4 x1)
          edge [loop below]    node {\(\timeout{x_1}/o, \updateFig{x_1}{1}\)} (q4 x1)
    (q4 x2)edge [loop above]  node {\(i/o, \updateFig{x_2}{1}\)} ()
          edge [loop below]  node {\(\timeout{x_2}/o, \updateFig{x_2}{1}\)} (q4 x2)
  ;
\end{tikzpicture}
  \caption{The \MMT obtained from the \gMMT of \Cref{fig:gMMT}.}\label{fig:gMMT:MMT}
\end{figure}

It should be clear that \(\M\) is \complete since \(\N\) is \complete (and one
can obtain \(\M\) back from \(\N\) as a sort of homomorphic image of \(\N\)).

\begin{lemma}\label{lem:gmmt-to-mmt}
  Let \(\M\) be a \complete \gMMT and
  \(\N\) be the \MMT constructed as explained above.
  Then, \(\N\) is \complete and its number of states
  is in \(\complexity{n! \cdot \lengthOf{Q^\M}}\) with \(n = \max_{q \in Q^\M}
  \lengthOf{\activeTimers^\M(q)}\).
\end{lemma}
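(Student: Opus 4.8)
The plan is to prove \Cref{lem:gmmt-to-mmt} in two parts: first the completeness of $\N$, then the bound on $\lengthOf{Q^\N}$. Before either, I would verify that the construction actually yields a well-formed \MMT, i.e.\ that $\delta^\N$ always lands in a state of the form $(q',\renamingMMT')$ with $\renamingMMT'$ a matching satisfying $\lengthOf{\ran{\renamingMMT'}} = \lengthOf{\activeTimers^\M(q')}$, that the update $u$ produced is a legal \MMT update (either $\bot$ or a single $(x_j,c)$ with $x_j\in X^\N$), and that $\timeout{x}$-transitions of $\N$ only restart the timer that times out. This is a case check over the four cases in the definition of $\delta^\N$: in the two $\bot$ cases $\renamingMMT' = \renamingMMT\circ\updateFunction$ is a matching with the right range because $\updateFunction$ is injective with domain $\activeTimers^\M(q')$ and range in $\activeTimers^\M(q)$, and $\ran{\renamingMMT}=\activeTimers^\N((q,\renamingMMT))$; in the two ``fresh timer'' cases one checks that deleting $(x,c)$ (resp.\ $(y,c)$) from $\updateFunction$ and re-adding the pair mapping to the newly chosen/reused timer keeps injectivity and yields range exactly $\ran{\renamingMMT'}$. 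The forbidden-renaming condition for $\timeout{x}$ transitions in \gMMTs ($x\notin\ran{\updateFunction}$) transfers to $\renamingMMT(x)\notin\ran{\renamingMMT'}$ in the $\bot$ subcase and is handled explicitly in the restart subcase where $u=(\renamingMMT(x),c)$.

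For completeness of $\N$, I would argue that for every reachable $(q,\renamingMMT)\in Q^\N$ and every $i\in I\cup\toevents{\enabled{(q,\renamingMMT)}[\N]}$ the transition $\delta^\N((q,\renamingMMT),i)$ is defined. For $i\in I$ this is immediate since $\M$ is \complete, so $\delta^\M(q,i)$ is defined and the construction copies it. For timeout actions I would first establish the key correspondence $\enabled{(q,\renamingMMT)}[\N] = \renamingMMT(\enabled{q}[\M])$, i.e.\ $\renamingMMT(x)$ is enabled in $(q,\renamingMMT)$ iff $x$ is enabled in $q$; this follows by showing the timed semantics of $\N$ simulates that of $\M$ under the renaming stored in the state (a straightforward induction on timed runs: $\valuation'(\renamingMMT(x)) = \valuation(x)$ is maintained by each transition rule, because $\N$ applies $u$ in lockstep with $\M$ applying $\updateFunction$). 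Then, given $\timeout{z}\in\toevents{\enabled{(q,\renamingMMT)}[\N]}$, we have $z = \renamingMMT(x)$ for some $x\in\enabled{q}[\M]$, so $\delta^\M(q,\timeout{x})$ is defined (as $\M$ is \complete), and the construction defines $\delta^\N((q,\renamingMMT),\timeout{\renamingMMT(x)})$. Hence $\N$ is \complete.

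For the size bound, the states of $\N$ are pairs $(q,\renamingMMT)$ where $q\in Q^\M$ and $\renamingMMT$ is an injective partial function from $\activeTimers^\M(q)$ to $X^\N$ with $\lengthOf{\ran{\renamingMMT}}=\lengthOf{\activeTimers^\M(q)}$, i.e.\ $\renamingMMT$ is an injection of a set of size $k := \lengthOf{\activeTimers^\M(q)}$ into a set of size $n = \max_{q\in Q^\M}\lengthOf{\activeTimers^\M(q)}$. The number of such injections is $n!/(n-k)! \le n!$. Summing over $q\in Q^\M$ gives $\lengthOf{Q^\N}\le n!\cdot\lengthOf{Q^\M}$, i.e.\ $\lengthOf{Q^\N}\in\complexity{n!\cdot\lengthOf{Q^\M}}$ (and $\N$ restricted to reachable states is no larger). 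The main obstacle is not the counting, which is routine, but setting up the timed-semantics simulation cleanly enough that the enabled-timer correspondence $\enabled{(q,\renamingMMT)}[\N] = \renamingMMT(\enabled{q}[\M])$ — which is what makes the completeness argument go through — is actually justified rather than merely asserted; the bookkeeping of which timer of $\N$ currently represents which timer of $\M$ across the four update cases is where care is needed. I would state the invariant ``along matching timed runs, $\valuation' = \valuation\circ\renamingMMT^{-1}$'' precisely and push it through the induction, after which both the completeness claim and the (implicit) symbolic equivalence with $\M$ follow without difficulty.
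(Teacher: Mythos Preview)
Your proposal is correct and considerably more detailed than the paper's own treatment: the paper does not actually prove this lemma, it merely remarks that the claim ``should be clear'' from the fact that $\M$ can be recovered from $\N$ as a homomorphic image, and then states the lemma without further argument. Your plan---well-formedness via the four-case check, completeness via the enabled-timer correspondence $\enabled{(q,\renamingMMT)}[\N] = \renamingMMT(\enabled{q}[\M])$ established through the timed-simulation invariant $\valuation' = \valuation\circ\renamingMMT^{-1}$, and the size bound by counting injections---is exactly the right way to make that handwave rigorous, and nothing in it is incorrect or superfluous.
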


The following lemma highlights the relation between a transition of \(\M\)
and a corresponding transition in \(\N\).
It holds by construction of \(\N\).
\begin{lemma}
  Let \((q, \renamingMMT) \in Q^\N\) and
  \(q \xrightarrow[\updateFunction]{i/o} q' \in \runs{\M}\).
  Then, we have the transition
  \((q, \renamingMMT) \xrightarrow[u]{i'/o'} (q', \renamingMMT') \in \runs{\N}\)
  with \(o = o'\), and \(i'\) and \(u\) as follows.
  \begin{itemize}
    \item
    If \(i \in I\), then \(i' = i\) and
    \[
      u = \begin{cases*}
        (x, c) & for some \(x \notin \ran{\renamingMMT \circ \updateFunction}\),
          if \(\updateFunction(y) = c \in \natplus\) for some \(y\)
        \\
        \bot & if for all \(x\), \(\updateFunction(x) \notin \natplus\).
      \end{cases*}
    \]

    \item
    If \(\symbolic{i_j} = \timeout{x}\), then \(i' = \timeout{\renamingMMT(x)}\)
    and
    \[
      u_j = \begin{cases*}
        (\renamingMMT(x), c) &
        if there exists a timer \(y\) such that \(\updateFunction(y) = c\)
        \\
        \bot & if for all \(x\), \(\updateFunction_j(x) \notin \natplus\).
      \end{cases*}
    \]
  \end{itemize}
\end{lemma}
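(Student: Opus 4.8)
The plan is to prove the lemma by unfolding the definition of the transition function $\delta^\N$ from the construction of $\N$ and matching it, branch by branch, against the claimed formulas; the statement is announced as holding ``by construction'', and indeed the only work is bookkeeping about injectivity and the choice of a fresh timer.

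First I would fix $(q, \renamingMMT) \in Q^\N$ and a transition $q \xrightarrow[\update]{i/o} q' \in \runs{\M}$, and record two structural facts that are reused in every case. (a) Since $\update$ is an injective partial function with $\dom{\update} = \activeTimers^\M(q')$ and $\ran{\update} \subseteq \activeTimers^\M(q) \cup \natplus$, and $\renamingMMT$ is an injective partial function on $\activeTimers^\M(q)$, the renaming $\renamingMMT'$ built by the construction is again an injective partial function with domain exactly $\activeTimers^\M(q')$ and $\lengthOf{\ran{\renamingMMT'}} = \lengthOf{\activeTimers^\M(q')}$; hence $(q', \renamingMMT') \in Q^\N$ and the produced transition is well-formed. (b) When $\update$ restarts a fresh timer, the construction must pick some $x_j \in X^\N$ outside the range of $\nu = \renamingMMT \circ (\update \setminus \{(x,c)\})$; this is possible because $\lengthOf{X^\N} = \max_{p \in Q^\M} \lengthOf{\activeTimers^\M(p)} > \lengthOf{\ran{\nu}}$, and since $c \in \natplus$ is not a timer, $\ran{\nu}$ is exactly the set of timers occurring in $\ran{\renamingMMT \circ \update}$, so the chosen $x_j$ satisfies $x_j \notin \ran{\renamingMMT \circ \update}$, as required by the statement.

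Then I would run the four cases of the construction, noting that the output component is copied verbatim throughout, so $o' = o$ in all of them. If $i \in I$: the action is copied, $i' = i$; if moreover $\update(y) = c \in \natplus$ for some timer $y$, the construction sets $u = (x_j, c)$ with $x_j$ as in (b) and $\renamingMMT' = \nu \cup \{(x, x_j)\}$, which is the first clause; if $\update$ restarts nothing, it sets $u = \bot$ and $\renamingMMT' = \renamingMMT \circ \update$, which is the second clause. If $i = \timeout{x}$: by the well-formedness constraints on $\M$ we have $x \in \activeTimers^\M(q) = \dom{\renamingMMT}$, so $\renamingMMT(x)$ is defined and the action in $\N$ is $\timeout{\renamingMMT(x)}$; if some timer $y$ has $\update(y) = c \in \natplus$, the construction restarts precisely the timing-out timer, $u = (\renamingMMT(x), c)$, matching the first clause; otherwise $u = \bot$. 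In each timeout branch one also uses the \gMMT condition $x \notin \ran{\update}$ to see that $\renamingMMT(x)$ can be legally reused, i.e.\ that $\renamingMMT'$ stays injective after inserting the pair sending the restarted timer of $\M$ to $\renamingMMT(x)$.

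The only point that needs a little care --- the ``main obstacle'', such as it is --- is keeping the slightly abusive composition notation $\renamingMMT \circ \update$ honest when $\update$ sends a timer to a constant (that entry is passed through unchanged), and checking in the fresh-timer and timeout-restart cases that the patched renaming $\renamingMMT'$ remains injective on $\activeTimers^\M(q')$, so that the produced $u$ is a legal update of $\N$. Once these routine checks are in place, the lemma follows by direct inspection of the construction of $\delta^\N$.
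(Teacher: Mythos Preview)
Your proposal is correct and follows the same approach as the paper, which simply asserts that the lemma ``holds by construction of \(\N\)'' without further detail. You have faithfully unpacked that claim by running through the four cases of the definition of \(\delta^\N\) and checking injectivity of \(\renamingMMT'\) and availability of a fresh timer, which is exactly the routine verification the paper leaves implicit.
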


Then, by applying this lemma over and over on each transition along a run, we
obtain that we always see the same outputs and the same updates in both machines.
\begin{corollary}
  For every symbolic word \(\symbolic{w} = \symbolic{i_1} \dotsb \symbolic{i_n}\),
  the following are equivalent
  \begin{itemize}
    \item
    \(q_0^\M \xrightarrow[\updateFunction_1]{\symbolic{i_1}/o_1}
    q_1 \xrightarrow[\updateFunction_2]{\symbolic{i_2}/o_2}
    \dotsb \xrightarrow[\updateFunction_n]{\symbolic{i_n}/o_n}
    q_n\) is a run of \(\M\),

    \item
    \((q_0^\M, \emptyset) \xrightarrow[u_1]{\symbolic{i_1}/o_1}
    (q_1, \renamingMMT_1) \xrightarrow[u_2]{\symbolic{i_2}/o_2}
    \dotsb \xrightarrow[u_n]{\symbolic{i_n}/o_n}
    (q_n, \renamingMMT_n)\) is a run of \(\N\),
  \end{itemize}
  with \(u_j\) defined as follows for every \(j\):
  \begin{itemize}
    \item
    If \(\symbolic{i_j} \in I\), then
    \[
      u_j = \begin{cases*}
        (x_k, c) & \parbox[t]{.7\textwidth}{for some \(x_k \notin
          \ran{\renamingMMT_{j-1} \circ \updateFunction_j}\),
          if there exists \(x\) such that \(\updateFunction_j(x) = c \in \natplus\)}
        \\
        \bot & if for all \(x\), \(\updateFunction_j(x) \notin \natplus\).
      \end{cases*}
    \]

    \item
    If \(\symbolic{i_j} = \timeout{k}\), then
    \[
      u_j = \begin{cases*}
        (x, c) & \parbox[t]{.7\textwidth}{if there exists \(x\) such that
        \(\updateFunction_j(x) = c \in \natplus\) and
        \((q_{k-1}, \renamingMMT_{k-1}) \xrightarrow[(x, c')]{i_k}\)}
        \\
        \bot & if for all \(x\), \(\updateFunction_j(x) \notin \natplus\).
      \end{cases*}
    \]
  \end{itemize}
\end{corollary}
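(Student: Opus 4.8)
The plan is to prove the corollary by induction on the length $n$ of the symbolic word $\symbolic{w} = \symbolic{i_1} \dotsb \symbolic{i_n}$, gluing together $n$ applications of the per-transition lemma stated just above. In the base case $n = 0$ both runs reduce to their initial states: $q_0^\M$ on one side and $(q_0^\M, \emptyset) = q_0^\N$ on the other, which agree by construction of $\N$, with the empty matching $\renamingMMT_0 = \emptyset$. For the inductive step, consider a symbolic word $\symbolic{i_1} \dotsb \symbolic{i_{n+1}}$. In the direction from $\M$ to $\N$, take a run $q_0^\M \xrightarrow{\symbolic{i_1}\dotsb\symbolic{i_n}} q_n \xrightarrow[\updateFunction_{n+1}]{\symbolic{i_{n+1}}/o_{n+1}} q_{n+1}$ of $\M$; by the induction hypothesis the prefix corresponds to a run of $\N$ ending in some state $(q_n, \renamingMMT_n)$, and applying the lemma to $(q_n, \renamingMMT_n) \in Q^\N$ together with the last transition yields the transition $(q_n, \renamingMMT_n) \xrightarrow[u_{n+1}]{\symbolic{i_{n+1}}/o_{n+1}} (q_{n+1}, \renamingMMT_{n+1})$ of $\N$ with the same output $o_{n+1}$ and with $u_{n+1}$ exactly of the form claimed by the lemma. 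The direction from $\N$ to $\M$ is the mirror image: by the definition of $\delta^\N$, every transition of $\N$ out of $(q, \renamingMMT)$ is the image of a transition of $\M$ out of $q$ reading the same (symbolic) action and producing the same output, so a run of $\N$ projects onto a run of $\M$ along the same symbolic word; this is the ``homomorphic image'' observation already used to derive completeness of $\M$ from that of $\N$.

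The only delicate point is matching the shape of $u_j$ in the timeout case with the formulation in the lemma. The lemma produces, for a timeout transition $q \xrightarrow[\updateFunction]{\timeout{x}/o} q'$ of $\M$ taken from $(q, \renamingMMT)$ and restarting a timer, the update $(\renamingMMT(x), c)$, where $x$ is the $\M$-timer that times out; the corollary instead writes $u_j = (x, c)$ for the $\N$-timer $x$ started by the $k$-th transition, i.e.\ the $x$ with $(q_{k-1}, \renamingMMT_{k-1}) \xrightarrow[(x, c')]{i_k}$, where $\symbolic{i_j} = \timeout{k}$. To reconcile these I would carry along the same induction the invariant that the $\N$-name of a timer is never re-assigned, only the $\M$-side labels are permuted: if $q_{k-1} \xrightarrow{\symbolic{i_k}\dotsb\symbolic{i_j}} q_j$ is spanning with $\M$-names $x_k, \dotsc, x_{j-1}$ (so $\updateFunction_k(x_k) = c$ and $\updateFunction_\ell(x_\ell) = x_{\ell-1}$ for $k < \ell \le j-1$), then $\renamingMMT_k(x_k) = \renamingMMT_{k+1}(x_{k+1}) = \dotsb = \renamingMMT_{j-1}(x_{j-1})$. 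This is immediate from the construction, where a non-restarting transition sets $\renamingMMT' = \renamingMMT \circ \updateFunction$, hence $\renamingMMT'(x') = \renamingMMT(\updateFunction(x'))$, while the restarting transition at step $k$ assigns $x_k$ the very $\N$-timer chosen for $u_k$. Consequently $\renamingMMT_{j-1}(x_{j-1})$ equals the $\N$-timer appearing in $u_k$, which is precisely the reformulation the corollary asks for.

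I expect this invariant about the constancy of $\N$-names along spanning sub-runs to be the main obstacle, because it requires tracking how $\delta^\N$ threads the matchings $\renamingMMT_j$ through successive transitions rather than arguing locally at a single state; the remaining cases --- the base case, the output equalities, and the input and non-restarting cases of $u_j$ --- are a direct unfolding of the lemma and of the construction of $\N$. Once the invariant is established, the corollary follows by composing the $n$ instances of the lemma, and the fact that $\N$ is \complete whenever $\M$ is (cf.\ \Cref{lem:gmmt-to-mmt}) ensures that the set of symbolic words admitting a run is the same on both machines, so the stated equivalence holds.
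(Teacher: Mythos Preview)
Your proposal is correct and follows the same approach as the paper, which simply states that the corollary is obtained ``by applying this lemma over and over on each transition along a run'' and gives no further proof. Your inductive argument is exactly this repeated application, and you go further than the paper by explicitly identifying and proving the invariant that the $\N$-name of a timer is preserved along a spanning sub-run (via $\renamingMMT_\ell(x_\ell) = \renamingMMT_{\ell-1}(\updateFunction_\ell(x_\ell)) = \renamingMMT_{\ell-1}(x_{\ell-1})$), which is precisely what is needed to reconcile the lemma's $(\renamingMMT(x),c)$ with the corollary's reference to the $\N$-timer started at step~$k$; the paper leaves this implicit.
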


This directly implies that if a run is feasible in \(\M\), the corresponding run
is also feasible in \(\N\), and vice-versa.
\begin{corollary}
  For any symbolic word \(\symbolic{w}\),
  \(q_0^\M \xrightarrow{\symbolic{w}} {} \in \runs{\M}\) is feasible if and only
  if \(q_0^\N \xrightarrow{\symbolic{w}} {} \in \runs{\N}\) is feasible.
\end{corollary}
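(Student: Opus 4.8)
The statement to establish is that, for any symbolic word \(\symbolic{w}\), the run \(q_0^\M \xrightarrow{\symbolic{w}} {}\) is feasible in \(\M\) if and only if the corresponding run \(q_0^\N \xrightarrow{\symbolic{w}} {}\) is feasible in \(\N\), where \(\N\) is the \MMT constructed from the \gMMT \(\M\) by tracking timer renamings in the states. My plan is to derive this corollary directly from the two preceding results: the lemma relating a single transition of \(\M\) to a single transition of \(\N\) (which gives a bijective correspondence between runs reading the same symbolic word), and the corollary identifying the updates \(u_j\) along such a run. The latter already gives us that the outputs \(o_j\) agree and the updates \(u_j\) in \(\N\) are determined by the \(\updateFunction_j\) in \(\M\); what remains is purely to transport feasibility across this correspondence.

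First I would recall that feasibility of a (symbolic) run means the existence of a concrete timed run whose untimed projection is that run — equivalently, a satisfying assignment to the system of linear delay constraints induced by the run (cf.\ \Cref{proof:lemma:queries:constraints}, adapted to \gMMTs via the timed semantics given in \Cref{app:learning:hypo}). The key observation is that this constraint system depends only on: (i) which sub-runs are spanning, (ii) the constant \(c\) with which the relevant timer is (re)started at the beginning of each spanning sub-run, and (iii) the word length and positions of timeout symbols. By the preceding corollary, the run of \(\M\) reading \(\symbolic{w}\) and the run of \(\N\) reading \(\symbolic{w}\) have matching spanning sub-runs (since spanning is detected from the symbolic word alone, which is identical for both), and at the start of each spanning sub-run the update in \(\M\) is \(\updateFunction_j(x) = c\) while in \(\N\) it is \(u_j = (x_k, c)\) — the \emph{same} constant \(c\). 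Hence \(\constraints(\pi^\M)\) and \(\constraints(\pi^\N)\) are the same linear system (up to renaming of the timer-indexed variables, which does not affect solvability), so one has a solution iff the other does.

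Concretely I would proceed as follows. For the \(\Rightarrow\) direction: suppose \(q_0^\M \xrightarrow{\symbolic{w}} {}\) is feasible in \(\M\); by the single-transition lemma applied inductively along the run, the corresponding run \(q_0^\N \xrightarrow{\symbolic{w}} {}\) exists in \(\runs{\N}\); then take a timed run \(\rho\) of \(\M\) witnessing feasibility and build a timed run \(\rho'\) of \(\N\) with the same delays, defining the valuations of \(\N\) by composing with the renaming matchings \(\renamingMMT_j\) stored in the states — the semantic rules for \gMMTs (with \(\valuation'(x) = \valuation(\updateFunction(x))\) when \(\updateFunction(x)\notin\natplus\)) ensure that \(\N\)'s valuation at step \(j\) is exactly \(\valuation_j \circ \renamingMMT_j^{-1}\), so every delay transition and timeout-enabling condition that holds in \(\rho\) holds in \(\rho'\). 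The \(\Leftarrow\) direction is symmetric, using that \(\M\) is recovered from \(\N\) as a homomorphic image (collapsing each \((q,\renamingMMT)\) to \(q\) and renaming \(x_k\) back via \(\renamingMMT^{-1}\)).

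The main obstacle is purely bookkeeping: carefully verifying that the renaming matchings \(\renamingMMT_j\) are consistent along the run — i.e.\ that the valuation of \(\N\) at each configuration is genuinely \(\valuation_j\) precomposed with \(\renamingMMT_j^{-1}\) — which requires a short induction tracking how \(\renamingMMT_{j}\) is obtained from \(\renamingMMT_{j-1}\) through \(\updateFunction_j\) (the four cases of the \(\delta^\N\) definition: input/timeout \(\times\) starts-a-timer/does-not). Since the preceding lemma and corollary already package exactly this case analysis, the proof reduces to invoking them and observing that feasibility is preserved because the witnessing timed runs correspond delay-for-delay. I do not expect any genuinely hard step; the content is entirely in the already-proven single-transition correspondence.
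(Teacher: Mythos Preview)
Your proposal is correct and follows the same approach the paper takes: the paper treats this corollary as immediate from the preceding results, stating only that ``this directly implies'' the feasibility equivalence, while you spell out the underlying reason (that the timed runs correspond delay-for-delay via the renaming matchings \(\renamingMMT_j\), so the constraint systems coincide). Your write-up is more detailed than the paper's one-line justification, but the content and route are the same.
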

This is enough to obtain the desired result: \(\M \symEquivalent \N\), as any
run in one can be reproduced in the other, and we see the same
outputs and updates (for spanning sub-runs) along these runs.
\begin{corollary}
  \(\M \symEquivalent \N\).
\end{corollary}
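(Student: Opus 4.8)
The plan is to obtain this corollary as a direct consequence of the two preceding corollaries together with the definition of symbolic equivalence between a complete \gMMT and a complete \MMT. First I would fix an arbitrary symbolic word $\symbolic{w} = \symbolic{i_1} \dotsb \symbolic{i_n}$ over $I \cup \toevents{\natplus}$ and verify the two clauses of that definition in turn, so that the statement follows by quantifying over all $\symbolic{w}$.

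The first clause --- that $q_0^\M \xrightarrow{\symbolic{w}} {}$ is feasible in $\M$ iff $q_0^\N \xrightarrow{\symbolic{w}} {}$ is feasible in $\N$ --- is precisely the feasibility corollary stated just above, so nothing further is needed there. For the second clause I would appeal to the run-correspondence corollary, which gives the bijection between the run $q_0^\M \xrightarrow[\updateFunction_1]{\symbolic{i_1}/o_1} \dotsb \xrightarrow[\updateFunction_n]{\symbolic{i_n}/o_n} q_n$ of $\M$ and the run $(q_0^\M, \emptyset) \xrightarrow[u_1]{\symbolic{i_1}/o_1} \dotsb \xrightarrow[u_n]{\symbolic{i_n}/o_n} (q_n, \renamingMMT_n)$ of $\N$, and which already records that the output sequence $o_1 \dotsb o_n$ is literally identical on the two sides and supplies the explicit form of each $u_j$ in terms of $\updateFunction_j$ and $\renamingMMT_{j-1}$. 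Equality of outputs is therefore immediate.

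The only part asking for a short argument is the constant on the start of a spanning sub-run. Here I would first observe that, because $\M$ and $\N$ read the same symbolic word along the two corresponding runs and because --- by the adapted notion of spanning run for \gMMTs, and exactly as for the \MMT-to-\MMT case --- being spanning is determined purely by the symbolic word, a sub-run $q_{j-1} \xrightarrow{\symbolic{i_j} \dotsb \symbolic{i_k}} q_k$ is spanning in $\M$ exactly when the corresponding sub-run of $\N$ is spanning. For such an index $j$, the definition of a spanning run in a \gMMT forces $\updateFunction_j(x) = c$ for some timer $x$ and some $c \in \natplus$. Reading off the two cases of the run-correspondence corollary --- $\symbolic{i_j} \in I$, which yields $u_j = (x_k, c)$ for a fresh $\N$-timer $x_k$; or $\symbolic{i_j} = \timeout{k'}$, which yields $u_j = (x', c)$ where $x'$ is the timer that times out --- we get in either case $u_j = (\cdot, c)$, i.e.\ an update starting a timer at the same constant $c$. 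This is exactly the matching-constant condition, so the second clause holds. Assembling the verification over all symbolic words gives $\M \symEquivalent \N$.

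I do not expect a genuine obstacle: the two corollaries already carry the heavy content, and the remaining work is the bookkeeping that ``spanning'' is transported between $\M$ and $\N$, which follows at once from the fact that spanning is read off the common symbolic word.
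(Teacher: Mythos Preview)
Your proposal is correct and follows exactly the approach the paper takes: the paper's justification is the single sentence preceding the corollary, which says the result follows because the previous corollaries guarantee that runs on the same symbolic word correspond, are feasible together, and share the same outputs and update constants on spanning sub-runs. You have simply unpacked that sentence in detail, including the observation that spanning is determined by the symbolic word; nothing is missing and nothing is different.
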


\subsubsection{Construction of a \gMMT hypothesis.}

Let us now describe how a \complete \gMMT \(\hypothesis\) is
constructed from \(\tree\).
We assume that~\ref{item:tree:basis:timers} holds, i.e.,
\begin{itemize}
  \item
  \(\compatible(r) \neq \emptyset\) for every frontier state \(r\),
  and

  \item
  \(\lengthOf{\activeTimers^\tree(r)} = \lengthOf{\activeTimers^\tree(p)}\)
  for each \(r \in \frontier\) and \((p, m) \in \compatible(r)\).
\end{itemize}
Hence, \(m\) is bijective.
The idea is to use the set of basis states as the states of \(\hypothesis\), with
exactly the same active timers per state.
For any transition \(q \xrightarrow[u]{i/o} q' \in \runs{\tree}\) with
\(q, q' \in \basis\), we do not rename anything in \(\hypothesis\).
If \(u = (x, c)\), then the update function of \(\hypothesis\) also (re)starts
\(x\) to \(c\).
Finally, for a transition \(q \xrightarrow[u]{i/o} r \in \runs{\tree}\) with
\(r \in \basis\), we arbitrarily select a pair \((p, m) \in \compatible(r)\) and
define a transition \(q \xrightarrow[\updateFunction]{i/o} p\) where
\(\updateFunction\) renames every timer according to \(m\).
In other words, the only functions that actually rename timers come from folding
the tree, i.e., when we exit the basis in \(\tree\).

\begin{definition}[Generalized \MMT hypothesis]
  We define the \gMMT
  \(\hypothesis = (X^\hypothesis, Q^\hypothesis, q_0^\hypothesis,
  \activeTimers^\hypothesis, \delta^\hypothesis)\) where:
  \begin{itemize}
    \item
    \(X^\hypothesis = \bigcup_{q \in \basis} \activeTimers^\tree(q)\),

    \item
    \(Q^\hypothesis = \basis\), with \(q_0^\hypothesis = q_0^\tree\),

    \item
    \(\activeTimers^\hypothesis(q) = \activeTimers^\tree(q)\) for each
    \(q \in \basis\), and

    \item
    \(\delta^\hypothesis\) is constructed as follows.
    Let \(q \xrightarrow[u]{i/o} q'\) be a transition in \(\tree\) with
    \(q \in \basis\).
    We have four cases:
    \begin{itemize}
      \item
      If \(q' \in \basis\) (i.e., the transition remains within the basis)
      and \(u = \bot\),
      then we define \(\delta^\hypothesis(q, i) = (q', o, \updateFunction)\)
      with, for all \(x \in \activeTimers^\tree(q')\), \(\updateFunction(x) = x\).

      \item
      If \(q' \in \basis\) and \(u = (y, c)\),
      then we define \(\delta^\hypothesis(q, i) = (q', o, \updateFunction)\)
      with \(\updateFunction(y) = c\) and, for all
      \(x \in \activeTimers^\tree(q') \setminus \{y\}\), \(\updateFunction(x) = x\).

      \item
      If \(q' \in \frontier\) (i.e., the transition leaves the basis) and
      \(u = \bot\),
      then we select an arbitrary \((p, m) \in \compatible(r)\) and define
      \(\delta^\hypothesis(q, i) = (p, o, \updateFunction)\) with, for all
      \(x \in \activeTimers^\tree(q')\), \(\updateFunction(m^{-1}(x)) = x\).

      \item
      If \(q' \in \frontier\) and \(u = (y, c)\),
      then we select an arbitrary \((p, m) \in \compatible(r)\) and define
      \(\delta^\hypothesis(q, i) = (p, o, \updateFunction)\) with
      \(\updateFunction(m^{-1}(y)) = c\) and, for all
      \(x \in \activeTimers^\tree(q') \setminus \{y\}\),
      \(\updateFunction(m^{-1}(x)) = x\)
    \end{itemize}
  \end{itemize}
\end{definition}
It is not hard to see that \(\hypothesis\) is well-formed and \complete, as it is
constructed from \(\tree\).
Indeed, recall that \(q \xrightarrow{i} {} \in \runs{\tree}\) is defined for
each \(q \in \basis\) if and only if \(i \in I \cup \toevents{\enabled{q}[\tree]}\),
i.e., the basis is \complete.

\begin{example}
  Let \(\tree\) be the observation tree of \Cref{fig:ex:learning:hypothesis}.
  We can observe that
  \begin{align*}
    \compatible[6](t_2) &= \{(t_1, x_1 \mapsto x_1)\}
    &
    \compatible[6](t_5) &= \{(t_6, x_6 \mapsto x_1, x_3 \mapsto x_3)\}
    \\
    \compatible[6](t_{10}) &= \{(t_0, \emptyset)\}
    &
    \compatible[6](t_{11}) &= \{(t_6, x_6 \mapsto x_{11}, x_3 \mapsto x_3)\}
    \\
    \compatible[6](t_{12}) &= \{(t_0, \emptyset)\}
    &
    \compatible[6](t_{15}) &= \{(t_9, x_3 \mapsto x_3)\}.
  \end{align*}
  We thus construct a \gMMT \(\hypothesis\) as follows.
  \begin{itemize}
    \item
    The states of \(\hypothesis\) are \(t_0, t_1, t_3, t_6\), and \(t_9\).
    \item
    The \(i\)-transition from \(t_0\) to \(t_1 \in \basis\) starts the timer
    \(x_1\) to 2.
    \item
    The \(i\)-transition from \(t_1\) to \(t_3 \in \basis\) keeps \(x_1\)
    as-is and starts a new timer \(x_3\).
    \item
    The \(i\)-transition from \(t_3\) to \(t_6 \in \basis\) stops \(x_1\) and
    starts \(x_6\).
    The timer \(x_3\) is unchanged.
    Observe that, so far, we followed exactly the transitions defined within the
    basis of \(\tree\).
    \item
    We consider the \(\timeout{x_1}\)-transition from \(t_3\) to \(t_5\) in
    \(\tree\).
    As \(t_5 \in \frontier\), we select a pair \((p, m)\) in \(\compatible(t_5)\).
    Here, the only possibility is \((t_6, x_6 \mapsto x_1, x_3 \mapsto x_3)\).
    So, we define the renaming function \(\updateFunction_{5 \mapsto 6}\) such that
    \(\updateFunction_{5 \mapsto 6}(x_6) = 2\) and
    \(\updateFunction_{5 \mapsto 6}(x_3) = x_3\).
    Hence, we have the transition
    \(t_3 \xrightarrow[\updateFunction_{5 \mapsto 6}]{\timeout{x_1}/o} t_6\).
  \end{itemize}
  And so on for the remaining transitions.
  The resulting \gMMT is given in \Cref{fig:ex:gMMT:running}.
  A more complex example is provided in the next section.
\end{example}
\begin{figure}[t]
  \centering
  \begin{tikzpicture}[
  automaton,
]
  \node [state, initial]            (t0)  {\(t_0\)};
  \node [state, right=50pt of t0]   (t1)  {\(t_1\)};
  \node [state, right=50pt of t1]   (t3)  {\(t_3\)};
  \node [state, right=100pt of t3]  (t6)  {\(t_6\)};
  \node [state, right=70pt of t6]   (t9)  {\(t_9\)};

  \path
    (t0)  edge              node [above] {\(i/o\)}
                            node [below] {\(\updateFig{x_1}{2}\)}           (t1)
    (t1)  edge              node [above] {\(i/o'\)}
                            node [below, align=left]
                                        {\(\updateFig{x_1}{x_1}\),\\
                                          \(\updateFig{x_3}{3}\)}           (t3)
          edge [loop above] node {\(\timeout{x_1}/o, \updateFig{x_1}{2}\)}  (t1)
    (t3)  edge [out=-40, in=-140] node [above] {\(i/o'\)}
                            node [below] {\(\updateFig{x_6}{2},
                                            \updateFig{x_3}{x_3}\)}         (t6)
          edge [out=40, in=140]  node [above] {\(\timeout{x_1}/o\)}
                            node [below, align=right] {\(\updateFig{x_6}{2}\),\\
                                            \(\updateFig{x_3}{x_3}\)}       (t6)
    (t6)  edge              node [above] {\(\timeout{x_6}/o\)}
                            node [below] {\(\updateFig{x_3}{x_3}\)}         (t9)
          edge [loop above] node [align=right] {\(i/o', \updateFig{x_6}{2}\),\\
                                        \(\updateFig{x_3}{x_3}\)}           (t6)
    (t9)  edge [loop above] node {\(i/o', \updateFig{x_3}{x_3}\)}           (t9)
  ;

  \draw [rounded corners=10pt]
    let
      \p{e} = ($(t0.south east) + (0, -1.1)$),
      \p{s} = ($(t6)$)
    in
    (t6)  -- (\x{s}, \y{e})
          node [midway, right] {\(\timeout{x_3}/o, \bot\)}
          -- (\p{e})
          -- (t0.south east)
  ;
  \draw [rounded corners=10pt]
    let
      \p{e} = ($(t0.south west) + (0, -1.3)$),
      \p{s} = ($(t9)$)
    in
    (t9)  -- (\x{s}, \y{e})
          node [midway, right] {\(\timeout{x_3}/o, \bot\)}
          -- (\p{e})
          -- (t0.south west)
  ;
\end{tikzpicture}
  \caption{A \gMMT constructed from the observation tree of
  \Cref{fig:ex:learning:hypothesis}.}\label{fig:ex:gMMT:running}
\end{figure}

Finally, while one can convert the \gMMT hypothesis \(\hypothesis\) into an
\MMT hypothesis, it is not required.
In order to avoid the factorial blowup, one can simply give \(\hypothesis\)
to the teacher who has to check whether \(\hypothesis\) and its hidden \MMT
\(\M\) are symbolically equivalent.
For instance, the teacher may construct the zone \gMMT of \(\hypothesis\) and
the zone \MMT of \(\M\) (see \Cref{proof:lemma:gootMMT}) and then check the
equivalence between them.
This does not necessitate to construct an \MMT.

\subsubsection{Example of a case where \gMMTs are required.}

\begin{figure}
  \centering
  \begin{tikzpicture}[
  automaton,
]
  \node [state, initial]        (q0)  {\(q_0\)};
  \node [state, right=of q0]    (q1)  {\(q_1\)};
  \node [state, right=of q1]    (q2)  {\(q_2\)};
  \node [state, right=of q2]    (q3)  {\(q_3\)};
  \node [state, below=30pt of q2]    (q4)  {\(q_4\)};

  \path
    (q0)  edge                    node [above] {\(i\)}
                                  node [below] {\(\updateFig{x}{2}\)}     (q1)
    (q1)  edge [loop above]       node [align=center]
                                        {\(j/\bot\)\\
                                        \(\timeout{x}/\updateFig{x}{2}\)} (q1)
          edge                    node [above] {\(i\)}
                                  node [below] {\(\updateFig{y}{1}\)}     (q2)
    (q2)  edge [loop above]       node [align=center]
                                        {\(i/\bot\)\\
                                        \(j/\bot\)}                       (q2)
          edge                    node [above] {\(\timeout{x}\)}
                                  node [below] {\(\updateFig{x}{1}\)}     (q3)
          edge                    node [left, near end] {\(\timeout{y},
                                                  \updateFig{y}{1}\)}     (q4)
    (q3)  edge [loop right]       node [align=center]
                                        {\(i/\bot\)\\
                                        \(j/\bot\)\\
                                        \(\timeout{x}/\updateFig{x}{1}\)} (q3)
    (q4)  edge [loop right]       node [align=center]
                                        {\(i/\bot\)\\
                                        \(j/\bot\)\\
                                        \(\timeout{y}/\updateFig{y}{1}\)} (q4)
  ;

  \draw [rounded corners = 10pt]
    let
      \p{0} = (q0.south),
      \p{4} = (q4.west),
    in
      (\p{0}) -- (\x{0}, \y{4})
              node [midway, left] {\(j/\updateFig{y}{1}\)}
              -- (\p{4})
  ;
\end{tikzpicture}
\begin{tikzpicture}[
  automaton,
  node distance = 33pt and 68pt,
]
  \node [state, initial above, basis]     (t0) {\(t_{0}\)};
  \node [state, basis, left=of t0]        (t1) {\(t_{1}\)};
  \node [state, above=of t1]              (t2) {\(t_{2}\)};
  \node [state, basis, right=of t0]       (t3) {\(t_{3}\)};
  \node [state, above=of t3]              (t4) {\(t_{4}\)};
  \node [state, right=of t3]              (t5) {\(t_{5}\)};
  \node [state, below=of t5]              (t6) {\(t_{6}\)};
  \node [state, above=of t4]              (t7) {\(t_{7}\)};
  \node [state, above=of t5]              (t8) {\(t_{8}\)};
  \node [state, below=of t6]              (t9) {\(t_{9}\)};
  \node [state, above=of t2]              (t10) {\(t_{10}\)};
  \node [state, basis, left=of t1]        (t11) {\(t_{11}\)};
  \node [state, left=of t11]              (t12) {\(t_{12}\)};
  \node [state, above=of t11]             (t13) {\(t_{13}\)};
  \node [state, below=of t0]              (t14) {\(t_{14}\)};
  \node [state, below=of t3]              (t15) {\(t_{15}\)};
  \node [state, above=of t12]             (t16) {\(t_{16}\)};
  \node [state, above=of t13]             (t17) {\(t_{17}\)};
  \node [state, below=of t12]             (t18) {\(t_{18}\)};
  \node [state, below=of t18]             (t19) {\(t_{19}\)};
  \node [state, left=46pt of t18]              (t20) {\(t_{20}\)};
  \node [state, below=of t11]             (t21) {\(t_{21}\)};
  \node [state, right=of t21]             (t22) {\(t_{22}\)};
  \node [state, below=of t21]             (t23) {\(t_{23}\)};
  \node [state, right=of t7]              (t24) {\(t_{24}\)};
  \node [state, right=44pt of t8]              (t25) {\(t_{25}\)};
  \node [state, right=44pt of t9]              (t26) {\(t_{26}\)};
  \node [state, right=of t10]             (t27) {\(t_{27}\)};
  \node [state, below=of t15]             (t28) {\(t_{28}\)};
  \node [state, above=of t16]             (t29) {\(t_{29}\)};

  \foreach \s/\t/\i/\u in {
    t0/t1/i/\updateFig{x_1}{2},
    t0/t3/j/\updateFig{x_3}{1},
    t1/t11/i/\updateFig{x_{11}}{1},
    t3/t5/i/\bot,
    t11/t12/\timeout{x_1}/\updateFig{x_1}{1},
    t14/t15/\timeout{x_{1}}/\updateFig{x_{1}}{2},
    t21/t22/\timeout{x_1}/\bot
  } {
    \path
      (\s) edge   node [above]  {\(\i, \u\)} (\t)
    ;
  }

  \foreach \s/\t/\i/\u/\p in {
    t1/t2/\timeout{x_1}/\updateFig{x_1}{2}/right,
    t1/t14/j/\bot/right,
    t3/t4/\timeout{x_2}/\updateFig{x_2}{1}/left,
    t3/t6/j/\bot/right,
    t4/t7/\timeout{x_2}/\updateFig{x_2}{1}/left,
    t5/t8/\timeout{x_2}/\updateFig{x_2}{1}/left,
    t6/t9/\timeout{x_2}/\updateFig{x_2}{1}/left,
    t2/t10/\timeout{x_1}/\updateFig{x_1}{2}/right,
    t12/t16/\timeout{x_1}/\updateFig{x_1}{2}/left,
    t13/t17/\timeout{x_{11}}/\bot/right,
    t18/t19/\timeout{x_1}/\bot/right,
    t18/t20/\timeout{x_{11}}/\bot/above,
    t11/t21/j/\bot/right,
    t11/t13/\timeout{x_{11}}/\updateFig{x_{11}}{1}/right,
    t11/t18/i/\bot/left,
    t7/t24/\timeout{x_2}/\bot/above,
    t8/t25/\timeout{x_2}/\bot/above,
    t9/t26/\timeout{x_2}/\bot/above,
    t10/t27/\timeout{x_1}/\bot/above,
    t15/t28/\timeout{x_1}/\bot/left,
    t16/t29/\timeout{x_1}/\bot/left,
    t21/t23/\timeout{x_{11}}/\bot/right} {
    \path
      (\s) edge node [\p] {\(\i/\u\)} (\t)
    ;
  }
\end{tikzpicture}
  \caption{An \MMT with
  \(\activeTimers(q_0) = \emptyset, \activeTimers(q_1) = \activeTimers(q_3) =
  \{x\}, \activeTimers(q_2) = \{x, y\}, \activeTimers(q_4) = \{y\}\), and an
  observation tree in which basis states are highlighted in gray.
  For simplicity, the output \(o\) of each transition is omitted.}\label{fig:app:hypothesis:needgMMT}\label{fig:app:hypothesis:needgMMT:tree}
\end{figure}

Finally, we give an example of an observation tree from which the construction of
\(\PER\) (as explained earlier) fails.
That is, we obtain \(x \PER y\) but \(x \timerApart y\).
Let \(\M\) be the \MMT of \Cref{fig:app:hypothesis:needgMMT}.
For simplicity, we omit all outputs in this section.

Observe that replacing the transition \(q_0 \xrightarrow[(y, 1)]{j} q_4\) by
\(q_0 \xrightarrow[(x, 1)]{j} q_3\) would yield an \MMT symbolically equivalent
to \(\M\).
Indeed, both \(q_3\) and \(q_4\) have the same behavior, up to a renaming of
the timer.
So, the \(j\)-transition from \(q_0\) can freely go to \(q_3\) or \(q_4\), under
the condition that it starts respectively \(x\) or \(y\).
Here, we fix that it goes to \(q_4\) and starts \(y\).
However, the learning algorithm may construct a hypothesis where it instead goes
to \(q_3\).
In short, this uncertainty will lead us to an invalid \(\PER\).

Let \(\tree\) be the observation tree of \Cref{fig:app:hypothesis:needgMMT:tree}.
One can check that \(\tree\) is an observation tree for \(\M\), i.e., there
exists a functional simulation \(\funcSim : \tree \to \M\).
We have the following compatible sets:
\begin{align*}
  \compatible(t_2) = \compatible(t_{14}) &= \{(t_1, x_1 \mapsto x_1)\}
  \\
  \compatible(t_4) = \compatible(t_5) = \compatible(t_6) &=
    \{(t_3, x_2 \mapsto x_2)\}
  \\
  \compatible(t_{12}) &= \{(t_3, x_2 \mapsto x_1)\}
  \\
  \compatible(t_{13}) &= \{(t_3, x_2 \mapsto x_{11})\}
  \\
  \compatible(t_{18}) &= \{(t_{11}, x_1 \mapsto x_1, x_{11} \mapsto x_{11})\}
\end{align*}
By constructing the equivalence relation
\(\PER {} \subseteq \{x_1, x_2, x_{11}\} \times \{x_1, x_2, x_{11}\}\), we
obtain that
\(x_1 \PER x_2\)
due to \((t_3, x_2 \mapsto x_1) \in \compatible(t_{12})\), and
\(x_2 \PER x_{11}\)
due to \((t_3, x_2 \mapsto x_{11}) \in \compatible(t_{13})\).
So, \(x_1 \PER x_{11}\).
However, notice that \(x_1 \timerApart x_{11}\), as both timers are active in
\(t_{11}\).
Since that relation is the only possibility, we conclude that it is not always
possible to construct a relation that does not put together two apart timers.

\subsection{Proof of \texorpdfstring{\Cref{thm:learning:termination}}{Theorem 1}}\label{app:learning:termination}

Let us now show \Cref{thm:learning:termination}, which gives the termination
and complexity of \lsharpMMT.
Before that, we introduce an optimization of our learning algorithm that
reduces the size of each compatible set, by leveraging weak co-transitivity
(\Cref{lemma:coTransitivity}).
In turn, this diminishes the number of hypotheses that can be constructed and,
thus, helps the learner to converge towards the target \MMT while reducing the
number of equivalence queries needed.

Let \(r \in \frontier\) and
assume we have \((p, \mu) \in \compatible(r)\) and
\((p', \mu') \in \compatible(r)\) with \(p \neq p'\) and maximal matchings
\(\mu : p \leftrightarrow r\) and \(\mu' : p' \leftrightarrow r\).
These matchings are necessarily valid by definition of \(\compatible\).
We also assume that
\(\lengthOf{\activeTimers^\tree(p)} = \lengthOf{\activeTimers^\tree(r)}
= \lengthOf{\activeTimers^\tree(p')}\) (by applying the above idea).
As \(p, p' \in \basis\), it must be that \(p \apart^m p'\) for any
maximal matching \(m : p \leftrightarrow p'\).
In particular, take \(m : p \leftrightarrow p'\) such that
\(m = \mu'^{-1} \circ \mu\).
Notice that \(\dom{m} \subseteq \dom{\mu}\) and \(\mu' = \mu \circ m^{-1}\)
(see \Cref{fig:matching:composition} for a visualization).
It always exists and is unique as the three states have the same number of
active timers.
There are two cases: either any witness \(w \witness p \apart^m p'\) is such that
the apartness is structural, in which case we cannot apply
\Cref{lemma:coTransitivity}, or there is a witness \(w \witness p \apart^m p'\)
where the apartness is behavioral.
In that case, let also 
\(w^x\) be as described in \Cref{lemma:coTransitivity}.
We then replay the run \(p \xrightarrow{w \cdot w^x}\) from \(r\) using \(\mu\).
We have three cases:
\begin{itemize}
  \item \(\replay[\mu]{r}[p \xrightarrow{w \cdot w^x}] = \APART\), meaning that \(p \apart^\mu r\).
  Then, \((p, \mu)\) is no longer in \(\compatible(r)\).
  \item \(\replay[\mu]{r}[p \xrightarrow{w \cdot w^x}] = \ACTIVE\), in which case we
  discovered a new active timer in \(r\).
  Hence, we now have that \(\lengthOf{\activeTimers^\tree(p)} \neq
  \lengthOf{\activeTimers^\tree(r)}\) and we can reapply the idea of replaying
  runs showcasing timeouts (see \Cref{ex:learning:basis})
  to obtain the equality again, or that \(p\) and
  \(r\) are not compatible anymore.
  \item \(\replay[\mu]{r}[p \xrightarrow{w \cdot w^x}] = \DONE\), meaning that we
  could fully replay \(p \xrightarrow{w \cdot w^x}\) and thus did not obtain
  \(p \apart^\mu r\).
  By \Cref{lemma:coTransitivity}, it follows that \(p' \apart^{\mu'} r\).
\end{itemize}
Hence, it is sufficient to call \(\replay[\mu]{r}[p \xrightarrow{w \cdot w^x}]\)
when \(w \witness p \apart^{\mu'^{-1} \circ \mu} p'\) is behavioral.

Unlike in~\cite{VaandragerGRW22},
we cannot always obtain \(\lengthOf{\compatible(r)} = 1\),
as \Cref{lemma:coTransitivity} cannot be applied when the considered apartness
pairs are structural.

\begin{figure}[t]
  \centering
  \begin{tikzpicture}[
  automaton,
  new/.style = {
    dashed,
  },
]
  \node [state, initial, initial distance=7pt]    (t0)  {\(t_0\)};
  \node [state, right=40pt of t0]                 (t1)  {\(t_1\)};
  \node [state, above right=7pt and 70pt of t1]   (t2)  {\(t_2\)};
  \node [state, below right=7pt and 70pt of t1]   (t3)  {\(t_3\)};
  \node [state, above right=8pt and 75pt of t2]   (t4)  {\(t_4\)};
  \node [state, above right=8pt and 75pt of t3]   (t5)  {\(t_5\)};
  \node [state, below right=7pt and 75pt of t3]   (t6)  {\(t_6\)};
  \node [state, right=75pt of t4]                 (t7)  {\(t_7\)};
  \node [state, right=75pt of t5]                 (t8)  {\(t_8\)};
  \node [state, right=75pt of t6]                 (t9)  {\(t_9\)};
  \node [state, below=15pt of t9]                 (t10) {\(t_{10}\)};
  \node [state, new, above=16pt of t4]            (t11) {\(t_{11}\)};
  \node [state, new, above=16pt of t7]            (t12) {\(t_{12}\)};
  \node [state, new, right=60pt of t12]           (t13) {\(t_{13}\)};
  \node [state, new, below=15pt of t13]           (t14) {\(t_{14}\)};

  \path
    (t0)  edge                  node [above] {\(i/o\)}
                                node [below] {\(\updateFig{x_1}{2}\)}           (t1)
    (t1)  edge                  node [sloped, '] {\(i/o', \updateFig{x_3}{3}\)} (t3)
          edge                  node [sloped] {\(\timeout{x_1}/o,
                                              \updateFig{x_1}{2}\)}             (t2)
    (t2)  edge                  node [sloped] {\(\timeout{x_1}/o, \bot\)}       (t4)
    (t3)  edge                  node [sloped] {\(\timeout{x_1}/o,
                                              \updateFig{x_1}{2}\)}             (t5)
          edge                  node [sloped, '] {\(i/o', \updateFig{x_6}{2}\)} (t6)
    (t5)  edge                  node {\(\timeout{x_3}/o, \bot\)}                (t8)
    (t6)  edge                  node {\(\timeout{x_6}/o, \bot\)}                (t9)
  ;

  \path [new]
    (t11) edge                  node [above=-2pt]
                                    {\(\timeout{x_1}/o, \updateFig{x_1}{2}\)}   (t12)
    (t12) edge                  node [sloped, ']{\(\timeout{x_{11}}/o, \bot\)}  (t14)
          edge                  node [above=-2pt] {\(\timeout{x_1}/o, \bot\)}   (t13)
  ;

  \draw [rounded corners = 10pt]
    let
      \p{s} = (t5.45),
      \p{t} = (t7.-180),
    in
      (\p{s}) -- ($(\x{s}, \y{t}) + (0.1, 0)$)
              -- (\p{t})
              node [above=-1pt, midway] {\(\timeout{x_1}/o, \bot\)}
  ;

  \draw [rounded corners = 10pt]
    let
      \p{s} = (t6.-43),
      \p{t} = (t10.180),
    in
      (\p{s}) -- ($(\x{s}, \y{t}) + (0.2, 0)$)
              -- (\p{t})
              node [above, midway] {\(\timeout{x_3}/o, \bot\)}
  ;

  \draw [rounded corners = 10pt, new]
    let
      \p{s} = (t2.90),
      \p{t} = (t11.180),
    in
      (\p{s}) -- ($(\x{s}, \y{t}) + (0.1, 0)$)
              -- (\p{t})
              node [above=-2pt, midway] {\(i/o', \updateFig{x_{11}}{3}\)}
  ;
\end{tikzpicture}
  \caption{Extension of the observation tree of \Cref{fig:ex:tree} obtained by
  calling \(\replay[x_1 \mapsto x_1]{t_2}[\pi] \),
  where \( \pi = t_1 \xrightarrow{i \cdot \timeout{x_1} \cdot \timeout{x_3}}\).
  New states and transitions are highlighted with dashed lines.}\label{fig:app:ex:learning:replay}
\end{figure}
\begin{example}\label{ex:learning:minimization}
  Let the \MMT of \Cref{fig:ex:MMT:good} be the \MMT of the teacher and
  \(\tree\) be the observation tree of \Cref{fig:ex:tree}.
  As explained in \Cref{ex:learning:basis}, we have
  \(\compatible(t_2) = \{(t_1, x_1 \mapsto x_1), (t_3, x_1 \mapsto x_1)\}\).
  Let us extend the tree in order to apply weak co-transitivity to deduce that
  \(t_1 \apart^{x_1 \mapsto x_1} t_2\) or \(t_1 \apart^{x_1 \mapsto x_1} t_3\).
  We have that \(i \witness t_1 \apart^{x_1 \mapsto x_1} t_3\) due
  to~\eqref{eq:apartness:constants}.
  Hence, we replay the run \(\pi = t_1 \xrightarrow{i \cdot \timeout{x_1} \cdot
  \timeout{x_3}}\) from \(t_2\) using the matching \(x_1 \mapsto x_1\) (i.e.,
  we have \(w^x = \timeout{x_1} \cdot \timeout{x_3}\)).
  That is, we call \(\replay[x_1 \mapsto x_1]{t_2}[\pi]\).
  The resulting tree is given in \Cref{fig:app:ex:learning:replay}.
  Recall that the function returned \DONE.
  So, \(\lnot (t_1 \apart^{m_1 \mapsto x_1} t_2)\).
  By \Cref{lemma:coTransitivity}, it must be that
  \(t_3 \apart^{x_1 \mapsto x_1} t_2\).
  It is indeed the case as \(i \witness t_3 \apart^{x_1 \mapsto x_1} t_2\)
  by~\eqref{eq:apartness:constants}.
  Hence, we now have \(\compatible(t_2) = \{(t_1, x_1 \mapsto x_1)\}\).
\end{example}

Using this idea, we then add a new step inside the refinement loop, taking place
after \minimizationActive:
\begin{description}
  \item[\minimizationCoTrans]
  where \minimizationCoTrans stands for Weak Co-Transitivity.
  As explained above, we minimize each compatible set by extending the tree to
  leverage \Cref{lemma:coTransitivity} as much as possible.
\end{description}

We require several intermediate results.
First, we argue that the refinement loop of our algorithm eventually
terminates, i.e., a hypothesis is eventually constructed.
Under the assumption that the basis is finite, observe that the frontier is then
finite (as \(I \cup \toevents{\cup_{p \in \basis} \activeTimers^\tree(p)}\) is
finite).
Hence, we can apply \completion, \minimizationActive, \minimizationCoTrans only
a finite number of times.
It is thus sufficient to show that the basis cannot grow forever, i.e., that
\promotion is applied a finite number of times, which implies that \seismic is
also applied a finite number of times.
The next lemma states an upper bound over the number of basis states.
In short, if this does not hold, one can construct a matching
\(m_g : p \leftrightarrow p'\) such that \(g(x) = g(m(x))\) for all
\(x \in \dom{m_g}\) for two states \(p, p'\) such that \(f(p) = f(p')\).
By the contrapositive of the second part of \Cref{thm:extension-n-soundness}, we conclude that
either \(p\) or \(p'\) cannot be in the basis.

\begin{lemma}\label{lemma:learning:basis:bound}
  \(\lengthOf{\basis} \leq \lengthOf{Q^\M} \cdot 2^{\lengthOf{X^\M}}\).
\end{lemma}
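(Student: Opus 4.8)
The plan is to bound the number of basis states by counting how many \emph{distinguishable} states the basis can contain, where "distinguishable" is measured relative to the hidden \MMT $\M$ via the functional simulation $\funcSim = \langle f, g\rangle$. The key observation is that every pair of distinct basis states $p \neq p'$ must be $m$-apart for \emph{every} maximal matching $m : p \leftrightarrow p'$; by the soundness direction of \Cref{thm:extension-n-soundness} (i.e.\ \Cref{thm:soundness}), this means that for each such maximal $m$ we have $f(p) \neq f(p')$ or there is some $x \in \dom{m}$ with $g(x) \neq g(m(x))$. I would argue the contrapositive: if $\lengthOf{\basis}$ exceeded $\lengthOf{Q^\M}\cdot 2^{\lengthOf{X^\M}}$, then by pigeonhole I can find two basis states mapped the same way by $f$ and "the same way" by $g$, and then exhibit a concrete maximal matching witnessing the failure of apartness, contradicting the basis property.

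Concretely, first I would assign to each basis state $p$ the pair $\big(f(p),\, g(\activeTimers^\tree(p))\big)$, i.e.\ its image under $f$ together with the \emph{set} of $\M$-timers that the active timers of $p$ are mapped to by $g$. The number of possible values of the first component is $\lengthOf{Q^\M}$; the number of possible values of the second is at most $2^{\lengthOf{X^\M}}$ since it is a subset of $X^\M$. Hence if $\lengthOf{\basis} > \lengthOf{Q^\M}\cdot 2^{\lengthOf{X^\M}}$, two distinct basis states $p, p'$ receive the same pair: $f(p) = f(p')$ and $g(\activeTimers^\tree(p)) = g(\activeTimers^\tree(p'))$. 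By \eqref{eq:simulation:different_timers}, $g$ restricted to $\activeTimers^\tree(p)$ (resp.\ $\activeTimers^\tree(p')$) is injective, so it is a bijection onto the common image set; therefore the composition $m_g = (g|_{\activeTimers^\tree(p')})^{-1}\circ (g|_{\activeTimers^\tree(p)})$ is a well-defined bijective matching $m_g : p \leftrightarrow p'$, and by construction $g(x) = g(m_g(x))$ for all $x \in \dom{m_g}$. Since $m_g$ is total and surjective it is maximal, so the basis property forces $p \apart^{m_g} p'$, and then \Cref{thm:soundness} yields $f(p) \neq f(p')$ or $\exists x \in \dom{m_g}: g(x)\neq g(m_g(x))$ — both of which we have just contradicted.

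I would phrase the final conclusion as: the map $p \mapsto \big(f(p), g(\activeTimers^\tree(p))\big)$ is injective on $\basis$, hence $\lengthOf{\basis} \leq \lengthOf{Q^\M}\cdot 2^{\lengthOf{X^\M}}$. The bulk of the argument is routine once the right invariant is chosen; the one place that needs care — and which I expect to be the main obstacle to writing cleanly — is verifying that $m_g$ is genuinely a matching (an injective partial function) and that it is maximal in the sense of \Cref{sec:learning:tree}, since this relies on combining injectivity of $g$ on each state's active-timer set (from \eqref{eq:simulation:different_timers}) with the fact that the two image sets coincide; I would also need to double-check the degenerate case where one or both states have no active timers, in which case $m_g$ is the empty matching, which is vacuously maximal (it is total and surjective on the empty domain/codomain) and the apartness must then be purely behavioral, still handled by \Cref{thm:soundness} via $f(p) \neq f(p')$.
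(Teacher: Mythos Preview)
Your proposal is correct and essentially mirrors the paper's own proof: both assign to each basis state the pair $(f(p), g(\activeTimers^\tree(p)))$, use pigeonhole to find $p \neq p'$ with the same pair, build the maximal matching $m_g$ satisfying $g(x)=g(m_g(x))$, and derive a contradiction from the basis apartness invariant via the soundness theorem. Your write-up is in fact slightly more explicit than the paper's in justifying that $m_g$ is a well-defined bijection (via injectivity of $g$ on each active-timer set) and in handling the empty-matching edge case.
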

\begin{proof}
  In order to distinguish the theoretical definition of \(\basis\) and its
  computation, let us denote by \(B\) the basis as computed in the refinement
  loop of \lsharpMMT.
  We highlight that \(B\) may not always satisfy the definition of \(\basis\)
  \emph{during} the refinement loop.
  Indeed, as explained in \Cref{sec:learning:algo}, when a new active timer
  is found in a basis state, we may have \(\lnot (p \apart^m p')\) for some
  \(p \neq p' \in \basis\) and maximal matching \(m : p \leftrightarrow p'\).
  We thus need to perform \seismic and recompute \(B\).
  Below, we will establish that
  \(B \leq \lengthOf{Q^\M} \cdot 2^{\lengthOf{X^\M}}\).
  The same (or even simpler) arguments yield the bound for $\basis$.

  Let us assume we already treated the pending \seismic (if there is one),
  i.e.,
  \begin{equation}\label{proof:lemma:learning:basis:bound:invariant}
    \text{\(\forall p, p' \in B : p \neq p' \implies p \apart^m p'\)
    for all maximal matchings \(m : p \leftrightarrow p'\)}
  \end{equation}
  Towards a contradiction, assume \(\lengthOf{B} > \lengthOf{Q^\M} \cdot
  2^{\lengthOf{X^\M}}\).
  By Pigeonhole principle, there must exist two states \(p \neq p' \in B\) such
  that \(f(p) = f(p')\) (as \(\lengthOf{B} > \lengthOf{Q^\M}\)) and which furthermore satisfy
  \(g(\activeTimers^\tree(p)) = g(\activeTimers^\tree(p'))\).
  This implies that \(\lengthOf{\activeTimers^\tree(p)} =
  \lengthOf{\activeTimers^\tree(p')}\).
  Let \(m_g : p \leftrightarrow p'\) be the matching such that \(g(x) = g(m_g(x))\)
  for all \(x \in \dom{m_g}\).
  It necessarily exists as
  \(g(\activeTimers^\tree(p)) = g(\activeTimers^\tree(p'))\).
  Moreover, \(m_g\) is maximal.
  Hence, we have \(p \apart^{m_g} p'\)
  by~\eqref{proof:lemma:learning:basis:bound:invariant}.
  However, as \(f(p) = f(p')\) and \(g(x) = g(m_g(x))\) for all \(x \in \dom{m_g}\),
  and by the contrapositive of the second part of \Cref{thm:extension-n-soundness}, it follows that
  \(\lnot (p \apart^{m_g} p')\).
  We thus obtain a contradiction, meaning that both \(p\) and \(p'\) cannot be
  in \(B\) at the same time.
  Hence, \(\lengthOf{B} \leq \lengthOf{Q^\M} \cdot 2^{\lengthOf{X^\M}}\).
\qed\end{proof}

We immediately get bounds on the size of the frontier and the
compatible sets.
For the former, note that the number of immediate successors of each state is bounded by $\lengthOf{I} + \lengthOf{X^{\M}} = \lengthOf{\actions{\M}}$;
for the latter, the number of maximal matchings is at most
\(\lengthOf{X^{\M}}!\) since the number of active timers in any state from the observation tree is bounded by the same value from the hidden MMT $\M$.

\begin{corollary}\label{cor:learning:frontier_compat:bounds}
  \(\lengthOf{\frontier} \leq \lengthOf{Q^\M} \cdot
      2^\lengthOf{X^\M} \cdot (\lengthOf{\actions{\M}})\)
  and \(\max_{r \in \frontier} \lengthOf{\compatible(r)} \leq
      \lengthOf{Q^\M} \cdot 2^\lengthOf{X^\M} \cdot \lengthOf{X^\M}!\;\).
\end{corollary}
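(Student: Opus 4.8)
The plan is to deduce both inequalities directly from \Cref{lemma:learning:basis:bound}, which already bounds $\lengthOf{\basis} \leq \lengthOf{Q^\M} \cdot 2^{\lengthOf{X^\M}}$, by combining it with (i) a bound on the out-degree of a basis state and (ii) a bound on the number of maximal matchings between two observation-tree states.

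For the frontier bound, I would first note that $\frontier$ is contained in the union, over $p \in \basis$, of the sets of immediate successors of $p$ in $\tree$, so it suffices to bound the out-degree of a single basis state. By \Cref{def:tree}, the outgoing transitions of any state $q \in Q^\tree$ are: at most one input transition per symbol of $I$ (by determinism), together with a $\timeout{x}$-transition for each $x \in \enabled{q}[\tree]$; hence the out-degree of $q$ is at most $\lengthOf{I} + \lengthOf{\enabled{q}[\tree]}$. Using \eqref{eq:simulation:different_timers} (so that $g$ is injective on $\activeTimers^\tree(q) \supseteq \enabled{q}[\tree]$) and the second part of \Cref{cor:simulation} (so that $g$ sends $\enabled{q}[\tree]$ into $\enabled{f(q)}[\M] \subseteq X^\M$), one gets $\lengthOf{\enabled{q}[\tree]} \leq \lengthOf{X^\M}$, and therefore the out-degree is at most $\lengthOf{I} + \lengthOf{X^\M} = \lengthOf{\actions{\M}}$. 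Multiplying by the bound on $\lengthOf{\basis}$ from \Cref{lemma:learning:basis:bound} gives $\lengthOf{\frontier} \leq \lengthOf{\basis}\cdot\lengthOf{\actions{\M}} \leq \lengthOf{Q^\M}\cdot 2^{\lengthOf{X^\M}}\cdot\lengthOf{\actions{\M}}$.

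For the compatible-set bound, recall that $\compatible(r)$ is a set of pairs $(p,m)$ with $p \in \basis$ and $m$ a maximal matching from $\activeTimers^\tree(p)$ to $\activeTimers^\tree(r)$, so $\lengthOf{\compatible(r)}$ is at most $\lengthOf{\basis}$ times the maximal number of maximal matchings between the active-timer sets of two observation-tree states. By the first part of \Cref{cor:simulation}, every state $q \in Q^\tree$ satisfies $\lengthOf{\activeTimers^\tree(q)} \leq \lengthOf{\activeTimers^\M(f(q))} \leq \lengthOf{X^\M}$; write $k := \lengthOf{X^\M}$. A maximal matching between sets of sizes $a \leq k$ and $b \leq k$ is total when $a \leq b$ (an injective function out of the $a$-element set, of which there are $b!/(b-a)! \leq k!$) and surjective when $a \geq b$ (a size-$b$ restriction of the domain together with a bijection onto the $b$-element set, of which there are $\binom{a}{b}b! = a!/(a-b)! \leq k!$); in every case there are at most $k! = \lengthOf{X^\M}!$ of them. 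Hence $\max_{r \in \frontier}\lengthOf{\compatible(r)} \leq \lengthOf{\basis}\cdot\lengthOf{X^\M}! \leq \lengthOf{Q^\M}\cdot 2^{\lengthOf{X^\M}}\cdot\lengthOf{X^\M}!$.

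Neither argument presents a genuine obstacle; the only points that need care are (a) that enabled timers of a tree state inject, via $g$, into $X^\M$ rather than into the (potentially unbounded) set $X^\tree$, which is what keeps the out-degree bounded, and (b) the counting of maximal matchings — specifically, that maximality forces the matching to be \emph{total} when $a \leq b$ and \emph{surjective} when $a \geq b$, so the two counts are never added and the uniform bound $k!$ holds.
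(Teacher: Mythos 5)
Your proof is correct and follows essentially the same route as the paper: bound the out-degree of a basis state by $\lengthOf{I} + \lengthOf{X^\M} = \lengthOf{\actions{\M}}$ and the number of maximal matchings by $\lengthOf{X^\M}!$ (both via the functional simulation bounding active/enabled timers of tree states by $\lengthOf{X^\M}$), then multiply by the basis bound of \Cref{lemma:learning:basis:bound}. You simply spell out the details (the injectivity of $g$ on enabled timers and the case analysis counting maximal matchings) that the paper leaves implicit.
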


Finally, we give an upper bound
over the length of the minimal words ending in \(\timeout{x}\) for any
\(q \in Q^\tree\) and \(x \in \activeTimers^\tree(q)\).
That is, when applying \minimizationCoTrans, one can seek a short run to be
replayed (typically, via a BFS).

\begin{lemma}\label{lemma:bound:length:timeout}
  \(
    \max_{q \in Q^\tree}
      \max_{x \in \activeTimers^\tree(q)}
        \min_{q \xrightarrow{w \cdot \timeout{x}} {} \in \runs{\tree}}
          \lengthOf{w \cdot \timeout{x}}
    \leq \lengthOf{Q^\M}
  \).
\end{lemma}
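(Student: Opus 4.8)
The plan is to bound the length of a minimal witness run by a pigeonhole argument on its images under the functional simulation $\funcSim = \langle f, g\rangle : \tree \to \M$, combined with the observation that $g(x)$ cannot be restarted along such a run. Fix $q \in Q^\tree$ and $x \in \activeTimers^\tree(q)$. Since $x$ is active in $q$, Definition~\ref{def:tree} provides an $x$-spanning run of $\tree$ traversing $q$, and its portion from $q$ onwards is a run $\pi = q = q_0 \xrightarrow{i_1} q_1 \to \dotsb \xrightarrow{i_n} q_n$ with $i_n = \timeout{x}$, so a witness exists; take $\pi$ minimal (minimal $n$). By minimality no proper prefix of $\pi$ already ends in $\timeout{x}$, and since in an observation tree $x = x_r$ is (re)started only by the incoming transition of $r$ or by a $\timeout{x}$-transition, this forces that $x$ is active in every $q_j$ for $j \le n-1$ and is untouched by every $i_j$ with $j < n$.

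Next I would push $\pi$ through $\funcSim$ to the run $\funcSim(\pi) = f(q) \xrightarrow{g(i_1)} f(q_1) \to \dotsb \xrightarrow{\timeout{g(x)}} f(q_n)$ of $\M$ and show that $g(x)$ is active at each of its states and never restarted before the final $\timeout{g(x)}$. This non-restart claim is the crux of this step: if some $i_c$ with $c<n$ restarted $g(x)$ in $\M$, then, choosing the \emph{last} such $c$, the sub-run $f(q_{c-1}) \to \dotsb \xrightarrow{\timeout{g(x)}} f(q_n)$ would be $g(x)$-spanning; by \eqref{eq:simulation:spanning} the corresponding $\tree$ sub-run $q_{c-1} \xrightarrow{i_c} \dotsb \xrightarrow{\timeout{x}} q_n$ would be $z$-spanning for some timer $z$ with $g(z)=g(x)$, hence $i_c$ would (re)start $z$. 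Since $i_c$ (with $c<n$) does not restart $x$, we get $z \neq x$; but then $x$ and $z$ are distinct timers, both active in $q_c$, with $g(x)=g(z)$, contradicting \eqref{eq:simulation:different_timers}. Hence $g(x)$ is active and unrestarted along all of $f(q_0),\dotsc,f(q_{n-1})$.

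Finally, suppose $n > \lengthOf{Q^\M}$. Pigeonhole on $f(q_0),\dotsc,f(q_{n-1})$ gives $a<b\le n-1$ with $f(q_a)=f(q_b)$, and in $\M$ one may excise the loop to obtain a strictly shorter run $f(q) \to \dotsb \xrightarrow{\timeout{g(x)}}$ (still valid, since $g(x)$ is active throughout and unrestarted, so the final timeout stays enabled). The remaining, and I expect hardest, step is to transfer this shortening back into $\tree$: concretely, I would phrase the statement as an invariant of the main loop — every pair $\langle q,x\rangle$ with $x \in \activeTimers^\tree(q)$ admits a witness run of length at most $\lengthOf{Q^\M}$ — and verify it is established at initialization and preserved by \completion, \promotion, replaying, and counterexample processing, using the short $\M$-witness above together with the explored-state discipline (explored states carry all their timeout-transitions, and witnesses are always grown as shortest runs, e.g.\ by BFS). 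The main obstacle is keeping this induction over the algorithm's steps well-founded while showing that the potentially long branches added when processing counterexamples never destroy the existence of a short witness.
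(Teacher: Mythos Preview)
Your first two steps are carefully argued but turn out to be detours, and your step~3 --- which you rightly flag as the crux --- is precisely where the paper's argument diverges from your plan. Rather than excising a loop in $\M$ and then laboriously transferring the shortened run back to $\tree$ via an induction over the algorithm's operations, the paper exploits one structural fact you mention only in passing: $\explored$ forms a subtree of $\tree$ containing the root. Since $p_\ell \xrightarrow{\timeout{x}}$ is defined (where $p_\ell$ is the last state before the final timeout), a wait query was performed on $p_\ell$, so $p_\ell \in \explored$; hence \emph{every} $p_j$ on the path lies in $\explored$. The paper then applies pigeonhole to find some $j<\ell$ with $f(p_j)=f(p_\ell)$. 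Because $g(x)\in\enabled{f(p_\ell)}[\M]=\enabled{f(p_j)}[\M]$, $x\in\activeTimers^\tree(p_j)$, and $p_j$ is explored (so $\enabled{p_j}[\tree]$ and $\enabled{f(p_j)}[\M]$ are in bijection via $g$, using \eqref{eq:simulation:different_timers}), one obtains $x\in\enabled{p_j}[\tree]$. By the third bullet of Definition~\ref{def:tree}, $p_j \xrightarrow{\timeout{x}}$ is already present in $\tree$, contradicting minimality. The shorter witness lives in $\tree$ from the start; no transfer from $\M$ is needed.

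Against this, your non-restart argument for $g(x)$ and the loop-excision in $\M$ play no role in the paper's proof, which never reasons about runs of $\M$ beyond equating enabled sets at states with the same $f$-image. Your proposed induction over \completion, \promotion, replaying, and counterexample processing would essentially re-derive the explored-subtree invariant operation by operation, when it is already available as a global property of the tree. Also, your remark that ``witnesses are always grown as shortest runs, e.g.\ by BFS'' misdescribes the algorithm: a wait query on a state installs \emph{all} its enabled timeouts at once, not only those on short branches --- and that is exactly why the direct argument goes through. One caveat worth your attention when you write this up: the paper's pigeonhole is phrased as producing a repeat specifically with the last state $p_\ell$, whereas a naive pigeonhole only yields $f(p_j)=f(p_k)$ for some $j<k$; make sure you can justify that particular form of the repeat.
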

\begin{proof}
  Towards a contradiction, assume that
  \[
    \max_{q \in Q^\tree}
      \max_{x \in \activeTimers^\tree(q)}
        \min_{q \xrightarrow{w \cdot \timeout{x}} {} \in \runs{\tree}}
          \lengthOf{w \cdot \timeout{x}}
    > \lengthOf{Q^\M}.
  \]
  Then, there must exist a state \(p_0\) and a timer
  \(x \in \activeTimers^\tree(p_0)\) such that the length of
  \(w \cdot \timeout{x}\) is strictly greater than the number of states of \(\M\),
  i.e., we have a run
  \(
    \pi = p_0 \xrightarrow{i_1}
    \dotsb \xrightarrow{i_\ell}
    p_\ell \xrightarrow{\timeout{x}}
    \in \runs{\tree}
  \)
  with \(\ell > \lengthOf{Q^\M}\).
  Observe that \(p_\ell \in \explored\), as \(p_\ell \xrightarrow{\timeout{x}}\)
  is defined.
  By Pigeonhole principle, we thus have \(f(p_j) = f(p_\ell)\) for some
  \(j \in \{1, \dotsc, \ell - 1\}\).
  As \(\explored\) is tree-shaped, it follows that \(p_j \in \explored\) (since
  \(j < \ell\)).

  We thus need to argue that \(p_j \xrightarrow{\timeout{x}} {} \in \runs{\tree}\)
  to obtain our contradiction.
  Since \(f(p_j) = f(p_\ell)\), it naturally follows that \(\enabled{f(p_j)}[\M] =
  \enabled{f(p_\ell)}[\M]\).
  Moreover, \(x \in \activeTimers^\tree(p_j)\) as \(x\) is active in both
  \(p_0\) and \(p_\ell\).
  Hence, \(g(x) \in \enabled{f(p_j)}[\M]\) as \(g(x) \in \enabled{f(p_\ell)}[\M]\).
  So, it must be that \(p_j \xrightarrow{\timeout{x}} {} \in \runs{\tree}\)
  since \(p_j \in \explored\).
  We thus have a contradiction as
  \(p_0 \xrightarrow{i_1 \dotsb i_j \cdot \timeout{x}} {} \in \runs{\tree}\)
  and \(j < \ell\).
\qed\end{proof}

Let us now prove \Cref{thm:learning:termination}, which we repeat.

\termination*

\begin{proof}
  Let us start with showing that the algorithm eventually terminates.
  First, we formally prove that the refinement loop always finishes, i.e., that
  the basis and the frontier always stabilize.
  By \Cref{lemma:learning:basis:bound} and
  \Cref{cor:learning:frontier_compat:bounds}, we have
  \begin{align}
    \lengthOf{\basis} &\leq \lengthOf{Q^\M} \cdot 2^{\lengthOf{X^\M}}
    \label{eq:proof:thm:learning:termination:basis}
    \\
    \lengthOf{\frontier} &\leq \lengthOf{Q^\M} \cdot 2^{\lengthOf{X^\M}}
      \cdot (\lengthOf{\actions{\M}})
    \label{eq:proof:thm:learning:termination:frontier}
    \\
    \max_{r \in \frontier} \lengthOf{\compatible(r)} &\leq \lengthOf{Q^\M}
      \cdot 2^{\lengthOf{X^\M}} \cdot \lengthOf{X^\M}!
    \label{eq:proof:thm:learning:termination:compat}
  \shortintertext{Furthermore, by~\eqref{eq:simulation:active:size}}
    \max_{q \in Q^\tree} \lengthOf{\activeTimers^\tree(q)} &\leq
      \lengthOf{\activeTimers^\tree(f(q))} \leq \lengthOf{X^\M}.
    \label{eq:proof:thm:learning:termination:active}
  \end{align}
  Let us argue that each part of the refinement loop is applied finitely many
  times.
  We write \(\lengthOf{\seismic}\) for the number of times \seismic
  (see \Cref{sec:learning:algo}) is applied.
  \begin{itemize}
    \item
    The maximal number of applied \seismic \emph{per basis state} is bounded
    by \(\lengthOf{X^\M}\), by~\eqref{eq:proof:thm:learning:termination:active}.
    Indeed, each \seismic event is due to the discovery of a new active timer
    in a basis state.
    Hence, by~\eqref{eq:proof:thm:learning:termination:basis},
    \begin{equation}
      \lengthOf{\seismic}
      \leq
      \lengthOf{\basis} \cdot \lengthOf{X^\M}
      \leq
      \lengthOf{Q^\M} \cdot \lengthOf{X^\M} \cdot 2^{\lengthOf{X^\M}}.
      \label{eq:proof:thm:learning:termination:seismic}
    \end{equation}
    \item
    By~\eqref{eq:proof:thm:learning:termination:basis}, the number
    of times \promotion is applied between two instances of \seismic is bounded
    by \(\lengthOf{Q^\M} \cdot 2^{\lengthOf{X^\M}}\).
    So,
    \begin{equation}
      \lengthOf{\promotion}
      \leq
      \lengthOf{\basis} \cdot \lengthOf{\seismic}
      \leq {\lengthOf{Q^\M}}^2 \cdot \lengthOf{X^\M} \cdot 2^{2\lengthOf{X^\M}}.
      \label{eq:proof:thm:learning:termination:promotion}
    \end{equation}
    \item
    Between two cases of \seismic, the number of \completion is bounded by
    \(\lengthOf{I} \cdot \lengthOf{\basis}\), as, in the
    worst case, each input-transition is missing from each basis state.
    In general, we may have multiple frontier states \(r_1, \dotsc, r_n\)
    such that \(f(r_1) = \dotsb = f(r_n)\).
    Thus, \(\compatible(r_1) = \dotsb = \compatible(r_n)\), after minimizing
    each set.
    Due to \seismic, \lsharpMMT potentially has to choose multiple times
    one of those states.
    So, in the worst case, we select a different \(r_j\) each time.
    As each new basis state may not have all of its outgoing transitions,
    \begin{equation}
      \lengthOf{\completion}
      \leq
      \lengthOf{I} \cdot \lengthOf{\basis} \cdot \lengthOf{\seismic}
      = \lengthOf{I} \cdot \lengthOf{\promotion}
      \leq
      \lengthOf{I} \cdot {\lengthOf{Q^\M}}^2 \cdot \lengthOf{X^\M}
      \cdot 2^{2\lengthOf{X^M}}.
      \label{eq:proof:thm:learning:termination:completion}
    \end{equation}
    \item
    Between two instances of \seismic, the number of pairs
    \((p, m) \in \compatible(r)\) such that
    \(\lengthOf{\activeTimers^\tree(p)} \neq \lengthOf{\activeTimers^\tree(r)}\)
    is directly given by~\eqref{eq:proof:thm:learning:termination:compat}
    for each frontier state \(r\):
    \begin{equation}
      \lengthOf{\minimizationActive}
      \!\leq
      \lengthOf{\frontier} \cdot \max_{r \in \frontier} \lengthOf{\compatible(r)}
        \cdot \lengthOf{\seismic}
      \leq
      \lengthOf{\actions{\M}} \cdot \lengthOf{Q^\M}^3
        \cdot \lengthOf{X^\M} \cdot 2^{3\lengthOf{X^\M}} \cdot \lengthOf{X^\M}!\,.
      \label{eq:proof:thm:learning:termination:minimization:active}
    \end{equation}
    \item
    With similar arguments,
    \begin{equation}
      \lengthOf{\minimizationCoTrans}
      \leq
      \lengthOf{\frontier} \cdot
        {\left(\max_{r \in \frontier} \lengthOf{\compatible(r)}\right)}^2
        \cdot \lengthOf{\seismic}
      \leq
      \lengthOf{\actions{\M}}
        \cdot \lengthOf{Q^\M}^4 \cdot \lengthOf{X^\M}^2 \cdot
        2^{4 \lengthOf{X^\M}} \cdot {\left(\lengthOf{X^\M}!\right)}^2.
      \label{eq:proof:thm:learning:termination:minimization:coTrans}
    \end{equation}
  \end{itemize}
  Since each part of the refinement loop can only be applied a finite number
  of times, it follows that the loop always terminates.
  Thus, it remains to prove that \lsharpMMT constructs finitely many
  hypotheses.
  Given how a counterexample is processed, it is now hard to see that
any counterexample results in a new timer being
  discovered for a state in the basis (leading to an occurrence of \seismic),
  or a compatibility set decreasing in size (potentially leading to a \promotion).
  We already know that \(\lengthOf{\seismic}\) and \(\lengthOf{\promotion}\) are
  bounded by a finite constant.
  Moreover, by~\eqref{eq:proof:thm:learning:termination:compat}, each compatible
  set contains finitely many pairs.
  So, there can only be finitely many counterexamples, and, thus, hypotheses.
  More precisely, the number of hypotheses is bounded by
  \begin{equation}
    \lengthOf{\seismic} \cdot \lengthOf{\promotion} \cdot
    \max_{r \in \frontier} \lengthOf{\compatible(r)}
    \leq
    \lengthOf{Q^\M}^4 \cdot \lengthOf{X^\M}^2 \cdot 2^{4 \lengthOf{X^\M}}
      \cdot \lengthOf{X^\M}!\;.
    \label{eq:proof:thm:learning:termination:hypotheses}
  \end{equation}

  To establish that $\N$ is equivalent to $\M$, we observe
  that the last equivalence query to the teacher confirmed they are
  symbolically equivalent.
  Hence, by \Cref{lemma:symEquivalent:equivalent}, they are also timed
  equivalent, i.e., \(\N \equivalent \M\).
  From our construction of an \MMT hypothesis based on an \gMMT
  (see \Cref{app:learning:hypo}), we get that the intermediate \gMMT has at most
  \(\lengthOf{Q^\M} \cdot 2^\lengthOf{X^\M}\) states
  (by~\eqref{eq:proof:thm:learning:termination:basis}).
  Hence, the final \MMT has at most
  \(\lengthOf{Q^\M} \cdot 2^\lengthOf{X^\M} \cdot \lengthOf{X^\M}!\) states by \Cref{lem:gmmt-to-mmt}, i.e., a number that is polynomial in
  \(\lengthOf{Q^\M}\) and factorial in \(\lengthOf{X^\M}\), as announced.

  We now prove the claimed number of queries.
  We start with the number of queries per step of the refinement loop and
  to process a counterexample.
  \begin{description}
    \item[\seismic]
    Applying \seismic does not require any symbolic queries.

    \item[\promotion]
    Let \(r\) be the state newly added to \(\basis\).
    We thus need to do a wait query in each \(r'\) such that
    \(r \xrightarrow{i} r'\) for some \(i \in \actions{\tree}\).
    By~\eqref{eq:proof:thm:learning:termination:active}, there are at most
    \(\lengthOf{\actions{\M}}\) wait queries.

    \item[\completion]
    A single application of \completion requires a single symbolic output query
    and a single wait query.

    \item[\minimizationActive]
    Each occurrence of \minimizationActive necessitates to replay a run \(\pi\)
    from state \(q\).
    Let \(n\) be the number of transitions in \(\pi\).
    In the worst case, we have to perform \(n\) symbolic output queries
    and \(n\) symbolic wait queries.
    By \Cref{lemma:bound:length:timeout}, \(n \leq \lengthOf{Q^\M}\), if we
    always select a minimal run ending in the timeout of the desired timer.

    \item[\minimizationCoTrans]
    Likewise, applying \minimizationCoTrans requires to replay a run of length
    \(n\), i.e., we do \(n\) symbolic output queries and \(n\) wait queries.
    This time, let us argue that we can always select a run such that:
    \(
      n \leq \lengthOf{\basis} + 1 + \lengthOf{Q^\M} + \ell +
        (\lengthOf{\basis} + 1) \cdot \lengthOf{\seismic}.
    \)
    (Recall that \(\ell\) is the length of the longest counterexample.)
    Let us decompose the summands appearing on the right piece by piece:
    \begin{itemize}
      \item
      \(\lengthOf{\basis} + 1\) denotes the worst possible depth for a frontier
      state.
      Indeed, it may be that all basis states are on a single branch.
      So, the frontier states of the last basis state of that branch are at
      depth \(\lengthOf{\basis} + 1\).
      \item
      \(\lengthOf{Q^\M}\) comes from \Cref{lemma:bound:length:timeout},
      as~\eqref{eq:apartness:enabled} requires to see the timeout of some timer.
      \item
      \(\ell\) comes from the counterexample processing (see next item).
      \item
      When we previously replayed a witness of apartness due to some occurrences
      of \minimizationCoTrans, we had to copy runs from a frontier state.
      Since the worst possible depth of a frontier state is
      \(\lengthOf{\basis} + 1\), this means we added (at most) that length of the copied run when counted from the root of the observation tree.
      Since these replays may have triggered some instances of \seismic, the basis
      must have been recomputed each time.
      As explained above, we may not obtain the same exact basis, but the bound
      over the number of states is
      still~\eqref{eq:proof:thm:learning:termination:basis}.
      So, in the worst case, we add $\lengthOf{\seismic}$ many times \((\lengthOf{\basis} + 1)\) to the longest branch of the tree.
    \end{itemize}

    \item[Processing a counterexample]
    First, in the worst case, we have to add the complete counterexample to
    observe what is needed, creating a new
    run in the tree, whose length is thus \(\ell\).
    Recall that each iteration of the counterexample processing splits a run
    \(p \xrightarrow{v}\) with \(p \in \basis\) into
    \(p \xrightarrow{v'} r \xrightarrow{v''}\) such that \(v = v' \cdot v''\)
    and \(r \in \frontier\).
    It may be that every \(v'\) is of length 1, meaning that we replay
    runs of lengths \(\ell, \ell - 1, \dotsc, 1\).
    So, we do
    \(\ell + \ell - 1 + \dotsb + 1 = \frac{\ell^2 + \ell}{2}\)
    symbolic output queries and the same number of wait queries.
  \end{description}
  By combining with the bounds
  of~\eqref{eq:proof:thm:learning:termination:seismic}
  to~\eqref{eq:proof:thm:learning:termination:minimization:coTrans}, we obtain
  the following bounds.
  \begin{itemize}
    \item
    The number of symbolic output queries is bounded by
    \begin{equation*}
      1 \cdot \lengthOf{\completion}
      + \lengthOf{Q^\M} \cdot \lengthOf{\minimizationActive}
      + \left(
        \lengthOf{\basis} + 1 + \lengthOf{Q^\M} + \ell
        + (\lengthOf{\basis} + 1) \cdot \lengthOf{\seismic}
      \right) \cdot \lengthOf{\minimizationCoTrans}
    \end{equation*}
    Clearly, \((\ell + \lengthOf{\basis} \cdot \lengthOf{\seismic}) \cdot
    \lengthOf{\minimizationCoTrans}\) is bigger than the other operands
    (observe that \(\ell\) is independent from \(\lengthOf{Q^\M}, \lengthOf{I}\),
    and \(\lengthOf{X^\M}\)).
    Hence, the number of symbolic output queries is in
    \[
      \complexity{(
        \ell
        +
        \lengthOf{Q^\M}^2 \cdot \lengthOf{X^\M} \cdot 2^{2\lengthOf{X^\M}}
      )
      \!\cdot\!
      \lengthOf{\actions{\M}}
      \!\cdot\!
      \lengthOf{Q^\M}^4
      \!\cdot\!
      \lengthOf{X^\M}^2
      \!\cdot\!
      2^{4\lengthOf{X^\M}}
      \!\cdot\!
      {(\lengthOf{X^\M}!)}^2
      }.
    \]
    \item
    The number of symbolic wait queries is bounded by
    \begin{multline*}
      \lengthOf{\actions{\M}} \cdot \lengthOf{\promotion}
      +
      1 \cdot \lengthOf{\completion}
      +
      \lengthOf{Q^\M} \cdot \lengthOf{\minimizationActive}
      \\
      + \left(
        \lengthOf{\basis} + 1 + \lengthOf{Q^\M} + \ell
        + (\lengthOf{\basis} + 1) \cdot \lengthOf{\seismic}
      \right) \cdot \lengthOf{\minimizationCoTrans}.
    \end{multline*}
    Again, \((\ell + \lengthOf{\basis} \cdot \lengthOf{\seismic}) \cdot
    \lengthOf{\minimizationCoTrans}\) is bigger than the other operands.
    That is, we obtain the same complexity results as for symbolic output queries.

    \item
    The number of symbolic equivalence queries is exactly the number of
    constructed hypothesis.
    It is bounded by~\eqref{eq:proof:thm:learning:termination:hypotheses}.
  \end{itemize}
  We thus obtain the announced complexity results.
\qed\end{proof}
 
\section{FDDI protocol\label{appendix:fddi_model}}

\begin{figure}[t]
  \centering
  \begin{tikzpicture}[
  automaton,
  node distance = 40pt and 160pt,
]
  \node [state, initial]      (Idle)  {Idle};
  \node [state, right=of Idle]  (Idlex)  {Idlex};
  \node [state, below=of Idlex]  (STxy)  {STxy};
  \node [state, below=of Idle]  (STy)  {STy};
  \node [state, right=of STxy]  (ATxy)  {ATxy};

  \path
    (Idle)  edge    node [left] {TT/BS, $\updateFig{y}{\mathtt{SA}}$}       (STy)
    (Idlex) edge    node [above] {\(\timeout{x}/o,\bot\)}                   (Idle)
            edge    node [right=-2pt, pos=0.65] {TT/BS, $\updateFig{y}{\mathtt{SA}}$}      (STxy)
    (STxy)  edge    node [above=-2pt] {\(\timeout{x}/o, \bot\)}             (STy)
            edge    node [above=-2pt] {$\timeout{y}/\text{ES+BA},
                                    \updateFig{y}{\mathtt{TTRT}}$}          (ATxy)
;

  \draw [rounded corners = 5pt]
    let
      \p{s} = (ATxy.north),
      \p{t} = (Idlex.east),
    in
      (\p{s}) -- (\x{s}, \y{t})
              -- (\p{t})
              node [near start, above=-2pt]
                {\(\timeout{x}/\text{EA+RT}, \updateFig{x}{y}\)}
  ;

  \draw [rounded corners = 5pt]
    let
      \p{s} = (ATxy.140),
      \p{t} = ($(Idlex.-40)$),
      \p{m1} = ($(\x{s}, \y{t}) + (0, -0.4)$),
      \p{m2} = ($(\x{t}, \y{t}) + (0, -0.4)$),
    in
      (\p{s}) -- (\p{m1})
              -- (\p{m2})
              node [near start, above=-2pt] {EA/RT, $\updateFig{x}{y}$}
              -- (\p{t})
  ;

  \draw [rounded corners = 5pt]
    let
      \p{s} = (STy.40),
      \p{t} = (Idlex.-140),
      \p{m1} = ($(\x{s}, \y{t}) + (0, -0.4)$),
      \p{m2} = ($(\x{t}, \y{t}) + (0, -0.4)$),
    in
      (\p{s}) -- (\p{m1})
              -- (\p{m2})
              node [pos=0.4, above=-2pt] {$\timeout{y}/\text{ES+RT},
                                \updateFig{x}{\mathtt{TTRT}}$}
              -- (\p{t})
  ;
\end{tikzpicture}
  \caption{A \gMMT model of one FDDI station. For the experiments we actually translated this \gMMT into an MMT (as per \Cref{lem:gmmt-to-mmt}), this streamlines the learning algorithm described in this work.}\label{fig:FDDI}
\end{figure}
By far the largest benchmark that is learned by Waga~\cite{Waga23} is a fragment of the FDDI communication protocol~\cite{Johnson87}, based on a timed automata model described in~\cite{DawsOTY95}.
FDDI (Fiber Distributed Data Interface) is a protocol for a token ring that is composed of $N$ identical
stations.
\Cref{fig:FDDI} shows a \gMMT-translation of the timed automata model for a
single station from~\cite{DawsOTY95}.
(See \Cref{def:gMMT} for the definition of \gMMTs. The FDDI model is denoted as a \gMMT to ease the explanations.)
In the initial state {\tt Idle}, the station is waiting for the token.
When the token arrives ({\tt TT}), the station begins with transmission of synchronous messages ({\tt BS}). A timer $y$ ensures that synchronous transmission ends ({\tt ES}) after exactly {\tt SA} time units,
for some constant {\tt SA} (= Synchronous Allocation).
The station also maintains a timer $x$, that expires exactly {\tt TTRT}+{\tt SA} time units after the previous receipt of the token, for some constant {\tt TTRT} (= Target Token Rotation Timer).
When synchronous transmission ends and timer $x$ has not expired yet, the station has the possibility to begin transmission of asynchronous messages ({\tt BA}).
Asynchronous transmission must end ({\tt EA}) and the token must be returned ({\tt RT}) at the latest when $x$ expires.\footnote{In location {\tt ATxy}, timer $x$ will expire before timer $y$ (we may formally prove this by computing the zone graph):
        (1) The value of $y$ in location {\tt ATxy} is at most {\tt TTRT}.
        (2) Hence, the value of $x$ in location {\tt Idlex} is at most {\tt TTRT}.
        (3) So $x$ is at most {\tt TTRT} upon arrival in location {\tt STxy}, and at most {\tt TTRT}$-${\tt SA} upon arrival in location {\tt ATxy}.
        (4) Thus $x$ is smaller than $y$ in location {\tt ATxy} and will expire first.}
Upon entering location {\tt Idlex}, we ensure that timer $x$ will expire exactly {\tt TTRT}+{\tt SA} time units after the previous {\tt TT} event.
In a FDDI token ring of size $N$, an {\tt RT} event of station $i$ will instantly trigger a {\tt TT} event of station $(i+1) \mod N$. Waga~\cite{Waga23} considered the instance with 2 stations, {\tt SA} $= 20$, and {\tt TTRT} $= 100$.
\end{document}